\tikzset{
    cell/.style={
        draw,
        rectangle split,
        rectangle split horizontal,
        rectangle split parts=2,
        rectangle split part align=base,
        % text centered,
        align=center,
        rounded corners,
        text width=4mm,
        % inner sep=1mm,
        % thick,
        node distance=2cm,
    },
    next/.style={
      circle,
      minimum size=1.5mm,
      inner sep=0pt,
      fill=black,
      draw=black,
      node distance=3mm,
    },
    ptrbox/.style={
      draw,
      rectangle,
        text width=1.5mm,
        text height=1.5mm,
        node distance=1cm
    },
    ptr/.style={
      circle,
      minimum size=1.5mm,
      inner sep=0pt,
      fill=black,
      draw=black,
      node distance=0mm,
    }
}
\newcounter{ObserverStateCounter}
\newcommand{\observerstatelabel}[2]{%
   \protected@write \@auxout {}{\string \newlabel {#1}{{#2}{\thepage}{#2}{#1}{}} }%
   \hypertarget{#1}{}
}
\newcommand{\mkstatename}[1]{\stepcounter{ObserverStateCounter}$s_{\scriptscriptstyle\arabic{ObserverStateCounter}}$\observerstatelabel{#1}{\ensuremath{l_{\arabic{ObserverStateCounter}}}}}
\newcommand{\tikzimagespace}{\\~\phantom{.}\\[-1mm]}
\lstdefinestyle{condensed}{
  language=Java,
  % basicstyle=\scriptsize\ttfamily,
  basicstyle=\footnotesize\ttfamily,
  keywordstyle=\color{teal}\ttfamily,
  commentstyle=\color{gray}\ttfamily,
  % keywordstyle=\bfseries\ttfamily,
  % commentstyle=\itshape\ttfamily,
    tabsize=2,
  numberbychapter=false,
  morekeywords={NULL,CAS,shared,threadlocal,struct,atomic},
    moredelim=**[is][\bfseries]{@}{@},
    moredelim=**[is][\bfseries\tiny]{§}{§},
    keepspaces=true,
  numbers=left,
  numberstyle=\scriptsize,
  numbersep=6pt,
  firstnumber=last,
  mathescape=true,
  xleftmargin=13pt,
}
\lstdefinestyle{inline}{
  % style=condensed,
    basicstyle=\small\ttfamily,
  % basicstyle=\ttfamily,
  keywords={},
    mathescape=true,
}
\newcommand{\code}[1]{\lstinline[style=inline]!#1!}
\newtheorem{assumption}{Assumption}
\newcommand\xxrightarrow[1]{\raisebox{-.75pt}{\ensuremath{\smash{\mathrel{%
  \setbox2=\hbox{\stackon{\scriptstyle#1}{\scriptstyle#1}}%
  \stackon[-3.5pt]{%
    \xrightarrow{\makebox[\dimexpr\wd2\relax]{}}%
  }{%
   \scriptstyle#1\,%
  }%
}}}}}
\crefname{line}{line}{lines}
\Crefname{line}{Line}{Lines}
\crefname{assumption}{assumption}{assumptions}
\Crefname{assumption}{Assumption}{Assumptions}
\crefname{property}{Property}{Properties}
\Crefname{property}{Property}{Properties}
\crefname{aux}{auxiliary}{auxiliaries}
\newcommand{\CREF}[1]{\protect\texorpdfstring{\Cref{#1}}{}}
\newenvironment{casedistinction}{%
   
   \description[style=nextline,leftmargin=1pt,itemsep=2.5ex,topsep=1.5ex]
}{%
   \enddescription
}
\newcommand{\ad}[1]{\vspace{2mm}\underline{\textit{Ad #1.}}}
\newcounter{auxpropcounter}[section]%
\newcounter{auxpropcounterSave}[section]%
\newenvironment{auxiliary}{%
  \setcounter{auxpropcounterSave}{\value{equation}}%
  \setcounter{equation}{\value{auxpropcounter}}%
  \align%
}{%
  \endalign%
  \setcounter{auxpropcounter}{\value{equation}}%
  \setcounter{equation}{\value{auxpropcounterSave}}%
}%
\renewcommand{\emptyset}{\varnothing}
\newcommand{\set}[1]{\{#1\}}
\newcommand{\setcond}[2]{\{#1\mid\,#2\}}
\newcommand{\bnf}{\:\mid\:}
\newcommand{\rangeof}[1]{\mathit{range}(#1)}
\newcommand{\domof}[1]{\mathit{dom}(#1)}
\newcommand{\restrict}[2]{#1|_{#2}}
\newcommand{\aheap}{\mathit{m}}
\newcommand{\aheapp}{\mathit{m}'}
\newcommand{\anexp}{\mathit{e}}
\newcommand{\anexpp}{\mathit{e'}}
\newcommand{\pvars}{\mathit{PVar}}
\newcommand{\apvar}{\mathit{p}}
\newcommand{\apvarp}{\mathit{q}}
\newcommand{\dvars}{\mathit{DVar}}
\newcommand{\advar}{\mathit{x}}
\newcommand{\advarp}{\mathit{y}}
\newcommand{\lhs}{\mathit{lhs}}
\newcommand{\rhs}{\mathit{rhs}}
\newcommand{\psels}{\mathit{PSel}}
\newcommand{\dsels}{\mathit{DSel}}
\newcommand{\retire}{\mathtt{retire}}
\newcommand{\retireof}[1]{\retire(#1)}
\newcommand{\guard}{\mathtt{protect}}
\newcommand{\guardof}[2]{\guard(#1,#2)}
\newcommand{\unguard}{\mathtt{unprotect}}
\newcommand{\leaveQ}{\mathtt{leaveQ}}
\newcommand{\enterQ}{\mathtt{enterQ}}
\newcommand{\free}{\mathtt{free}}
\newcommand{\freeof}[1]{\free(#1)}
\newcommand{\anact}{\mathit{act}}
\newcommand{\anactp}{\mathit{act}'}
\newcommand{\act}[3]{(#1,#2,#3)}
\newcommand{\athread}{t}
\newcommand{\acom}{\mathit{com}}
\newcommand{\anup}{\mathit{up}}
\newcommand{\anupp}{\mathit{up}'}
\newcommand{\vecof}[1]{\bar{#1}}
\newcommand{\trans}[1]{\xxrightarrow{#1}}
\newcommand{\translab}[2]{\ensuremath{#1,\,#2}}
\newcommand{\translabbr}[2]{\ensuremath{#1},\\\ensuremath{#2}}
\newcommand{\evt}[2]{\mathtt{#1}(#2)}
\newcommand{\sem}[1]{[\![#1]\!]}
\newcommand{\gcplussem}[2][1]{\sem{#2}_{\mathit{gc{\scriptscriptstyle+#1}}}}
\newcommand{\asem}[2][\ads]{\sem{#1}_{#2}}
\newcommand{\allsem}[1][\ads]{\sem{#1}_{\adr}}
\newcommand{\onesem}[1][\ads]{\sem{#1}_{\mathit{one}}}
\newcommand{\adrsem}[2][\ads]{\sem{#1}_{\set{#2}}}
\newcommand{\validof}[1]{\mathit{valid}_{#1}}
\newcommand{\invalidof}[1]{\mathit{invalid}_{#1}}
\newcommand{\segval}{\mathtt{seg}}
\newcommand{\anint}{\mathit{i}}
\newcommand{\sel}[2]{#1.\mathtt{#2}}
\newcommand{\psel}[2][\mkern-1mu]{#2.\mathtt{next_{#1}}}
\newcommand{\dsel}[2][\mkern-1mu]{#2.\mathtt{data_{#1}}}
\newcommand{\malloc}{\mathtt{malloc}}
\newcommand{\freesof}[1]{\mathit{frees}({#1})}
\newcommand{\freedof}[1]{\mathit{freed}({#1})}
\newcommand{\freshof}[1]{\mathit{fresh}({#1})}
\newcommand{\pexp}{\mathit{PExp}}
\newcommand{\adexp}{\mathit{dexp}}
\newcommand{\apexp}{\mathit{pexp}}
\newcommand{\apexpp}{\mathit{qexp}}
\newcommand{\dexp}{\mathit{DExp}}
\newcommand{\op}{\mathtt{op}}
\newcommand{\opof}[1]{\op(#1)}
\newcommand{\acond}{\mathit{cond}}
\newcommand{\assert}{\mathtt{assert}}
\newcommand{\assertof}[1]{\assert\ #1}
\newcommand{\anadr}{a}
\newcommand{\anadrp}{b}
\newcommand{\anadrpp}{c}
\newcommand{\advalue}{d}
\newcommand{\avalue}{v}
\newcommand{\avaluep}{w}
\newcommand{\aprog}{\mathit{P}}
\newcommand{\dom}{\ensuremath{\mathit{Dom}}}
\newcommand{\adr}{\mathit{Adr}}
\newcommand{\adrof}[1]{\mathit{adr}(#1)}
\newcommand{\heapcomput}[1]{\aheap_{#1}}
\newcommand{\heapcomputof}[2]{\heapcomput{#1}(#2)}
\newcommand{\afunc}{\mathit{func}}
\newcommand{\afuncof}[1]{\afunc(#1)}
\newcommand{\aval}[1][]{\mathit{v_{#1}}}
\newcommand{\historyof}[2][]{\mathcal{H}_{#1}({#2})}
\newcommand{\specof}[1]{\mathcal{S}({#1}\mkern+.5mu)}
\newcommand{\anevent}{\mathit{evt}}
\newcommand{\aguard}{\mathit{g}}
\newcommand{\alocation}{\mathit{l}}
\newcommand{\alocationp}{\alocation'}
\newcommand{\astate}{\mathit{s}}
\newcommand{\astatep}{\astate'}
\newcommand{\anobs}[1][]{\mathcal{O}_{#1}}
\newcommand{\smrobs}[1][\mkern-1mu\mathit{SMR}]{\anobs[#1]}
\newcommand{\baseobs}{\anobs[\mathit{Base}]}
\newcommand{\hpobs}{\anobs[\mathit{HP}]}
\newcommand{\ebrobs}{\anobs[\mathit{EBR}]}
\newcommand{\gcobs}{\anobs[\mathit{GC}]}
\newcommand{\thehpobs}{\baseobs\times\hpobs}
\newcommand{\theebrobs}{\baseobs\times\ebrobs}
\newcommand{\ahist}{h}
\newcommand{\ahistp}{h'}
\newcommand{\controlof}[1]{\mathit{ctrl}(#1)}
\newcommand{\absevt}[2]{#1(#2)}
\newcommand{\hconcat}{\mkern+1mu.\mkern+2.5mu}
\newcommand{\ansmr}{\mathit{R}}
\newcommand{\adsraw}{\mathit{D}}
\newcommand{\ads}[1][\smrobs]{\adsraw(#1\mkern+.5mu)}
\newcommand{\anovar}{\mathit{u}}
\newcommand{\anovarp}{\mathit{v}}
\newcommand{\anovarpp}{\mathit{w}}
\newcommand{\swap}[1]{\mathit{swap}_{#1}}
\newcommand{\swapadrraw}{\swap{\mathit{adr}}}
\newcommand{\swapadrinvraw}{\swap{\mathit{adr}}^{-1}}
\newcommand{\swapexpraw}{\swap{\mathit{exp}}}
\newcommand{\swapexpinvraw}{\swap{\mathit{exp}}^{-1}}
\newcommand{\swaphistraw}{\swap{\mathit{hist}}}
\newcommand{\swaphistinvraw}{\swap{\mathit{hist}}^{-1}}
\newcommand{\swapadr}[1]{\swapadrraw(#1)}
\newcommand{\swapadrinv}[1]{\swapadrinvraw(#1)}
\newcommand{\swapexp}[1]{\swapexpraw(#1)}
\newcommand{\swapexpinv}[1]{\swapexpinvraw(#1)}
\newcommand{\swaphist}[1]{\swaphistraw(#1)}
\newcommand{\swaphistinv}[1]{\swaphistinvraw(#1)}
\newcommand{\computequiv}{\sim}
\newcommand{\heapequiv}[1][\emptyset]{\simeq_{#1}}
\newcommand{\obsrel}[1][\!]{\lessdot_{#1}}
\newcommand{\freeableof}[2]{\mathcal{F}(#1,\mkern+2mu#2)}
\newcommand{\sigmagoal}{\sigma_{\!\textsc{is}}}
\newcommand{\logicequiv}{\sdststile{}{}}
\newcommand{\avar}{\mathit{var}}
\newcommand{\avarp}{\mathit{var}'}
\newcommand{\anhpobs}[3]{\anobs[\mathit{HP}]({{#1},{#2},{#3}})}
\newcommand{\vadrof}[1]{\adrof{\restrict{\heapcomput{#1}}{\validof{#1}}}}
\newcommand{\allocatedof}[1]{\mathit{allocated}_{#1}}
\newcommand{\renamingof}[3]{#1[#2 / #3]}
\newcommand{\prall}[1]{\mkern+2mu{#1}\mkern+2mu}
\newcommand{\pralll}[1]{\mkern+4mu{#1}\mkern+4mu}
\newcommand{\views}{V}
\newcommand{\sequential}{\mathit{seq}}
\newcommand{\sequentialof}[1]{\sequential(#1)}
\newcommand{\seqof}[1]{\sequentialof{#1}}
\newcommand{\interference}{\mathit{int}}
\newcommand{\interferenceof}[1]{\interference(#1)}
\newcommand{\intof}[1]{\interferenceof{#1}}
\newcommand{\aview}{\nu}
\newcommand{\aviewp}{\omega}
\newcommand{\enter}{\mathtt{enter}}
\newcommand{\enterof}[1]{\enter\:#1}
\newcommand{\exit}{\mathtt{exit}}
\newcommand{\exitof}[1]{\exit(#1)}
\newcommand{\smrcom}{\mathit{smr}}
\newcommand{\satisfies}[2]{#1\models#2}
\newcommand{\inv}{\mathit{inv}}
\newcommand{\invof}[1]{\inv{\mkern+1.5mu:\mkern+1mu}#1}
\newcommand{\ret}{\mathit{ret}}
\newcommand{\retof}[1]{\ret{\mkern+1mu:\mkern+1mu}#1}
\newcommand{\pvarssome}[1]{\pvars^{#1}}
\newcommand{\dvarssome}[1]{\dvars^{#1}}
\newcommand{\pvarsds}{\pvarssome{\adsraw}}
\newcommand{\dvarsds}{\dvarssome{\adsraw}}
\newcommand{\pvarssmr}{\pvarssome{\ansmr}}
\newcommand{\dvarssmr}{\dvarssome{\ansmr}}
\newcommand{\heapcomputsome}[2]{\aheap^{#1}_{#2}}
\newcommand{\heapcomputsomeof}[3]{\heapcomputsome{#1}{#2}(#3)}
\newcommand{\heapcomputall}[1]{\heapcomputsome{*}{#1}}
\newcommand{\heapcomputds}[1]{\heapcomputsome{\adsraw}{#1}}
\newcommand{\heapcomputdsof}[2]{\heapcomputds{#1}(#2)}
\newcommand{\heapcomputsmr}[1]{\heapcomputsome{\ansmr}{#1}}
\newcommand{\controlsome}[2]{\mathit{ctrl}_{#1}^{\mkern+2mu#2}}
\newcommand{\controlallof}[2][]{\controlsome{#1}{*}(#2)}
\newcommand{\controldsof}[2][]{\controlsome{#1}{\adsraw}(#2)}
\newcommand{\controlsmrof}[2][]{\controlsome{#1}{\ansmr}(#2)}
\newcommand{\apc}{\mathit{pc_\adsraw}}
\newcommand{\apcp}{\mathit{pc_\ansmr}}
\newcommand{\bad}{\mathit{Bad}}
\def\githuburlDS{\url{https://github.com/Wolff09/TMRexp/releases/tag/POPL19-chkds}}
\def\githuburlSMR{\url{https://github.com/Wolff09/TMRexp/releases/tag/POPL19-chksmr}}
\begin{document}

%% Title
\title[Decoupling Lock-Free Data Structures from Memory Reclamation for Static Analysis]{Decoupling Lock-Free Data Structures from\linebreak[1] Memory Reclamation for Static Analysis}
\subtitle{Extended Version}

%% Authors
\author{Roland Meyer}
\orcid{0000-0001-8495-671X} %% \orcid is optional
\affiliation{%
  \institution{TU Braunschweig}
  \country{Germany}
}
\email{roland.meyer@tu-bs.de}
\author{Sebastian Wolff}
\orcid{0000-0002-3974-7713} %% \orcid is optional
\affiliation{%
  \institution{TU Braunschweig}
  \country{Germany}
}
\email{sebastian.wolff@tu-bs.de}

%% Abstract
%!TEX root = ../main.tex

\begin{abstract}
	Verification of concurrent data structures is one of the most challenging tasks in software verification.
	The topic has received considerable attention over the course of the last decade.
	Nevertheless, human-driven techniques remain cumbersome and notoriously difficult while automated approaches suffer from limited applicability.
	The main obstacle for automation is the complexity of concurrent data structures.
	This is particularly true in the absence of garbage collection.
	The intricacy of lock-free memory management paired with the complexity of concurrent data structures makes automated verification prohibitive.

	In this work we present a method for verifying concurrent data structures and their memory management separately.
	We suggest two simpler verification tasks that %, provided they succeed, 
	imply the correctness of the data structure.
	The first task establishes an over-approximation of the reclamation behavior of the memory management.
	The second task exploits this over-approximation to verify the data structure without the need to consider the implementation of the memory management itself.
	To make the resulting verification tasks tractable for automated techniques, we establish a second result.
	We show that a verification tool needs to consider only executions where a single memory location is reused.
	We implemented our approach and were able to verify linearizability of Michael\&Scott's queue and the DGLM queue for both hazard pointers and epoch-based reclamation.
	To the best of our knowledge, we are the first to verify such implementations fully automatically.
\end{abstract}

%
% The code below should be generated by the tool at http://dl.acm.org/ccs.cfm
%
\begin{CCSXML}
<ccs2012>
<concept>
<concept_id>10003752.10003809.10010031</concept_id>
<concept_desc>Theory of computation~Data structures design and analysis</concept_desc>
<concept_significance>500</concept_significance>
</concept>
<concept>
<concept_id>10003752.10010124.10010138.10010142</concept_id>
<concept_desc>Theory of computation~Program verification</concept_desc>
<concept_significance>500</concept_significance>
</concept>
<concept>
<concept_id>10003752.10003809.10010170.10010171</concept_id>
<concept_desc>Theory of computation~Shared memory algorithms</concept_desc>
<concept_significance>300</concept_significance>
</concept>
<concept>
<concept_id>10003752.10010124.10010138.10010140</concept_id>
<concept_desc>Theory of computation~Program specifications</concept_desc>
<concept_significance>300</concept_significance>
</concept>
<concept>
<concept_id>10003752.10010124.10010138.10010143</concept_id>
<concept_desc>Theory of computation~Program analysis</concept_desc>
<concept_significance>300</concept_significance>
</concept>
</ccs2012>
\end{CCSXML}
\ccsdesc[500]{Theory of computation~Data structures design and analysis}
\ccsdesc[500]{Theory of computation~Program verification}
\ccsdesc[300]{Theory of computation~Shared memory algorithms}
\ccsdesc[300]{Theory of computation~Program specifications}
\ccsdesc[300]{Theory of computation~Program analysis}
%
% End generated code
%

%% Keywords (mandatory in final camera-ready submission)
\keywords{static analysis, lock-free data structures, verification, linearizability, safe memory reclamation, memory management}

\maketitle
%!TEX root = ../main.tex

%!TEX root = ../main.tex

\section{Introduction}
\label{sec:introduction}

Data structures are a basic building block of virtually any program.
Efficient implementations are typically a part of a programming language's standard library.
With the advent of highly concurrent computing being available even on commodity hardware, concurrent data structure implementations are needed.
The class of lock-free data structures has been shown to be particularly efficient.
Using fine-grained synchronization and avoiding such synchronization whenever possible results in unrivaled performance and scalability.

Unfortunately, this use of fine-grained synchronization is what makes lock-free data structures also unrivaled in terms of complexity.
Indeed, bugs have been discovered in published lock-free data structures \cite{Michael:1995:CMM:898203,DBLP:conf/spaa/DohertyDGFLMMSS04}.
This confirms the need for formal proofs of correctness.
The de facto standard correctness notion for concurrent data structures is linearizability \cite{DBLP:journals/toplas/HerlihyW90}.
Intuitively, linearizability provides the illusion that the operations of a data structure appear atomically.
Clients of linearizable data structures can thus rely on a much simpler sequential specification.

Establishing linearizability for lock-free data structures is challenging.
The topic has received considerable attention over the past decade (cf. \Cref{sec:related_work}).
For instance, \citet{DBLP:conf/forte/DohertyGLM04} give a mechanized proof of a lock-free queue.
Such proofs require a tremendous effort and a deep understanding of the data structure and the verification technique.
Automated approaches remove this burden.
\Citet{DBLP:conf/vmcai/Vafeiadis10,DBLP:conf/cav/Vafeiadis10}, for instance, verifies singly-linked structures fully automatically.

However, many linearizability proofs rely on a garbage collector.
What is still missing are automated techniques that can handle lock-free data structures with manual memory management.
The major obstacle in automating proofs for such implementations is that lock-free memory management is rather complicated---in some cases even as complicated as the data structure using it.
The reason for this is that memory deletions need to be deferred until all unsynchronized, concurrent readers are done accessing the memory.
Coping with lock-free memory management is an active field of research.
It is oftentimes referred to as \emph{Safe Memory Reclamation (SMR)}.
The wording underlines its focus on safely reclaiming memory for lock-free programs.
This results in the system design depicted in \Cref{fig:system-view}.
The clients of a lock-free data structure are unaware of how it manages its memory.
The data structure uses an allocator to acquire memory, for example, using \code{malloc}.
However, it does not \code{free} the memory itself.
Instead, it delegates this task to an SMR algorithm which defers the \code{free} until it is \emph{safe}.
The deferral can be controlled by the data structure through an API the functions of which depend on the actual SMR algorithm.

%!TEX root = ../../main.tex

\begin{wrapfigure}{r}{0.42\textwidth} % 15 lines in height
	\vspace{-12pt}
	\begin{tcolorbox}
	\center%
	\begin{tikzpicture}
		\node[draw,rectangle,anchor=west,text width=4.8cm,minimum height=6mm,align=center] at (0,.15) (C) { Client };

		\node[draw,rectangle,anchor=west,text width=1cm,minimum height=6mm,align=center] at (.3,-1.5) (D) { LFDS };
		\node[draw,rectangle,anchor=west,text width=1cm,minimum height=6mm,align=center] at (2.85,-1.5) (S) { SMR };

		\node[draw,rectangle,anchor=west,text width=4.8cm,minimum height=6mm,align=center] at (0,-2.9) (A) { Allocator };

		\draw[->] (D) -- node [above]{API} (S);
		\draw[->] (D) -- node [right,pos=.25]{\code{malloc}} (D |- A.north);
		\draw[->] (S) -- node [right,pos=.25]{\code{free}} (S |- A.north);
		\draw[<->] ($(C.south)+(2.25,0)$) -- ($(A.north)+(2.25,0)$);
		\draw[<->] (D) -- (D |- C.south);

		\draw[dashed,color=red,line width=0.65pt] (0,-.75) rectangle (4.45,-2.35);

		\node [anchor=south east,color=red] at (4.45,-.78) (L) {this paper}; 
	\end{tikzpicture}
	\caption{%
		Typical interaction between the components of a system.
		Lock-free data structures (LFDS) perform all their reclamation through an SMR component.
	}
	\label{fig:system-view}
	\end{tcolorbox}
	\vspace{-11pt}
\end{wrapfigure}

In this paper we tackle the challenge of verifying lock-free data structures which use SMR.
To make the verification tractable, we suggest a compositional approach which is inspired by the system design from \Cref{fig:system-view}.
We observe that the only influence the SMR implementation has on the data structure are the \code{free} operations it performs.
So we introduce SMR specifications that capture when a \code{free} can be executed depending on the history of invoked SMR API functions.
With such a specification at hand, we can verify that a given SMR implementation adheres to the specification.
More importantly, it allows for a compositional verification of the data structure.
Intuitively, we replace the SMR implementation with the SMR specification.
If the SMR implementation adheres to the specification, then the specification over-approximates the frees of the implementation.
Using this over-approximation for verifying the data structure is sound because frees are the only influence the SMR implementation has on the data structure.

Although our compositional approach localizes the verification effort, it leaves the verification tool with a hard task: verifying shared-memory programs with memory reuse.
Our second finding eases this task by taming the complexity of reasoning about memory reuse.
We prove sound that it suffices to consider reusing a single memory location only.
This result relies on data structures being invariant to whether or not memory is actually reclaimed and reused.
Intuitively, this requirement boils down to ABA freedom and is satisfied by data structures from the literature.

To substantiate the usefulness of our approach, we implemented a linearizability checker which realizes the approaches presented in this paper, compositional verification and reduction of reuse to a single address.
Our tool is able to establish linearizability of well-known lock-free data structures, such as \emph{Treiber's stack} \cite{opac-b1015261}, \emph{Michael\&Scott's queue} \cite{DBLP:conf/podc/MichaelS96}, and the \emph{DGLM queue} \cite{DBLP:conf/forte/DohertyGLM04}, when using SMR, like \emph{Hazard Pointers} \cite{DBLP:conf/podc/Michael02} and \emph{Epoch-Based Reclamation} \cite{DBLP:phd/ethos/Fraser04}.
We remark that we needed both results for the benchmarks to go through.
To the best of our knowledge, we are the first to verify lock-free data structures with SMR fully automatically.
We are also the first to automatically verify the DGLM queue with any manual memory management.

Our contributions and the outline of our paper are summarized as follows:
\begin{compactitem}
	\item[§\ref{sec:observers}]
		introduces a means for specifying SMR algorithms and establishes how to perform compositional verification of lock-free data structures and SMR implementations,
	\item[§\ref{sec:reduction}]
		presents a sound verification approach which considers only those executions of a program where at most a single memory location is reused,
	\item[§\ref{sec:evaluation}]
		evaluates our approach on well-known lock-free data structures and SMR algorithms, and demonstrates its practicality.
\end{compactitem}
We illustrate our contributions informally in §\ref{sec:illustration}, introduce the programing model in §\ref{sec:programs}, discuss related work in §\ref{sec:related_work}, and conclude the paper in §\ref{sec:conclusion}.

This technical report extends the conference version \cite{popl} with missing details.
For companion material refer to \url{https://wolff09.github.io/TMRexp/}.

%!TEX root = ../main.tex

\section{The Verification Approach on an Example}
\label{sec:illustration}

The verification of lock-free data structures is challenging due to their complexity.
One source of this complexity is the avoidance of traditional synchronization.
This leads to subtle thread interactions and imposes a severe state space explosion.
The problem becomes worse in the absence of garbage collection.
This is due to the fact that lock-free memory reclamation is far from trivial.
Due to the lack of synchronization it is typically impossible for a thread to judge whether or not certain memory will be accessed by other threads.
Hence, naively deleting memory is not feasible.
To overcome this problem, programmers employ SMR algorithms.
While this solves the memory reclamation problem, it imposes a major obstacle for verification.
For one, SMR implementations are oftentimes as complicated as the data structure using it.
This makes the already hard verification of lock-free data structures prohibitive.

%!TEX root = ../../main.tex

\begin{figure}
	\center%
\begin{tcolorbox}
\begin{lstlisting}[style=condensed]
struct Node { data_t data; Node* next; };
shared Node* Head, Tail;
atomic init() { Head = new Node(); Head->next = NULL; Tail = Head; }
\end{lstlisting}%
	\begin{minipage}[t]{.46\textwidth}%
\begin{lstlisting}[style=condensed]
void enqueue(data_t input) {
	Node* node = new Node();
	node->data = input;
	node->next = NULL;
	while (true) {
		Node* tail = Tail;
§H§		@protect(tail, 0);@
§H§		@if (tail != Tail) continue;@
		Node* next = tail->next;
		if (tail != Tail) continue;
		if (next != NULL) {
			CAS(&Tail, tail, next);
			continue;
		}
		if (CAS(&tail->next, next, node))
			break
	}
	CAS(&Tail, tail, node);
§H§	@unprotect(0);@
}
\end{lstlisting}%
	\end{minipage}%
	\hfill%
	\begin{minipage}[t]{.475\textwidth}%
\begin{lstlisting}[style=condensed]
data_t dequeue() { $\label[line]{code:michaelscott_hp:dequeue}$
	while (true) {
		Node* head = Head; $\label[line]{code:michaelscott_hp:read-head}$
§H§		@protect(head, 0);@ $\label[line]{code:michaelscott_hp:protect-head}$
§H§		@if (head != Head) continue;@ $\label[line]{code:michaelscott_hp:check-head}$
		Node* tail = Tail; $\label[line]{code:michaelscott_hp:read-tail}$
		Node* next = head->next; $\label[line]{code:michaelscott_hp:read-next}$
§H§		@protect(next, 1);@
		if (head != Head) continue;
		if (next == NULL) return EMPTY;
		if (head == tail) {
			CAS(&Tail, tail, next);
			continue;
		} else {
			data_t output = next->data;
			if (CAS(&Head, head, next)) { $\label[line]{code:michaelscott_hp:cas}$
				// delete head; $\label[line]{code:michaelscott_hp:free}$
§H§				@retire(head);@ $\label[line]{code:michaelscott_hp:retire}$
§H§				@unprotect(0); unprotect(1);@
				return output; $\label[line]{code:michaelscott_hp:dequeue-return}$
}	}	}	}
\end{lstlisting}%
	\end{minipage}%
	\caption{%
		Michael\&Scott's non-blocking queue \cite{DBLP:conf/podc/MichaelS96} extended with hazard pointers \cite{DBLP:conf/podc/Michael02} for safe memory reclamation.
		The modifications needed for using hazard pointers are marked with \code{H}.
		The implementation requires two hazard pointers per thread.
	}
	\label{fig:michaelscott_hp}
\end{tcolorbox}
\end{figure}

We illustrate the above problems on the lock-free queue from \citet{DBLP:conf/podc/MichaelS96}.
It is a practical example in that it is used for Java's \href{https://docs.oracle.com/javase/7/docs/api/java/util/concurrent/ConcurrentLinkedQueue.html}{\code{ConcurrentLinkedQueue}} and C++ Boost's \href{https://www.boost.org/doc/libs/1_68_0/boost/lockfree/queue.hpp}{\code{lockfree:$\!$:queue}}, for instance.
The implementation is given in \Cref{fig:michaelscott_hp} (ignore the lines marked by \code{H} for a moment).
The queue maintains a \code{NULL}-terminated singly-linked list of nodes.
New nodes are enqueued at the end of that list.
If the \code{Tail} pointer points to the end of the list, a new node is appended by linking \code{Tail->$\,$next} to the new node.
Then, \code{Tail} is updated to point to the new node.
If \code{Tail} is not pointing to the end of the list, the enqueue operation first moves \code{Tail} to the last node and then appends a new node as before.
The dequeue operation on the other hand removes nodes from the front of the queue.
To do so, it first reads the data of the second node in the queue (the first one is a dummy node) and then swings \code{Head} to the subsequent node.
Additionally, dequeues ensure that \code{Head} does not overtake \code{Tail}.
Hence, a dequeue operation may have to move \code{Tail} towards the end of the list before it moves \code{Head}.
It is worth pointing out that threads read from the queue without synchronization.
Updates synchronize on single memory words using~atomic~Compare-And-Swap~(\code{CAS}).

In terms of memory management, the queue is flawed.
It leaks memory because dequeued nodes are not reclaimed.
A naive fix for this leak would be to uncomment the \code{delete head} statement in \Cref{code:michaelscott_hp:free}.
However, other threads may still hold and dereference pointers to the then deleted node.
Such \emph{use-after-free} dereference are \emph{unsafe}.
In \texttt{C/C++}, for example, the behavior is \emph{undefined} and can result in a system crash due to a \code{segfault}.

%!TEX root = ../../main.tex

\begin{figure}
	\center%
\begin{tcolorbox}
\begin{lstlisting}[style=condensed]
struct HPRec { HPRec* next; Node* hp0; Node* hp1; }
shared HPRec* Records;
threadlocal HPRec* myRec;
threadlocal List<Node*> retiredList;
atomic init() { Records = NULL; }
\end{lstlisting}%
	\begin{minipage}[t]{.46\textwidth}%
\begin{lstlisting}[style=condensed]
void join() {
	myRec = new HPRec();
	while (true) {
		HPRec* rec = Records;
		myRec->next = rec;
		if (CAS(Records, rec, myRec))
			break;
	}
}

void part() {
	unprotect(0); unprotect(1);
}

void protect(Node* ptr, int i) {
	if (i == 0) myRec->hp0 = ptr;
	if (i == 1) myRec->hp1 = ptr;
	assert(false);
}

void unprotect(int i) {
	protect(NULL, i);
}
\end{lstlisting}%
	\end{minipage}%
	\hfill%
	\begin{minipage}[t]{.475\textwidth}%
\begin{lstlisting}[style=condensed]
void retire(Node* ptr) {
	if (ptr != NULL)
		retiredList.add(ptr);
	if (*) reclaim();
}

void reclaim() {
	List<Node*> protectedList;
	HPRec* cur = Records;
	while (cur != NULL) {
		Node* hp0 = cur->hp0;
		Node* hp1 = cur->hp1;
		protectedList.add(hp0);
		protectedList.add(hp1);
		cur = cur->next;
	}
	for (Node* ptr : retiredList) { $\label[line]{code:hpimpldyn:protectedListComputed}$
		if (protectedList.contains(ptr))
			continue;
		retiredList.remove(ptr);
		delete ptr;
	}
}
\end{lstlisting}%
	\end{minipage}%
	\caption{%
		Simplified hazard pointer implementation \cite{DBLP:conf/podc/Michael02}.
		Each thread is equipped with two hazard pointers.
		Threads can dynamically join and part.
		Note that the record used to store a thread's hazard pointers is not reclaimed upon parting.
	}
	\label{fig:hpimpldyn}
\end{tcolorbox}
\end{figure}

To avoid both memory leaks and unsafe accesses, programmers employ SMR algorithms like \emph{Hazard Pointers (HP)} \cite{DBLP:conf/podc/Michael02}.
An example HP implementation is given in \Cref{fig:hpimpldyn}.
Each thread holds a \code{HPRec} record containing two single-writer multiple-reader pointers \code{hp0} and \code{hp1}.
A thread can use these pointers to \code{protect} nodes it will subsequently access without synchronization.
Put differently, a thread requests other threads to defer the deletion of a node by protecting it.
The deferred deletion mechanism is implemented as follows.
Instead of explicitly deleting nodes, threads \code{retire} them.
Retired nodes are stored in a thread-local \code{retiredList} and await reclamation.
Eventually, a thread tries to \code{reclaim} the nodes collected in its list.
Therefore, it reads the hazard pointers of all threads and copies them into a local \code{protectedList}.
The nodes in the intersection of the two lists, \code{retiredList$\,\cap\,$protectedList}, cannot be reclaimed because they are still protected by some thread.
The remaining nodes, \code{retiredList$\,\setminus\,$protectedList}, are reclaimed.

Note that the HP implementation from \Cref{fig:hpimpldyn} allows for threads to join and part dynamically.
In order to join, a thread allocates an \code{HPRec} and appends it to a shared list of such records.
Afterwards, the thread uses the \code{hp0} and \code{hp1} fields of that record to issue protections.
Subsequent \code{reclaim} invocations of any thread are aware of the newly added hazard pointers since \code{reclaim} traverses the shared list of \code{HPRec} records.
To part, threads simply \code{unprotect} their hazard pointers.
They do \emph{not} reclaim their \code{HPRec} record \cite{DBLP:conf/podc/Michael02}.
The reason for this is that reclaiming would yield the same difficulties that we face when reclaiming in lock-free data structures, as discussed before.

To use hazard pointers with Michael\&Scott's queue we have to modify the implementation to \code{retire} dequeued nodes and to \code{protect} nodes that will be accessed without synchronization.
The required modifications are marked by \code{H} in \Cref{fig:michaelscott_hp}.
Retiring dequeued nodes is straight forward, as seen in \Cref{code:michaelscott_hp:retire}.
Successfully protecting a node is more involved.
A typical pattern to do this is implemented by \Cref{code:michaelscott_hp:read-head,code:michaelscott_hp:protect-head,code:michaelscott_hp:check-head}, for instance.
First, a local copy \code{head} of the shared pointer \code{Head} is created, \Cref{code:michaelscott_hp:read-head}.
The node referenced by \code{head} is subsequently protected, \Cref{code:michaelscott_hp:protect-head}.
Simply issuing this protection, however, does not have the intended effect.
Another thread could concurrently execute \code{reclaim} from \Cref{fig:hpimpldyn}.
If the reclaiming thread already computed its \code{protectedList}, i.e., executed \code{reclaim} up to \Cref{code:hpimpldyn:protectedListComputed}, then it does not \emph{see} the later protection and thus may reclaim the node referenced by \code{head}.
The check from \Cref{code:michaelscott_hp:check-head} safeguards the queue from such situations.
It ensures that \code{head} has not been retired since the protection was issued.%
 \footnote{The reasoning is a bit more complicated. We discuss this in more detail in \Cref{sec:reduction:abas}.}
Hence, no concurrent \code{reclaim} considers it for deletion.
This guarantees that subsequent dereferences of \code{head} are safe.
This pattern exploits a simple temporal property of hazard pointers, namely that \emph{a retired node is not reclaimed if it has been protected continuously since before the retire} \cite{DBLP:conf/esop/GotsmanRY13}.

As we have seen, the verification of lock-free data structures becomes much more complex when considering SMR code.
On the one hand, the data structure needs additional logic to properly use the SMR implementation.
On the other hand, the SMR implementation is complex in itself.
It is lock-free (as otherwise the data structure would not be lock-free) and uses multiple lists.

Our contributions make the verification tractable.
First, we suggest a compositional verification technique which allows us to verify the data structure and the SMR implementation separately.
Second, we reduce the impact of memory management for the two new verification tasks.
We found that both contributions are required to automate the verification of data structures like Michael\&Scott's queue with hazard pointers.

%%%%%%%%%%%%%%%%%%%%%%%%%%%%%%%%%%%%%%%%%%%%%%%%%%%%%%%%%%%%%%%%%%%%%%%%%%%%%%%%%%%%%%%%%%%%%%%%%%%%%%%%%%%%%%%%%%%
%%%%%%%%%%%%%%%%%%%%%%%%%%%%%%%%%%%%%%%%%%%%%%%%%%%%%%%%%%%%%%%%%%%%%%%%%%%%%%%%%%%%%%%%%%%%%%%%%%%%%%%%%%%%%%%%%%%
%%%%%%%%%%%%%%%%%%%%%%%%%%%%%%%%%%%%%%%%%%%%%%%%%%%%%%%%%%%%%%%%%%%%%%%%%%%%%%%%%%%%%%%%%%%%%%%%%%%%%%%%%%%%%%%%%%%
\subsection{Compositional Verification}

We propose a compositional verification technique.
We split up the single, monolithic task of verifying a lock-free data structure together with its SMR implementation into two separate tasks: verifying the SMR implementation and verifying the data structure implementation without the SMR implementation.
At the heart of our approach is a specification of the SMR behavior.
Crucially, this specification has to capture the influence of the SMR implementation on the data structure.
Our main observation is that it has \emph{none}, as we have seen conceptually in \Cref{fig:system-view} and practically in \Cref{fig:michaelscott_hp,fig:hpimpldyn}.
More precisely, there is no \emph{direct} influence.
The SMR algorithm influences the data structure only \emph{indirectly} through the underlying allocator: the data structure passes to-be-reclaimed nodes to the SMR algorithm, the SMR algorithm eventually reclaims those nodes using \code{free} of the allocator, and then the data structure can reuse the reclaimed memory with \code{malloc} of the allocator.

In order to come up with an SMR specification, we exploit the above observation as follows.
We let the specification define when reclaiming retired nodes is allowed.
Then, the SMR implementation is correct if the reclamations it performs are a subset of the reclamations allowed by the specification.
For verifying the data structure, we use the SMR specification to over-approximate the reclamation of the SMR implementation.
This way we over-approximate the influence the SMR implementation has on the data structure, provided that the SMR implementation is correct.
Hence, our approach is sound for solving the original verification task.

Towards lightweight SMR specifications, we rely on the insight that SMR implementations, despite their complexity, implement rather simple temporal properties \cite{DBLP:conf/esop/GotsmanRY13}.
We have already seen that hazard pointers implement that \emph{a retired node is not reclaimed if it has been protected continuously since before the retire}.
These temporal properties are incognizant of the actual SMR implementation.
Instead, they reason about those points in time when a call of an SMR function is invoked or returns.
We exploit this by having SMR specifications judge when reclamation is allowed based on the \emph{history} of SMR invocations and returns.

%!TEX root = ../../main.tex

\begin{figure}
\begin{tcolorbox}
	\center%
	\begin{tikzpicture}[->,>=stealth',shorten >=1pt,auto,node distance=2.9cm,thick,initial text={}]
		\node [xshift=-0.2cm,yshift=1cm,draw,thin] {$\anhpobs{\athread}{\anadr}{\anint}$};
		\tikzstyle{every state}=[minimum size=1.5em]
		\tikzset{every edge/.append style={font=\footnotesize}}
		\node[initial,state]    (A)              {\mkstatename{obs:hpillu:init}};
		\node[state]            (E) [right of=A] {\mkstatename{obs:hpillu:invprotected}};
		\node[state]            (B) [right of=E] {\mkstatename{obs:hpillu:retprotected}};
		\node[state]            (C) [right of=B] {\mkstatename{obs:hpillu:invretired}};
		\node[accepting,state]  (D) [right of=C] {\mkstatename{obs:hpillu:final}};
		\coordinate             [below of=A, yshift=+2.1cm]  (X)  {};
		\coordinate             [below of=B, yshift=+2.1cm]  (Y)  {};
		\path
			(A) edge node[align=center] {$\inv$\\$\evt{\guard}{\athread,\anadr,\anint}$} (E)
			(E) edge node[align=center] {$\ret$\\$\evt{\guard}{\athread,\anadr,\anint}$} (B)
			(B) edge node[align=center] {$\inv$\\$\evt{\retire}{*,\anadr}$} (C)
			(C) edge node {$\freeof{\anadr}$} (D)
			(C.south west) edge[-,shorten >=0pt] (Y)
			([xshift=-1mm]X) edge ([xshift=-1mm]A.south)
			(Y) edge[-,shorten >=0pt] node {$\evt{\invof{\unguard}}{\athread,\anint}$} ([xshift=-1mm]X)
			(B) edge[-,shorten >=0pt] ([yshift=1mm]Y.north)
			([yshift=1mm]Y) edge[-,shorten >=0pt] ([xshift=1mm,yshift=1mm]X)
			([xshift=1mm,yshift=1mm]X) edge ([xshift=1mm]A.south)
			;
	\end{tikzpicture}
	\caption{%
		Automaton for specifying \emph{negative} HP behavior for thread $\athread$, address $\anadr$, and index $\anint$.
		It states that if $\anadr$ was protected by thread $\athread$ using hazard pointer $\anint$ before $\anadr$ is retired by any thread (denoted by $*$), then freeing $\anadr$ must be deferred.
		Here, "must be deferred" is expressed by reaching a final state upon a free of $\anadr$.
	}
	\label{fig:observer_hpprotect}
\end{tcolorbox}
\end{figure}

For the actual specification we use observer automata.
A simplified specification for hazard pointers is given in \Cref{fig:observer_hpprotect}.
The automaton $\anhpobs{\athread}{\anadr}{\anint}$ is parametrized by a thread $\athread$, an address $\anadr$, and an integer $\anint$.
Intuitively, $\anhpobs{\athread}{\anadr}{\anint}$ specifies when the $\anint$-th hazard pointer of $\athread$ forces a \code{free} of $\anadr$ to be deferred.
Technically, the automaton reaches an accepting state if a free is performed that should have been deferred.
That is, we let observers specify \emph{bad} behavior.
We found this easier than to formalize the \emph{good} behavior.
For an example, consider the following histories:
\newcommand{\illuevt}[3]{#2(\athread_#1,#3)}%
\begin{align*}
	\ahist_1 &=
	\illuevt{1}{\invof{\guard}}{\anadr,0}
	\hconcat\illuevt{1}{\retof{\guard}}{\anadr,0}
	\hconcat\illuevt{2}{\invof{\retire}}{\anadr}
	\hconcat\illuevt{2}{\retof{\retire}}{\anadr}
	\hconcat\freeof{\anadr}
	~\;\text{and}\\
	\ahist_2 &=
	\illuevt{1}{\invof{\guard}}{\anadr,0}
	\hconcat\illuevt{2}{\invof{\retire}}{\anadr}
	\hconcat\illuevt{1}{\retof{\guard}}{\anadr,0}
	\hconcat\illuevt{2}{\retof{\retire}}{\anadr}
	\hconcat\freeof{\anadr}
	\ .
\end{align*}
History $\ahist_1$ leads $\anhpobs{\athread}{\anadr}{\anint}$ to an accepting state.
Indeed, that $\anadr$ is protected before being retired forbids a free of $\anadr$.
History $\ahist_2$ does not lead to an accepting state because the retire is issued before the protection has returned.
The free of $\anadr$ can be observed if the threads are scheduled in such a way that the protection of $\athread_1$ is not visible while $\athread_2$ computes its \code{retiredList}, as in the scenario described above for motivating why the check at \Cref{code:michaelscott_hp:check-head} is required.

Now, we are ready for compositional verification.
Given an observer, we first check that the SMR implementation is correct wrt. to that observer.
Second, we verify the data structure.
To that end, we strip away the SMR implementation and let the observer execute the frees.
More precisely, we non-deterministically free those addresses which are allowed to be freed according to the observer.

\begin{theorem}[Proven by \Cref{thm:compositionality}]
	Let $\ads[\ansmr]$ be a data structure $\adsraw$ using an SMR implementation $\ansmr$.
	Let $\anobs$ be an observer.
	If $\ansmr$ is correct wrt. $\anobs$ and if $\ads[\anobs]$ is correct, then $\ads[\ansmr]$ is correct.
\end{theorem}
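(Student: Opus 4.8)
The plan is to reduce correctness of $\ads[\ansmr]$ to correctness of $\ads[\anobs]$ by a trace-inclusion argument, using the correctness of $\ansmr$ wrt. $\anobs$ as a bridge. First I would fix the relevant notion of "correct": both $\ads[\ansmr]$ and $\ads[\anobs]$ denote sets of executions (computations) of the composed system, and "correct" means every such execution is safe (e.g. no unsafe use-after-free dereference, and ultimately linearizable). The key structural observation, already emphasized in the text, is that the SMR component $\ansmr$ influences the data structure $\adsraw$ \emph{only} through the $\free$ operations it performs on the shared allocator; the data structure never inspects $\ansmr$'s internal state. Consequently, for every execution $\tau$ of $\ads[\ansmr]$ there is a projection onto the data-structure actions together with the sequence of $\free$ actions, and this projected behavior is entirely determined by (i) the data-structure code and (ii) which addresses get freed at which points. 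I would make this precise as a simulation/projection lemma: the $\adsraw$-part of any $\ads[\ansmr]$-execution, paired with its interleaved $\free$s, is a legal run of $\ads$ against \emph{some} free-schedule.

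Next I would use the hypothesis that $\ansmr$ is correct wrt. $\anobs$. By definition this means: along every execution of $\ads[\ansmr]$, whenever $\ansmr$ performs $\freeof{\anadr}$, the SMR history $\ahist$ observed so far (the sequence of $\inv$/$\ret$ events of SMR API calls, plus $\retire$s) is such that appending $\freeof{\anadr}$ does \emph{not} drive the observer $\anobs$ into an accepting state — equivalently, $\free$s of $\ansmr$ are a subset of the $\free$s the observer \emph{permits} after that history. Now build the matching $\ads[\anobs]$-execution $\tau'$: run the data structure exactly as in $\tau$, and at each point where $\ansmr$ performed $\freeof{\anadr}$ in $\tau$, let the observer-driven semantics of $\ads[\anobs]$ perform the \emph{same} $\freeof{\anadr}$. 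This is a legal step of $\ads[\anobs]$ precisely because $\ads[\anobs]$ non-deterministically frees any address the observer currently allows, and correctness of $\ansmr$ guarantees $\anadr$ is among them. Since the data-structure code, its local/shared state, and the sequence of $\free$s are identical in $\tau$ and $\tau'$, the two executions agree on all data-structure-observable state; in particular $\tau$ is safe iff $\tau'$ is. If $\ads[\anobs]$ is correct then $\tau'$ is safe, hence $\tau$ is safe; as $\tau$ was arbitrary, $\ads[\ansmr]$ is correct.

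The main obstacle I expect is making the "influence only through $\free$" claim airtight at the level of the formal semantics. Concretely I would need: (a) a precise definition of when a history $\ahist$ is induced by an execution, and a lemma that the SMR history induced by $\tau$ equals the one induced by the constructed $\tau'$ (so that "the observer permits $\freeof{\anadr}$" transfers verbatim) — this requires that the data structure issues the \emph{same} SMR API calls ($\guard$, $\unguard$, $\retire$, \dots) in both runs, which in turn relies on the data-structure control flow being independent of $\ansmr$'s internals; and (b) a careful treatment of memory reuse: after a $\free$, the address may be re-$\malloc$'d, and one must check that the allocator's nondeterministic choice of which freed address to hand back can be matched between $\tau$ and $\tau'$, so that pointer identities (and hence ABA-style behavior) coincide. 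Handling (b) cleanly is where I anticipate the bulk of the work; it amounts to showing the allocator interface exposed to $\adsraw$ is the same whether the $\free$s come from $\ansmr$ or from the observer. Once the projection/simulation lemma is established with these two points discharged, the theorem follows by the straightforward execution-matching argument sketched above, and this is presumably what \Cref{thm:compositionality} carries out in detail.
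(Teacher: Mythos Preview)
Your proposal is correct and follows essentially the same route as the paper: build, for each $\ads[\ansmr]$-execution, a matching $\ads[\anobs]$-execution by dropping $\ansmr$'s internal steps and replaying each $\freeof{\anadr}$ as an environment free justified by $\satisfies{\ansmr}{\anobs}$, then conclude by contrapositive on reachability of bad control locations. The two obstacles you flag are exactly what the paper discharges: (a) is handled by an explicit \emph{compositionality} assumption that enforces memory separation between $\adsraw$ and $\ansmr$ so that $\adsraw$'s control flow and SMR-call parameters are unaffected by $\ansmr$'s internals, and (b) is handled in the inductive simulation lemma by maintaining $\freshof{\tau}\subseteq\freshof{\sigma}$ and $\freedof{\tau}\subseteq\freedof{\sigma}$, which guarantees every allocator choice in $\tau$ is available in $\sigma$.
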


A thorough discussion of the illustrated concepts is given in \Cref{sec:observers}.

%%%%%%%%%%%%%%%%%%%%%%%%%%%%%%%%%%%%%%%%%%%%%%%%%%%%%%%%%%%%%%%%%%%%%%%%%%%%%%%%%%%%%%%%%%%%%%%%%%%%%%%%%%%%%%%%%%%
%%%%%%%%%%%%%%%%%%%%%%%%%%%%%%%%%%%%%%%%%%%%%%%%%%%%%%%%%%%%%%%%%%%%%%%%%%%%%%%%%%%%%%%%%%%%%%%%%%%%%%%%%%%%%%%%%%%
%%%%%%%%%%%%%%%%%%%%%%%%%%%%%%%%%%%%%%%%%%%%%%%%%%%%%%%%%%%%%%%%%%%%%%%%%%%%%%%%%%%%%%%%%%%%%%%%%%%%%%%%%%%%%%%%%%%
\subsection{Taming Memory Management for Verification}

Factoring out the implementation of the SMR algorithm and replacing it with its specification reduces the complexity of the data structure code under scrutiny.
What remains is the challenge of verifying a data structure with manual memory management.
As suggested by \citet{DBLP:conf/tacas/AbdullaHHJR13,DBLP:conf/vmcai/HazizaHMW16} this makes the analysis scale poorly or even intractable.
To overcome this problem, we suggest to perform verification in a simpler semantics.
Inspired by the findings of the aforementioned works we suggest to avoid reallocations as much as possible.
As a second contribution we prove the following \namecref{thm:reuse-conjecture}.

\begin{theorem}[Proven by \Cref{thm:reduction}]
	\label{thm:reuse-conjecture}
	For a sound verification of safety properties it suffices to consider executions where at most a single address is reused.
\end{theorem}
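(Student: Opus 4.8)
The plan is to establish a simulation between executions of the program under the full memory-reuse semantics and executions under a restricted semantics in which at most one address is ever reclaimed and reused. Concretely, I would fix a safety property (here, linearizability, or more generally a property over traces that does not mention concrete addresses) and show that if the full semantics has a computation violating the property, then so does the single-reuse semantics. Since safety properties are violated by finite bad prefixes, it suffices to work with finite computations. The key device will be an address-renaming argument: given a bad computation $\tau$ under unrestricted reuse, I would identify the (finitely many) addresses that get reclaimed-and-reused along $\tau$ and argue that all but one of these reuse events can be ``emulated'' by pretending the memory was never freed, i.e., by allocating a fresh address instead of reusing an old one.

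First I would make precise the ABA-freedom hypothesis that the theorem relies on: the data structure must be \emph{invariant to whether or not memory is actually reclaimed and reused}, in the sense that replacing a \code{free} followed by a reuse-\code{malloc} with a plain fresh \code{malloc} yields a computation that the data structure cannot distinguish — same control flow, same observable trace. I would phrase this as a relation $\staterel$ between a configuration of the unrestricted run and a configuration of a run with fewer reuses, obtained by an address isomorphism $\aniso_{\mathit{adr}}$ that is the identity except that it maps the ``spuriously reused'' address to a fresh one; this lifts to expressions, heaps, and histories via $\swapadrraw,\swapexpraw,\swaphistraw$ (the swap machinery already introduced in the excerpt). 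Second, I would prove the core simulation lemma: every step of the unrestricted computation can be matched by a step of a computation that performs one fewer reuse, with the matched configurations related by $\staterel$. The interesting cases are the memory-management transitions — \code{free}, \code{malloc} with reuse, and any \code{assert}/dereference that could have different safety status — and the SMR-observer transitions, which must still permit the corresponding \code{free} because histories are preserved up to the renaming (and the renaming only touches an address that, in the emulated run, is simply never freed, so no observer obligation is lost). Third, I would iterate: starting from a bad computation with $k$ reused addresses, apply the lemma to obtain a bad computation with $k-1$ reused addresses, and induct down to $k \le 1$. Because the property is address-insensitive, ``bad'' is preserved under the renaming at every stage, giving a bad computation in the single-reuse semantics.

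The main obstacle I anticipate is the interaction between memory reuse and the correctness argument for \emph{unsafe} accesses — the whole point of SMR is that a thread may hold a dangling pointer, and whether a dereference is safe depends on the precise history of \code{protect}/\code{retire}/\code{free} events. When I replace a reuse by a fresh allocation, a pointer that was dangling-then-revived in the unrestricted run becomes permanently dangling in the emulated run, so I must argue that any dereference the data structure actually performs was already safe in the unrestricted run (otherwise the unrestricted run was itself a violation, and we are done by a different bad prefix), and that safety transfers across $\aniso_{\mathit{adr}}$. This requires carefully tracking which addresses are \emph{valid} vs.\ \emph{invalid/dangling} in each run and showing the renaming respects $\validof{}$ and $\danglingof{}$; it is essentially where the ABA-freedom hypothesis does its real work. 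A secondary subtlety is bounding $k$: a priori an execution could reuse unboundedly many distinct addresses, but since we only need a \emph{finite} bad prefix and each prefix touches finitely many addresses, the induction is well-founded. I would also need to check that the restricted semantics is closed under the scheduler and thread interleavings used in the witness, which is routine once the per-step simulation is in place. A thorough treatment, with the formal definitions of $\staterel$ and the case analysis, is deferred to \Cref{sec:reduction}.
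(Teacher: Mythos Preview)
Your high-level plan—elide reuses by renaming to fresh addresses and argue the data structure cannot tell the difference—matches the paper's strategy, but there is a real gap in how you treat ABAs, and your induction has the wrong shape as a result. The hypothesis you state, that eliding a reuse yields ``same control flow, same observable trace,'' is too strong: it is violated by every harmless ABA. In the paper's running example (Michael\&Scott's queue with hazard pointers), the assertion \texttt{head == Head} succeeds precisely \emph{because} the freed address was reallocated; under your elision the assertion fails and the thread takes a different branch. So your relation cannot be a step-by-step simulation on control, and the $k\to k-1$ peeling breaks exactly when you try to remove the last reuse of the address on which an ABA occurs.

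The paper's fix is structural. Instead of removing one reused address at a time, it constructs, for \emph{every} address $\anadr$ simultaneously, a computation $\sigma_\anadr\in\adrsem{\anadr}$ similar to $\tau$, by induction on the length of $\tau$. Carrying the whole family is the point: when $\tau$ performs an ABA on some address $\anadrp$, the member $\sigma_\anadrp$ (which still reuses $\anadrp$) can execute the assertion, and a separate side condition—``absence of harmful ABAs'' (\Cref{def:harmful-ABA}), stated and checkable entirely inside $\onesem$—is then invoked to manufacture a $\sigma_\anadr'\in\adrsem{\anadr}$ for $\anadr\neq\anadrp$ that sits in the post-ABA state without having executed the ABA. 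The other ingredients you gesture at in your last paragraph are made into explicit, $\onesem$-checkable premises: pointer-race freedom (\Cref{definition:RPR}) rules out unsafe dereferences and racy SMR calls so that similarity is preserved by non-ABA steps, and elision support (\Cref{def:elision-support}) is a closure property of the SMR observer under address renaming so that frees and fresh allocations behave. Without phrasing these as hypotheses on $\onesem$ rather than on the full semantics, the reduction buys you nothing.
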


The rational behind this \namecref{thm:reuse-conjecture} is the following.
From the literature we know that avoiding memory reuse altogether is not sound for verification \cite{DBLP:conf/podc/MichaelS96}.
Put differently, correctness under garbage collection (GC) does not imply correctness under manual memory management (MM).
The difference of the two program semantics becomes evident in the ABA problem.
An ABA is a scenario where a pointer referencing address $\anadr$ is changed to point to address $\anadrp$ and changed back to point to $\anadr$ again.
Under MM a thread might erroneously conclude that the pointer has never changed if the intermediate value was not seen due to a certain interleaving.
Typically, the root of the problem is that address $\anadr$ is removed from the data structure, deleted, reallocated, and reenters the data structure.
Under GC, the exact same code does not suffer from this problem.
A pointer referencing $\anadr$ would prevent it from being reused.

From this we learn that avoiding memory reuse does not allow for a sound analysis due to the ABA problem.
So we want to check with little overhead to a GC analysis whether or not the program under scrutiny suffers from the ABA problem.
If not, correctness under GC implies correctness under MM.
Otherwise, we reject the program as buggy.

To check whether a program suffers from ABAs it suffices to check for \emph{first} ABAs.
Fixing the address $\anadr$ of such a first ABA allows us to avoid reuse of any address except $\anadr$ while retaining the ability to detect the ABA.
Intuitively, this is the case because the first ABA is the first time the program reacts differently on a reused address than on a fresh address.
Hence, replacing reallocations with allocations of fresh addresses before the first ABA retains the behavior of the program.

A formal discussion of the presented result is given in \Cref{sec:reduction}.

%!TEX root = ../main.tex

\section{Programs with Safe Memory Reclamation}
\label{sec:programs}

We define shared-memory programs that use an SMR library to reclaim memory.
Here, we focus on the program.
We leave unspecified the internal structure of SMR libraries (typically, they use the same constructs as programs), our development does not depend on it.
We show in \Cref{sec:observers} how to strip away the SMR implementation for verification.

% ===================================================================================== %
% ===================================================================================== %
% ===================================================================================== %
\paragraph{Memory}

A memory $\aheap$ is a partial function
$\aheap:\pexp\uplus\dexp\nrightarrow\adr\uplus\set{\segval}\uplus\dom$ which maps pointer expressions $\pexp$ to addresses from $\adr\uplus\set{\segval}$ and data expressions $\dexp$ to values from $\dom$, respectively.
A pointer expression is either a pointer variable or a pointer selector: $\pexp=\pvars\uplus\psels$.
Similarly, we have $\dexp=\dvars\uplus\dsels$.
The selectors of an address $\anadr$ are $\psel{\anadr}\in\psels$ and $\dsel{\anadr}\in\dsels$.
A generalization to arbitrary selectors is straight forward.
We use $\segval\notin\adr$ to denote undefined/uninitialized pointers. 
We write $\aheap(e)=\bot$ if $e\notin\domof{\aheap}$.
An address $\anadr$ is in-use if it is referenced by some pointer variable or if one of its selectors is defined.
The set of such in-use addresses in $\aheap$ is $\adrof{\aheap}$.
For a formal definition refer to \Cref{appendix:definitions:programs}.

% ===================================================================================== %
% ===================================================================================== %
% ===================================================================================== %
\paragraph{Programs}

\begin{figure}
\begin{tcolorbox}
	\begin{align*}
		\acond\;&::=\phantom{\bnf}
			\apvar = \apvarp
			\bnf
			\apvar \neq \apvarp
			\bnf
			\advar = \advarp
			\bnf
			\advar \neq \advarp
			\bnf
			\advar < \advarp
		\\
		\acom\;&::=\phantom{\bnf}
			\apvar:=\apvarp
			\bnf
			\apvar:=\psel{\apvarp}
			\bnf
			\psel{\apvar}:=\apvarp
			\bnf
			\advar:=\opof{\advar_1,\ldots, \advar_n}
			\bnf
			\advar:=\dsel{\apvarp}
			\\&\phantom{::=}\bnf
			\dsel{\apvar}:=\advar
			\bnf
			\assert\ \acond
			\bnf
			\apvar:=\malloc
			\bnf
			\enterof{\afunc(\vecof{\apvar},\vecof{\advar})}
			\bnf
			\exit
			\bnf
			\smrcom
		\\
		\smrcom\;&::=\phantom{\bnf}
			\freeof{\apvar}
			\bnf
			\dots
	\end{align*}
	\caption{%
		The syntax of atomic commands.
		Here, $\advar,\advarp\in\dvars$ are data variables, and $\apvar,\apvarp\in\pvars$ are pointer variables.
		We write $\vecof{\apvar}$ instead of $\apvar_1,\dots,\apvar_n$ and similarly for $\vecof{\advar}$.
		Besides $\free$, SMR implementations may use not further specified commands $\smrcom$.
	}
	\label{fig:commands}
\end{tcolorbox}
\end{figure}

We consider computations of data structures $\adsraw$ using an SMR library $\ansmr$, written $\ads[\ansmr]$.
A computation $\tau$ is a sequence of actions.
An action $\anact$ is of the form $\anact=\act{\athread}{\acom}{\anup}$.
Intuitively, $\anact$ states that thread $\athread$ executes command $\acom$ resulting in the memory update $\anup$.
An action stems either from executing $\adsraw$ or from executing functions offered by $\ansmr$.

The commands are given in \Cref{fig:commands}.
The commands of $\adsraw$ include assignments, memory accesses, assertions, and allocations with the usual meaning.
We make explicit when a thread enters and exits a library function with $\enter$ and $\exit$, respectively.
That is, we assume that computations are well-formed in the sense that no commands from $\adsraw$ and all commands from $\ansmr$ of a thread occur between $\enter$ and $\exit$.
Besides deallocations we leave the commands of $\ansmr$ unspecified.

The memory resulting from a computation $\tau$, denoted by $\heapcomput{\tau}$, is defined inductively by its updates.
Initially, pointer variables $\apvar$ are uninitialized, $\heapcomputof{\epsilon}{\apvar}=\segval$, and data variables $\advar$ are default initialized, $\heapcomputof{\epsilon}{\advar}=0$.
For a computation $\tau.\anact$ with $\anact=(\athread,\acom,\anup)$ we have $\heapcomput{\tau.\anact}=\heapcomput{\tau}[\anup]$.
With the memory update $\aheap'=\aheap[\anexp\mapsto\aval]$ we mean $\aheap'(\anexp)=\aval$ and $\aheap'(\anexpp)=\aheap(\anexpp)$ for all $\anexpp\neq \anexp$.

% ===================================================================================== %
% ===================================================================================== %
% ===================================================================================== %
\paragraph{Semantics}

The semantics of $\ads[\ansmr]$ is defined relative to a set $X\subseteq\adr$ of addresses that may be reused.
It is the set of allowed executions, denoted by $\asem[{\ads[\ansmr]}]{X}$.
To make the semantics precise, let $\freshof{\tau}$ and $\freedof{\tau}$ be those sets of addresses which have never been allocated and have been freed since their last allocation, respectively.
(See \Cref{appendix:definitions:programs} for formal definitions.)
Then, the semantics is defined inductively.
In the base case we have the empty execution $\epsilon\in\asem[{\ads[\ansmr]}]{X}$.
In the induction step we have $\tau.\anact\in\asem[{\ads[\ansmr]}]{X}$ if one of the following rules applies.
\begin{description}
	\item[(Malloc)]
		If $\anact=(\athread,\apvar:=\malloc,[\apvar\mapsto\anadr,\psel{\anadr}\mapsto\segval,\dsel{\anadr}\mapsto\advalue])$ where $\advalue\in\dom$ is arbitrary and $\anadr\in\adr$ is fresh or available for reuse, that is, $\anadr\in\freshof{\tau}$ or $\anadr\in\freedof{\tau}\cap X$.
	\item[(FreePtr)]
		If $\anact=(\athread,\freeof{\apvar},[\psel{\anadr}\mapsto\bot,\dsel{\anadr}\mapsto\bot])$ with $\heapcomputof{\tau}{\apvar}=\anadr\in\adr$.
	\item[(Enter)]
		If $\anact=(\athread,\enterof{\afuncof{\vecof{\apvar},\vecof{\advar}}},\emptyset)$ with $\vecof{\apvar}=\apvar_1,\dots,\apvar_k$ and $\heapcomputof{\tau}{\apvar_i}\in\adr$ for all $1\leq i\leq k$.
	\item[(Exit)]
		If $\anact=(\athread,\exit,\emptyset)$.
	\item[(Assign1)]
		If $\anact=(\athread,\apvar:=\apvarp,[\apvar\mapsto\heapcomputof{\tau}{\apvarp}])$.
	\item[(Assign2)]
		If $\anact=(\athread,\psel{\apvar}:=\apvarp,[\psel{\anadr}\mapsto\heapcomputof{\tau}{\apvarp}])$ with $\heapcomputof{\tau}{\apvar}=\anadr\in\adr$.
	\item[(Assign3)]
		If $\anact=(\athread,\apvar:=\psel{\apvarp},[\apvar\mapsto\heapcomputof{\tau}{\psel{\anadr}}])$ with $\heapcomputof{\tau}{\apvarp}=\anadr\in\adr$.
	\item[(Assign4)]
		If $\anact=(\athread,\advar:=\opof{\advarp_1,\dots, \advarp_n},[\advar\mapsto\advalue])$ with $\advalue=\opof{\heapcomputof{\tau}{\advarp_1},\dots, \heapcomputof{\tau}{\advarp_n}}$.
	\item[(Assign5)]
		If $\anact=(\athread,\dsel{\apvar}:=\advarp,[\dsel{\anadr}\mapsto\heapcomputof{\tau}{\advarp}])$ with $\heapcomputof{\tau}{\apvar}=\anadr\in\adr$.
	\item[(Assign6)]
		If $\anact=(\athread,\advar:=\dsel{\apvarp},[\advar\mapsto\heapcomputof{\tau}{\dsel{\anadr}}])$ with $\heapcomputof{\tau}{\apvarp}=\anadr\in\adr$.
	\item[(Assert)]
		If $\anact=(\athread,\assertof{\lhs\triangleq\rhs},\emptyset)$ if $\heapcomputof{\tau}{\lhs}\triangleq\heapcomputof{\tau}{\rhs}$.
\end{description}

We assume that computations respect the control flow (program order) of threads.
The control location after $\tau$ is denoted by $\controlof{\tau}$.
We deliberately leave this unspecified as we will express only properties of the form $\controlof{\tau}=\controlof{\sigma}$ to state that after $\tau$ and $\sigma$ the threads can execute the same commands.

%!TEX root = ../main.tex

\section{Compositional Verification}
\label{sec:observers}

Our first contribution is a compositional verification approach for data structures $\ads[\ansmr]$ which use an SMR library $\ansmr$.
The complexity of SMR implementations makes the verification of data structure and SMR implementation in a single analysis prohibitive.
To overcome this problem, we suggest to verify both implementations independently of each other.
More specifically, we
\begin{inparaenum}[(i)]
	\item introduce a means for specifying SMR implementations, then
	\item verify the SMR implementation $\ansmr$ against its specification, and
	\item verify the data structure $\adsraw$ relative to the SMR specification rather than the SMR implementation.
\end{inparaenum}
If both verification tasks succeed, then the data structure using the SMR implementation, $\ads[\ansmr]$, is correct.

Our approach compares favorably to existing techniques.
Manual techniques from the literature consider a monolithic verification task where both the data structure and the SMR implementaetion are verfied together \cite{DBLP:conf/popl/ParkinsonBO07,DBLP:conf/concur/FuLFSZ10,DBLP:conf/ictac/TofanSR11,DBLP:conf/esop/GotsmanRY13,DBLP:journals/pacmpl/KrishnaSW18}.
Consequently, only simple implementations using SMR have been verified.
Existing automated techniques rely on non-standard program semantics and support only simplistic SMR techniques \cite{DBLP:conf/tacas/AbdullaHHJR13,DBLP:conf/vmcai/HazizaHMW16}.
Refer to \Cref{sec:related_work} for a more~detailed~discussion.

Towards our result, we first introduce observer automata for specifying SMR algorithms.
Then we discuss the two new verification tasks and show that they imply the desired correctness result.

% ===================================================================================== %
% ===================================================================================== %
% ===================================================================================== %
\paragraph{Observer Automata}
An observer automaton $\anobs$ consists of observer locations, observer variables, and transitions.
There is a dedicated initial location and some accepting locations.
Transitions are of the form $\alocation\trans{\translab{f(\vecof{r})}{\aguard}}\alocationp$ with observer locations $\alocation,\alocationp$, event $f(\vecof{r})$, and guard $\aguard$.
Events $\absevt{f}{\vecof{r}}$ consist of a type $f$ and parameters $\vecof{r}=r_1,\dots,r_n$.
The guard is a Boolean formula over equalities of observer variables and the parameters $\vecof{r}$.
An observer state $\astate$ is a tuple $(\alocation,\varphi)$ where $\alocation$ is a location and $\varphi$ maps observer variables to values.
Such a state is initial if $\alocation$ is initial, and similarly accepting if $\alocation$ is accepting.
Then, $(\alocation,\varphi)\trans{f(\vecof{v})}(\alocationp,\varphi)$ is an observer step, if $\alocation\trans{\translab{f(\vecof{r})}{\aguard}}\alocationp$ is a transition and $\varphi(\aguard[\vecof{r}\mapsto\vecof{v}])$ evaluates to $\mathit{true}$.
With $\varphi(\aguard[\vecof{r}\mapsto\vecof{v}])$ we mean $\aguard$ where the formal parameters $\vecof{r}$ are replaced with the actual values $\vecof{v}$ and where the observer variables are replaced by their $\varphi$-mapped values.
Initially, the valuation $\varphi$ is chosen non-deterministically; it is not changed by observer steps.

A \emph{history} $\ahist=f_1(\vecof{v}_1) \ldots f_n(\vecof{v}_n)$ is a sequence of events.
We write $\astate\trans{\ahist}\astatep$ if there are steps $\astate\trans{f_1(\vecof{v}_1)}\cdots\trans{f_n(\vecof{v}_n)}\astatep$.
If $\astatep$ is accepting, then we say that $\ahist$ is accepted by $\astate$. 
We use observers to characterize \emph{bad behavior}.
So we say $\ahist$ is in the specification of $\astate$, denoted by $\ahist\in\specof{\astate}$, if it is not accepted by $\astate$.
Formally, the specification of $\astate$ is the set $\specof{\astate}:=\setcond{\ahist}{\forall\astatep.~\astate\trans{\ahist}\astatep\implies\astatep\text{ not final}}$ of histories that are not accepted by $\astate$.
The specification of $\anobs$ is the set  of histories that are not accepted by any initial state of $\anobs$,
$\specof{\anobs}:=\bigcap\setcond{\specof{\astate}}{\!\astate\text{ initial}}$.
The cross-product $\anobs[1]\times\anobs[2]$ denotes an observer with $\specof{\anobs[1]\times\anobs[2]}=\specof{\anobs[1]}\cap\specof{\anobs[2]}$.

% ===================================================================================== %
% ===================================================================================== %
% ===================================================================================== %
\paragraph{SMR Specifications}

To use observers for specifying SMR algorithms, we have to instantiate appropriately the histories they observe.
Our instantiation crucially relies on the fact that programmers of lock-free data structures rely solely on simple temporal properties that SMR algorithms implement \cite{DBLP:conf/esop/GotsmanRY13}.
These properties are typically incognizant of the actual SMR implementation.
Instead, they allow reasoning about the implementation's behavior based on the temporal order of function invocations and responses.
With respect to our programming model,
$\enter$ and $\exit$ actions provided the necessary means to deduce from the data structure computation how the SMR implementation behaves.

We instantiate observers for specifying SMR as follows.
As event types we use
\begin{inparaenum}[(i)]
	\item $\afunc_1,\dots,\afunc_n$, the functions offered by the SMR algorithm,
	\item $\exit$, and
	\item $\free$.
\end{inparaenum}
The parameters to the events are
\begin{inparaenum}[(i)]
	\item the executing thread and the parameters to the call in case of type $\afunc_i$,
	\item the executing thread for type $\exit$, and
	\item the parameters to the call for type $\free$.
\end{inparaenum}
Here, type $\afunc_i$ represents the corresponding $\enter$ command of a call to $\afunc_i$.
The corresponding $\exit$ event is uniquely defined because both $\afunc_i$ and $\exit$ events contain the executing thread.

%!TEX root = ../../main.tex

\begin{figure}
\begin{tcolorbox}
	\begin{subfigure}[t]{\textwidth}
			\center
			\begin{tikzpicture}[->,>=stealth',shorten >=1pt,auto,node distance=4cm,thick,initial text={}]
				\node [xshift=-0.4cm,yshift=.8cm,draw,thin] {$\baseobs$};
				\tikzstyle{every state}=[minimum size=1.5em]
				\tikzset{every edge/.append style={font=\footnotesize}}
				\node[accepting,state]  (C)                               {\mkstatename{obs:base:final}};
				\node[initial above, state]   (A) [right of=C]                  {\mkstatename{obs:base:init}};
				\node[state]            (B) [right of=A]                  {\mkstatename{obs:base:retired}};
				\path
					([yshift=1mm]B.west) edge node [above]{\translab{\freeof{\anadr}}{\anadr=\anovarp}} ([yshift=1mm]A.east)
					([yshift=-1mm]A.east) edge node [below]{\translab{\evt{\retire}{\athread,\anadr}}{\anadr=\anovarp}} ([yshift=-1mm]B.west)
					(A) edge node [above]{\translab{\freeof{\anadr}}{\anadr=\anovarp}} (C)
					;
			\end{tikzpicture}
			\subcaption{%
				Observer specifying that address $\anovarp$ may be freed only if it has been retired and not been freed since.
			}
			\label{fig:observer:base}
	\end{subfigure}
	\\[4mm]
	\begin{subfigure}[t]{\textwidth}
			\center
			\begin{tikzpicture}[->,>=stealth',shorten >=1pt,auto,node distance=2.6cm,thick,initial text={}]
				\node [xshift=-0.4cm,yshift=.8cm,draw,thin] {$\hpobs$};
				\tikzstyle{every state}=[minimum size=1.5em]
				\tikzset{every edge/.append style={font=\footnotesize}}
				\node[initial,state]    (A)              {\mkstatename{obs:hp:init}};
				\node[state]            (B) [right of=A,xshift=1cm] {\mkstatename{obs:hp:invoked}};
				\node[state]            (E) [right of=B] {\mkstatename{obs:hp:protected}};
				\node[state]            (C) [right of=E] {\mkstatename{obs:hp:retired}};
				\node[accepting,state]  (D) [right of=C] {\mkstatename{obs:hp:final}};
				\coordinate             [below of=A, yshift=+1.75cm]  (X)  {};
				\coordinate             [below of=E, yshift=+1.75cm]  (Y)  {};
				\coordinate             [below of=B, yshift=+1.75cm]  (Z)  {};
				\path
					(A) edge node[align=center] {\translabbr{\evt{\guard}{\athread,\anadr,\anint}}{\athread=\anovar\wedge\anadr=\anovarp\wedge\anint=\anovarpp}} (B)
					(B) edge node[align=center] {\translabbr{\evt{\exit}{\athread}}{\athread=\anovar}} (E)
					(E) edge node[align=center] {\translabbr{\evt{\retire}{\athread,\anadr}}{\anadr=\anovarp}} (C)
					(C) edge node[align=center] {\translabbr{\freeof{\anadr}}{\anadr=\anovarp}} (D)
					(Y) edge[-,shorten >=0pt] node[text width=4.45cm] {\translab{\evt{\guard}{\athread,\anadr,\anint}}{\athread=\anovar\wedge\anadr\neq\anovarp\wedge\anint=\anovarpp}\newline\translab{\evt{\unguard}{\athread,\anint}}{\athread=\anovar\wedge\anint=\anovarpp}} ([xshift=-1.5mm]X)
					(E) edge[-,shorten >=0pt] ([yshift=1mm]Y.north)
					(C.south west) edge[-,shorten >=0pt] (Y)
					([xshift=-1.5mm]X) edge ([xshift=-1.5mm]A.south)
					([yshift=1mm]Y) edge[-,shorten >=0pt] ([xshift=0mm,yshift=1mm]X)
					([xshift=0mm,yshift=1mm]X) edge ([xshift=0mm]A.south)
					(B) edge[-,shorten >=0pt] ([yshift=2mm]Z.north)
					([yshift=2mm]Z) edge[-,shorten >=0pt] ([xshift=1.5mm,yshift=2mm]X)
					([xshift=1.5mm,yshift=2mm]X) edge ([xshift=1.5mm]A.south)
					;
			\end{tikzpicture}
			\subcaption{%
				Observer specifying when HP defers frees.
				A retired cell $\anovarp$ may not be freed if it has been protected continuously by the $\anovarpp$-th hazard pointer of thread $\anovar$ since before being retired.
			}
			\label{fig:observer:hp}
	\end{subfigure}
	\caption{%
		Observer $\thehpobs$ characterizes the histories that violate the Hazard Pointer specification.
		Three observer variables, $\anovar$, $\anovarp$, and $\anovarpp$, are used to observe a thread, an address, and an integer, respectively.
		For better legibility we omit self-loops for every location and every event that is missing an outgoing transition from that location.
	}
	\label{fig:observer}
\end{tcolorbox}
\end{figure}

For an example, consider the hazard pointer specification $\thehpobs$ from \Cref{fig:observer}.
It consists of two observers.
First, $\baseobs$ specifies that no address must be freed that has not been retired.
Second, $\hpobs$ implements the temporal property that no address must be freed if it has been protected continuously since before the retire.
For observer $\thehpobs$ we assume that no address is retired multiple times before being reclaimed (freed) by the SMR implementation.
This avoids handling such \emph{double-retire} scenarios in the observer, keeping it smaller and simpler.
The assumption is reasonable because in a setting where SMR is used a double-retire is the analogue of a double-free and thus avoided.
Our experiments confirm this intuition.

With an SMR specification in form of an observer $\smrobs\mkern+1mu$ at hand, our task is to check whether or not a given SMR implementation $\ansmr$ satisfies this specification.
We do this by converting a computation $\tau$ of $\ansmr$ into its induced history $\historyof{\tau}$ and check if $\historyof{\tau}\in\specof{\smrobs}$.
The induced history $\historyof{\tau}$ is a projection of $\tau$ to $\enter$, $\exit$, and $\free$ actions.
This projection replaces the formal parameters in $\tau$ with their actual values.
For example, $\historyof{\tau.(\athread,\guardof{\apvar}{\advar},\anup)}=\historyof{\tau}.\evt{\guard}{\athread,\heapcomputof{\tau}{\apvar},\heapcomputof{\tau}{\advar}}$.
For a formal definition, consider \Cref{appendix:definitions:programs}.
Then, $\tau$ satisfies $\smrobs$ if $\historyof{\tau}\in\specof{\smrobs}$.
The SMR implementation $\ansmr$ satisfies $\smrobs$ if every possible usage of $\ansmr$ produces a computation that satisfies $\smrobs$.
To generate all such computations, we use a most general client (MGC) for $\ansmr$ which concurrently executes arbitrary~sequences~of~SMR~functions.

\begin{definition}[SMR Correctness]
	An SMR implementation $\ansmr$ \emph{is correct wrt. a specification} $\smrobs$, denoted by $\satisfies{\ansmr}{\smrobs}$, if for every $\tau\in\allsem[MGC(\ansmr)]$ we have $\historyof{\tau}\in\specof{\smrobs}$.
\end{definition}

From the above definition follows the first new verification task: prove that the SMR implementation $\ansmr$ cannot possibly violate the specification $\smrobs$.
Intuitively, this boils down to a reachability analysis of accepting states in the cross-product of $MGC(\ansmr)$ and $\smrobs$.
Since we can understand $\ansmr$ as a lock-free data structure itself, this task is similar to our next one, %.
namely verifying the data structure relative to $\smrobs$.
In the remainder of the paper we focus on this second task because it is harder than the first one.
The reason for this lies in that SMR implementations typically do not reclaim the memory they use.
This holds true even if the SMR implementation supports dynamic thread joining and parting \cite{DBLP:conf/podc/Michael02} (cf. \code{part()} from \Cref{fig:hpimpldyn}).
The absence of reclamation allows for a simpler\footnote{%
	In terms of \Cref{sec:reduction}, the absence of reclamation results in SMR implementations being free from pointer races and harmful ABAs since pointers do not become invalid.
	Intuitively, this allows us to combine our results with ones from \citet{DBLP:conf/vmcai/HazizaHMW16} and verify the SMR implementation in a garbage-collected semantics.
} and more efficient analysis.
We confirm this in our experiments where we automatically verify the Hazard Pointer implementation from \Cref{fig:hpimpldyn} wrt. $\thehpobs$.

% ===================================================================================== %
% ===================================================================================== %
% ===================================================================================== %
\paragraph{Compositionality}

The next task is to verify the data structure $\ads[\ansmr]$ avoiding the complexity of $\ansmr$.
We have already established correctness of $\ansmr$ wrt. a specification $\smrobs$.
Intuitively, we now replace $\ansmr$ by $\smrobs$.
Because $\smrobs$ is an observer, and not program code like $\ansmr$, we cannot just execute $\smrobs$ in place of $\ansmr$.
Instead, we remove the SMR implementation from $\ads[\ansmr]$.
The result is $\ads[\epsilon]$ the computations of which correspond to the ones of $\ads[\ansmr]$ with SMR-specific actions between $\enter$ and $\exit$ being removed.
To account for the frees that $\ansmr$ executes, we introduce \emph{environment steps}.
We non-deterministically check for every address $\anadr$ whether or not $\smrobs$ allows freeing it.
If so, we free the address.
Formally, the semantics $\asem{X}$ corresponds to $\asem[{\ads[\epsilon]}]{X}$ as defined in \Cref{sec:programs} plus a new rule for frees from the environment.
\begin{description}
	\item[(Free)]
		If $\tau\in\asem[{\ads[\epsilon]}]{X}$ and $\anadr\in\adr$ can be freed, i.e., $\historyof{\tau}.\freeof{\anadr}\in\specof{\smrobs}$, then we have\linebreak\phantom{.~}$\tau.\anact\in\asem[{\ads[\epsilon]}]{X}$ with $\anact=(\athread,\freeof{\anadr},[\psel{\anadr}\mapsto\bot,\dsel{\anadr}\mapsto\bot])$.
\end{description}
Note that $\freeof{\anadr}$ from the environment has the same update as $\freeof{\apvar}$ from $\ansmr$ if $\apvar$ points to $\anadr$.
With this definition, $\ads$ performs more frees than $\ads[\ansmr]$ provided $\satisfies{\ansmr}{\smrobs}$.

With the semantics of data structures with respect to an SMR specification rather than an SMR implementation set up, we can turn to the main result of this section.
It states that the correctness of $\ansmr$ wrt. $\smrobs$ and the correctness of $\ads$ entail the correctness of the original program $\ads[\ansmr]$.
Here, we focus on the verification of safety properties.
It is known that this reduces to control location reachability \cite{DBLP:conf/lics/Vardi87}.
So we can assume that there is a dedicated \emph{bad} control location in $\adsraw$ the unreachability of which is equivalent to the correctness of $\ads[\ansmr]$.
To establish the result, we require that the interaction between $\adsraw$ and $\ansmr$ follows the one depicted in \Cref{fig:system-view} and discussed on an example in \Cref{sec:illustration}.
That is, the only influence $\ansmr$ has on $\adsraw$ are frees.
In particular, this means that $\ansmr$ does not modify the memory accessed by $\adsraw$.
We found this restriction to be satisfied by many SMR algorithms from the literature.
We believe that our development can be generalized to reflect memory modifications performed by the SMR algorithm.
A proper investigation of the matter, however, is beyond the scope of this paper.

\begin{theorem}[Compositionality]
	\label{thm:compositionality}
	Let $\satisfies{\ansmr}{\smrobs}$.
	If $\allsem$ is correct, so is $\allsem[{\ads[\ansmr]}]$.
\end{theorem}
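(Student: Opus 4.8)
The plan is to prove the contrapositive: assuming $\allsem[{\ads[\ansmr]}]$ is incorrect, i.e.\ some computation $\tau \in \allsem[{\ads[\ansmr]}]$ reaches the bad control location, I will construct a computation $\sigma \in \allsem$ that also reaches the bad control location, contradicting correctness of $\allsem$. The construction is essentially a projection: from $\tau$, delete all actions that occur strictly between an $\enter$ and its matching $\exit$ (the SMR-internal steps), keep the $\enter$, $\exit$, and all data-structure actions, and — crucially — keep all $\free$ actions, but reinterpret each $\free(\apvar)$ performed by $\ansmr$ as an environment $\free(\anadr)$ where $\anadr$ is the address $\apvar$ pointed to. Call this projection $\sigma = \mathsf{proj}(\tau)$. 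I would first check that $\sigma$ is well-formed with respect to the control flow of $\ads[\epsilon]$ and that $\controlof{\sigma}$ agrees with $\controlof{\tau}$ on the data-structure thread, so that $\sigma$ reaches the bad location iff $\tau$ does.

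The core of the argument is an induction on the length of $\tau$ showing two things simultaneously: (1) $\sigma = \mathsf{proj}(\tau) \in \allsem$, and (2) the memories agree on everything the data structure can observe, $\heapcomput{\tau} \heapequiv \heapcomput{\sigma}$ (or at least they agree on $\pvarsds \uplus \dvarsds$ and the selectors of in-use addresses). The inductive step splits on the type of the last action of $\tau$. For data-structure actions (all the \textbf{Assign} rules, \textbf{Assert}, \textbf{Malloc}, \textbf{Enter}, \textbf{Exit}), $\sigma$ mimics the step directly; memory equivalence is preserved because these actions read/write only data-structure-visible state, and by the restriction stated just before the theorem, $\ansmr$ never modifies the memory $\adsraw$ accesses, so the equivalence is not disturbed by interleaved SMR steps. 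For SMR-internal actions strictly inside an $\enter/\exit$ bracket, $\sigma$ does nothing; these actions by assumption only touch SMR-private memory, so $\heapcomput{\tau}$ changes only on cells invisible to $\adsraw$ and the equivalence is maintained. The interesting case is a $\free(\apvar)$ executed by $\ansmr$ in $\tau$: here I must show the corresponding environment step is enabled in $\sigma$, i.e.\ that $\historyof{\sigma}.\freeof{\anadr} \in \specof{\smrobs}$. This is where $\satisfies{\ansmr}{\smrobs}$ enters. Since $\ansmr$ is correct, and the prefix of $\tau$ up to this point is (the data-structure-facing part of) a computation of $\ansmr$ under some client, its induced history lies in $\specof{\smrobs}$, and in particular the history extended by this very $\free(\anadr)$ does too — because $\ansmr$ actually performed that free, and correctness says every such history is accepted by the specification (never reaches a final observer state). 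I would then need a lemma that $\historyof{\sigma} = \historyof{\tau}$, which holds because the projection $\sigma$ preserves exactly the $\enter$/$\exit$/$\free$ actions and their actual-parameter values, and $\historyof{\cdot}$ depends only on those. Hence the \textbf{(Free)} rule fires.

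There is one subtlety I would need to nail down carefully: the induced history $\historyof{\tau}$ as defined is a projection of a computation of the *combined* system $\ads[\ansmr]$, whereas SMR correctness $\satisfies{\ansmr}{\smrobs}$ quantifies over computations of $\allsem[MGC(\ansmr)]$ — the most general client. So I need an embedding argument: the SMR-relevant behavior of $\ansmr$ within any run of $\ads[\ansmr]$ is reproducible by some run of $MGC(\ansmr)$ (the MGC can call the SMR functions in whatever order and with whatever arguments $\adsraw$ did, and $\retire$/$\free$ bookkeeping carries over), so that $\historyof{\tau}$ restricted to the SMR-facing alphabet is realizable under the MGC, and therefore lies in $\specof{\smrobs}$. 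Establishing this embedding rigorously — including that the data arguments passed by $\adsraw$ to SMR functions are reproducible by the MGC, and that $\enter/\exit$ bracketing is respected — is what I expect to be the main obstacle; the rest is a fairly mechanical simulation/bisimulation-style induction. I would isolate it as a separate lemma (something like: for every $\tau \in \allsem[{\ads[\ansmr]}]$ there is $\tau' \in \allsem[MGC(\ansmr)]$ with $\historyof{\tau'} = \historyof{\tau}$) and prove it by induction on $\tau$, the MGC nondeterministically choosing at each step to issue the same SMR call (or environment free) that $\adsraw$ triggered.
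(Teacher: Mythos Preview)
Your proposal is correct and follows essentially the same route as the paper: a contrapositive argument backed by an inductive simulation that projects a computation $\tau\in\allsem[{\ads[\ansmr]}]$ onto some $\sigma\in\allsem$ with matching data-structure control, skipping SMR-internal steps, replaying data-structure steps verbatim, and turning each $\freeof{\apvar}$ into an environment $\freeof{\anadr}$. The paper packages this as Lemma~\ref{thm:compositional-computations} and maintains the invariants $\controldsof{\tau}=\controlof{\sigma}$, $\heapcomputds{\tau}=\heapcomput{\sigma}$, $\freshof{\tau}\subseteq\freshof{\sigma}$, $\freedof{\tau}\subseteq\freedof{\sigma}$; your informal ``memory equivalence on data-structure-visible state'' is exactly $\heapcomputds{\tau}=\heapcomput{\sigma}$, which the paper makes precise via the compositionality definition that syntactically separates the memory into $\heapcomputds{\cdot}\uplus\heapcomputsmr{\cdot}$.

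The one place you are more careful than the paper is the MGC embedding. The paper simply asserts $\historyof{\tau}.\freeof{\anadr}\in\specof{\smrobs}$ ``from $\satisfies{\ansmr}{\smrobs}$'' and moves on; it treats as self-evident that any sequence of SMR calls issued by $\adsraw$ is reproducible by the most general client. Your instinct to isolate this as a lemma is sound, but the argument is indeed trivial: the MGC by definition issues arbitrary call sequences with arbitrary arguments, so whatever $\adsraw$ did, the MGC can do too. Likewise, your auxiliary observation $\historyof{\sigma}=\historyof{\tau}$ (needed to discharge the premise of the \textbf{(Free)} rule in $\allsem$) is implicit in the paper's invariants, since $\heapcomputds{\tau}=\heapcomput{\sigma}$ guarantees that the actual parameters of $\enter$ events coincide and the projection preserves $\enter$/$\exit$/$\free$ actions.
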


For the technical details of the above result, see \Cref{appendix:theory-compositionality}.

Compositionality is a powerful tool for verification.
It allows us to verify the data structure and the SMR implementation independently of each other.
Although this simplifies the verification, reasoning about lock-free programs operating on a shared memory remains hard.
In \Cref{sec:reduction} we build upon the above result and propose a sound verification of $\allsem$ which considers only executions reusing a single addresses.

%!TEX root = ../main.tex

\section{Taming Memory Reuse}
\label{sec:reduction}

As a second contribution we demonstrate that \emph{one can soundly verify a data structure $\ads$ by considering only those computations where at most a single cell is reused}.
This avoids the need for a state space exploration of full $\allsem$.
Such explorations suffer from a severe state space explosion.
In fact, we were not able to make our analysis from \Cref{sec:evaluation} go through without this second contribution.
Previous works \cite{DBLP:conf/tacas/AbdullaHHJR13,DBLP:conf/vmcai/HazizaHMW16,DBLP:conf/sas/HolikMVW17} have not required such a result since they did not consider fully fledged SMR implementations like we do.
For a thorough discussion of related work refer to \Cref{sec:related_work}.

Our results are independent of the actual safety property and the actual observer $\smrobs$ specifying the SMR algorithm.
To achieve this, we establish that for every computation from $\allsem$ there is a \emph{similar} computation which reuses only a single address.
We construct such a similar computation by eliding reuse in the original computation.
With elision we mean that we replace in a computation a freed address with a fresh one.
This allows a subsequent allocation to \code{malloc} the elided address fresh instead of reusing it.
Our notion of similarity makes sure that in both computations the threads reach the same control locations.
This allows for verifying safety properties.

The remainder of the section is structured as follows.
\Cref{sec:reduction:similarity} introduces our notion of similarity.
\Cref{sec:reduction:pointer-races} formalizes requirements on $\ads$ such that the notion of similarity suffices to prove the desired result.
\Cref{sec:reduction:abas} discusses how the ABA problem can affect soundness of our approach and shows how to detect those cases.
\Cref{sec:reduction:result} presents the reduction result.

%%%%%%%%%%%%%%%%%%%%%%%%%%%%%%%%%%%%%%%%%%%%%%%%%%%%%%%%%%%%%%%%%%%%%%%%%%%%%%%%%%%%%%%%%%%%%%%%%%%
%%%%%%%%%%%%%%%%%%%%%%%%%%%%%%%%%%%%%%%%%%%%%%%%%%%%%%%%%%%%%%%%%%%%%%%%%%%%%%%%%%%%%%%%%%%%%%%%%%%
%%%%%%%%%%%%%%%%%%%%%%%%%%%%%%%%%%%%%%%%%%%%%%%%%%%%%%%%%%%%%%%%%%%%%%%%%%%%%%%%%%%%%%%%%%%%%%%%%%%
\subsection{Similarity of Computations}
\label{sec:reduction:similarity}

Our goal is to mimic a computation $\tau$ where memory is reused arbitrarily with a computation $\sigma$ where memory reuse is restricted.
As noted before, we want that the threads in $\tau$ and $\sigma$ reach the same control locations in order to verify safety properties of $\tau$ in $\sigma$.
We introduce a \emph{similarity relation} among computations such that $\tau$ and $\sigma$ are similar if they can execute the same actions.
This results in both reaching the same control locations as desired.
However, control location equality alone is insufficient for $\sigma$ to mimic subsequent actions of $\tau$, that is, to preserve similarity for subsequent actions.
This is because most actions involve memory interaction.
Since $\sigma$ reuses memory differently than $\tau$, the memory of the two computations is not equal.
Similarity requires a non-trivial correspondence wrt. the memory.
Towards a formal definition let~us~consider~an~example.

\begin{example}
	\label{ex:reuse-vs-elision}
	Let $\tau_1$ be a computation of a data structure $\ads[\thehpobs]$ using hazard pointers:
	\begin{align*}
		\tau_1 =~&
		(\athread,\apvar:=\malloc,[\apvar\mapsto\anadr,\dots])
		\hconcat
		(\athread,\retireof{\apvar},\emptyset)
		\hconcat
		(\athread,\freeof{\anadr},[\dots])
		\hconcat
		(\athread,\exit,\emptyset)
		\hconcat
		\\&
		(\athread,\apvarp:=\malloc,[\apvarp\mapsto\anadr,\dots])
		\ .
	\end{align*}
	In this computation, thread $\athread$ uses pointer $\apvar$ to allocate address $\anadr$.
	The address is then retired and freed.
	In the subsequent allocation, $\athread$ acquires another pointer $\apvarp$ to $\anadr$; $\anadr$ is reused.
	
	If $\sigma_1$ is a computation where $\anadr$ shall not be reused, then $\sigma_1$ is not able to execute the exact same sequence of actions as $\tau_1$.
	However, it can mimic $\tau_1$ as follows:
	\begin{align*}
		\sigma_1 =~&
		(\athread,\apvar:=\malloc,[\apvar\mapsto\anadrp,\dots])
		\hconcat
		(\athread,\retireof{\apvar},\emptyset)
		\hconcat
		(\athread,\freeof{\anadrp},[\dots])
		\hconcat
		(\athread,\exit,\emptyset)
		\hconcat
		\\&
		(\athread,\apvarp:=\malloc,[\apvarp\mapsto\anadr,\dots])
		\ ,
	\end{align*}
	where $\sigma_1$ coincides with $\tau_1$ up to replacing the first allocation of $\anadr$ with another address $\anadrp$.
	We say that $\sigma_1$ elides the reuse of $\anadr$.
	Note that the memories of $\tau_1$ and $\sigma_1$ differ on $\apvar$ and agree on $\apvarp$.
\end{example}

In the above example, $\apvar$ is a dangling pointer.
Programmers typically avoid using such pointers because it is unsafe.
For a definition of similarity, this practice suggests that similar computations must coincide only on the non-dangling pointers and may differ on the dangling ones.
To make precise which pointers in a computation are dangling, we use the notion of \emph{validity}.
That is, we define a set of valid pointers.
The dangling pointers are then the complement of the valid pointers.
We take this detour because we found it easier to formalize the valid pointers.

Initially, no pointer is valid.
A pointer becomes valid if it receives its value from an allocation or another valid pointer.
A pointer becomes invalid if its referenced memory location is deleted or it receives its value from an invalid pointer.
A deletion of a memory cell makes invalid its pointer selectors and all pointers to that cell.
A subsequent reallocation of that cell makes valid only the receiving pointer; all other pointers to that cell remain invalid.
Assertions of the form $\apvar=\apvarp$ validate $\apvar$ if $\apvarp$ is valid, and vice versa.
A formal definition of the valid pointers in a computation $\tau$, denoted by $\validof{\tau}$, can be found in \Cref{appendix:definitions:reduction}.

\begin{example}[Continued]
	In both $\tau_1$ and $\sigma_1$ from the previous example, the last allocation renders valid pointer $\apvarp$.
	On the other hand, the free to $\anadr$ in $\tau_1$ renders $\apvar$ invalid.
	The reallocation of $\anadr$ does not change the validity of $\apvar$, it remains invalid.
	In $\sigma_1$, address $\anadrp$ is allocated and freed rendering $\apvar$ invalid.
	It remains invalid after the subsequent allocation of $\anadr$.
	That is, both $\tau_1$ and $\sigma_1$ agree on the validity of $\apvarp$ and the invalidity of $\apvar$.
	Moreover, $\tau_1$ and $\sigma_1$ agree on the valuation of the valid $\apvarp$ and disagree (here by chance) on the valuation of the invalid $\apvar$.
\end{example}

The above example illustrates that eliding reuse of memory leads to a different memory valuation.
However, the elision can be performed in such a way that the valid memory is not affected.
So we say that two computations are similar if they agree on the resulting control locations of threads and the valid memory.
The valid memory includes the valid pointer variables, the valid pointer selectors, the data variables, and the data selectors of addresses that are referenced by a valid pointer variable/selector.
Formally, this is a restriction of the entire \mbox{memory to the valid pointers, written $\restrict{\heapcomput{\tau}}{\validof{\tau}}$}.

\begin{definition}[Restrictions]
	A restriction of $\aheap$ to a set $P\subseteq\pexp$, denoted by $\restrict{\aheap}{P}$, is a new $\aheapp$ with
	\(\domof{\aheapp} := P \cup \dvars \cup \setcond{\dsel{\anadr}\in\dexp}{\anadr\in\aheap(P)}\)
	and $\aheap(e) = \aheapp(e)$ for all $e\in\domof{\aheapp}$.
\end{definition}

We are now ready to formalize the notion of similarity among computations.
Two computations are similar if they agree on the control location of threads and the valid memory.

\begin{definition}[Computation Similarity]
	Two computations $\tau$ and $\sigma$ are \emph{similar}, denoted by $\tau\computequiv\sigma$, if $\controlof{\tau}=\controlof{\sigma}$ and $\restrict{\heapcomput{\tau}}{\validof{\tau}}=\restrict{\heapcomput{\sigma}}{\validof{\sigma}}$.
\end{definition}

If two computations $\tau$ and $\sigma$ are similar, then each action enabled after $\tau$ can be mimicked in $\sigma$.
An action $\anact=(\athread,\acom,\anup)$ can be mimicked by another action $\anactp=(\athread,\acom,\anupp)$.
Both actions agree on the executing thread and the executed command but may differ in the memory update.
The reason for this is that similarity does not relate the invalid parts of the memory.
This may give another update in $\sigma$ if $\acom$ involves invalid pointers.

\begin{example}[Continued]
	Consider the following continuation of $\tau_1$ and $\sigma_1$:
	\begin{align*}
		\tau_2 = \tau_1 \hconcat (\athread,\apvar:=\apvar,\anup)
		\qquad\text{and}\qquad
		\sigma_2 = \sigma_1 \hconcat (\athread,\apvar:=\apvar,\anupp)
		\ ,
	\end{align*}
	where we append an assignment of $\apvar$ to itself.
	The prefixes $\tau_1$ and $\sigma_1$ are similar, $\tau_1\computequiv\sigma_1$.
	Nevertheless, the updates $\anup$ and $\anupp$ differ because they involve the valuation of the invalid pointer $\apvar$ which differs in $\tau_1$ and $\sigma_1$.
	The updates are $\anup=[\apvar\mapsto\anadr]$ and $\anupp=[\apvar\mapsto\anadrp]$.
	Since the assignment leaves $\apvar$ invalid, similarity is preserved by the appended actions, $\tau_2\computequiv\sigma_2$.
	We say that $\anactp$ mimics $\anact$.

	Altogether, similarity does not guarantee that the exact same actions are executable.
	It guarantees that every action can be mimicked such that similarity is preserved.
\end{example}

In the above we omitted an integral part of the program semantics.
Memory reclamation is not based on the control location of threads but on an observer examining the history induced by a computation.
The enabledness of a $\free$ is not preserved by similarity.
On the one hand, this is due to the fact that invalid pointers can be (and in practice are) used in SMR calls which lead to different histories.
On the other hand, similar computations end up in the same control location but may perform different sequences of actions to arrive there, for instance, execute different branches of conditionals.
That is, to mimic $\free$ actions we need to correlate the behavior of the observer rather than the behavior of the program.
We motivate the definition of an appropriate relation.

\begin{example}[Continued]
	\label{ex:observer-relation}
	Consider the following continuation of $\tau_2$ and $\sigma_2$:
	\begin{align*}
		\begin{array}{r @{{}\quad{}} c @{{}={}} c @{{}\hconcat{}} c @{{}\hconcat{}} c @{{}\hconcat{}} c @{{}\hconcat{}} c @{{}{}} l}
			& \tau_3 & \tau_2 & (\athread,\guardof{\apvar}{\anint},\emptyset) & (\athread,\exit,\emptyset) & (\athread,\retireof{\apvarp},\emptyset) & (\athread,\exit,\emptyset)
			\\ \text{and} &
			\sigma_3 & \sigma_2 & (\athread,\guardof{\apvar}{\anint},\emptyset) & (\athread,\exit,\emptyset) & (\athread,\retireof{\apvarp},\emptyset) & (\athread,\exit,\emptyset) & \ ,
		\end{array}
	\end{align*}
	where $\athread$ issues a protection and a retirement using $\apvar$ and $\apvarp$, respectively.
	The histories induced by those computations are:
	\begin{align*}
		\begin{array}{r @{{}\quad{}} c @{{}={}} c @{{}\hconcat{}} c @{{}\hconcat{}} c @{{}\hconcat{}} c @{{}\hconcat{}} c @{{}{}} l}
			& \historyof{\tau_3} & \historyof{\tau_1} & \guardof{\athread,\anadr}{\anint} & \exitof{\athread} & \retireof{\athread,\anadr} & \exitof{\athread} &
			\\ \text{and}\quad &
			\historyof{\sigma_3} & \historyof{\sigma_1} & \guardof{\athread,\anadrp}{\anint} & \exitof{\athread} & \retireof{\athread,\anadr} & \exitof{\athread}
			& \ .
		\end{array}
	\end{align*}
	Recall that $\tau_2$ and $\sigma_2$ are similar.
	Similarity guarantees that the events of the $\retire$ call coincide because $\apvarp$ is valid.
	The events of the $\guard$ call differ because the valuations of the invalid $\apvar$ differ.
	That is, SMR calls do not necessarily emit the same event in similar computations.
	Consequently, the observer states after $\tau_3$ and $\sigma_3$ differ.
	More precisely, $\hpobs$ from \Cref{fig:observer:hp} has the following runs from the initial observer state $(\ref{obs:hp:init},\varphi)$ with $\varphi=\set{\anovar\mapsto\athread,\anovarp\mapsto\anadr,\anovarpp\mapsto\anint}$:
	\begin{align*}
		\begin{array}{r @{{}\quad{}} c @{{}{}} c @{{}{}} c @{{}{}} c @{{}{}} c @{{}{}} c @{{}{}} c @{{}{}} c @{{}{}} c @{{}{}} c @{{}{}} c @{{}{}} c}
			&(\ref{obs:hp:init},\varphi)
			&\trans{\historyof{\tau_1}}
			&(\ref{obs:hp:init},\varphi)
			&\trans{\guardof{\athread,\anadr}{\anint}}
			&(\ref{obs:hp:invoked},\varphi)
			&\trans{\exitof{\athread}}
			&(\ref{obs:hp:protected},\varphi)
			&\trans{\retireof{\athread,\anadr}}
			&(\ref{obs:hp:retired},\varphi)
			&\trans{\exitof{\athread}}
			&(\ref{obs:hp:retired},\varphi)
			&\\ \text{and}
			&(\ref{obs:hp:init},\varphi)
			&\trans{\historyof{\sigma_1}}
			&(\ref{obs:hp:init},\varphi)
			&\trans{\guardof{\athread,\anadrp}{\anint}}
			&(\ref{obs:hp:init},\varphi)
			&\trans{\exitof{\athread}}
			&(\ref{obs:hp:init},\varphi)
			&\trans{\retireof{\athread,\anadr}}
			&(\ref{obs:hp:init},\varphi)
			&\trans{\exitof{\athread}}
			&(\ref{obs:hp:init},\varphi)
			&\ .
		\end{array}
	\end{align*}
	This prevents $\anadr$ from being freed after $\tau_3$ (a $\freeof{\anadr}$ would lead to the final state $(\ref{obs:hp:final},\varphi)$ and is thus not enabled) but allows for freeing it after $\sigma_3$.
\end{example}

The above example shows that eliding memory addresses to avoid reuse changes observer runs.
The affected runs involve freed addresses.
Like for computation similarity, we define a relation among computations which captures the observer behavior on the \emph{valid addresses}, i.e., those addresses that are referenced by valid pointers, and ignores all other addresses.
Here, we do not use an equivalence relation.
That is, we do not require observers to reach the exact same state for valid addresses.
Instead, we express that the mimicking $\sigma$ allows for more observer behavior on the valid addresses than the mimicked $\tau$ does.
We define an \emph{observer behavior inclusion} among computations.
This is motivated by the above example.
There, address $\anadr$ is valid because it is referenced by a valid pointer, namely $\apvarp$.
Yet the observer runs for $\anadr$ differ in $\tau_3$ and $\sigma_3$.
After $\sigma_3$ more behavior is possible; $\sigma_3$ can free $\anadr$ while $\tau_3$ cannot.

To make this intuition precise, we need a notion of \emph{behavior on an address}.
Recall that the goal of the desired behavior inclusion is to enable us to mimic $\free$s.
Intuitively, the behavior allowed by $\smrobs$ on address $\anadr$ is the set of those histories that \emph{lead} to a free of $\anadr$.

\begin{definition}[Observer Behavior]
	The behavior allowed by $\smrobs$ on address $\anadr$ after history $\ahist$ is the set
	$\freeableof{\ahist}{\anadr} := \setcond{\ahistp}{\ahist.\ahistp\in\specof{\smrobs}\wedge\freesof{\ahistp}\subseteq{\anadr}}$.
\end{definition}

Note that $\ahistp\in\freeableof{\ahist}{\anadr}$ contains $\free$ events for address $\anadr$ only.
This is necessary because an address may become invalid before being freed if, for instance, the address becomes unreachable from valid pointers.
The mimicking computation $\sigma$ may have already freed such an address while $\tau$ has not, despite similarity.
Hence, the free is no longer allowed after $\sigma$ but still possible after $\tau$.
To prevent such invalid addresses from breaking the desired inclusion on valid addresses, we strip from $\freeableof{\ahist}{\anadr}$ all $\free$s that do not target $\anadr$.
Note that we do not even retain frees of valid addresses here.
This way, only actions which emit an event influence $\freeableof{\ahist}{\anadr}$.

The observer behavior inclusion among computations is defined such that $\sigma$ includes at least the behavior of $\tau$ on the valid addresses.
Formally, the valid addresses in $\tau$ are $\vadrof{\tau}$.

\begin{definition}[Observer Behavior Inclusion]
	Computation $\sigma$ \emph{includes the (observer) behavior} of $\tau$, denoted by $\tau\obsrel\sigma$,
	if $\freeableof{\tau}{\anadr}\subseteq\freeableof{\sigma}{\anadr}$ holds for all $\anadr\in\vadrof{\tau}$.
\end{definition}

%%%%%%%%%%%%%%%%%%%%%%%%%%%%%%%%%%%%%%%%%%%%%%%%%%%%%%%%%%%%%%%%%%%%%%%%%%%%%%%%%%%%%%%%%%%%%%%%%%%
%%%%%%%%%%%%%%%%%%%%%%%%%%%%%%%%%%%%%%%%%%%%%%%%%%%%%%%%%%%%%%%%%%%%%%%%%%%%%%%%%%%%%%%%%%%%%%%%%%%
%%%%%%%%%%%%%%%%%%%%%%%%%%%%%%%%%%%%%%%%%%%%%%%%%%%%%%%%%%%%%%%%%%%%%%%%%%%%%%%%%%%%%%%%%%%%%%%%%%%
\subsection{Preserving Similarity}
\label{sec:reduction:pointer-races}

The development in \Cref{sec:reduction:similarity} is idealized.
There are cases where the introduced relations do not guarantee that an action can be mimicked.
All such cases have in common that they involve the usage of invalid pointers.
More precisely,
\begin{inparaenum}[(i)]
 	\item the computation similarity may not be strong enough to mimic actions that dereference invalid pointers, and
 	\item the observer behavior inclusion may not be strong enough to mimic calls involving invalid pointers.
\end{inparaenum}
For each of those cases we give an example and restrict our development.
We argue throughout this section that our restrictions are reasonable.
Our experiments confirm this.
We begin with the computation similarity.

\begin{example}[Continued]
	Consider the following continuation of $\tau_3$ and $\sigma_3$:
	\begin{align*}
		\begin{array}{r @{{}\quad{}} c @{{}={}} c @{{}\hconcat{}} c @{{}\hconcat{}} c @{{}{}} l}
			& \tau_4 & \tau_3 & (\athread,\psel{\apvarp}:=\apvarp,[\psel{\anadr}\mapsto\anadr]) & (\athread,\psel{\apvar}:=\apvar,[\psel{\anadr}\mapsto\anadr])
			\\ \text{and}\quad &
			\sigma_4 & \sigma_3 & (\athread,\psel{\apvarp}:=\apvarp,[\psel{\anadr}\mapsto\anadr]) & (\athread,\psel{\apvar}:=\apvar,[\psel{\anadrp}\mapsto\anadrp]) & \ .
		\end{array}
	\end{align*}
	The first appended action updates $\psel{\anadr}$ in both computations to $\anadr$.
	Since $\apvarp$ is valid after both $\tau_3$ and $\sigma_3$ this assignment renders valid $\psel{\anadr}$.
	The second action assigns to $\psel{\anadr}$ in $\tau_4$.
	This results in $\psel{\anadr}$ being invalid after $\tau_4$ because the right-hand side of the assignment is the invalid $\apvar$.
	In $\sigma_4$ the second action updates $\psel{\anadrp}$ which is why $\psel{\anadr}$ remains valid.
	That is, the valid memories of $\tau_4$ and $\sigma_4$ differ.
	We have executed an action that cannot be mimicked on the valid memory despite the computations being similar.
\end{example}

The problem in the above example is the dereference of an invalid pointer.
The computation similarity does not give any guarantees about the valuation of such pointers.
Consequently, it cannot guarantee that an action using invalid pointers can be mimicked.
To avoid such problems, we forbid programs to dereference invalid pointers.

The rational behind this is as follows.
Recall that an invalid pointer is dangling.
That is, the memory it references has been freed.
If the memory has been returned to the underlying operating system, then a subsequent dereference is unsafe, that is, prone to a system crash due to a \emph{segfault}.
Hence, such dereferences should be avoided.
The dereference is only safe if the memory is guaranteed to be accessible.
To decide this, the invalid pointer needs to be compared with a definitely valid pointer.
As we mentioned in \Cref{sec:reduction:similarity}, such a comparison renders valid the invalid pointer.
This means that dereferences of invalid pointers are always unsafe.
We let verification fail if unsafe accesses are performed.
That performance-critical and lock-free code is free from unsafe accesses was validated experimentally by \citet{DBLP:conf/vmcai/HazizaHMW16} and is confirmed by our experiments.

\begin{definition}[Unsafe Access]
	A computation $\tau.(\athread,\acom,\anup)$ performs an \emph{unsafe access} if $\acom$ contains $\dsel{\apvar}$ or $\psel{\apvar}$ with $\apvar\notin\validof{\tau}$.
\end{definition}

Forbidding unsafe accesses makes the computation similarity strong enough to mimic all desired actions.
A discussion of cases where the observer behavior inclusion cannot be preserved is in order.
We start with an example.

\begin{example}[Continued]
	\label{ex:breaking-obsrel}
	Consider the following continuations of $\tau_1$ and $\sigma_1$ from \Cref{ex:reuse-vs-elision}:
	\begin{align*}
		\begin{array}{r @{{}\quad{}} c @{{}={}} c @{{}\hconcat{}} c @{{}\quad\text{with}\quad{}} c @{{}\,=\,{}} c @{{}\hconcat{}} c @{{}{}} l}
			& \tau_5 & \tau_1 & (\athread,\retireof{\apvar},\emptyset) & \historyof{\tau_5} & \historyof{\tau_1} & \retireof{\athread,\anadr}
			\\ \text{and}\quad &
			\sigma_5 & \sigma_1 & (\athread,\retireof{\apvar},\emptyset) & \historyof{\sigma_5} & \historyof{\sigma_1} & \retireof{\athread,\anadrp}
			& \ .
		\end{array}
	\end{align*}
	The observer behavior of $\tau_1$ is included in $\sigma_1$, $\tau_1\obsrel\sigma_1$.
	After $\tau_5$ a deletion of $\anadr$ is possible because it was retired.
	After $\sigma_5$ a deletion of $\anadr$ is prevented by $\baseobs$ because $\anadr$ was not retired.
	Technically, we have $\freeof{\anadr}\in\freeableof{\tau_5}{\anadr}$ and $\freeof{\anadr}\notin\freeableof{\sigma_5}{\anadr}$.
	However, $\anadr$ is a valid address because it is referenced by the valid pointer $\apvarp$.
	That is, the behavior inclusion among $\tau_1$ and $\sigma_1$ is not preserved by the subsequent action.
\end{example}

The above example showcases that calls to the SMR algorithm can break the observer behavior inclusion.
This is the case because an action can emit different events in similar computations.
The event emitted by an SMR call differs only if it involves invalid pointers.

The naive solution would prevent using invalid pointers in calls altogether.
In practice, this is too strong a requirement.
As discussed in \Cref{sec:illustration}, a common pattern for protecting an address (cf.~\Cref{fig:michaelscott_hp}, \Cref{code:michaelscott_hp:read-head,code:michaelscott_hp:protect-head,code:michaelscott_hp:check-head}) is to
\begin{inparaenum}[(i)]
	\item read a pointer $\apvar$ into a local variable $\apvarp$,
	\item issue a protection using $\apvarp$, and
	\item repeat the process if $\apvar$ and $\apvarp$ do not coincide.
\end{inparaenum}
After reading into $\apvarp$ and before protecting $\apvarp$ the referenced memory may be freed.
Hence, the protection is prone to use invalid pointers.
Forbidding such protections would render our theory inapplicable to lock-free data structures using hazard pointers.

To fight this problem, we forbid only those calls involving invalid pointers which are prone to \emph{break} the observer behavior inclusion.
Intuitively, this is the case if a call with the values of the invalid pointers replaced arbitrarily allows for more behavior on the valid addresses than the original call.
Actually, we keep precise the address the behavior of which is under consideration.
This allows us to support more scenarios where invalid pointers are used.

\begin{definition}[Racy SMR Calls]
	\label{def:racy-call}
	A computation $\tau.\anact$ with $\anact=(\athread,\enterof{\afuncof{\vecof{\apvar},\vecof{\advar}}},\emptyset)$, ${\historyof{\tau}=\ahist}$, ${\heapcomputof{\tau}{\vecof{\apvar}}=\vecof{\anadr}}$, and $\heapcomputof{\tau}{\vecof{\advar}}=\vecof{\advalue}$ performs a \emph{racy call} if:
	\begin{align*}
		\exists\,\anadrpp~
		\exists\,\vecof{\anadrp}.~~
		\big(
			\forall i.~ 
			(
			\anadr_i=\anadrpp\vee\apvar_i\in\validof{\tau}
			)
			\implies
			\anadr_i=\anadrp_i
		\big)
		\wedge
		\freeableof{\ahist.\afunc(\athread,\vecof{\anadrp},\vecof{\advalue})}{\anadrpp}
		\not\subseteq
		\freeableof{\ahist.\afunc(\athread,\vecof{\anadr},\vecof{\advalue})}{\anadrpp}
		\ .
	\end{align*}
\end{definition}

It follows immediately that calls containing valid pointers only are not racy.
In practice, $\retire$ is always called using valid pointers, thus avoiding the problematic scenario from \Cref{ex:breaking-obsrel}.
The rational behind this is that freeing invalid pointers may lead to system crashes, just like dereferences.
For $\guard$ calls of hazard pointers one can show that they never race.
We have already seen this in \Cref{ex:observer-relation}.
There, a call to $\guard$ with invalid pointers has not caused the observer relation to break.
Instead, the mimicking computation ($\sigma_3$) could perform (strictly) more frees than the computation it mimicked ($\tau_3$).

We uniformly refer to the above situations where the usage of an invalid pointer can break the ability to mimic an actions as a \emph{pointer race}.
It is a race indeed because the usage and the free of a pointer are not properly synchronized.

\begin{definition}[Pointer Race]
	\label{definition:RPR}
	A computation $\tau.\anact$ is a \emph{pointer race} if $\anact$ performs
	\begin{inparaenum}[(i)]
		\item an unsafe access, or
		\item a racy SMR call.
	\end{inparaenum}
\end{definition}

With pointer races we restrict the class of supported programs.
The restriction to pointer race free programs is reasonable in that we can handle common non-blocking data structures from the literature as shown in our experiments.
Since we want to give the main result of this section in a general fashion that does not rely on the actual observer used to specify the SMR implementation, we have to restrict the class of supported observers as well.

We require that the observer supports the elision of reused addresses, as done in \Cref{ex:reuse-vs-elision}.
Intuitively, elision is a two-step process the observer must be insensitive to.
First, an address $\anadr$ is replaced with a fresh address $\anadrp$ upon an allocation where $\anadr$ should be reused but cannot.
In the resulting computation, $\anadr$ is fresh and thus the allocation can be performed without reusing $\anadr$.
The process of replacing $\anadr$ with $\anadrp$ must not affect the behavior of the observer on addresses other than $\anadr$ and $\anadrp$.
Second, the observer must allow for more behavior on the fresh address than on the reused address.
This is required to preserve the observer behavior inclusion because the allocation of $\anadr$ renders it a valid address.

Additionally, we require a third property:
the observer behavior on an address must not be influenced by frees to another address.
This is needed because computation similarity and behavior inclusion do not guarantee that frees of invalid addresses can be mimicked, as discussed before.
Since such frees do not affect the valid memory they need not be mimicked.
The observer has to allow us to do so, that is, simply \emph{skip} such frees when mimicking a computation.

For a formal definition of our intuition we write $\renamingof{\ahist}{\anadr}{\anadrp}$ to denote the history that is constructed from $\ahist$ by replacing every occurrence of $\anadr$ with $\anadrp$ (cf. \Cref{appendix:definitions:reduction}).

\begin{definition}[Elision Support]
	\label{def:elision-support}
	The observer $\smrobs$ \emph{supports elision of memory reuse} if
	\begin{compactenum}[(i)]
		\item \label[property]{def:elision-support:replace} for all $\ahist_1,\ahist_2,\anadr,\anadrp,\anadrpp$ with $\anadr\neq\anadrpp\neq\anadrp$ and $\ahist_2=\renamingof{\ahist_1}{\anadr}{\anadrp}$ we have $\freeableof{\ahist_1}{\anadrpp}=\freeableof{\ahist_2}{\anadrpp}$,
		\item \label[property]{def:elision-support:fresh} for all $\ahist_1,\ahist_2,\anadr,\anadrp$ with $\freeableof{\ahist_1}{\anadr}\pralll{\subseteq}\freeableof{\ahist_2}{\anadr}$ and $\anadrp\pralll{\in}\freshof{\ahist_2}$ we have $\freeableof{\ahist_1}{\anadrp}\pralll{\subseteq}\freeableof{\ahist_2}{\anadrp}$, and
		\item \label[property]{def:elision-support:free} for all $\ahist,\anadr,\anadrp$ with $\anadr\neq\anadrp$ we have $\freeableof{\ahist.\freeof{\anadr}}{\anadrp}=\freeableof{\ahist}{\anadrp}$.
	\end{compactenum}
\end{definition}

We found this definition practical in that the observers we use for our experiments support elision (cf. \Cref{sec:smr-in-eval}).
The hazard pointer observer $\thehpobs$, for instance, supports elision.

%%%%%%%%%%%%%%%%%%%%%%%%%%%%%%%%%%%%%%%%%%%%%%%%%%%%%%%%%%%%%%%%%%%%%%%%%%%%%%%%%%%%%%%%%%%%%%%%%%%
%%%%%%%%%%%%%%%%%%%%%%%%%%%%%%%%%%%%%%%%%%%%%%%%%%%%%%%%%%%%%%%%%%%%%%%%%%%%%%%%%%%%%%%%%%%%%%%%%%%
%%%%%%%%%%%%%%%%%%%%%%%%%%%%%%%%%%%%%%%%%%%%%%%%%%%%%%%%%%%%%%%%%%%%%%%%%%%%%%%%%%%%%%%%%%%%%%%%%%%
\subsection{Detecting ABAs}
\label{sec:reduction:abas}

So far we have introduced restrictions, namely pointer race freedom and elision support, to rule out cases where our idea of eliding memory reuse would not work, that is, break the similarity or behavior inclusion.
If those restrictions were strong enough to carry out our development, then we could remove any reuse from a computation and get a similar one where no memory is reused.
That the resulting computation does not reuse memory means, intuitively, that it is executed under garbage collection.
As shown in the literature \cite{DBLP:conf/podc/MichaelS96}, the ABA problem is a subtle bug caused by manual memory management which is prevented by garbage collection.
So eliding all reuses jeopardizes soundness of the analysis---it could miss ABAs which result in a safety violation.
% Instead,
With this observation,
we elide all reuses except for one address per computation.
This way we analyse a semantics that is close to garbage collection, can detect ABA problems, and is much simpler than full $\allsem$.

The semantics that we suggest to analyse is $\onesem:=\bigcup_{\anadr\in\adr}\adrsem{\anadr}$.
It is the set of all computations that reuse at most a single address.
A single address suffices to detect the ABA problem.
The ABA problem manifests as an assertion of the form $\assertof{\apvar=\apvarp}$ where the addresses held by $\apvar$ and $\apvarp$ coincide but stem from different allocations.
That is, one of the pointers has received its address, the address was freed and then reallocated, before the pointer is used in the assertion.
Note that this implies that for an assertion to be ABA one of the involved pointers must be invalid.
Pointer race freedom does not forbid this.
Nor do we want to forbid such assertions.
In fact, most programs using hazard pointers contain ABAs.
They are written in a way that ensures that the ABA is \emph{harmless}.
Consider an example.

\begin{example}[ABAs in Michael\&Scott's queue using hazard pointers]
	Consider the code of Michael\&Scott's queue from \Cref{fig:michaelscott_hp}.
	More specifically, consider \Cref{code:michaelscott_hp:read-head,code:michaelscott_hp:protect-head,code:michaelscott_hp:check-head}.
	In \Cref{code:michaelscott_hp:read-head} the value of the shared pointer \code{Head} is read into the local pointer \code{head}.
	Then, a hazard pointer is used in \Cref{code:michaelscott_hp:protect-head} to protect \code{head} from being freed.
	In between reading and protecting \code{head}, its address could have been deleted, reused, and reentered the queue.
	That is, when executing \Cref{code:michaelscott_hp:check-head} the pointers \code{Head} and \code{head} can coincide although the \code{head} pointer stems from an earlier allocation.
	This scenario is an ABA.
	Nevertheless, the queue's correctness is not affected by this ABA.
	The ABA prone assertion is only used to guarantee that the address protected in \Cref{code:michaelscott_hp:protect-head} is indeed protected after \Cref{code:michaelscott_hp:check-head}.
	Wrt. to the observer $\hpobs$ from \Cref{fig:observer}, the assertion guarantees that the protection was issued before a retirement (after the latest reallocation) so that $\hpobs$ is guaranteed to be in $\ref{obs:hp:protected}$ and thus prevent future retirements from freeing the protected memory.
	The ABA does not void this guarantee, it is harmless.
\end{example}

The above example shows that lock-free data structures may perform ABAs which do not affect their correctness.
To soundly verify such algorithms, our approach is to detect every ABA and decide whether it is harmless indeed.
If so, our verification is sound.
Otherwise, we report to the programmer that the implementation suffers from a harmful ABA problem.

A discussion of how to detect ABAs is in order.
Let $\tau\in\allsem$ and $\sigma\in\adrsem{\anadr}$ be two similar computations.
Intuitively, $\sigma$ is a computation which elides the reuses from $\tau$ except for some address $\anadr$.
The address $\anadr$ can be used in $\sigma$ in exactly the same way as it is used in $\tau$.
Let $\anact=(\athread,\assertof{\apvar=\apvarp},\emptyset)$ be an ABA assertion which is enabled after $\tau$.
To detect this ABA under $\adrsem{\anadr}$ we need $\anact$ to be enabled after $\sigma$.
We seek to have $\sigma.\anact\in\adrsem{\anadr}$.
This is not guaranteed.
Since $\anact$ is an ABA it involves at least one invalid pointer, say~$\apvar$.
Computation similarity does not guarantee that $\apvar$ has the same valuation in both $\tau$ and $\sigma$.
However, if $\apvar$ points to $\anadr$ in $\tau$, then it does so in $\sigma$ because $\anadr$ is (re)used in $\sigma$ in the same way as in $\tau$.
Thus, we end up with $\heapcomputof{\tau}{\apvar}=\heapcomputof{\sigma}{\apvar}$ although $\apvar$ is invalid.
In order to guarantee this, we introduce a \emph{memory equivalence} relation.
We use this relation to precisely track how the reusable address $\anadr$ is used.

\begin{definition}[Memory Equivalence]
	\label{def:mem-equiv}
	Two computations $\tau$ and $\sigma$ are \emph{memory equivalent wrt. address}~$\anadr$, denoted by $\tau\heapequiv[\anadr]\sigma$, if
	\begin{align*}
		&
		\forall\, \apvar\in\pvars.~ \heapcomputof{\tau}{\apvar}=\anadr \iff \heapcomputof{\sigma}{\apvar}=\anadr
		\\\text{ and }\quad&
		\forall\, \anadrp\in\heapcomputof{\tau}{\validof{\tau}}.~ \heapcomputof{\tau}{\psel{\anadrp}}=\anadr \iff \heapcomputof{\sigma}{\psel{\anadrp}}=\anadr
		\\\text{ and }\quad&
		\anadr\in\freshof{\tau}\cup\freedof{\tau}\iff\anadr\in\freshof{\sigma}\cup\freedof{\sigma}
		\\\text{ and }\quad&
		\freeableof{\tau}{\anadr}\subseteq\freeableof{\sigma}{\anadr}
		\ .
	\end{align*}
\end{definition}

The first line in this definition states that the same pointer variables in $\tau$ and $\sigma$ are pointing to $\anadr$.
Similarly, the second line states this for the pointer selectors of valid addresses.
We have to exclude the invalid addresses here because $\tau$ and $\sigma$ may differ on the in-use addresses due to eliding reuse.
The third line states that $\anadr$ can be allocated in $\tau$ iff it can be allocated in $\sigma$.
The last line states that the observer allows for more behavior on $\anadr$ in $\sigma$ than in $\tau$.
These properties combined guarantee that $\sigma$ can mimic actions of $\tau$ involving $\anadr$ no matter if invalid pointers are used.

The memory equivalence lets us detect ABAs in $\onesem$.
Intuitively, we can only detect \emph{first} ABAs because we allow for only a single address to be reused.
Subsequent ABAs on different addresses cannot be detected.
To detect ABA sequences of arbitrary length, an arbitrary number of reusable addresses is required.
To avoid this, i.e., to avoid an analysis of full $\allsem$, we formalize the idea of \emph{harmless ABA} from before.
We say that an ABA is harmless if executing it leads to a system state which can be explored without performing the ABA.
That the system state can be explored without performing the ABA means that every ABA is also a first ABA.
Thus, any sequence of ABAs is explored by considering only first ABAs.
Note that this definition is independent of the actual correctness notion.

\begin{definition}[Harmful ABA]
	\label{def:harmful-ABA}
	$\onesem$ is \emph{free from harmful ABAs} if the following holds:
	\begin{align*}
		&\forall\,\sigma_\anadr.\anact\in\adrsem{\anadr}~
		\forall\,\sigma_\anadrp\in\adrsem{\anadrp}~
		\exists\,\sigma_\anadrp'\in\adrsem{\anadrp}.~~
		\\%\qquad\qquad
		&\qquad\sigma_\anadr\computequiv\sigma_\anadrp
		\:\wedge\:
		\anact=(\_,\assertof{\_},\_)
		% \\
		\:\implies\:
		\sigma_\anadr.\anact\computequiv\sigma_\anadrp'
		\:\wedge\:
		\sigma_\anadrp\heapequiv[\anadrp]\sigma_\anadrp'
		\:\wedge\:
		\sigma_\anadr.\anact\obsrel\sigma_\anadrp'
		\ .
	\end{align*}
\end{definition}

To understand how the definition implements our intuition, consider some $\tau.\anact\in\allsem$ where $\anact$ performs an ABA on address $\anadr$.
Our goal is to mimic $\tau.\anact$ in $\adrsem{\anadrp}$, that is, we want to mimic the ABA without reusing address $\anadr$ (for instance, to detect subsequent ABAs on address $\anadrp$).
Assume we are given some $\sigma_\anadrp\in\adrsem{\anadrp}$ which is similar and memory equivalent wrt. $\anadrp$ to $\tau$.
This does not guarantee that $\anact$ can be mimicked after $\sigma_\anadrp$; the ABA may not be enabled because it involves invalid pointers the valuation of which may differ in $\tau$ and $\sigma_\anadrp$.
However, we can construct a computation $\sigma_\anadr$ which is similar and memory equivalent wrt. $\anadr$ to $\tau$.
After $\sigma_\anadr$ the ABA is enabled, i.e., we have $\sigma_\anadr.\anact\in\adrsem{\anadr}$.
For those two computations $\sigma_\anadr.\anact$ and $\sigma_\anadrp$ we invoke the above definition.
It yields another computation $\sigma_\anadrp'\in\adrsem{\anadrp}$ which, intuitively, coincides with $\sigma_\anadrp$ but where the ABA has already been executed.
Put differently, $\sigma_\anadrp'$ is a computation which mimics the execution of $\anact$ after $\sigma_\anadrp$ (although $\anact$ is not enabled).

\begin{example}[Continued]
	Consider the following computation of Michael\&Scott's queue:
	\begin{align*}
		\tau =~ & \tau_6 \hconcat (\athread,\mathtt{head}:=\mathtt{Head},[\mathtt{head}\mapsto\anadr]) \hconcat \tau_7 \hconcat \freeof{\anadr} \hconcat \tau_8 \hconcat \\& (\athread,\enterof{\guardof{\mathtt{head}}{0}},\emptyset) \hconcat (\athread,\exit,\emptyset) \hconcat (\athread,\assertof{\mathtt{head}=\mathtt{Head}},\emptyset) \ .
	\end{align*}
	This computation resembles a thread $\athread$ executing \Cref{code:michaelscott_hp:read-head,code:michaelscott_hp:protect-head,code:michaelscott_hp:check-head} while an interferer frees address $\anadr$ referenced by $\mathtt{head}$.
	Note that the $\assert$ resembles the conditional from \Cref{code:michaelscott_hp:check-head} and states that the condition evaluates to \emph{true}.
	That is, the last action in $\tau$ is an ABA.

	Reusing address $\anadr$ allows us to mimic $\tau$ with a computation $\sigma_\anadr\in\adrsem{\anadr}$ which detects the ABA.
	For simplicity, assume $\tau=\sigma_\anadr$.
	Mimicking $\tau$ with another computation $\sigma_\anadrp\in\adrsem{\anadrp}$ is not possible.
	In $\sigma_\anadrp\in\adrsem{\anadrp}$ the first allocation of $\anadr$ will be elided such that the assertion is not enabled.
	However, rescheduling the actions gives rise to $\sigma_\anadrp'\in\adrsem{\anadrp}$ which coincides with $\sigma_\anadrp$ but where the assertion has also been executed:
	\begin{align*}
		\sigma_\anadrp' =~ & \tau_6 \hconcat \tau_7 \hconcat \freeof{\anadr} \hconcat \tau_8 \hconcat (\athread,\mathtt{head}:=\mathtt{Head},[\mathtt{head}\mapsto\anadr]) \hconcat \\& (\athread,\enterof{\guardof{\mathtt{head}}{0}},\emptyset) \hconcat (\athread,\exit,\emptyset) \hconcat (\athread,\assertof{\mathtt{head}=\mathtt{Head}},\emptyset) \ .
	\end{align*}
	Requiring the existence of such a $\sigma_\anadrp'$ guarantees that an analysis can \emph{see past} ABAs on address $\anadr$, although $\anadr$ is not reused.
\end{example}

A key aspect of the above definition is that checking for harmful ABAs can be done in the simpler semantics $\onesem$.
Altogether, this means that we can rely on $\onesem$ for both the actual analysis and a soundness (absence of harmful ABAs) check.
Our experiments show that the above definition is practical.
There were no harmful ABAs in the benchmarks we considered.

%%%%%%%%%%%%%%%%%%%%%%%%%%%%%%%%%%%%%%%%%%%%%%%%%%%%%%%%%%%%%%%%%%%%%%%%%%%%%%%%%%%%%%%%%%%%%%%%%%%
%%%%%%%%%%%%%%%%%%%%%%%%%%%%%%%%%%%%%%%%%%%%%%%%%%%%%%%%%%%%%%%%%%%%%%%%%%%%%%%%%%%%%%%%%%%%%%%%%%%
%%%%%%%%%%%%%%%%%%%%%%%%%%%%%%%%%%%%%%%%%%%%%%%%%%%%%%%%%%%%%%%%%%%%%%%%%%%%%%%%%%%%%%%%%%%%%%%%%%%
\subsection{Reduction Result}
\label{sec:reduction:result}

We show how to exploit the concepts introduced so far to soundly verify safety properties in the simpler semantics $\onesem$ instead of full $\allsem$.

\begin{lemma}
	Let $\onesem$ be free from pointer races and harmful ABAs, and let $\smrobs$ support elision.
	For all $\tau\prall{\in}\allsem$ and $\anadr\prall{\in}\adr$ there is $\sigma\prall{\in}\adrsem{\anadr}$ with $\tau\prall{\computequiv}\sigma$, $\tau\prall{\obsrel}\sigma$, and $\tau\prall{\heapequiv[\anadr]}\sigma$.
\end{lemma}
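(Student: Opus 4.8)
The plan is to prove the statement by induction on the length of the computation $\tau \in \allsem$, constructing $\sigma$ action-by-action so that the three relations $\computequiv$, $\obsrel$, and $\heapequiv[\anadr]$ are maintained as a joint invariant. For the base case $\tau = \epsilon$ we take $\sigma = \epsilon$; all three relations hold trivially since the memory is the (same) initial memory, no pointer is valid, and the empty history yields equal $\mathcal{F}$-sets. For the step, consider $\tau.\anact \in \allsem$ with $\anact = (\athread, \acom, \anup)$ and an already-constructed $\sigma$ with $\tau \computequiv \sigma$, $\tau \obsrel \sigma$, $\tau \heapequiv[\anadr] \sigma$. I would do a case distinction on the command $\acom$, and in each case exhibit a mimicking action $\anactp = (\athread, \acom, \anupp)$ (or, for environment frees targeting an address other than $\anadr$, simply skip it) such that $\sigma.\anactp \in \adrsem{\anadr}$ and the invariant is restored. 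Control-location equality is immediate from $\controlof{\tau}=\controlof{\sigma}$ and executing the same command; the work is all in the memory and observer components.

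First I would handle the straightforward commands. Pure assignments, data operations, $\enter$/$\exit$, and assertions between valid (or data) operands produce identical updates on the valid memory because $\restrict{\heapcomput{\tau}}{\validof{\tau}} = \restrict{\heapcomput{\sigma}}{\validof{\sigma}}$; the validity sets evolve identically by definition of $\validof{\cdot}$; and the induced histories agree on the emitted event (for $\enter$, this uses pointer-race freedom so the call is non-racy, giving $\obsrel$ preservation via the racy-call definition together with \Cref{def:elision-support:replace}). Memory accesses $\apvar := \psel{\apvarp}$ etc. are only permitted when $\apvar,\apvarp$ are valid (otherwise $\tau.\anact$ would perform an unsafe access, contradicting pointer-race freedom of $\onesem$ — here one needs the observation that the prefix under consideration lies in $\onesem$, or rather that the obstruction would already appear there), so these reduce to the valid-memory case. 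The interesting commands are $\malloc$ and $\free$. For $\free$ of some address $\anadrp$ in $\tau$: if $\anadrp$ is a valid address it is in $\vadrof{\tau}$, and $\obsrel$ together with $\heapequiv[\anadr]$ (for the case $\anadrp = \anadr$) lets $\sigma$ perform the same free; if $\anadrp$ is invalid and $\neq \anadr$, \Cref{def:elision-support:free} says skipping it in $\sigma$ preserves all $\mathcal{F}$-sets, and the free does not touch valid memory, so the invariant survives. For $\malloc$ acquiring address $\anadrp$: if $\anadrp = \anadr$ we reuse it in $\sigma$ as well (permitted since $\adr \ni \anadr$ is the designated reusable address, and $\heapequiv[\anadr]$ guarantees $\anadr \in \freshof{\sigma} \cup \freedof{\sigma}$ exactly when it is in $\tau$); if $\anadrp \neq \anadr$ then $\tau$ may be reusing $\anadrp$, but $\sigma$ must allocate it \emph{fresh} — this is the elision step, where we pick a genuinely fresh address $\anadrp'$ in $\sigma$, and \Cref{def:elision-support:fresh} is exactly what upgrades $\obsrel$ on the now-valid $\anadrp'$. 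The new address enters $\validof{\cdot}$ in both computations and the valid memory is extended consistently.

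The main obstacle I expect is the $\malloc$ case with elision, and more specifically making the bookkeeping of $\obsrel$ and $\heapequiv[\anadr]$ go through simultaneously when an elided fresh address $\anadrp'$ in $\sigma$ must thereafter be kept in lockstep with $\anadrp$ in $\tau$ for the rest of the computation — in effect $\sigma$ carries a running renaming of $\tau$'s addresses, and one must check that all three \Cref{def:elision-support} clauses compose correctly across a sequence of such elisions (clause (i) to propagate $\mathcal{F}$-equalities through the renaming on third addresses, clause (ii) for the freshly allocated one, clause (iii) to discard unmimicked frees). A related subtlety is ABAs: when $\anact$ is an assertion $\assertof{\apvar=\apvarp}$ that is ABA-prone (so one operand is invalid and points to some address $\anadrp \neq \anadr$ whose reuse was elided), $\sigma$ as constructed so far may not have that assertion enabled. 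Here is where I would invoke harmful-ABA freedom of $\onesem$ (\Cref{def:harmful-ABA}): it provides a rescheduled $\sigma'$ that "sees past" the ABA while preserving $\computequiv$, $\heapequiv[\anadrp]$, and $\obsrel$ — but since our running relation is $\heapequiv[\anadr]$ rather than $\heapequiv[\anadrp]$, the genuinely delicate point is arguing that $\anadr$ itself is never the elided-away address at an ABA step (it is the one address we \emph{do} reuse, so $\heapequivweak[\anadr]$ keeps its valuation synchronized and the assertion on operands pointing to $\anadr$ is directly enabled), so the harmful-ABA definition with parameter $\anadr$ is applicable and yields exactly the $\sigma'$ we continue with. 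Finishing the proof is then a matter of verifying that these invocations of the elision-support and harmful-ABA hypotheses are well-typed at each step and that the invariant as a conjunction is genuinely restored, not merely its individual conjuncts in isolation.
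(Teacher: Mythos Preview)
Your overall inductive strategy matches the paper, but there is a structural gap. You fix the target address $\anadr$ at the outset and build a single $\sigma\in\adrsem{\anadr}$. The paper instead strengthens the induction hypothesis to maintain, for \emph{every} address $\anadrpp\in\adr$ simultaneously, a computation $\sigma_\anadrpp\in\adrsem{\anadrpp}$ with $\tau\computequiv\sigma_\anadrpp$, $\tau\obsrel\sigma_\anadrpp$, $\tau\heapequiv[\anadrpp]\sigma_\anadrpp$. This is exactly what makes the ABA step go through. When $\anact$ is an assertion that is ABA on some address $\anadrp\neq\anadr$, your $\sigma$ has elided the reuse of $\anadrp$ and hence $\anact$ is not enabled after $\sigma$; and $\tau.\anact$ is not in $\onesem$ either. \Cref{def:harmful-ABA} demands as its first input a computation in some $\adrsem{?}$ where the assertion \emph{is} executed---you have none. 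The paper uses $\sigma_\anadrp$: by $\tau\heapequiv[\anadrp]\sigma_\anadrp$ the assertion is enabled after $\sigma_\anadrp$, and one then instantiates \Cref{def:harmful-ABA} with $\sigma_\anadrp.\anact$ (first quantifier) and $\sigma_\anadr$ (second quantifier) to obtain the required $\sigma_\anadr'$. Your phrase ``the harmful-ABA definition with parameter $\anadr$ is applicable'' has the roles reversed: $\anadr$ is the address you want to continue in, not the one where the ABA is visible.

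There is a second problem in the $\malloc$ case. If $\tau$ reallocates some $\anadrp\neq\anadr$ and you let $\sigma$ allocate a fresh $\anadrp'$ instead, then right after the step $\apvar$ is valid with $\heapcomputof{\tau}{\apvar}=\anadrp\neq\anadrp'=\heapcomputof{\sigma}{\apvar}$, so $\tau\computequiv\sigma$ fails immediately. The paper does the elision in the opposite direction: it renames all \emph{prior} occurrences of $\anadrp$ in $\sigma_\anadr$ to a fresh $\anadrpp$ (this is where \Cref{def:elision-support}\ref{def:elision-support:replace} is invoked), so that $\anadrp$ becomes fresh in the resulting computation and the allocation can target $\anadrp$ itself, preserving valid-memory equality on the nose. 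Your ``running renaming of $\tau$'s addresses'' cannot be carried as a forward discrepancy between $\tau$ and $\sigma$; it must be applied retroactively to $\sigma$.
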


\begin{proof}[Proof Sketch]
	We construct $\sigma$ inductively by mimicking every action from $\tau$ and eliding reuses as needed.
	For the construction, consider $\tau.\anact\in\allsem$ and assume we have already constructed, for every $\anadr\in\adr$, an appropriate $\sigma_\anadr\in\adrsem{\anadr}$.
	Consider some address $\anadr\in\adr$.
	The task is to mimic $\anact$ in $\sigma_\anadr$.
	If $\anact$ is an assignment or an SMR call, then pointer race freedom guarantees that we can mimic $\anact$ by executing the same command with a possibly different update.
	We discussed this in \Cref{sec:reduction:pointer-races}.
	The interesting cases are ABAs, frees, and allocations.

	First, consider the case where $\anact$ executes an ABA assertion $\assertof{\apvar=\apvarp}$.
	That the assertion is an ABA means that at least one of the pointers is invalid, say $\apvar$.
	That is, $\anact$ may not be enabled after $\sigma_\anadr$.
	Let $\apvar$ point to $\anadrp$ in $\tau$.
	By induction, we have already constructed $\sigma_\anadrp$ for $\tau$.
	The ABA is enabled after $\sigma_\anadrp$.
	This is due to $\tau\heapequiv[\anadrp]\sigma_\anadrp$.
	It implies that $\apvar$ points to $\anadrp$ in $\tau$ iff $\apvar$ points to $\anadrp$ in $\sigma_\anadrp$ (independent of the validity), and likewise for $\apvarp$.
	That is, the comparison has the same outcome in both computations.
	Now, we can exploit the absence of harmful ABAs to find a computation mimicking $\tau.\anact$ for $\anadr$.
	Applying \Cref{def:harmful-ABA} to $\sigma_\anadrp.\anact$ and $\sigma_\anadr$ yields some $\sigma_\anadr'$ that satisfies the required properties.

	Second, consider the case of $\anact$ performing a $\freeof{\anadrp}$.
	If $\anact$ is enabled after $\sigma_\anadr$ nothing needs to be shown.
	In particular, this is the case if $\anadrp$ is a valid address or $\anadr=\anadrp$.
	Otherwise, $\anadrp$ must be an invalid address.
	Freeing an invalid address does not change the valid memory.
	It also does not change the control location of threads as frees are performed by the environment.
	Hence, we have $\tau.\anact\computequiv\sigma_\anadr$.
	By the definition of elision support, \Cref{def:elision-support}\ref{def:elision-support:free}, the $\free$ does not affect the behavior of the observer on other addresses.
	So we get $\tau.\anact\obsrel\sigma_\anadr$.
	With the same arguments we conclude $\tau.\anact\heapequiv[\anadr]\sigma_\anadr$.
	That is, we do not need to mimic frees of invalid addresses.

	Last, consider $\anact$ executing an allocation $\apvar:=\malloc$ of address $\anadrp$.
	If $\anadrp$ is fresh in $\sigma_\anadr$ or $\anadr=\anadrp$, then $\anact$ is enabled.
	The allocation makes $\anadrp$ a valid address.
	That $\obsrel$ holds for this address follows from elision support, \Cref{def:elision-support}\ref{def:elision-support:fresh}.
	Otherwise, $\anact$ is not enabled because $\anadrp$ cannot be reused.
	We replace in $\sigma_\anadr$ every occurrence of $\anadrp$ with a fresh address $\anadrpp$.
	Let us denote the result with $\renamingof{\sigma_\anadr}{\anadrp}{\anadrpp}$.
	Relying on elision support, \Cref{def:elision-support}\ref{def:elision-support:replace}, one can show $\sigma_\anadr\prec\renamingof{\sigma_\anadr}{\anadrp}{\anadrpp}$ and thus $\tau\prec\renamingof{\sigma_\anadr}{\anadrp}{\anadrpp}$ for all $\prec{\in}\,\set{\computequiv,\obsrel\mkern+2mu,\heapequiv[\anadr]}$.
	Since $\anadrp$ is fresh in $\renamingof{\sigma_\anadr}{\anadrp}{\anadrpp}$, we conclude by enabledness of $\anact$.
\end{proof}

From the above follows the second result of the paper.
It states that for every computation there is a similar one which reuses at most a single address.
Since similarity implies control location equality, our result allows for a much simpler verification of safety properties.
We stress that the result is independent of the actual observer used.

\begin{theorem}
	\label{thm:reduction}
	If $\onesem$ is pointer race free, supports elision, and is free from harmful ABAs, then $\allsem\computequiv\onesem$.
\end{theorem}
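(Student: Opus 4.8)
The plan is to obtain \Cref{thm:reduction} as an essentially immediate corollary of the preceding lemma; the only things left to do are to unfold what $\computequiv$ means when lifted to \emph{sets} of computations and to record two elementary structural facts about the semantics. I read $\allsem\computequiv\onesem$ as mutual domination: every computation of one set has a $\computequiv$-related partner in the other. So I must establish \textbf{(a)} for every $\tau\in\allsem$ there is $\sigma\in\onesem$ with $\tau\computequiv\sigma$, and \textbf{(b)} for every $\sigma\in\onesem$ there is $\tau\in\allsem$ with $\sigma\computequiv\tau$.

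For \textbf{(a)} I would simply invoke the lemma. The hypotheses of the theorem --- $\onesem$ is pointer-race free, $\onesem$ is free from harmful ABAs, and $\smrobs$ supports elision --- are precisely the hypotheses of the lemma. Fixing any address $\anadr\in\adr$ (the address space is nonempty, indeed infinite), the lemma yields $\sigma\in\adrsem{\anadr}$ with $\tau\computequiv\sigma$ (and, in addition, $\tau\obsrel\sigma$ and $\tau\heapequiv[\anadr]\sigma$, which are not needed here). Since $\adrsem{\anadr}\subseteq\bigcup_{\anadrp\in\adr}\adrsem{\anadrp}=\onesem$, this $\sigma$ witnesses \textbf{(a)}.

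For \textbf{(b)} I would observe that $\asem{X}$ is monotone in the reuse set $X$: the parameter $X$ occurs only in the side condition ``$\anadr\in\freshof{\tau}$ or $\anadr\in\freedof{\tau}\cap X$'' of rule \textbf{(Malloc)}, and enlarging $X$ only weakens this condition while leaving every other rule (including \textbf{(Free)}) untouched; hence $X\subseteq X'$ implies $\asem{X}\subseteq\asem{X'}$. In particular $\adrsem{\anadr}\subseteq\allsem$ for each $\anadr$, so $\onesem\subseteq\allsem$. Then any $\sigma\in\onesem$ is already a member of $\allsem$, and $\sigma\computequiv\sigma$ holds by reflexivity of $\computequiv$ (both $\controlof{\sigma}=\controlof{\sigma}$ and $\restrict{\heapcomput{\sigma}}{\validof{\sigma}}=\restrict{\heapcomput{\sigma}}{\validof{\sigma}}$ are trivial), which witnesses \textbf{(b)}. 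Combining \textbf{(a)} and \textbf{(b)} gives $\allsem\computequiv\onesem$.

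All of the genuine difficulty lives in the inductive construction carried out by the lemma --- eliding fresh allocations via \Cref{def:elision-support}, seeing past ABAs via \Cref{def:harmful-ABA}, and discarding frees of invalid addresses; at the level of the theorem itself there is essentially no obstacle, only the bookkeeping of matching the theorem's hypotheses to the lemma's and invoking monotonicity of $\asem{X}$ in $X$.
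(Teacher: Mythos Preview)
Your proposal is correct and matches the paper's own route: the paper states that \Cref{thm:reduction} ``follows from the above'' lemma, and in the appendix it derives the result by first obtaining (via the lemma/its generalization) a $\computequiv$-partner in $\adrsem{\anadr}\subseteq\onesem$ for every $\tau\in\allsem$, with the reverse inclusion $\onesem\subseteq\allsem$ being immediate from monotonicity of $\asem{X}$ in $X$. The only cosmetic difference is that the paper's appendix factors the argument through an intermediate notion of \emph{reduction compatibility} (\Cref{def:reduction-compatible}) before invoking the analogue of the lemma, but the essential structure---lemma for one direction, monotonicity plus reflexivity for the other---is identical to yours.
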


One can generalize the above results to a strictly weaker premise, see \Cref{appendix:theory-generalization}.
For the proofs of the above results, refer to \Cref{appendix:proofs-reduction}.

%!TEX root = ../main.tex

\section{Evaluation}
\label{sec:evaluation}

We implemented our approach in a tool.
It is a thread-modular analysis for
\begin{inparaenum}[(i)]
	\item verifying linearizability of singly-linked lock-free data structures using SMR specifications, and
	\item verifying SMR implementations against their specification.
\end{inparaenum}
In the following, we elaborate on the SMR implementations used for our benchmarks and the implemented analysis, and evaluate our tool.

%%%%%%%%%%%%%%%%%%%%%%%%%%%%%%%%%%%%%%%%%%%%%%%%%%%%%%%%%%%%%%%%%%%%%%%%%%%%%%%%%%%%%%%%%%%%%%%%%%%%%%%%%%%%%%%%%%%
%%%%%%%%%%%%%%%%%%%%%%%%%%%%%%%%%%%%%%%%%%%%%%%%%%%%%%%%%%%%%%%%%%%%%%%%%%%%%%%%%%%%%%%%%%%%%%%%%%%%%%%%%%%%%%%%%%%
%%%%%%%%%%%%%%%%%%%%%%%%%%%%%%%%%%%%%%%%%%%%%%%%%%%%%%%%%%%%%%%%%%%%%%%%%%%%%%%%%%%%%%%%%%%%%%%%%%%%%%%%%%%%%%%%%%%
%%%%%%%%%%%%%%%%%%%%%%%%%%%%%%%%%%%%%%%%%%%%%%%%%%%%%%%%%%%%%%%%%%%%%%%%%%%%%%%%%%%%%%%%%%%%%%%%%%%%%%%%%%%%%%%%%%%
\subsection{SMR Algorithms}
\label{sec:smr-in-eval}

For our experiments, we consider two well-known SMR algorithms: \emph{Hazard Pointers (HP)} and \emph{Epoch-Based Reclamation (EBR)}.
Additionally, we include as a baseline for our experiments a simplistic \emph{GC} SMR algorithm which does not allow for memory to be reclaimed.
We already introduced HP and gave a specification (cf. \Cref{fig:observer}) in form of the observer $\thehpobs$.
We briefly introduce EBR.

Epoch-based reclamation \cite{DBLP:phd/ethos/Fraser04} relies on two assumption:
\begin{inparaenum}[(i)]
	\item threads cannot have pointers to any node of the data structure in-between operation invocations, and
	\item nodes are retired only after being removed from the data structure, i.e., after being made unreachable from the shared variables.
\end{inparaenum}
Those assumptions imply that no thread can acquire a pointer to a removed node if every thread has been in-between an invocation since the removal.
So it is safe to delete a retired node if every thread has been in-between an invocation since the retire.
Technically, EBR introduces \emph{epoch counters}, a global one and one for each thread.
Similar to hazard pointers, thread epochs are single-writer multiple-reader counters.
Whenever a thread invokes an operation, it reads the global epoch $e$ and announces this value by setting its thread epoch to $e$.
Then, it scans the epochs announced by the other threads.
If they all agree on $e$, the global epoch is set to $e+1$.
The fact that all threads must have announced the current epoch $e$ for it to be updated to $e+1$ means that all threads have invoked an operation after the epoch was changed from $e-1$ to $e$.
That is, all threads have been in-between invocations.
Thus, deleting nodes retired in the global epoch $e-1$ becomes safe from the moment when the global epoch is updated from $e$ to $e+1$.
To perform those deletions, every thread keeps a list of retired nodes for every epoch and stores nodes passed to \code{retire} in the list for the current thread epoch.
For the actual deletion it is important to note that the thread-local epoch may lack behind the global epoch by up to $1$.
As a consequence, a thread may put a node retired during the global epoch $e$ into its retire-list for epoch $e-1$.
So for a thread during its local epoch $e$. % e+1
it is not safe to delete the nodes in the retired-list for $e-1$ because it may have been retired during the global epoch $e$.
It is only safe to delete the nodes contained in the retired-list for epochs $e-2$ and smaller.
Hence, it suffices to maintaining three retire-lists.
Progressing to epoch $e+1$ allows for deleting the nodes from the local epoch $e-2$ and to reuse that retire-list for epoch $e+1$.

%!TEX root = ../../main.tex

\begin{figure}
\begin{tcolorbox}
	\center
	\begin{tikzpicture}[->,>=stealth',shorten >=1pt,auto,node distance=2.8cm,thick,initial text={}]
		\node [xshift=-0.4cm,yshift=.8cm,draw,thin] {$\ebrobs$};
		\tikzstyle{every state}=[minimum size=1.5em]
		\tikzset{every edge/.append style={font=\footnotesize}}
		\node[initial,state]    (A)              {\mkstatename{obs:ebr:init}};
		\node[state]            (B) [right of=A] {\mkstatename{obs:ebr:invoked}};
		\node[state]            (E) [right of=B] {\mkstatename{obs:ebr:active}};
		\node[state]            (C) [right of=E] {\mkstatename{obs:ebr:retired}};
		\node[accepting,state]  (D) [right of=C] {\mkstatename{obs:ebr:final}};
		\coordinate             [below of=A, yshift=+1.75cm]  (X)  {};
		\coordinate             [below of=E, yshift=+1.75cm]  (Y)  {};
		\coordinate             [below of=B, yshift=+1.75cm]  (Z)  {};
		\path
			(A) edge node[align=center] {\translabbr{\evt{\leaveQ}{\athread}}{\athread=\anovar}} (B)
			(B) edge node[align=center] {\translabbr{\evt{\exit}{\athread}}{\athread=\anovar}} (E)
			(E) edge node[align=center] {\translabbr{\evt{\retire}{\athread,\anadr}}{\anadr=\anovarp}} (C)
			(C) edge node[align=center] {\translabbr{\freeof{\anadr}}{\anadr=\anovarp}} (D)
			(C.south west) edge[-,shorten >=0pt] (Y)
			([xshift=-1.5mm]X) edge ([xshift=-1.5mm]A.south)
			(Y) edge[-,shorten >=0pt] node {\translab{\evt{\enterQ}{\athread}}{\athread=\anovar}} ([xshift=-1.5mm]X)
			(E) edge[-,shorten >=0pt] ([yshift=1mm]Y.north)
			([yshift=1mm]Y) edge[-,shorten >=0pt] ([xshift=0mm,yshift=1mm]X)
			([xshift=0mm,yshift=1mm]X) edge ([xshift=0mm]A.south)
			(B) edge[-,shorten >=0pt] ([yshift=2mm]Z.north)
			([yshift=2mm]Z) edge[-,shorten >=0pt] ([xshift=1.5mm,yshift=2mm]X)
			([xshift=1.5mm,yshift=2mm]X) edge ([xshift=1.5mm]A.south)
			;
	\end{tikzpicture}
	\caption{%
		Observer specifying when EBR/QSBR defers deletion using two variables, $\anovar$ and $\anovarp$, to observe a thread and an address, respectively.
		The observer implements the property that a cell $\anovarp$ retired during the non-quiescent phase of a thread $\anovar$ may not be freed until the thread becomes quiescent.
		The full specification of EBR/QSBR is the observer $\theebrobs$.
	}
	\label{fig:observer-ebr}
\end{tcolorbox}
\end{figure}

\emph{Quiescent-State-Based Reclamation (QSBR)} \cite{McKenney1998ReadcopyUU} generalizes EBR by allowing the programmer to manually identify when threads are \emph{quiescent}.
A thread is quiescent if it does not hold pointers to any node from the data structure.
Threads signal this by calling \code{enterQ} and \code{leaveQ} upon entering and leaving a quiescent phase, respectively.

We use the observer $\theebrobs$ from \Cref{fig:observer-ebr} for specifying both EBR and QSBR (they have the same specification).
As for HP, we use $\baseobs$ to ensure that only those addresses are freed that have been retired.
Observer $\ebrobs$ implements the~actual~EBR/QSBR~behavior~described~above.

To use HP and EBR with our approach for restricting reuse during an analysis, we have to show that their observers support elision.
This is established by the following \namecref{thm:our-observers-support-elision}.
We consider it future work to extend our tool such that it performs the appropriate checks automatically.

\begin{lemma}
	\label{thm:our-observers-support-elision}
	The observers $\thehpobs$ and $\theebrobs$ support elision.
\end{lemma}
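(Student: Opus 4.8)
The plan is to reduce the statement to the three ``atomic'' observers $\baseobs$, $\hpobs$, and $\ebrobs$, and then to verify the three conditions~\ref{def:elision-support:replace}, \ref{def:elision-support:fresh}, \ref{def:elision-support:free} of \Cref{def:elision-support} by directly inspecting these small automata.

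\emph{Reduction to the factors.} Write $\mathcal{F}_{\anobs}(\ahist,\anadr)$ for the variant of $\freeableof{\ahist}{\anadr}$ obtained by replacing $\smrobs$ with an arbitrary observer $\anobs$. Since $\specof{\anobs[1]\times\anobs[2]}=\specof{\anobs[1]}\cap\specof{\anobs[2]}$ and the side condition $\freesof{\ahistp}\subseteq\set{\anadr}$ does not mention the observer, we get the decomposition $\mathcal{F}_{\anobs[1]\times\anobs[2]}(\ahist,\anadr)=\mathcal{F}_{\anobs[1]}(\ahist,\anadr)\cap\mathcal{F}_{\anobs[2]}(\ahist,\anadr)$. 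Conditions~\ref{def:elision-support:replace} and~\ref{def:elision-support:free} are equalities of $\mathcal{F}$-sets and are thus inherited by a product from its factors; condition~\ref{def:elision-support:fresh} is an inclusion, which I would push through the product using the fact that $\mathcal{F}_{\anobs}(\ahist,\anadr)=\emptyset$ whenever $\ahist\notin\specof{\anobs}$ (take the empty suffix), so that one may assume $\ahist_1,\ahist_2\in\specof{\anobs[1]\times\anobs[2]}$ and argue factor by factor. It therefore suffices to prove elision support for each of $\baseobs$, $\hpobs$, and $\ebrobs$.

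\emph{The atomic observers.} Each of these observers carries a single address variable $\anovarp$ (plus a thread variable, and for $\hpobs$ an index variable), and the only transitions into an accepting location are $\freeof{\anadr}$-transitions guarded by $\anadr=\anovarp$. Fixing a valuation, so that $\anovarp$ is bound to some address $\anadr_0$, the run of the automaton on a history depends only on its ``$\anadr_0$-abstraction'', i.e.\ on the sequence of events with each address argument recorded merely as ``$=\anadr_0$'' or ``$\neq\anadr_0$'' and the remaining arguments resolved through the valuation. For condition~\ref{def:elision-support:replace}, renaming $\anadr$ to $\anadrp$ with $\anadr\neq\anadrpp\neq\anadrp$ leaves the $\anadrpp$-abstraction of every history unchanged, so for the initial states binding $\anovarp$ to $\anadrpp$ the runs on $\ahist_1.\ahistp$ and $\ahist_2.\ahistp$ coincide; for the other initial states one matches accepting runs along the renaming, using $\freesof{\ahistp}\subseteq\set{\anadrpp}$ to rule out that any free event of $\ahistp$ targets a renamed address. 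For condition~\ref{def:elision-support:free}, a $\freeof{\anadr}$ with $\anadr\neq\anadrp$ triggers no non-self-loop transition of the automaton tracking $\anadrp$, hence is observationally silent and may be inserted into or removed from a history without changing any run; and it cannot be the accepting event of another valuation's run, because in a computation-induced history an address is freed only after being retired, which returns the $\baseobs$-component for that address to its initial location before the free.

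\emph{Fresh addresses --- the crux.} I expect condition~\ref{def:elision-support:fresh} to be the main technical obstacle. The governing intuition is that a fresh address is ``maximally permissive'': after $\ahist_2$ the observer component tracking the fresh $\anadrp$ sits in its initial location, so the set of suffixes that keep $\anadrp$ freeable is the largest possible. To make this precise I would (a) reduce to $\ahist_1,\ahist_2\in\specof{\smrobs}$ as above; (b) exploit that $\ahistp$ frees only $\anadrp$, so that acceptance of $\ahist_i.\ahistp$ by a valuation binding $\anovarp$ to $\anadrp$ depends only on $\ahistp$ read from the initial location, while acceptance by any other valuation depends only on $\ahist_i$ and is excluded by $\ahist_i\in\specof{\smrobs}$; and (c) argue that the only way the inclusion could fail --- namely $\ahist_1$ leaving the $\anadrp$-component beyond its initial location while $\ahist_2$ does not --- is precluded in the relevant (computation-induced) setting, where a reallocable address $\anadrp$ was previously retired and then freed (resetting $\baseobs$) and its $\hpobs$- respectively $\ebrobs$-component was reset by the $\unguard$ respectively $\enterQ$ that a well-formed reallocation performs. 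For $\hpobs$ and $\ebrobs$ one checks in addition that these back-edges into the initial location (on $\unguard$, on a $\guard$ of a different address, and on $\enterQ$) provide exactly the slack this argument needs, so that the two observers are no more restrictive on a fresh address than on a reused one. Combining the atomic cases via the product decomposition then gives elision support for $\thehpobs$ and $\theebrobs$.
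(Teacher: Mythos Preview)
Your product decomposition and the treatment of condition~\ref{def:elision-support:replace} via the address abstraction are sound, but the argument for condition~\ref{def:elision-support:fresh} has a concrete gap. The claim that ``after $\ahist_2$ the observer component tracking the fresh $\anadrp$ sits in its initial location'' is false for $\ebrobs$: the transitions $\leaveQ(\athread)$ and $\exit(\athread)$ from $\ref{obs:ebr:init}$ through $\ref{obs:ebr:invoked}$ to $\ref{obs:ebr:active}$ do not mention the address variable $\anovarp$, so they fire for \emph{every} address valuation. If $\ahist_2$ contains $\leaveQ(\athread_0).\exit(\athread_0)$, the $(\athread_0,\anadrp)$-instance of $\ebrobs$ is at $\ref{obs:ebr:active}$ even for a fresh $\anadrp$, and a subsequent $\retireof{\_,\anadrp}.\freeof{\anadrp}$ is rejected. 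Hence ``fresh is maximally permissive'' fails, and a correct proof of~\ref{def:elision-support:fresh} for $\theebrobs$ must actually use the hypothesis $\freeableof{\ahist_1}{\anadr}\subseteq\freeableof{\ahist_2}{\anadr}$ to bound the thread-local part of the observer state after $\ahist_2$ by that after $\ahist_1$ and then transfer this bound from $\anadr$ to $\anadrp$. Your product reduction for~\ref{def:elision-support:fresh} is also too quick: the premise is an inclusion of intersections, which does not split into factor-wise inclusions, so ``argue factor by factor'' requires a per-factor statement stronger than~\ref{def:elision-support:fresh} itself.

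Separately, your repeated retreat to ``computation-induced histories'' in~\ref{def:elision-support:free} and~\ref{def:elision-support:fresh} is not licensed by \Cref{def:elision-support}, which quantifies over arbitrary histories. Read literally, condition~\ref{def:elision-support:free} already fails for $\baseobs$: with $\ahist=\epsilon$, the event $\freeof{\anadr}$ drives the valuation $\anovarp=\anadr$ into the accepting location, so $\freeableof{\freeof{\anadr}}{\anadrp}=\emptyset\neq\freeableof{\epsilon}{\anadrp}$. The paper's uses of elision support in the proof of \Cref{thm:reduction} always have $\ahist.\freeof{\anadr}\in\specof{\smrobs}$, so the definition should be read under that side-condition; you should state this restriction explicitly and prove the properties under it, rather than invoking well-formedness informally.
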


%%%%%%%%%%%%%%%%%%%%%%%%%%%%%%%%%%%%%%%%%%%%%%%%%%%%%%%%%%%%%%%%%%%%%%%%%%%%%%%%%%%%%%%%%%%%%%%%%%%%%%%%%%%%%%%%%%%
%%%%%%%%%%%%%%%%%%%%%%%%%%%%%%%%%%%%%%%%%%%%%%%%%%%%%%%%%%%%%%%%%%%%%%%%%%%%%%%%%%%%%%%%%%%%%%%%%%%%%%%%%%%%%%%%%%%
%%%%%%%%%%%%%%%%%%%%%%%%%%%%%%%%%%%%%%%%%%%%%%%%%%%%%%%%%%%%%%%%%%%%%%%%%%%%%%%%%%%%%%%%%%%%%%%%%%%%%%%%%%%%%%%%%%%
%%%%%%%%%%%%%%%%%%%%%%%%%%%%%%%%%%%%%%%%%%%%%%%%%%%%%%%%%%%%%%%%%%%%%%%%%%%%%%%%%%%%%%%%%%%%%%%%%%%%%%%%%%%%%%%%%%%
\subsection{Thread-Modular Linearizability Analysis}
\label{sec:thread-modular-analysis}

% thread modular
Proving a data structure correct for an arbitrary number of client threads requires a thread-modular analysis \cite{DBLP:conf/cav/BerdineLMRS08,DBLP:journals/toplas/Jones83}.
Such an analysis abstracts a system state into so-called \emph{views}, partial configurations reflecting a single thread's perception of the system state.
A view includes a thread's program counter and, in the case of shared-memory programs, the memory reachable from the shared and thread-local variables.
An analysis then saturates a set $\views$ of reachable views.
This is done by computing the least solution to the recursive equation $\views=\views\mkern+1mu\cup\mkern+1mu\seqof{{\mkern+1mu\views}}\mkern+1mu\cup\mkern+1mu\intof{{\mkern+1mu\views}}$.
Function $\sequential$ computes a \emph{sequential step}, the views obtained from letting each thread execute an action on its own views.
Function $\interference$ accounts for \emph{interference} among threads.
It updates the shared memory of views by actions from other threads.
We follow the analysis from \citet{DBLP:conf/tacas/AbdullaHHJR13,DBLP:journals/sttt/AbdullaHHJR17}.
There, $\interference$ is computed by combining two views, letting one thread perform an action, and projecting the result to the other thread.
More precisely, computing $\intof{{\mkern+1mu\views}}$ requires for every pair of views $\aview_1,\aview_2\in\views$ to
\begin{inparaenum}[(i)]
	\item compute a combined view $\aviewp$ of $\aview_1$ and $\aview_2$,
	\item perform for $\aviewp$ a sequential step for the thread of $\aview_2$, and
	\item project the result of the sequential step to the perception of the thread from $\aview_1$.
\end{inparaenum}
This process is required only for views $\aview_1$ and $\aview_2$ that \emph{match}, i.e., agree on the shared memory both views have in common.
Otherwise, the views are guaranteed to reflect different system states.
Thus, interference is not needed for an exhaustive state space exploration.

% linearizability
To check for linearizability, we assume that the program under scrutiny is annotated with linearization points, points at which the effect of operations take place logically and become visible to other threads.
Whether or not the sequence of emitted linearization points is indeed linearizable can be checked using an observer automaton implementing the desired specification \cite{DBLP:conf/tacas/AbdullaHHJR13,DBLP:journals/sttt/AbdullaHHJR17}, in our case the one for stacks and queues.
The state of this automaton is stored in the views.
If a final state is reached, verification fails.

% memory abstraction
For brevity, we omit a discussion of the memory abstraction we use.
It is orthogonal to the analysis.
For more details, we refer the reader to \cite{DBLP:conf/tacas/AbdullaHHJR13,DBLP:journals/sttt/AbdullaHHJR17}.
We are not aware of another memory abstraction which can handle reuse and admits automation.

% SMR observers
We extend the above analysis by our approach to integrate SMR and restrict reuse to a single address.
To integrate SMR, we add the necessary observers to views.
Note that observers have a pleasant interplay with thread-modularity.
In a view for thread $\athread$ only those observer states are required where $\athread$ is observed.
For the hazard pointer observer $\thehpobs$ this means that only observer states with $\anovar$ capturing $\athread$ need to be stored in the view for $\athread$.
Similarly, the memory abstraction induces a set of addresses that need to be observed (by $\anovarp$).
However, we do not keep observer states for \emph{shared} addresses, i.e., addresses that are reachable from the shared variables.
Instead, we maintain the invariant that they are never retired nor freed.
For the analysis, we then assume that the ignored observer states are arbitrary (but not in observer locations implying retiredness or freedness of shared addresses).
We found that all benchmark programs satisfied this invariant and that the resulting precision allowed for successful verification.
Altogether, this keeps the number of observer states per view small in practice.

As discussed in \Cref{sec:observers}, observers $\thehpobs$ and $\theebrobs$ assume that a client does not perform double-retires.
We integrate a check for this invariant, relying on observer $\baseobs$: if $\baseobs$ is in a state $(\ref{obs:base:retired},\varphi)$, then a double-retire occurs if an event of the form $\retireof{\_,\varphi(\anovarp)}$ is emitted.

% pointer races
To guarantee that the restriction of reuse to a single cell is sound, we have to check for pointer races and harmful ABAs.
To check for pointer races we annotate views with validity information.
This information is updated accordingly during sequential and interference steps.
If a pointer race is detected, verification fails.
For this check, we rely on \Cref{thm:races-are-in-retire} below and deem racy any invocation of \code{retire} with invalid pointers.
That is, the pointer race check boils down to scanning dereferences and \code{retire} invocations for invalid pointers.

\begin{lemma}
	\label{thm:races-are-in-retire}
	If a call is racy wrt. $\theebrobs$ or $\thehpobs$, then it is a call of function \textnormal{\code{retire}} using an invalid pointer.
\end{lemma}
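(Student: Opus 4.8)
The plan is to derive the \namecref{thm:races-are-in-retire} from two facts about \Cref{def:racy-call}: \textbf{(i)}~a call all of whose pointer arguments are valid is never racy (noted directly below \Cref{def:racy-call}); and \textbf{(ii)}~among the SMR functions offered by HP and by EBR/QSBR, the only one whose calls can be racy at all is \code{retire}. Granting~(ii), a racy call must be a \code{retire} call, and the contrapositive of~(i) forces it to use an invalid pointer, which is exactly the claim. So the entire work is to establish~(ii).

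For $\theebrobs$ this is immediate. Apart from \code{retire}, EBR/QSBR offers only \code{enterQ} and \code{leaveQ} (and the GC baseline offers no reclamation function), and in our programming model these take only the executing thread as an argument. With $\vecof{\apvar}$ empty the witness vectors $\vecof{\anadr},\vecof{\anadrp}$ in \Cref{def:racy-call} are empty, the two compared histories coincide, and the required inclusion is trivial; hence such calls are never racy. The same remark disposes of $\unguard$ (\code{unprotect}) for $\thehpobs$, which has no pointer argument either. What remains is to show that no $\guard$ (\code{protect}) call for $\thehpobs$ is racy.

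Fix $\tau.\anact$ with $\anact=(\athread,\enterof{\guardof{\apvar}{\anint}},\emptyset)$, put $\ahist=\historyof{\tau}$ and $\anadr=\heapcomputof{\tau}{\apvar}$, and suppose this call is racy with witnesses $\anadrpp$ and $\anadrp$. If $\anadr=\anadrpp$ or $\apvar\in\validof{\tau}$, the premise of \Cref{def:racy-call} forces $\anadrp=\anadr$, the two histories coincide, and the inclusion holds --- contradiction; so $\anadr\neq\anadrpp$ and $\apvar$ is invalid, with $\anadrp$ arbitrary. I would now chase observer locations. Since $\baseobs$ has no transition on $\guard$ events, such events self-loop on $\baseobs$ and only $\hpobs$ matters; moreover $\hpobs$ is deterministic once its valuation $\varphi$ is fixed. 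The decisive point is that $\freeableof{\ahist}{\anadrpp}$ is constrained \emph{only} by valuations $\varphi$ with $\varphi(\anovarp)=\anadrpp$: the accepting edge of $\hpobs$ (cf.\ \Cref{fig:observer:hp}) fires on a $\free$ of the observed address, while the continuations in $\freeableof{\ahist}{\anadrpp}$ contain $\free$ events for $\anadrpp$ only, so for $\varphi$ with $\varphi(\anovarp)\neq\anadrpp$ every such continuation is trivially non-accepted. For a $\varphi$ with $\varphi(\anovarp)=\anadrpp$, a $\guard$ event whose address differs from $\anadrpp$ either self-loops (if $\varphi(\anovar)\neq\athread$ or $\varphi(\anovarpp)\neq\anint$) or follows a reset edge back to location $\ref{obs:hp:init}$; in either case the target location does not depend on which address (other than $\anadrpp$) was used. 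Hence, when $\anadrp\neq\anadrpp$, appending $\guardof{\athread,\anadrp}{\anint}$ and appending $\guardof{\athread,\anadr}{\anint}$ drive $\hpobs$ to the same location for every relevant $\varphi$, so the two $\freeableof{\cdot}{\anadrpp}$ sets coincide, contradicting racy-ness. When $\anadrp=\anadrpp$, I would argue by monotonicity of $\freeableof{\cdot}{\anadrpp}$ in the $\hpobs$ location: for every $\varphi$ with $\varphi(\anovarp)=\anadrpp$, appending $\guardof{\athread,\anadrpp}{\anint}$ leaves $\hpobs$ at a location in $\{\ref{obs:hp:invoked},\ref{obs:hp:protected},\ref{obs:hp:retired}\}$ (it only advances along $\ref{obs:hp:init}\!\to\!\ref{obs:hp:invoked}\!\to\!\ref{obs:hp:protected}\!\to\!\ref{obs:hp:retired}$ or self-loops), whereas appending $\guardof{\athread,\anadr}{\anint}$ with $\anadr\neq\anadrpp$ leaves it at $\ref{obs:hp:init}$ (or self-loops); since $\ref{obs:hp:init}$ accepts fewer continuations than any later location, the set of non-accepted continuations after the first dominates that after the second, giving $\freeableof{\ahist\hconcat\guardof{\athread,\anadrpp}{\anint}}{\anadrpp}\subseteq\freeableof{\ahist\hconcat\guardof{\athread,\anadr}{\anint}}{\anadrpp}$, again contradicting racy-ness. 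Thus no $\guard$ call is racy, which establishes~(ii).

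The main obstacle is the $\guard$ case, and within it the monotonicity step: it only goes through because one may first restrict attention to the valuations $\varphi$ that actually observe $\anadrpp$ --- without that restriction the ``reset to $\ref{obs:hp:init}$'' would \emph{appear} to enlarge $\freeableof{\cdot}{\anadrpp}$ for off-target valuations, even though those valuations place no constraint on $\freeableof{\cdot}{\anadrpp}$ at all. Everything else (the EBR functions, $\unguard$, and folding the \code{retire} case into fact~(i)) is routine bookkeeping. It is worth noting that \code{retire} genuinely can be racy, which is why the conclusion cannot be strengthened: a $\retire$ of the observed address advances $\baseobs$ into its retired location and thereby \emph{enables} a later $\free$ of that address, so using an invalid pointer to retire an address different from the intended one strictly shrinks the permitted behavior on the observed address --- precisely the race of \Cref{ex:breaking-obsrel}.
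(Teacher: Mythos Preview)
Your argument is correct and follows the natural case analysis: dispose of the pointer-free SMR functions ($\enterQ$, $\leaveQ$, $\unguard$) trivially via empty $\vecof{\apvar}$, then show by direct inspection of $\hpobs$ that $\guard$ calls are never racy, leaving only $\retire$. The paper's detailed proof is deferred to an external appendix file, but the structure you chose is the expected one for a statement this tied to two concrete observers.

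Two small remarks on the $\guard$ case. First, the sentence ``the set of non-accepted continuations after the first dominates that after the second'' has its direction reversed relative to the inclusion you (correctly) state right after it: since the second event $\guardof{\athread,\anadr}{\anint}$ with $\anadr\neq\anadrpp$ lands at $\ref{obs:hp:init}$, which accepts \emph{fewer} continuations, it is the \emph{second} whose non-accepted set is larger. This is only a wording slip; the inclusion $\freeableof{\ahist\hconcat\guardof{\athread,\anadrpp}{\anint}}{\anadrpp}\subseteq\freeableof{\ahist\hconcat\guardof{\athread,\anadr}{\anint}}{\anadrpp}$ is the right one, and it does follow from the monotonicity you invoke (which holds because every transition of $\hpobs$ is monotone in the linear order $\ref{obs:hp:init}<\ref{obs:hp:invoked}<\ref{obs:hp:protected}<\ref{obs:hp:retired}<\ref{obs:hp:final}$). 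Second, the claim that valuations with $\varphi(\anovarp)\neq\anadrpp$ make every continuation ``trivially non-accepted'' is slightly too strong as written---it tacitly assumes $\ahist\in\specof{\thehpobs}$. The robust version is that for such $\varphi$ the two $\guard$ events are \emph{indistinguishable}: reaching $\ref{obs:hp:final}$ requires a $\freeof{\varphi(\anovarp)}$, which can lie only in $\ahist$ (since $\ahistp$ frees $\anadrpp$ only and a $\guard$ event never enters $\ref{obs:hp:final}$), so acceptance for this $\varphi$ is decided before the appended event and is identical for both choices. Either formulation justifies restricting attention to $\varphi(\anovarp)=\anadrpp$, and from there your argument goes through.
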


% ABA check
Last, we add a check for harmful ABAs on top of the state space exploration.
This check has to implement \Cref{def:harmful-ABA}.
That a computation $\sigma_\anadr.\anact$ contains a harmful ABA can be detected in the view $\aview_\anadr$ for thread $\athread$ which performs $\anact$.
Like for computations, the view abstraction $\aview_\anadrp$ of $\sigma_\anadrp$ for $\athread$ cannot perform the ABA.
To establish that the ABA is harmless, we seek a $\aview_\anadrp'$ which is similar to $\aview_\anadr$, memory equivalent to $\aview_\anadrp$, and includes the observer behavior of $\aview_\anadrp$.
(The relations introduced in \Cref{sec:reduction} naturally extend from computations to views.)
If no such $\aview_\anadrp'$ exists, verification fails.

In the thread-modular setting one has to be careful with the choice of $\aview_\anadrp'$.
It is not sufficient to find \emph{just some} $\aview_\anadrp'$ satisfying the desired relations.
The reason lies in that we perform the ABA check on a thread-modular abstraction of computations.
To see this, assume the view abstraction of $\sigma_\anadrp$ is $\alpha(\sigma_\anadrp)=\set{\aview_\anadrp,\aview}$ where $\aview_\anadrp$ is the view for thread $\athread$ which performs the ABA in $\sigma_\anadr.\anact$.
For \emph{just some} $\aview_\anadrp'$ it is not guaranteed that there is a computation $\sigma_\anadrp'$ such that $\alpha(\sigma_\anadrp')=\set{\aview_\anadrp',\aview}$.
The sheer existence of $\aview_\anadrp'$ and $\aview$ in $\views$ does not guarantee that there is a computation the abstraction of which yields those two views.
Put differently, we cannot construct computations from views.
Hence, a simple search for $\aview_\anadrp'$ cannot prove the existence of the required $\sigma_\anadrp'$.

To overcome this problem, we use a method to search for a $\aview_\anadrp'$ that guarantees the existence of $\sigma_\anadrp'$; in terms of the above example, guarantees that there is $\sigma_\anadrp'$ with $\alpha(\sigma_\anadrp')=\set{\aview_\anadrp',\aview}$.
We take the view $\aview_\anadrp$ that cannot perform the ABA.
We apply sequential steps to $\aview_\anadrp$ until it \emph{comes back} to the same program counter.
The rational behind is that ABAs are typically conditionals that restart the operation if the ABA is not executable.
Restarting the operation results in reading out pointers anew (this time without interference from other threads).
Consequently, the ABA is now executable.
The resulting view is a candidate for $\aview_\anadrp'$.
If it does not satisfy \Cref{def:harmful-ABA}, verification fails.
Although simple, this approach succeeded in all benchmarks.

%%%%%%%%%%%%%%%%%%%%%%%%%%%%%%%%%%%%%%%%%%%%%%%%%%%%%%%%%%%%%%%%%%%%%%%%%%%%%%%%%%%%%%%%%%%%%%%%%%%%%%%%%%%%%%%%%%%
%%%%%%%%%%%%%%%%%%%%%%%%%%%%%%%%%%%%%%%%%%%%%%%%%%%%%%%%%%%%%%%%%%%%%%%%%%%%%%%%%%%%%%%%%%%%%%%%%%%%%%%%%%%%%%%%%%%
%%%%%%%%%%%%%%%%%%%%%%%%%%%%%%%%%%%%%%%%%%%%%%%%%%%%%%%%%%%%%%%%%%%%%%%%%%%%%%%%%%%%%%%%%%%%%%%%%%%%%%%%%%%%%%%%%%%
%%%%%%%%%%%%%%%%%%%%%%%%%%%%%%%%%%%%%%%%%%%%%%%%%%%%%%%%%%%%%%%%%%%%%%%%%%%%%%%%%%%%%%%%%%%%%%%%%%%%%%%%%%%%%%%%%%%
\subsection{Linearizability Experiments}
\label{sec:evaluation:verify-ds}

We implemented the above analysis in a \texttt{\small C++} tool.\footnote{Available at: \githuburlDS}
We evaluated the tool on singly-linked lock-free data structures from the literature, like Treiber's stack \cite{opac-b1015261,DBLP:conf/podc/Michael02}, Michael\&Scott's lock-free queue \cite{DBLP:conf/podc/MichaelS96,DBLP:conf/podc/Michael02}, and the DGLM lock-free queue \cite{DBLP:conf/forte/DohertyGLM04}.
The findings are listed in \Cref{table:experiments}.
They include
\begin{inparaenum}[(i)]
	\item the time taken for verification, i.e., to explore exhaustively the state space and check linearizability,
	\item the size of the explored state space, i.e., the number of reachable views,
	\item the number of ABA prone views, i.e., views where a thread is about to perform an $\assert$ containing an invalid pointer,
	\item the time taken to establish that no ABA is harmful, and
	\item the verdict of the linearizability check.
\end{inparaenum}
The experiments were conducted on an Intel Xeon X5650@2.67GHz running Ubuntu 16.04 and using Clang version 6.0.

%!TEX root = ../../main.tex

% Table
\begin{table}%
\begin{tcolorbox}
	\caption{Experimental results for verifying singly-linked data structures using SMR. The experiments were conducted on an Intel Xeon~X5650@2.67GHz running Ubuntu~16.04 and using Clang~6.0.}%
	\label{table:experiments}%
	\begin{minipage}{\columnwidth}
		\center%
		\newcommand{\seprule}{\midrule[.2pt]}
		\begin{tabularx}{\textwidth}{Xlrrrrc}%
			\toprule
			Program\footnote{The code for the benchmark programs can be found in \Cref{appendix:benchmark-programs}.}
				& SMR
				& Time Verif.
				& States
				& ABAs
				& Time ABA
				& Linearizable
				\\
			\midrule
			Coarse stack
				& GC
				& $0.44s$
				& $300$
				& $0$
				& $0s$
				& yes
				\\
			%Coarse stack
				& None
				& $0.5s$
				& $300$
				& $0$
				& $0s$
				& yes
				\\
			\seprule
			Coarse queue
				& GC
				& $1.7s$
				& $300$
				& $0$
				& $0s$
				& yes
				\\
			%Coarse queue
				& None
				& $1.7s$
				& $300$
				& $0$
				& $0s$
				& yes
				\\
			\seprule
			Treiber's stack
				& GC
				& $3.2s$
				& $806$
				& $0$
				& $0s$
				& yes
				\\
			%Treiber's stack
				& EBR
				& $16s$
				& $1822$
				& $0$
				& $0s$
				& yes
				\\
			%Treiber's stack
				& HP
				& $19s$
				& $2606$
				& $186$
				& $0.06s$
				& yes
				\\
			Opt. Treiber's stack
				& HP
				& $0.8s$
				& ---
				& ---
				& ---
				& no\footnote{Pointer race due to an ABA in \texttt{push}: the \texttt{next} pointer of the new node becomes invalid and the CAS succeeds.}
				\\
			\seprule
			Michael\&Scott's queue
				& GC
				& $414s$
				& $2202$
				& $0$
				& $0s$
				& yes
				\\
			%Michael\&Scott's queue
				& EBR
				& $2630s$
				& $7613$
				& $0$
				& $0s$
				& yes
				\\
			%Michael\&Scott's queue
				& HP
				& $7075s$
				& $19028$
				& $536$
				& $0.9s$
				& yes
				\\
			\seprule
			DGLM queue
				& GC
				& $714s$
				& $9934$
				& $0$
				& $0s$
				& yes\textsuperscript{\ref{footnote:DGLMprecision}}
				\\
			%DGLM queue
				& EBR
				& $3754s$
				& $27132$
				& $0$
				& $0s$
				& yes\textsuperscript{\ref{footnote:DGLMprecision}}
				\\
			%DGLM queue
				& HP
				& $7010s$
				& $41753$
				& $2824$
				& $26s$
				& yes\footnote{Imprecision in the memory abstraction required hinting that \texttt{Head} cannot overtake \texttt{Tail} by more than one node.\label{footnote:DGLMprecision}}
				\\
			\bottomrule
		\end{tabularx}
	\end{minipage}
\end{tcolorbox}
\end{table}

Our approach is capable of verifying lock-free data structures using HP and EBR.
We were able to automatically verify Treiber's stack, Michael\&Scott's queue, and the DGLM queue.
To the best of our knowledge, we are the first to verify data structures using the aforementioned SMR algorithms fully automatically.
Moreover, we are also the first to verify automatically the DGLM queue under any manual memory management technique.

An interesting observation throughout the entire test suite is that the number of ABA prone views is rather small compared to the total number of reachable views.
Consequently, the time needed to check for harmful ABAs is insignificant compared to the verification time.
This substantiates the usefulness of \emph{ignoring} ABAs during the actual analysis and checking afterwards that no harmful ABA exists.

%!TEX root = ../../main.tex

\begin{wrapfigure}{r}{0.42\textwidth} % 15 lines in height
\begin{tcolorbox}
	\center%
\begin{lstlisting}[style=condensed,xleftmargin=20pt]
struct Node {/* ... */}
shared Node* ToS;
void push(data_t input) {
	Node* node = new Node();
	node->data = input;
	while (true) {
		Node* top = ToS; $\label[line]{code:treibers-push:read-tos}$
		node->next = top; $\label[line]{code:treibers-push:set-next}$
		if (CAS&ToS, top, next) $\label[line]{code:treibers-push:cas}$
			break;
	}
\end{lstlisting}
	\caption{%
		The \code{push} operation of Treiber's lock-free stack \cite{opac-b1015261}.
	}
	\label{fig:treibers-push}
\end{tcolorbox}
\end{wrapfigure}%
Our tool could not establish linearizability for the optimized version of Treiber's stack with hazard pointers by \citet{DBLP:conf/podc/Michael02}.
The reason for this is that the \code{push} operation does not use any hazard pointers.
This leads to pointer races and thus verification failure although the implementation is correct.
To see why, consider the code of \code{push} from \Cref{fig:treibers-push}.
The operation allocates a new node, reads the top-of-stack pointer into a local variable \code{top} in \Cref{code:treibers-push:read-tos}, links the new node to the top-of-stack in \Cref{code:treibers-push:set-next}, and swings the top-of-stack pointer to the new node in \Cref{code:treibers-push:cas}.
Between \Cref{code:treibers-push:read-tos} and \Cref{code:treibers-push:set-next} the node referenced by \code{top} can be popped, reclaimed, reused, and reinserted by an interferer.
That is, the \code{CAS} in \Cref{code:treibers-push:cas} is ABA prone.
The reclamation of the node referenced by \code{top} renders both the \code{top} pointer and the \code{next} field of the new node invalid.
As discussed in \Cref{sec:reduction}, the comparison of the valid \code{ToS} with the invalid \code{top} makes \code{top} valid again.
However, the \code{next} field of the new node remains invalid.
That is, the \code{push} succeeds and leaves the stack in a state with \code{ToS->$\,$next} being invalid.
This leads to pointer races because no thread can acquire valid pointers to the nodes following \code{ToS}.
Hence, reading out data of such subsequent nodes in the \code{pop} procedure, for example, raises a pointer race.

To solve this issue, the \code{CAS} in \Cref{code:treibers-push:cas} has to validate the pointer \code{node->next}.
One could annotate the \code{CAS} with an invariant \code{Tos$\:$==$\:$node->$\,$next}.
Treating invariants and assertions alike would then result in the \code{CAS} validating \code{node->next} (cf. \Cref{sec:reduction:similarity}) as desired.
That the annotation is an invariant indeed, could be checked during the analysis.
We consider a proper investigation as future work.

For the DGLM queue, our tool required hints.
The DGLM queue is similar to Michael\&Scott's queue but allows the \code{Head} pointer to overtake the \code{Tail} pointer by at most one node.
Due to imprecision in the memory abstraction, our tool explored states with malformed lists where \code{Head} overtook \code{Tail} by more than one node.
We implemented a switch to increase the precision of the abstraction and ignore cases where \code{Head} overtakes \code{Tail} by more than one node.
This allowed us to verifying the DGLM queue.
While this change is ad hoc, it does not jeopardize the principledness of our approach because it affects only the memory abstraction which we took from the literature.

%%%%%%%%%%%%%%%%%%%%%%%%%%%%%%%%%%%%%%%%%%%%%%%%%%%%%%%%%%%%%%%%%%%%%%%%%%%%%%%%%%%%%%%%%%%%%%%%%%%%%%%%%%%%%%%%%%%
%%%%%%%%%%%%%%%%%%%%%%%%%%%%%%%%%%%%%%%%%%%%%%%%%%%%%%%%%%%%%%%%%%%%%%%%%%%%%%%%%%%%%%%%%%%%%%%%%%%%%%%%%%%%%%%%%%%
%%%%%%%%%%%%%%%%%%%%%%%%%%%%%%%%%%%%%%%%%%%%%%%%%%%%%%%%%%%%%%%%%%%%%%%%%%%%%%%%%%%%%%%%%%%%%%%%%%%%%%%%%%%%%%%%%%%
%%%%%%%%%%%%%%%%%%%%%%%%%%%%%%%%%%%%%%%%%%%%%%%%%%%%%%%%%%%%%%%%%%%%%%%%%%%%%%%%%%%%%%%%%%%%%%%%%%%%%%%%%%%%%%%%%%%
\subsection{Verifying SMR Implementations}
\label{sec:evaluation:verify-smr}

It remains to verify that a given SMR implementation is correct wrt. an observer $\smrobs$.
As noted in \Cref{sec:observers}, an SMR implementation can be viewed as a lock-free data structure where the stored \emph{data} are pointers.
Consequently, we can reuse the above analysis.
We extended our implementation with an abstraction for (sets of) data values.\footnote{Available at: \githuburlSMR}
The main insight for a concise abstraction is that it suffices to track a single observer state per view.
If the SMR implementation is not correct wrt. $\smrobs$, then by definition there is $\tau\in\allsem[MGC(\ansmr)]$ with $\historyof{\tau}\notin\specof{\smrobs}$.
Hence, there must be some observer state $\astate$ with $\historyof{\tau}\notin\specof{\astate}$.
Consider the observers $\thehpobs$ and $\theebrobs$ where $\astate$ is of the form $\astate=(\alocation,\set{\anovar\mapsto\athread,\anovarp\mapsto\anadr,\anovarpp\mapsto\anint})$.
This single state $\astate$ induces a simple abstraction of data values $\advalue$: either $\advalue=\anadr$ or $\advalue\neq\anadr$.
Similarly, an abstraction of sets of data values simply tracks whether or not the set contains $\anadr$.

To gain adequate precision, we retain in every view the thread-local pointers of $\athread$.
Wrt. \Cref{fig:hpimpldyn}, this keeps the thread-$\athread$-local \code{HPRec} in every view.
It makes the analysis recognize that $\athread$ has indeed protected $\anadr$.
Moreover, we store in every view whether or not the last $\retire$ invocation stems from the thread of that view.
With this information, we avoid unnecessary matches during interference of views $\aview_1$ and $\aview_2$: if both threads $\athread_1$ of $\aview_1$ and $\athread_2$ of $\aview_2$ have performed the last $\retire$ invocation, then $\athread_1$ and $\athread_2$ are the exact same thread.
Hence, interference is not needed as threads have unique identities.
We found this extension necessary to gain the precision required to verify our benchmarks.

%!TEX root = ../../main.tex

% Table
\begin{table}%
\begin{tcolorbox}
	\caption{Experimental results for verifying SMR implementations against their observer specifications. The experiments were conducted on an Intel Xeon~X5650@2.67GHz running Ubuntu~16.04 and using Clang~6.0.}%
	\label{table:smrexp}%
	\begin{minipage}{\columnwidth}
		\center%
		\newcommand{\seprule}{\midrule[.2pt]}
		\begin{tabularx}{\textwidth}{Xlrrc}%
			\toprule
			SMR Implementation\footnote{The code for the benchmark programs can be found in \Cref{appendix:benchmark-programs}.}
				& Specification
				& Verification Time
				& States
				& Correct
				\\
			\midrule
			Hazard Pointers
				& $\thehpobs$
				& $1.5s$
				& $5437$
				& yes
				\\
			\seprule
			Epoch-Based Reclamation
				& $\theebrobs$
				& $11.2s$
				& $11528$
				& yes
				\\
			\bottomrule
		\end{tabularx}
	\end{minipage}
\end{tcolorbox}
\end{table}

\Cref{table:smrexp} shows the experimental results for the HP implementation from \Cref{fig:hpimpldyn} and an EBR implementation.
Both SMR implementations allow threads to dynamically join and part.
We conducted the experiments in the same setup as before.
As noted in \Cref{sec:observers}, the verification is simpler and thus more efficient than the previous one.
The reason for this is the absence of memory reclamation.

%!TEX root = ../main.tex

\section{Related Work}
\label{sec:related_work}

We discuss the related work on SMR implementations and on the verification of linearizability.

%%%%%%%%%%%%%%%%%%%%%%%%%%%%%%%%%%%%%%%%%%%%%%%%%%%%%%%%%%%%%%%%%%%%%%%%%%%%%%%%%%%%%%%%%%%%%%%%%%%
%%%%%%%%%%%%%%%%%%%%%%%%%%%%%%%%%%%%%%%%%%%%%%%%%%%%%%%%%%%%%%%%%%%%%%%%%%%%%%%%%%%%%%%%%%%%%%%%%%%
%%%%%%%%%%%%%%%%%%%%%%%%%%%%%%%%%%%%%%%%%%%%%%%%%%%%%%%%%%%%%%%%%%%%%%%%%%%%%%%%%%%%%%%%%%%%%%%%%%%
\paragraph{Safe Memory Reclamation}

Besides HP and EBR further SMR algorithms have been proposed in the literature.
\emph{Free-lists} is the simplest such mechanism.
Retired nodes are stored in a thread-local free-list.
The nodes in this list are never reclaimed.
Instead, a thread can reuse nodes instead of allocating new ones.
\emph{Reference Counting (RC)} adds to nodes a counter representing the number of pointers referencing that node.
Updating such counters safely in a lock-free fashion, however, requires the use of hazard pointers \cite{DBLP:journals/tocs/HerlihyLMM05} or double-word CAS \cite{DBLP:conf/podc/DetlefsMMS01}, which is not available on most hardware.
Besides free-lists and RC, most SMR implementations from the literature combine techniques.
For example, \emph{DEBRA} \cite{DBLP:conf/podc/Brown15} is an optimized QSBR implementation.
\Citet{DBLP:conf/wdag/Harris01} extends EBR by adding epochs to nodes to detect when reclamation is safe.
\emph{Cadence} \cite{DBLP:conf/spaa/BalmauGHZ16}, the work by \citet{DBLP:conf/podc/AghazadehGW13a}, and the work by \citet{DBLP:conf/iwmm/DiceHK16} are HP implementations improving on the original implementation due to \citet{DBLP:conf/podc/Michael02}.
\emph{ThreadScan} \cite{DBLP:conf/spaa/AlistarhLMS15}, \emph{StackTrack} \cite{DBLP:conf/eurosys/AlistarhEHMS14}, and \emph{Dynamic Collect} \cite{DBLP:conf/podc/DragojevicHLM11} borrow the mechanics of hazard pointers to protect single cells at a time.
\emph{Drop the Anchor} \cite{DBLP:conf/spaa/BraginskyKP13}, \emph{Optimistic Access} \cite{DBLP:conf/spaa/CohenP15}, \emph{Automatic Optimistic Access} \cite{DBLP:conf/oopsla/CohenP15}, \emph{QSense} \cite{DBLP:conf/spaa/BalmauGHZ16}, \emph{Hazard Eras} \cite{DBLP:conf/spaa/RamalheteC17}, and \emph{Interval-Based Reclamation} \cite{DBLP:conf/ppopp/WenICBS18} are combinations of EBR and HP.
\emph{Beware\&Cleanup} \cite{DBLP:conf/ispan/GidenstamPST05} is a combination of HP and RC.
\emph{Isolde} \cite{DBLP:conf/iwmm/YangW17} is an implementation of EBR and RC.
We omit \emph{Read-Copy-Update (RCU)} here because it does not allow for non-blocking deletion \cite{phd/McKenney04}.
While we have implemented an analysis for EBR and HP only, we believe that our approach can handle most of the above works with little to no modifications.
We refer the reader to \Cref{sec:more_smr} for a more detailed discussion.

%%%%%%%%%%%%%%%%%%%%%%%%%%%%%%%%%%%%%%%%%%%%%%%%%%%%%%%%%%%%%%%%%%%%%%%%%%%%%%%%%%%%%%%%%%%%%%%%%%%
%%%%%%%%%%%%%%%%%%%%%%%%%%%%%%%%%%%%%%%%%%%%%%%%%%%%%%%%%%%%%%%%%%%%%%%%%%%%%%%%%%%%%%%%%%%%%%%%%%%
%%%%%%%%%%%%%%%%%%%%%%%%%%%%%%%%%%%%%%%%%%%%%%%%%%%%%%%%%%%%%%%%%%%%%%%%%%%%%%%%%%%%%%%%%%%%%%%%%%%
\paragraph{Linearizability}

Linearizability of lock-free data structures has received considerable attention over the last decade.
The proposed techniques for verifying linearizability can be classified roughly into
\begin{inparaenum}[(i)]
	\item testing,
	\item non-automated proofs, and
	\item automated proofs.
\end{inparaenum}
Linearizability testing \cite{DBLP:conf/pldi/VechevY08,DBLP:conf/fm/LiuCLS09,DBLP:conf/pldi/BurckhardtDMT10,DBLP:conf/cav/CernyRZCA10,DBLP:conf/icse/Zhang11a,DBLP:journals/tse/Liu0L0ZD13,DBLP:conf/hvc/TravkinMW13,DBLP:conf/pldi/EmmiEH15,DBLP:conf/forte/HornK15a,DBLP:journals/concurrency/Lowe17,DBLP:journals/corr/YangKLW17,DBLP:journals/pacmpl/EmmiE18} enumerates an incomplete portion of the state space and checks whether or not the discovered computations are linearizable.
This approach is useful for bug-hunting and for analyzing huge code bases.
However, it cannot prove an implementation linearizable as it might miss non-linearizable computations.

In order to prove a given implementation linearizable, one can conduct a manual (pen\&paper) or a mechanized (tool-supported, not automated) proof.
Such proofs are cumbersome and require a human to have a deep understanding of both the implementation under scrutiny and the verification technique used.
Common verification techniques are program logics and simulation relations.
Since our focus lies on automated proofs, we do not discuss non-automated proof techniques in detail.
For a survey refer to \cite{DBLP:journals/corr/DongolD14}.

Interestingly, most non-automated proofs of lock-free code rely on a garbage collector \cite{DBLP:journals/entcs/ColvinDG05,DBLP:conf/cav/ColvinGLM06,DBLP:conf/iceccs/Groves07,DBLP:conf/cats/Groves08,DBLP:conf/wdag/DohertyM09,DBLP:conf/tacas/ElmasQSST10,DBLP:conf/podc/OHearnRVYY10,DBLP:journals/fac/BaumlerSTR11,DBLP:journals/toplas/DerrickSW11,DBLP:journals/fac/Jonsson12,DBLP:conf/popl/LiangFF12,DBLP:conf/pldi/LiangF13,DBLP:journals/toplas/LiangFF14,DBLP:conf/wdag/HemedRV15,DBLP:conf/pldi/SergeyNB15,DBLP:conf/esop/SergeyNB15,DBLP:conf/cav/BouajjaniEEM17,DBLP:conf/ecoop/DelbiancoSNB17,DBLP:conf/esop/KhyzhaDGP17} to avoid the complexity of memory reclamation.
\citet{DBLP:conf/cav/SchellhornWD12,DBLP:conf/concur/HenzingerSV13} verify a lock-free queue by \citet{DBLP:journals/toplas/HerlihyW90} which does not reclaim memory.

There is less work on manual verification in the presence of reclamation.
\Citet{DBLP:conf/forte/DohertyGLM04} verify a lock-free queue implementation using tagged pointers and free-lists.
\Citet{DBLP:conf/popl/DoddsHK15} verify a time-stamped stack using tagged pointers.
\Citet{DBLP:journals/pacmpl/KrishnaSW18} verify Harris' list \cite{DBLP:conf/wdag/Harris01} with reclamation.
Finally, there are works \cite{DBLP:conf/popl/ParkinsonBO07,DBLP:conf/concur/FuLFSZ10,DBLP:conf/ictac/TofanSR11,DBLP:conf/esop/GotsmanRY13} which verify implementations using safe memory reclamation.
With the exception of \cite{DBLP:conf/esop/GotsmanRY13}, they only consider implementations using HP.
\Citet{DBLP:conf/esop/GotsmanRY13} in addition verify implementations using EBR.
With respect to lock-free data structures, these works prove linearizability of stacks.
Unlike in our approach, data structure and SMR code are verified together.
In theory, those works are not limited to such simple data structures.
In practice, however, we are not aware of any work that proves linearizable more complicated implementations using HP.
The (temporal) specifications for HP and EBR from \citet{DBLP:conf/esop/GotsmanRY13} are reflected in our observer automata.

\Citet{DBLP:conf/cav/AlglaveKT13,DBLP:journals/sigops/DesnoyersMD13,DBLP:conf/spin/Kokologiannakis17,DBLP:conf/date/LiangMKM18} test RCU implementations.
\Citet{DBLP:conf/esop/GotsmanRY13,DBLP:conf/pldi/TassarottiDV15} specify RCU and verify data structures using it non-automatically.
We do not discuss such approaches because memory reclamation in RCU is blocking.

Automated approaches relieve a human checker from the complexity of manual/mechanized proofs.
In turn, they have to automatically synthesize invariants and finitely encode all possible computations.
The state-of-the-art methodology to do so is thread-modularity \cite{DBLP:conf/cav/BerdineLMRS08,DBLP:journals/toplas/Jones83}.
We have already discussed this technique in \Cref{sec:thread-modular-analysis}.
Its main advantage is the ability to verify library code under an unbounded number of concurrent clients.

Similar to non-automated verification, most automated approaches assume a garbage collector \cite{DBLP:conf/cav/AmitRRSY07,DBLP:conf/cav/BerdineLMRS08,DBLP:conf/aplas/SegalovLMGS09,DBLP:conf/spin/VechevYY09,DBLP:conf/vmcai/Vafeiadis10,DBLP:conf/cav/Vafeiadis10,DBLP:conf/spin/SethiTM13,DBLP:conf/cav/ZhuPJ15,DBLP:conf/sas/AbdullaJT16}.
Garbage collection has the advantage of ownership: an allocation is always owned by the allocating thread, it cannot be accessed by any other thread.
This allows for guiding thread-modular techniques, resulting in faster convergence times and more precise analyses.
Despite this, there are many techniques that suffer from poor scalability.
To counteract the state space explosion, they neglect SMR code making the programs under scrutiny simpler (and shorter).

Few works \cite{DBLP:conf/tacas/AbdullaHHJR13,DBLP:conf/vmcai/HazizaHMW16,DBLP:conf/sas/HolikMVW17} address the challenge of verifying lock-free data structures under manual memory management.
These works assume that accessing deleted memory is safe and that tag fields are never overwritten with non-tag values.
Basically, they integrate free-lists into the semantics.
Their tools are able to verify Treiber's stack and Micheal\&Scott's queue using tagged pointers.
For our experiments we implemented an analysis based on \cite{DBLP:conf/tacas/AbdullaHHJR13,DBLP:conf/vmcai/HazizaHMW16}.
For our theoretical development, we borrowed the observer automata from \citet{DBLP:conf/tacas/AbdullaHHJR13} as a specification means for SMR algorithms.
From \citet{DBLP:conf/vmcai/HazizaHMW16} we borrowed the notion of validity and pointer races.
We modified the definition of validity to allow for ABAs and lifted the pointer race definition to SMR calls.
This allowed us to verify lock-free data structures without the complexity of SMR implementations and without considering all possible reallocations.
With our approach we could verify the DGLM queue which has not been verified automatically for manual memory management before.

To the best of our knowledge, there are no works which propose an automated approach for verifying linearizability of lock-free data structures similar to ours.
We are the first to
\begin{inparaenum}[(i)]
	\item propose a method for automatically verifying linearizability which decouples the verification of memory reclamation from the verification of the data structure, and
	\item propose a method for easing the verification task by avoiding reuse of memory except for a single memory location.
\end{inparaenum}

%!TEX root = ../main.tex

\section{Conclusion and Future Work}
\label{sec:conclusion}

In this paper, we have shown that the way lock-free data structures and their memory reclamation are implemented allows for compositional verification.
The memory reclamation can be verified against a simple specification.
In turn, this specification can be used to verify the data structure without considering the implementation of the memory reclamation.
This breaks verification into two tasks each of which has to consider only a part of the original code under scrutiny.

However, the resulting tasks remain hard for verification tools.
To reduce their complexity and make automation tractable, we showed that one can rely on a simpler semantics without sacrificing soundness.
The semantics we proposed is simpler in that, instead of arbitrary reuse, only a single memory location needs to be considered for reuse.
To ensure soundness, we showed how to check in such a semantics for ABAs.
We showed how to tolerate certain, harmless ABAs to handle SMR implementations like hazard pointers.
That is, we need to give up verification only if there are harmful ABAs, that is, true bugs.

We evaluated our approach in an automated linearizability checker.
Our experiments confirmed that our approach can handle complex data structures with SMR the verification of which is beyond the capabilities of existing automatic approaches.

As future work we would like to extend our implementation to support more data structures and more SMR implementations from the literature.
As stated before, this may require some generalizations of our theory.

%% Acknowledgments
\begin{acks}
	We thank the POPL'19 reviewers for their valuable feedback and suggestions for improvements.
\end{acks}

%% Bibliography
\bibliography{bibliography}

% %% Revision
\appendix
%!TEX root = ../../main.tex

\clearpage
\newpage
\appendix

%!TEX root = ../../../main.tex

\section{Missing Benchmark Programs}
\label{appendix:benchmark-programs}

\begin{wrapfigure}{r}{0.34\textwidth}
	\vspace{-10pt}
	\small
	\begin{tcolorbox}
	\center%
	\begin{tikzpicture}[->,>=stealth',shorten >=1pt,auto,node distance=2.8cm,thick,initial text={}]
		\node [xshift=-0.4cm,yshift=.8cm,draw,thin] {$\gcobs$};
		\tikzstyle{every state}=[minimum size=1.5em]
		\tikzset{every edge/.append style={font=\footnotesize}}
		\node[initial,state]    (A)              {\mkstatename{obs:gc:init}};
		\node[accepting,state]  (B) [right of=A] {\mkstatename{obs:gc:final}};
		\path
			(A) edge node {\translab{\evt{\free}{\anadr}}{\mathit{true}}} (B)
			;
	\end{tikzpicture}
	\caption{%
		Observer specifying that no memory is reclaimed.
	}
	\label{fig:observer-gc}
	\end{tcolorbox}
\end{wrapfigure}
In \Cref{sec:evaluation} we evaluated our tool on various data structures and SMR algorithms.
With reference to \Cref{table:experiments}, the observers for \emph{HP}, \emph{EBR}, \emph{GC}, and \emph{None} are $\thehpobs$, $\theebrobs$, $\gcobs$, and $\baseobs$, respectively.
They can be found in \Cref{fig:observer-ebr,fig:observer,fig:observer-gc}.

The code for the \emph{Coarse stack} can be found in \Cref{fig:cstack}, the \emph{Coarse queue} in \Cref{fig:cqueue}, \emph{Treiber's stack} in \Cref{fig:treibers}, the \emph{Optimized Treiber's stack} in \Cref{fig:treibers-opt}, \emph{Michael\&Scott's queue} in \Cref{fig:michaelscott_ebrhp}, and the \emph{DGLM queue} in \Cref{fig:dglm}.
The logic for \emph{HP}, \emph{EBR}, \emph{GC}, and \emph{None} is highlighted with \code{H}, \code{E}, \code{G}, and \code{N}, respectively.

The code for the \emph{Hazard Pointers} implementation can be found in \Cref{fig:hpimpldyn}, the code for the \emph{Epoch-Based Reclamation} implementation in \Cref{fig:ebrimpl}.

%!TEX root = ../../../main.tex

\begin{figure}[b]
	\begin{tcolorbox}%
	\center%
\begin{lstlisting}[style=condensed]
/* Coarse Stack */
struct Node { data_t data; Node* next; };
shared Node* ToS;
atomic init() { ToS = NULL; }
\end{lstlisting}%
	\begin{minipage}[t]{.46\textwidth}%
\begin{lstlisting}[style=condensed]
void push(data_t input) {
	Node* node = new Node();
	node->data = input;
	atomic {
		node->next = ToS;
		ToS = node;
	}
}
\end{lstlisting}%
	\end{minipage}%
	\hfill%
	\begin{minipage}[t]{.475\textwidth}%
\begin{lstlisting}[style=condensed]
data_t pop() {
	atomic {
		Node* top = ToS;
		if (top == NULL) return EMPTY;
		data_t output = top->data;
		ToS = Tos->next;
§GN§	@retire(top);@
		return output;
}	}
\end{lstlisting}%
	\end{minipage}%
	\caption{%
		Coarse Stack implementation.
		No SMR is needed due to the use of \code{atomic} blocks.
	}
	\label{fig:cstack}
	\end{tcolorbox}
\end{figure}

%!TEX root = ../../../main.tex

\begin{figure}[b]
	\begin{tcolorbox}%
	\center%
\begin{lstlisting}[style=condensed]
/* Coarse Queue */
struct Node { data_t data; Node* next; };
shared Node* Head, Tail;
atomic init() { Head = new Node(); Head->next = NULL; Tail = Head; }
\end{lstlisting}%
	\begin{minipage}[t]{.46\textwidth}%
\begin{lstlisting}[style=condensed]
void enqueue(data_t input) {
	Node* node = new Node();
	node->data = input;
	node->next = NULL;
	atomic {
		Tail->next = node;
		Tail = node;
	}
}
\end{lstlisting}%
	\end{minipage}%
	\hfill%
	\begin{minipage}[t]{.475\textwidth}%
\begin{lstlisting}[style=condensed]
data_t dequeue() {
	atomic {
		Node* head = Head;
		Node* next = head->next;
		if (next == NULL) return EMPTY;
		data_t output = next->data;
		Head = next;
§GN§	@retire(head);@
		return output;
}	}
\end{lstlisting}%
	\end{minipage}%
	\caption{%
		Coarse Queue implementation.
		No SMR is needed due to the use of \code{atomic} blocks.
	}
	\label{fig:cqueue}
	\end{tcolorbox}
\end{figure}

%!TEX root = ../../../main.tex

\begin{figure}[b]
	\begin{tcolorbox}%
	\center%
\begin{lstlisting}[style=condensed]
/* Treiber's stack */
struct Node { data_t data; Node* next; };
shared Node* ToS;
atomic init() { ToS = NULL; }
\end{lstlisting}%
	\begin{minipage}[t]{.46\textwidth}%
\begin{lstlisting}[style=condensed]
void push(data_t input) {
§E§	@leaveQ();@
	Node* node = new Node();
	node->data = input;
	while (true) {
		Node* top = ToS;
§H§		@protect(top, 0);@
§H§		@if (top != ToS) continue;@
		node->next = top;
		if (CAS(&ToS, top, node))
			break
	}
§H§	@unprotect(0);@
§E§	@enterQ();@
}
\end{lstlisting}%
	\end{minipage}%
	\hfill%
	\begin{minipage}[t]{.475\textwidth}%
\begin{lstlisting}[style=condensed]
data_t pop() {
§E§	@leaveQ();@
	while (true) {
		Node* top = ToS;
		if (top == NULL) return EMPTY;
§H§		@protect(top, 0);@
§H§		@if (top != ToS) continue;@
		Node* next = top->next;
		if (CAS(&ToS, top, next)) {
			data_t output = top->data;
§HEG§		@retire(top);@
§H§			@unprotect(0);@
§E§			@enterQ();@
			return output;
}	}	}
\end{lstlisting}%
	\end{minipage}%
	\caption{%
		Treiber's non-blocking stack \cite{opac-b1015261} with hazard pointer code.
		The implementation requires a single hazard pointer per thread.
	}
	\label{fig:treibers}
	\end{tcolorbox}
\end{figure}

%!TEX root = ../../../main.tex

\begin{figure}[b]
	\begin{tcolorbox}%
	\center%
\begin{lstlisting}[style=condensed]
/* Optimized Treiber's stack */
struct Node { data_t data; Node* next; };
shared Node* ToS;
atomic init() { ToS = NULL; }
\end{lstlisting}%
	\begin{minipage}[t]{.46\textwidth}%
\begin{lstlisting}[style=condensed]
void push(data_t input) {
	Node* node = new Node();
	node->data = input;
	while (true) {
		Node* top = ToS;
		node->next = top;
		if (CAS(&ToS, top, node))
			break
	}
}
\end{lstlisting}%
	\end{minipage}%
	\hfill%
	\begin{minipage}[t]{.475\textwidth}%
\begin{lstlisting}[style=condensed]
data_t pop() {
	while (true) {
		Node* top = ToS;
		if (top == NULL) return EMPTY;
§H§		@protect(top, 0);@
§H§		@if (top != ToS) continue;@
		Node* next = top->next;
		if (CAS(&ToS, top, next)) {
			data_t output = top->data;
§H§			@retire(top);@
§H§			@unprotect(0);@
			return output;
}	}	}
\end{lstlisting}%
	\end{minipage}%
	\caption{%
		Optimized version of Treiber's non-blocking stack \cite{opac-b1015261} with hazard pointer code due to \cite{DBLP:conf/podc/Michael02}.
		The \code{push} operation does not require hazard pointers.
		The implementation of \code{push} is the same as in \Cref{fig:treibers-push}.
	}
	\label{fig:treibers-opt}
	\end{tcolorbox}
\end{figure}

%!TEX root = ../../../main.tex

\begin{figure}
	\begin{tcolorbox}%
	\center%
\begin{lstlisting}[style=condensed]
/* Michael&Scott's queue */
struct Node { data_t data; Node* next; };
shared Node* Head, Tail;
atomic init() { Head = new Node(); Head->next = NULL; Tail = Head; }
\end{lstlisting}%
	\begin{minipage}[t]{.46\textwidth}%
\begin{lstlisting}[style=condensed]
void enqueue(data_t input) {
§E§	@leaveQ;@
	Node* node = new Node();
	node->data = input;
	node->next = NULL;
	while (true) {
		Node* tail = Tail;
§H§		@protect(tail, 0);@
§H§		@if (tail != Tail) continue;@
		Node* next = tail->next;
		if (tail != Tail) continue;
		if (next != NULL) {
			CAS(&Tail, tail, next);
			continue;
		}
		if (CAS(&tail->next, next, node))
			break
	}
	CAS(&Tail, tail, node);
§E§	@enterQ;@
§H§	@unprotect(0);@
}
\end{lstlisting}%
	\end{minipage}%
	\hfill%
	\begin{minipage}[t]{.475\textwidth}%
\begin{lstlisting}[style=condensed]
data_t dequeue() { $\label[line]{code:michaelscott_ebrhp:dequeue}$
§E§	@leaveQ;@
	while (true) {
		Node* head = Head; $\label[line]{code:michaelscott_ebrhp:read-head}$
§H§		@protect(head, 0);@ $\label[line]{code:michaelscott_ebrhp:protect-head}$
§H§		@if (head != Head) continue;@ $\label[line]{code:michaelscott_ebrhp:check-head}$
		Node* tail = Tail; $\label[line]{code:michaelscott_ebrhp:read-tail}$
		Node* next = head->next; $\label[line]{code:michaelscott_ebrhp:read-next}$
§H§		@protect(next, 1);@
		if (head != Head) continue;
		if (next == NULL) return EMPTY;
		if (head == tail) {
			CAS(&Tail, tail, next);
			continue;
		} else {
			data_t output = next->data;
			if (CAS(&Head, head, next)) { $\label[line]{code:michaelscott_ebrhp:cas}$
§HEG§			@retire(head);@
§E§				@enterQ;@
§H§				@unprotect(0); unprotect(1);@
				return output; $\label[line]{code:michaelscott_ebrhp:dequeue-return}$
}	}	}	}
\end{lstlisting}%
	\end{minipage}%
	\caption{%
		Michael\&Scott's non-blocking queue with SMR code.
		The added lines are marked with \code{E} and \code{H} for the modifications needed for using EBR/QSBR and HP \cite{DBLP:journals/tpds/Michael04}, respectively.
		The HP version requires two hazard pointers per thread which are set using calls of the form \code{protect(ptr, k)}.
	}
	\label{fig:michaelscott_ebrhp}
	\end{tcolorbox}
\end{figure}

%!TEX root = ../../../main.tex

\begin{figure}[b]
	\begin{tcolorbox}%
	\center%
\begin{lstlisting}[style=condensed]
/* DGLM queue */
struct Node { data_t data; Node* next; };
shared Node* Head, Tail;
atomic init() { Head = new Node(); Head->next = NULL; Tail = Head; }
\end{lstlisting}%
	\begin{minipage}[t]{.46\textwidth}%
\begin{lstlisting}[style=condensed]
void enqueue(data_t input) {
§E§	@leaveQ;@
	Node* node = new Node();
	node->data = input;
	node->next = NULL;
	while (true) {
		Node* tail = Tail;
§H§		@protect(tail, 0);@
§H§		@if (tail != Tail) continue;@
		Node* next = tail->next;
		if (tail != Tail) continue;
		if (next != NULL) {
			CAS(&Tail, tail, next);
			continue;
		}
		if (CAS(&tail->next, next, node))
			break
	}
	CAS(&Tail, tail,node);
§H§	@unprotect(0);@
§E§	@enterQ;@
}
\end{lstlisting}%
	\end{minipage}%
	\hfill%
	\begin{minipage}[t]{.475\textwidth}%
\begin{lstlisting}[style=condensed]
data_t dequeue() {
§E§	@leaveQ;@
	while (true) {
		Node* head = Head;
§H§		@protect(head, 0);@
§H§		@if (head != Head) continue;@
		Node* next = head->next;
§H§		@protect(next, 1);@
		if (head != Head) continue;
		if (next == NULL) return EMPTY;
		data_t output = next->data;
		if (CAS(&Head, head, next)) {
			Node* tail = Tail;
			if (head == tail) {
				CAS(Tail, tail, next);
			}
§HEG§		@retire(head);@
§H§			@unprotect(0); unprotect(1);@
§E§			@enterQ;@
			return output;
}	}	}
\end{lstlisting}%
	\end{minipage}%
	\caption{%
		DGLM non-blocking queue \cite{DBLP:conf/forte/DohertyGLM04} with hazard pointer code.
		The implementation is similar to Michael\&Scott's non-blocking queue but allows the \code{Head} to overtake the \code{Tail}.
	}
	\label{fig:dglm}
	\end{tcolorbox}
\end{figure}

%!TEX root = ../../../main.tex

\begin{figure}
	\center%
\begin{tcolorbox}
\begin{lstlisting}[style=condensed]
enum epoch_t { 0, 1, 2 };
struct EBRRec { EBRRec* next; epoch_t epoch; }
shared EBRRec* Records;
shared epoch_t Epoch;
threadlocal EBRRec* myRec;
threadlocal List<void*> bag0;
threadlocal List<void*> bag1;
threadlocal List<void*> bag2;
\end{lstlisting}%
	\begin{minipage}[t]{.46\textwidth}%
\begin{lstlisting}[style=condensed]
void join() {
	myRec = new EBRRec();
	while (true) {
		EBRRec* rec = Records;
		myRec->next = rec;
		if (CAS(Records, rec, myRec))
			break;
	}
	myEpoch = Epoch;
	myRec->epoch = myEpoch;
}

void part() {
}

void enterQ() {
}

void retire(Node* ptr) {
	bag0.add(ptr);
}
\end{lstlisting}%
	\end{minipage}%
	\hfill%
	\begin{minipage}[t]{.475\textwidth}%
\begin{lstlisting}[style=condensed]
void reclaim() {
	if (*) {
		if (myEpoch != Epoch) { 
			for (Node* ptr : bag2) {
				delete ptr;
			}
			bag2 = bag1;
			bag1 = bag0;
			bag0 = List<void*>::empty();
			myEpoch = Epoch;
			myRec->epoch = myEpoch;
		}

		EBRrec* cur = Records;
		while (cur != NULL) {
			if (epoch != cur->epoch) {
				break;
			}
			cur = cur->next;
		}
		CAS(Epoch, epoch, (epoch+1)%3);
	}
}
\end{lstlisting}%
	\end{minipage}%
	\caption{%
		Implementation for epoch-based reclamation.
		Note that parting of a thread prevents the other threads from any subsequent reclamation.
	}
	\label{fig:ebrimpl}
\end{tcolorbox}
\end{figure}

%!TEX root = ../../../main.tex

\section{Extended Discussion of SMR Algorithms}
\label{sec:more_smr}
\label{sec:more_observers}

We extend the discussion of SMR algorithms from \Cref{sec:related_work}.

%!TEX root = ../../../main.tex

\paragraph{Hazard Pointers (HP), continued}
We already discussed HP throughout the paper.
We gave a formal specification in \Cref{sec:observers}, \Cref{fig:observer}.
The presented observer handles each hazard pointer individually.
In practice, a common pattern is to protect the nodes of a list in a \emph{hand-over-hand} fashion:
\begin{inparaenum}[(1)]
	\item hazard pointer $h_0$ is set to protect node $a$,
	\item $h_1$ is set to protect the successor node $a'$, \label{item:hand-over-hand:set1}
	\item $h_1$ is transferred into $h_0$ meaning that now also $h_0$ protects $a'$,
	\item $h_1$ is set to protect the successor $a''$ of $a'$, and so on.
\end{inparaenum}
Using this pattern, node $a'$ is continuously protected; the protections of $h_0$ and $h_1$ overlap in time.
If $a'$ is retired after step \ref{item:hand-over-hand:set1} it must not be deleted as long as $h_0$ protects it.
However, $\hpobs$ does not account for this.
Because it considers every hazard pointer individually, the protection by $h_0$ appears after $a'$ is retired.
So the $\hpobs$ does allow for $a'$ being deleted.

Such \emph{spurious} deletions are likely to cause false alarms during an analysis.
The reason for this is that one typically performs consistency checks ensuring that $a''$ is still a successor of $a'$ before moving on.
Due to the spurious deletion this check results in an unsafe memory access reported as bug by most analyses.
To overcome this problem, one can modify $\hpobs$ such that it supports protections by multiple hazard pointers.
To that end, it needs to track multiple hazard pointers simultaneously.
Defining such an observer for a fixed number of hazard pointers per thread is straight forward and omitted here.

%!TEX root = ../../../main.tex

\paragraph{Free-Lists}
The simplest mechanism to avoid use-after-free bugs is to not free any memory.
Instead, retired memory is stored in a \emph{free-list}.
Memory in that list can be reused immediately.
Instead of an allocation it is checked whether there is memory available in the free-list.
If so, it is removed from the list and reused.
Otherwise, fresh memory is allocated.
There are two major drawbacks with this approach.
First, the memory consumption of an application can never shrink.
Once allocated, memory always remains allocated for the process.
% Second, the possibility for memory being immediately reused allows for the dreaded ABA problem \cite{DBLP:conf/podc/MichaelS96}.
Second, the possibility for memory being immediately reused allows for harmful ABAs \cite{DBLP:conf/podc/MichaelS96}.
To overcome this, pointers are instrumented to carry an integer \emph{tag}, or modification counter.
Updating a pointer then not only updates the pointer but also increases the tag.
To solve avoid harmful ABAs, comparisons of pointers consider their corresponding tags too.
This requires to use double-word \code{CAS} operations or to \emph{steal} unused bits of pointers to use as storage for the tag.

That retired nodes are never freed can be formalized with observers easily.
Moving retired nodes from the free-list to the data structure, however, does not fit into our development---we assume that reuse is implemented by the SMR algorithm freeing nodes and the data structure reallocating the memory.
One could follow the approach by \citet{DBLP:conf/tacas/AbdullaHHJR13,DBLP:conf/vmcai/HazizaHMW16}.
Instead of retiring nodes they use explicit frees and allow the data structure to access freed memory.
As discussed in \Cref{sec:reduction}, such accesses result in pointer races and lead to verification failure.
To counteract this, \citet{DBLP:conf/vmcai/HazizaHMW16} introduce a relaxed version of pointer races to tolerate certain unsafe accesses.
We refrain from doing so since we believe that this would jeopardize our reduction result.
We rather focus on \emph{proper} SMR algorithms which actually reclaim memory.
%!TEX root = ../../../main.tex

\paragraph{Reference Counting (RC)}
Reference counting adds to every node a counter representing the number of pointers currently pointing to that node.
This count is increased whenever a new pointer is created and decreased when an existing pointer is destroyed.
When the count drops to zero upon decrease, then the node is deleted.
However, in order to update the reference count of a node, a pointer to that node is required.
A RC implementation must ensure that a node cannot be deleted between reading a pointer to it and increasing its reference count.
This becomes problematic if the implementation is supposed to be lock-free.
Then, one has to either
\begin{inparaenum}[(i)]
 	\item perform the read and update atomically with a double-word CAS \cite{DBLP:conf/podc/DetlefsMMS01}, which is not available on most hardware, or
 	\item use hazard pointers to protect nodes form being freed in the aforementioned time frame \cite{DBLP:journals/tocs/HerlihyLMM05}.
\end{inparaenum}
Moreover, reference counting has shown to be less efficient than EBR and HPs \cite{DBLP:journals/jpdc/HartMBW07}

Since RC is implemented on top of HP, one can consider the RC implementation as part of the data structure on top of an HP implementation.
Alternatively, to directly fit RC into our theory, one could to instrument the program under scrutiny such that creating new and destroying existing pointers accounts for an increase and decrease in the reference count, respectively.
For pointers $\psel{\anadr}$ one can easily come up with an observer that prevents deleting the referenced address.
To account for pointer variables, one has to either assume a fixed number of shared and thread-local pointer variables or generalize observers to be able to observe such variables.
We believe that such a generalization is possible.

%!TEX root = ../../../main.tex

\paragraph{Drop the Anchor (DTA) \cite{DBLP:conf/spaa/BraginskyKP13}}
A version of EBR which uses HP to fight thread failures.
DTA has no notion of global epoch; instead each thread has a local epoch which is always updated on operation invocation to a value higher than all other thread epochs.
Additionally, the thread epochs are written to the nodes of the data structure to denote when the node was added/removed.
Then, a node can be deallocated if its removal time is smaller than all thread epochs.
To allow for deallocations after thread failure, a version of hazard pointers is used.
Here, a hazard pointer does not protect a single node, but a certain (configurable) number $k$ of consecutive nodes starting from the protected node.
When a threads local epoch lacks behind too much, it is marked as potentially crashed, or \emph{stuck}.
Then, the sublist of nodes protected by the crashed thread is replaced with a fresh copy.
The removed sublist is said to be \emph{frozen}.
On the one hand, the sublist can be reclaimed only if the crashed thread resurrects.
On the other hand, the sublist is marked such that a resurrected thread can detect what happened.
For future deallocations the crashed thread can be ignored because it does not acquire new pointers into the data structure.

\newcommand{\obsDTA}{\anobs[\mathit{DTA}]}
\newcommand{\obsFROZEN}{\anobs[\mathit{Frozen}]}

To fit DTA into our framework we consider the recovery logic to be part of the data structure rather than the SMR algorithm.
Using a feedback mechanism for the recovery, similar to QSBR quiescent phases, allows for DTA to be specified with the observer observers $\baseobs\times\obsDTA\times\obsFROZEN$.
For observers $\obsDTA$ and $\obsFROZEN$ refer to \Cref{fig:observer:dta}.
The observers assume API functions \code{recovered($\athread$)} and \code{freeze(ptr)} to signal that a failed thread has been recovered and that an address is frozen, respectively.

\begin{figure}%[H]
	\begin{tcolorbox}
		% \center
		\begin{tikzpicture}[->,>=stealth',shorten >=1pt,auto,node distance=2.9cm,thick,initial text={}]
			\node [xshift=-0.4cm,yshift=.9cm,draw,thin] {$\obsDTA$};
			\tikzstyle{every state}=[minimum size=1.5em]
			\tikzset{every edge/.append style={font=\footnotesize}}
			\node[initial,state]    (A)              {\mkstatename{obs:dta:init}};
			\node[state]            (B) [right of=A] {\mkstatename{obs:dta:active-inv}};
			\node[state]            (E) [right of=B] {\mkstatename{obs:dta:active-res}};
			\node[state]            (C) [right of=E] {\mkstatename{obs:dta:retired}};
			\node[accepting,state]  (D) [right of=C] {\mkstatename{obs:dta:final}};
			\coordinate             [below of=A, yshift=+1.9cm]  (X)  {};
			\coordinate             [below of=E, yshift=+1.9cm]  (Y)  {};
			\coordinate             [below of=B, yshift=+1.9cm]  (Z)  {};
			\path
				(A) edge node {\translab{\evt{\leaveQ}{\athread}}{\athread=\anovar}} (B)
				(B) edge node {\translab{\evt{\exit}{\athread}}{\athread=\anovar}} (E)
				(E) edge node[align=center] {\translabbr{\evt{\retire}{\athread,\anadr}}{\anadr=\anovarp}} (C)
				(C) edge node {\translab{\freeof{\anadr}}{\anadr=\anovarp}} (D)
				(C.south west) edge[-,shorten >=0pt] (Y)
				([xshift=-1.5mm]X) edge ([xshift=-1.5mm]A.south)
				(Y) edge[-,shorten >=0pt] node[text width=2.85cm] {\translab{\evt{\enterQ}{\athread}}{\athread=\anovar}\newline\translab{\evt{\mathtt{recovered}}{\athread,\athread'}}{\athread'=\anovar}} ([xshift=-1.5mm]X)
				(E) edge[-,shorten >=0pt] ([yshift=1mm]Y.north)
				([yshift=1mm]Y) edge[-,shorten >=0pt] ([xshift=0mm,yshift=1mm]X)
				([xshift=0mm,yshift=1mm]X) edge ([xshift=0mm]A.south)
				(B) edge[-,shorten >=0pt] ([yshift=2mm]Z.north)
				([yshift=2mm]Z) edge[-,shorten >=0pt] ([xshift=1.5mm,yshift=2mm]X)
				([xshift=1.5mm,yshift=2mm]X) edge ([xshift=1.5mm]A.south)
				;
		\end{tikzpicture}
		\tikzimagespace
		\begin{tikzpicture}[->,>=stealth',shorten >=1pt,auto,node distance=3.87cm,thick,initial text={}]
			\node [xshift=-0.3cm,yshift=.9cm,draw,thin] {$\obsFROZEN$};
			\tikzstyle{every state}=[minimum size=1.5em]
			\tikzset{every edge/.append style={font=\footnotesize}}
			\node[initial,state]   (C)              {\mkstatename{obs:frozen:final}};
			\node[state]           (E) [right of=C] {\mkstatename{obs:frozen:init}};
			\node[state]           (A) [right of=E] {\mkstatename{obs:frozen:exit}};
			\node[accepting,state] (B) [right of=A] {\mkstatename{obs:frozen:retired}};
			\path
				(C) edge node [above,align=center]{\translabbr{\evt{\mathtt{freeze}}{\athread,\anadr}}{\athread=\anovar\wedge\anadr=\anovarp}} (E)
				(E) edge node [above]{\translab{\evt{\exit}{\athread}}{\athread=\anovar}} (A)
				(A) edge node [above]{\translab{\freeof{\anadr}}{\anadr=\anovarp}} (B)
				;
		\end{tikzpicture}
		\caption{%
			Observers for specifying DTA.
			Observer $\obsDTA$ extends $\ebrobs$ from \Cref{fig:observer-ebr} in that a recovered thread stops protecting addresses.
			Observer $\obsFROZEN$ prevents frozen addresses from being deleted.
		}
		\label{fig:observer:dta}
	\end{tcolorbox}
\end{figure}

%!TEX root = ../../../main.tex

\paragraph{Optimistic Access (OA) \cite{DBLP:conf/spaa/CohenP15}}
OA relies on the assumption that accessing freed memory never leads to system crashes.
Hence, memory reads are performed optimistically in that they may accesses reclaimed memory.
To detect such situations, each thread has a \emph{warning bit}, i.e., a simplified thread local epoch.
The warning bit is set if any thread performs reclamation.
Hence, if the warning bit is set after a memory accesses, then the result may be undefined as it stems from reclaimed memory.
If so the thread needs to restart its operation---a common pattern in lock-free code.
To protect writes from corrupting the data structure, they must not write to reclaimed memory.
This is done by using hazard pointers.
The node to be written is protected with a hazard pointer and the integrity of the protection is guaranteed if the warning bit is not set before and after the protection.

The specification of OA is identical to the one of hazard pointers as discussed in \Cref{sec:observers}.
Since data structure implementations using OA access the warning bit, this bit has to be retained in the data structure.
To update the bit, the semantics of \code{free} can be adapted.
This adaptation can be problematic in our theory: a \code{free} may not be mimicked and thus the update of the warning bit may break our development.
To fight this, one could introduce \emph{spurious} updates of the warning bit.

Another problem with OA is that it deliberately accesses potentially reclaimed memory.
That is, it raises pointer races.
To support this one would need a relaxation of pointer races like the one by \citet{DBLP:conf/vmcai/HazizaHMW16}.
This might lead to a severe state space explosion.

%%%%%%%%%%%%%%%%%%%%%%%%%%%%%%%%%%%%%%%%%%%%%%%%%%%%%%%%%%%%%%%%%%%%%%%%%%%
%%%%%%%%%%%%%%%%%%%%%%%%%%%%%%%%%%%%%%%%%%%%%%%%%%%%%%%%%%%%%%%%%%%%%%%%%%%
\paragraph{Automatic Optimistic Access (AOA) \cite{DBLP:conf/oopsla/CohenP15}}
Extends OA by using a normalized form applicable for lock-free code and by using the properties of OA (namely, a node must not be collected if it is reachable from the shared variables or from a hazard pointer) to release the programmer from the burden of explicitly adding \code{retire} calls to the code.
Moreover, the normalization of the code allows for automatically adding the necessary code to perform OA.
Hence, the authors consider AOA to be a restricted form of garbage collection.

Since AOA is an extension of OA, it has the same requirements as OA.
We believe that a generalization to OA allows for AOA support.

%!TEX root = ../../../main.tex

\paragraph{StackTrack (ST) \cite{DBLP:conf/eurosys/AlistarhEHMS14}}
The basic idea is to use transactions to perform optimistic memory accesses.
The transactional memory guarantees that a transaction aborts if it reads from memory that has been freed concurrently.
Since not every operation can be performed in a single transaction, it is split up into multiple transactions.
To guarantee that references are not freed between two transactions without being noticed by the executing thread, the relevant pointers are protected using hazard pointers.

%!TEX root = ../../../main.tex

\paragraph{ThreadScan (TS) \cite{DBLP:conf/spaa/AlistarhLMS15}}
Modified version of hazard pointers which uses signaling to protect nodes on demand.
More specifically, whenever a thread starts reclaiming memory, it uses the operating system's signaling infrastructure to inform all other threads about the reclamation intent.
Then, the other threads are interrupted and execute the signal handler.
This handler scans the list of to-be-deleted nodes of the reclaiming thread and marks those entries the executing thread holds thread-local references to.
The reclaimer waits until all other threads have signaled completion of the marking process.
Then, the reclaimer deletes all unmarked nodes.

\newcommand{\obsTSmark}{\anobs[\mathit{TS-mark}]}
\newcommand{\obsTSdone}{\anobs[\mathit{TS-done}]}

To verify data structures using TS with our theory, one has to integrate signaling into the programing language from \Cref{sec:programs}.
For a specification with observers we assume the following API: \code{reclaim()}, \code{retire(ptr)}, \code{mark(ptr)}, and \code{markdone()}.
Function \code{reclaim()} is called by a thread when it starts reclaiming memory.
Function \code{retire(ptr)} behaves as for ordinary HP.
Function \code{makr(ptr)} is used by the signal handler of threads to mark addresses that shall not be reclaimed.
After a thread has completed the marking, function \code{markdone()} is called.
We use \code{mark(ptr)} instead of \code{protect(ptr)} because it does guarantee protection only during the current reclamation phase.
Then, the observer $\baseobs\times\obsTSmark\times\obsTSdone$ from \Cref{fig:observer:ts} specifies TS.

\begin{figure}%[H]
	\begin{tcolorbox}
		\center
		\begin{tikzpicture}[->,>=stealth',shorten >=1pt,auto,node distance=3.8cm,thick,initial text={}]
			\node [xshift=-0.4cm,yshift=.8cm,draw,thin] {$\obsTSmark$};
			\tikzstyle{every state}=[minimum size=1.5em]
			\tikzset{every edge/.append style={font=\footnotesize}}
			\node[initial, state]         (A)                         {\mkstatename{obs:tsmark:init}};
			\node[state]                  (B) [right of=A]            {\mkstatename{obs:tsmark:retired}};
			\coordinate [below=1.0cm of A](X) {};
			\node[accepting,state]        (C) [right of=X]            {\mkstatename{obs:tsmark:final}};
			\path
				([yshift=-1mm]B.west) edge node [below,text width=2.3cm]{\translab{\evt{mark}{\athread,\anadr}}{\anadr=\anovarp}\\\translab{\evt{\retire}{\athread,\anadr}}{\anadr=\anovarp}} ([yshift=-1mm]A.east)
				([yshift=1mm]A.east) edge node [above]{\translab{\evt{reclaim}{\athread}}{\mathit{true}}} ([yshift=1mm]B.west)
				(A) edge[-,shorten >=0pt] (X)
				(X) edge node [below]{\translab{\freeof{\anadr}}{\anadr=\anovarp}} (C)
				;
		\end{tikzpicture}
		\hspace{1cm}
		\begin{tikzpicture}[->,>=stealth',shorten >=1pt,auto,node distance=3.8cm,thick,initial text={}]
			\node [xshift=-0.4cm,yshift=.8cm,draw,thin] {$\obsTSdone$};
			\tikzstyle{every state}=[minimum size=1.5em]
			\tikzset{every edge/.append style={font=\footnotesize}}
			\node[initial, state]         (A)                         {\mkstatename{obs:tsdone:init}};
			\node[state]                  (B) [right of=A]            {\mkstatename{obs:tsdone:retired}};
			\coordinate [below=1.0cm of A](X) {};
			\node[accepting,state]        (C) [right of=X]            {\mkstatename{obs:tsdone:final}};
			\path
				([yshift=-1mm]B.west) edge node [below]{\translab{\evt{reclaim}{\athread}}{\mathit{true}}} ([yshift=-1mm]A.east)
				([yshift=1mm]A.east) edge node [above]{\translab{\evt{markdone}{\athread}}{\athread=\anovar}} ([yshift=1mm]B.west)
				(A) edge[-,shorten >=0pt] (X)
				(X) edge node [below]{\translab{\freeof{\anadr}}{\anadr=\anovarp}} (C)
				;
		\end{tikzpicture}
		\caption{%
			Observer $\obsTSmark$ states that marked addresses must not be deleted.
			Additionally, it guarantees that addresses retired after reclamation starts are not deleted either---this is required because after a thread executed its signal handler to mark addresses, it continues its execution and ca retire addresses; those address must not be deleted by a reclamation phase that started earlier.
			Observer $\obsTSdone$ states that deletions can be performed only if all threads have finished marking nodes.
		}
		\label{fig:observer:ts}
	\end{tcolorbox}
\end{figure}

%!TEX root = ../../../main.tex

\paragraph{Dynamic Collect (DC) \cite{DBLP:conf/podc/DragojevicHLM11}}
DC is an abstract framework to describe SMR schemes like hazard pointers, that is, schemes where single addresses can be protected.
A DC object supplies the user with the ability to
\begin{inparaenum}[(i)]
	\item register,
	\item deregister,
	\item update, and
	\item collect entries.
\end{inparaenum}
Intuitively, this means to dynamically
\begin{inparaenum}[(i)]
	\item create,
	\item purge, and
	\item update hazard pointers, and to
	\item query all hazard pointers together with their values.
\end{inparaenum}
Moreover, the authors explore how such a dynamic collect object could be implemented efficiently using transactions.

The specification of DC objects follows the one of hazard pointers.
That the implementation of such objects may use transactions is irrelevant for the use with our results since we use the specification rather than the implementation of the SMR algorithm for an analysis.

%!TEX root = ../../../main.tex

\paragraph{Beware\&Cleanup (BC) \cite{DBLP:conf/ispan/GidenstamPST05}}
This technique combines hazard pointers and reference counting.
HPs are used to protect thread-local references (pointers).
RC is used for nodes.
The reference count of a node, however, reflects only the number of existing references from other nodes.
That is, acquiring/purging thread-local pointers does not affect the reference count of nodes.

Additionally, BC features a reclamation procedure which guarantees that the number of unreclaimed nodes is finite at all times.
To that end, the references contained in logically deleted nodes (those that await reclamation) are altered to point to nodes still contained in the data structure.
However, updating logically deleted nodes is a modification of the data structure.
As stated by the authors, BC requires this modification to \emph{retain semantics}.

Similar to ordinary RC, BC can be handled with our approach when considering the reference counting as part of the data structure that is done on top of HP.
With respect to the aforementioned instrumentation for RC, BC has the advantage that no generalization of our observers is required since pointer variables are covered by HP rather than RC.

%!TEX root = ../../../main.tex

\paragraph{Isolde \cite{DBLP:conf/iwmm/YangW17}}

Isolde combines epoch-based reclamation and reference counting.
Similar to Beware\&Cleanup, reference counting is used for incoming pointers from nodes.
References due to thread-local pointers are ignored for the count.
Instead, EBR is used for pointer variables.
The main contribution of this work is a type system for the proposed reclamation approach.

%!TEX root = ../../../main.tex

\paragraph{Cadence \cite{DBLP:conf/spaa/BalmauGHZ16}}
Cadence is an optimized implementation of hazard pointers.
A major drawback of HP under weak memory is that each protection must be followed by a memory fence in order be visible to other threads.
This may degrade performance because it requires a fence for every node when traversing a data structure.
To avoid those fences, the authors observe that a context switch in modern operating systems can act as a fence.
Hence, Cadence adds sufficiently many dummy processes which simply sleep for a predefined time interval $T$ in an infinite loop.
This way, every process is forced into a context switch after $T$ time has elapsed.
For hazard pointers this means that a protection is visible to other threads at most $T$ time after the protection took place.
So for memory reclamation, processes wait an additional time $T$ between retiring and trying to delete a node.

As a hazard pointer implementation, Cadence uses the exact same specification as ordinary HP.

%!TEX root = ../../../main.tex

\paragraph{QSense \cite{DBLP:conf/spaa/BalmauGHZ16}}
Combination of QSBR and Cadence.
By default, QSBR is used.
If a thread failure/delay is detected, then the system switches to Cadence, i.e., HP.
Only if all threads make progress, the system switchs from Cadence back to QSBR.

\newcommand{\obsQSebr}{\anobs[\mathit{QS-fast}]}
\newcommand{\obsQShp}{\anobs[\mathit{QS-slow}]}

For a specification of QSense with observers, we assume an EBR and HP API which is extended by two functions \code{goSlow()} and \code{goFast()} that switch from QSBR to Cadence, and vice versa.
Then, QSense can be specified by $\baseobs\times\obsQSebr\times\obsQShp$.
Observer $\obsQSebr$ can be found in \Cref{fig:qsense}.
It is a variant of $\ebrobs$ which can be switched on and off on demand.
Observer $\obsQShp$ follows from $\hpobs$ with a similar construction.

\begin{figure}%[H]
	\begin{tcolorbox}
		\center
		\begin{tikzpicture}[->,>=stealth',shorten >=1pt,auto,node distance=3.8cm,thick,initial text={}]
			\node [xshift=-0.4cm,yshift=2.3cm,draw,thin] {$\obsQSebr$};
			\tikzstyle{every state}=[minimum size=1.5em]
			\tikzset{every edge/.append style={font=\footnotesize}}
			\node[initial,state]    (E)              {\mkstatename{obs:qs-ebr:init}};
			\node[state]            (F) [right of=E] {\mkstatename{obs:qs-ebr:active}};
			\node[state]          (foo) [right of=F] {\mkstatename{obs:qs-ebr:foo}};
			\node[state]            (G) [right of=foo] {\mkstatename{obs:qs-ebr:retired}};
			\node[accepting,state]  (H) [above of=G,yshift=-1.8cm] {\mkstatename{obs:qs-ebr:final}};
			\coordinate             [above of=E, yshift=-2.75cm]  (U)  {};
			\coordinate             [above of=foo, yshift=-2.75cm]  (V)  {};
			\coordinate             [above of=F, yshift=-2.75cm]  (W)  {};
			\node[state]            (A) [below of=E,yshift=2cm] {\mkstatename{obs:qs-ebr:init-copy}};
			\node[state]            (B) [right of=A] {\mkstatename{obs:qs-ebr:active-copy}};
			\node[state]          (bar) [right of=B] {\mkstatename{obs:qs-ebr:active-bar}};
			\node[state]            (C) [right of=bar] {\mkstatename{obs:qs-ebr:retired-copy}};
			% \node[accepting,state]  (D) [right of=C] {\mkstatename{obs:qs-ebr:final}};
			\coordinate             [below of=A, yshift=+2.75cm]  (X)  {};
			\coordinate             [below of=bar, yshift=+2.75cm]  (Y)  {};
			\coordinate             [below of=B, yshift=+2.75cm]  (Z)  {};
			\path
				(E) edge node[below] {\translab{\evt{\leaveQ}{\athread}}{\athread=\anovar}} (F)
				(F) edge node[below] {\translab{\evt{\exit}{\athread}}{\athread=\anovar}} (foo)
				(foo) edge node[below] {\translab{\evt{\retire}{\athread,\anadr}}{\anadr=\anovarp}} (G)
				(G) edge node[left,pos=.7] {\translab{\freeof{\anadr}}{\anadr=\anovarp}} (H)
				(G.north west) edge[-,shorten >=0pt] (V)
				([xshift=-1.5mm]U) edge ([xshift=-1.5mm]E.north)
				([xshift=-1.5mm]U) edge[-,shorten >=0pt] node {\translab{\evt{\enterQ}{\athread}}{\athread=\anovar}} (V)
				(foo) edge[-,shorten >=0pt] ([yshift=-1mm]V.north)
				([yshift=-1mm]V) edge[-,shorten >=0pt] ([xshift=0mm,yshift=-1mm]U)
				([xshift=0mm,yshift=-1mm]U) edge ([xshift=0mm]E.north)
				(F) edge[-,shorten >=0pt] ([yshift=-2mm]W.south)
				([yshift=-2mm]W) edge[-,shorten >=0pt] ([xshift=1.5mm,yshift=-2mm]U)
				([xshift=1.5mm,yshift=-2mm]U) edge ([xshift=1.5mm]E.north)
				;
			\path
				(A) edge node {\translab{\evt{\leaveQ}{\athread}}{\athread=\anovar}} (B)
				(B) edge node {\translab{\evt{\exit}{\athread}}{\athread=\anovar}} (bar)
				(bar) edge node {\translab{\evt{\retire}{\athread,\anadr}}{\anadr=\anovarp}} (C)
				% (C) edge node {\translab{\freeof{\anadr}}{\anadr=\anovarp}} (D)
				(C.south west) edge[-,shorten >=0pt] (Y)
				([xshift=-1.5mm]X) edge ([xshift=-1.5mm]A.south)
				(Y) edge[-,shorten >=0pt] node {\translab{\evt{\enterQ}{\athread}}{\athread=\anovar}} ([xshift=-1.5mm]X)
				(bar) edge[-,shorten >=0pt] ([yshift=1mm]Y.north)
				([yshift=1mm]Y) edge[-,shorten >=0pt] ([xshift=0mm,yshift=1mm]X)
				([xshift=0mm,yshift=1mm]X) edge ([xshift=0mm]A.south)
				(B) edge[-,shorten >=0pt] ([yshift=2mm]Z.north)
				([yshift=2mm]Z) edge[-,shorten >=0pt] ([xshift=1.5mm,yshift=2mm]X)
				([xshift=1.5mm,yshift=2mm]X) edge ([xshift=1.5mm]A.south)
				;
			\path
				([xshift=1mm]E.south) edge[] node {\$} ([xshift=1mm]A.north)
				([xshift=-1mm]A.north) edge node {\#} ([xshift=-1mm]E.south)
				([xshift=1mm]F.south) edge[] node {\$} ([xshift=1mm]B.north)
				([xshift=-1mm]B.north) edge node {\#} ([xshift=-1mm]F.south)
				([xshift=1mm]G.south) edge[] node {\$} ([xshift=1mm]C.north)
				([xshift=-1mm]C.north) edge node {\#} ([xshift=-1mm]G.south)
				([xshift=1mm]foo.south) edge[] node {\$} ([xshift=1mm]bar.north)
				([xshift=-1mm]bar.north) edge node {\#} ([xshift=-1mm]foo.south)
				;
			\node (text) [text width=2.75cm,align=left,above of=F,yshift=-1.6cm,xshift=3.5cm]  {%
				\footnotesize
				\[
				\begin{aligned}
					\$ &:= \translab{\evt{goSlow}{\athread}}{\mathit{true}}
					\\
					\# &:= \translab{\evt{goFast}{\athread}}{\mathit{true}}
				\end{aligned}
				\]
			};
		\end{tikzpicture}
		\caption{%
			Observer specifying a version of QSBR for QSense which can be switched on and off on demand.
		}
		\label{fig:qsense}
	\end{tcolorbox}
\end{figure}

%!TEX root = ../../../main.tex

\paragraph{Hazard Eras (HE) \cite{DBLP:conf/spaa/RamalheteC17}}
A hazard pointer implementation that borrows a global clock from QSBR.
HE augments nodes with two additional time stamp fields: a creation time and a retire time.
Whenever a node is created/retired, its corresponding field is set to the current value of the global clock.
Additionally, the global clock is advanced if a node is retired.
Unlike with hazard pointers, a protection does not announce the pointer of the to-be-protected node but the current global clock.
Then, the deletion of a retired node is deferred if some thread has announced a time that lies in between that nodes creation and retirement time.

This adaptation of hazard pointers was specifically designed to feature the same API as HP.
For a specification of HE, however, we need to make explicit the time stamps that are announced and contained in nodes.
To support this SMR algorithm in our theory, one would need to adapt the definition of histories such that an event containing an address $\anadr$ is extended to also contain the values of $\anadr$ time stamp selector.
We believe that such a generalization can be implemented with minor changes to our development.

%!TEX root = ../../../main.tex

\paragraph{Interval-Based Reclamation (IBR) \cite{DBLP:conf/ppopp/WenICBS18}}
Reclamation scheme similar to HE.
Like in HE, there is a global clock and nodes carry a creation and retirement time stamp.
Unlike in HE, the global clock is advanced after a configurable number of allocations.
For protection, a thread announces a time interval.
Then, every node the creation-retirement interval of which has a non-empty intersection with the announced interval is protected.
That is, upon operation start a thread announces an interval containing only the current global epoch.
Whenever a pointer is read, the announced interval is extended to include the current epoch.

%!TEX root = ../../../main.tex

\paragraph{DEBRA \cite{DBLP:conf/podc/Brown15}}
DEBRA is an QSBR implementation.
There is also a version that adds fault tolerance using signals and non-local gotos.
The main idea is to \emph{neutralize} threads by sending them a signal.
The signal handler then forces the thread to enter a quiescent state and restart its operation.
Intuitively, the restart is done via non-local goto: it resets the thread's state back to when it started executing the operation.

Since this work gives an QSBR implementation, it uses the same specification as ordinary QSBR.
%!TEX root = ../../../main.tex

\paragraph{\cite{DBLP:conf/iwmm/DiceHK16}}
Optimized HP implementation for weak memory which exploits memory page management of modern operating systems.
More specifically, the authors observe that issuing a \emph{write-protect} to a memory page forces each processor to first finish pending writes to the to-be-protected memory page.
One can exploit this behavior write-protecting all pages that contain hazard pointers before starting to reclaim memory.

Since this work gives a HP implementation, it uses the exact same specification as ordinary HP.

%!TEX root = ../../../main.tex

\newcommand{\clearpage\newpage\input{}}[1]{\clearpage\newpage\input{#1}}

\section{Missing Details}
\label{appendix:definitions}

We provide details missing in the main paper.

%!TEX root = ../../../main.tex

\subsection{Programs}
\label{appendix:definitions:programs}

\begin{definition}[Lifted Memory]
	We lift a memory $\aheap$ to sets by $\aheap(E):=\setcond{\aheap(e)}{e\in E}\setminus\set{\segval}$.% for some $E\subset\pexp\cup\dexp$.
\end{definition}

\begin{definition}[In-use Addresses]
	The \emph{in-use addresses} in a memory $\aheap$ are \[\adrof{\aheap}:=(\rangeof{\aheap}\cup\domof{\aheap})\cap\adr\] where we use $\set{\psel{\anadr}}\cap\adr=\anadr$ and similarly for data selectors.
\end{definition}

\begin{definition}[Fresh Addresses]
	\label{def:fresh-addresses}
	The \emph{fresh} addresses in $\tau$, denoted by $\freshof{\tau}\subseteq\adr$, are:
	\begin{align*}
		\freshof{\epsilon} = &~ \adr \\
		\freshof{\tau.\anact} = &~ \freshof{\tau}\setminus\set{\anadr} &&\text{if } \anact=(\athread,\freeof{\anadr},\anup) \\
		\freshof{\tau.\anact} = &~ \freshof{\tau}\setminus\set{\anadr} &&\text{if } \anact=(\athread,\freeof{\apvar},\anup) \wedge \heapcomputof{\tau}{\apvar}=\anadr \\
		\freshof{\tau.\anact} = &~ \freshof{\tau}\setminus\set{\anadr} &&\text{if } \anact=(\athread,\apvar:=\malloc,\anup) \wedge \heapcomputof{\tau.\anact}{\apvar}=\anadr \\
		\freshof{\tau.\anact} = &~ \freshof{\tau} &&\text{otherwise.}
	\end{align*}
\end{definition}

\begin{definition}[Freed Addresses]
	\label{def:freed-addresses}
	The \emph{freed} addresses in $\tau$, denoted by $\freedof{\tau}\subseteq\adr$, are:
	\begin{align*}
		\freedof{\epsilon} = &~ \emptyset \\
		\freedof{\tau.\anact} = &~ \freedof{\tau} \cup \set{\anadr} &&\text{if } \anact=(\athread,\freeof{\anadr},\anup) \\
		\freedof{\tau.\anact} = &~ \freedof{\tau} \cup \set{\anadr} &&\text{if } \anact=(\athread,\freeof{\apvar},\anup) \wedge \heapcomputof{\tau}{\apvar}=\anadr \\
		\freedof{\tau.\anact} = &~ \freedof{\tau} \setminus \set{\anadr} &&\text{if } \anact=(\athread,\apvar:=\malloc,\anup) \wedge \heapcomputof{\tau.\anact}{\apvar}=\anadr \\
		\freedof{\tau.\anact} = &~ \freedof{\tau} &&\text{otherwise.}
	\end{align*}
\end{definition}

\begin{definition}[Allocated Addresses]
	The \emph{allocated} addresses in $\tau$ are those addresses that are neither fresh nor freed: $\allocatedof{\tau}:=\adr\setminus(\freshof{\tau}\cup\freedof{\tau})$.
\end{definition}

\begin{definition}[Induced Histories]
	\label{def:induced_histories}
	The history \emph{induced by} a computation $\tau$, denoted by $\historyof{\tau}$, is defined by:
	\begin{align*}
		\historyof{\epsilon} = &~ \epsilon \\
		\historyof{\tau.(\athread,\freeof{\anadr},\anup)} = &~ \historyof{\tau} \hconcat \freeof{\anadr} \\
		\historyof{\tau.(\athread,\freeof{\apvar},\anup)} = &~ \historyof{\tau} \hconcat \freeof{\heapcomputof{\tau}{\apvar}} \\
		\historyof{\tau.(\athread,\enterof{\afuncof{\vecof{\apvar},\vecof{\advar}}},\anup)} = &~ \historyof{\tau} \hconcat \evt{\afunc}{\athread,\heapcomputof{\tau}{\vecof{\apvar}},\heapcomputof{\tau}{\vecof{\advar}}} \\%&&\text{if }\heapcomputof{\tau}{\apvar_i}\in\adr\\
		\historyof{\tau.(\athread,\exit,\anup)} = &~ \historyof{\tau} \hconcat \exitof{\athread} \\
		\historyof{\tau.\anact} = &~  \historyof{\tau} &&\text{otherwise.}
	\end{align*}
\end{definition}

\begin{assumption}[Environment Frees]
	\label{assumption:frees-do-not-affect-control}
	For any $\tau.\anact\in\allsem$ with $\anact=(\athread,\freeof{\anadr},\anup)$ we have $\controlof{\tau}=\controlof{\tau.\anact}$.
\end{assumption}
%!TEX root = ../../../main.tex

\subsection{Compositionality}
\label{appendix:theory-compositionality}

Consider $\ads[\ansmr]$, a data structure $\adsraw$ using an SMR implementation $\ansmr$.
We split the variables of $\ads[\ansmr]$ into the variables of $\adsraw$ and $\ansmr$.
That is, we have $\pvars=\pvarsds\uplus\pvarssmr$ where $\pvarsds$ and $\pvarssmr$ are the pointer variables of $\adsraw$ and $\ansmr$, respectively.
Similarly, we have $\dvars=\dvarsds\uplus\dvarssmr$.

We use $\controlallof[\athread]{\tau}$ to make precise the return label for calls of $\ansmr$ performed by thread $\athread$ of $\adsraw$ in computation $\tau$.
That is, $\controlallof[\athread]{\tau}=(\apc,\apcp)$ where $\apc$ and $\apcp$ are labels for the next command to be executed in $\adsraw$ and $\ansmr$, respectively.
If $\apcp=\bot$, this meas that $\athread$ is currently executing $\adsraw$ code and is not inside an SMR call.
That $\apcp\neq\bot$ means $\athread$ is inside an SMR function and executing $\ansmr$ code.
In such cases, $\apc$ holds the return location.
With respect to \Cref{sec:programs}, this means that $\apc$ holds a label to the corresponding $\exit$ command.
We assume that program steps change $\apc$ only if $\apcp=\bot$.
For simplicity, we write $\controlallof{\tau}$ and mean a map from threads $\athread$ to $\controlallof[\athread]{\tau}$.

Now, we introduce a separation of memory into the parts of $\adsraw$ and $\ansmr$.
The separation is denoted by $\heapcomputall{\tau}=(\heapcomputds{\tau},\heapcomputsmr{\tau})$ and is inductively defined similarly to $\heapcomput{\tau}$.
In the base case, we have $\heapcomputsomeof{\aprog}{\epsilon}{\apvar}=\segval$, $\heapcomputsomeof{\aprog}{\epsilon}{\advar}=0$, and $\heapcomputsomeof{\aprog}{\epsilon}{e}=\bot$ for all $\apvar\in\pvarssome{\aprog}$, $\advar\in\dvarssome{\aprog}$, and $e\notin\pvarssome{\aprog}\cup\dvarssome{\aprog}$ for $\aprog\in\set{\adsraw,\ansmr}$.
In the induction step we have $\heapcomputall{\tau.\anact}=(\heapcomputds{\tau.\anact},\heapcomputsmr{\tau.\anact})$ for $\anact=(\athread,\acom,\anup)$ as follows.
\begin{compactitem}
	\item If $\acom\equiv\freeof{\apvar}$, then $\heapcomputds{\tau.\anact}=\heapcomputds{\tau}[\anup]$ and $\heapcomputsmr{\tau.\anact}=\heapcomputsmr{\tau}[\anup]$.
	\item If $\controlsmrof{\tau}=\bot$, then then $\heapcomputds{\tau.\anact}=\heapcomputds{\tau}[\anup]$.
	\item If $\controlsmrof{\tau}\neq\bot$ and $\acom\not\equiv\freeof{\apvar}$, $\heapcomputsmr{\tau.\anact}=\heapcomputsmr{\tau}[\anup]$.
\end{compactitem}
(Recall that we assume that $\adsraw$ does not perform any $\free$, instead uses $\ansmr$ to perform them safely.)

Now, we can make precise the requirement on $\ads[\ansmr]$ to enable compositional verification.
As stated in \Cref{sec:observers}, we require that $\ansmr$ influences $\adsraw$ only through frees.
Formally, we need a separation of the memory regions they use and that they do not use regions that do not belong to them.
We define this separation on the level of computations.

\begin{definition}[Compositional Computations]
	A computation $\tau\in\allsem[{\ads[\ansmr]}]$ \emph{is compositional}, if for every prefix $\sigma.\anact$ of $\tau$ we have:
	\begin{compactenum}[(i)]
		\item $\heapcomput{\sigma.\anact}=\heapcomputds{\sigma.\anact}\uplus\heapcomputsmr{\sigma.\anact}$,
		\item $\heapcomputof{\sigma.\anact}{e}=\heapcomputdsof{\sigma.\anact}{e}$ for all $e\in\pvarsds\cup\pvarssmr$, and
		\item if $\anact=(\athread,\acom,\anup)$ with $\acom\equiv\mathit{expr}:=\sel{\apvar}{sel}$ and $\heapcomputof{\sigma}{\apvar}=\anadr$, then $\heapcomputof{\sigma}{\sel{\anadr}{sel}}=\heapcomputdsof{\sigma}{\sel{\anadr}{sel}}$.
	\end{compactenum}
\end{definition}

By definition, $\epsilon$ is compositional.
Moreover, if $\tau.\anact$ is compositional, so is $\tau$.
We lift this definition to programs by considering the set of their computations.

\begin{definition}[Compositional Programs]
	We say $\ads[\ansmr]$ \emph{is compositional} if all $\tau\in\allsem[{\ads[\ansmr]}]$ are.
\end{definition}

Recall from \Cref{sec:observers} that $\satisfies{\ansmr}{\smrobs}$ if $\historyof{\allsem[MGC(\ansmr)]}\subseteq\specof{\smrobs}$.
Together with compositionality of $\ads[\ansmr]$ we can show that $\ads[\ansmr]$ and $\ads$ reach the same control states wrt. $\adsraw$.

\begin{lemma}
	\label{thm:compositional-computations}
	Assume $\satisfies{\ansmr}{\smrobs}$.
	Let $\tau\in\allsem$ be compositional.
	There is $\sigma\in\allsem$ such that $\controldsof{\tau}=\controlof{\sigma}$, $\heapcomputds{\tau}=\heapcomput{\sigma}$, $\freshof{\tau}\subseteq\freshof{\sigma}$, and $\freedof{\tau}\subseteq\freedof{\sigma}$.
\end{lemma}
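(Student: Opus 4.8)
\textbf{Proof plan for \Cref{thm:compositional-computations}.}
The plan is to proceed by induction on the length of the compositional computation $\tau\in\allsem[{\ads[\ansmr]}]$, building the witness $\sigma\in\allsem$ step by step. The base case $\tau=\epsilon$ is immediate: take $\sigma=\epsilon$; compositionality gives $\heapcomput{\epsilon}=\heapcomputds{\epsilon}\uplus\heapcomputsmr{\epsilon}$, and since the $\adsraw$-part of the initial memory is exactly the initial memory of $\ads$, the four conditions hold with $\freshof{\epsilon}=\adr=\freshof{\epsilon}$ and $\freedof{\epsilon}=\emptyset=\freedof{\epsilon}$. For the induction step, consider $\tau.\anact$ compositional with $\anact=(\athread,\acom,\anup)$; then $\tau$ is compositional, so by hypothesis we have $\sigma$ with $\controldsof{\tau}=\controlof{\sigma}$, $\heapcomputds{\tau}=\heapcomput{\sigma}$, $\freshof{\tau}\subseteq\freshof{\sigma}$, $\freedof{\tau}\subseteq\freedof{\sigma}$. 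I would then do a case analysis on $\acom$.

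The cases split according to whether $\athread$ is executing $\adsraw$ code or $\ansmr$ code after $\tau$, i.e. whether $\controlsmrof{\tau}=\bot$. If $\controlsmrof{\tau}\neq\bot$ and $\acom$ is not a $\free$, the action touches only $\heapcomputsmr$ and, by compositionality, does not change $\heapcomputds{\tau}$ nor $\controldsof{\tau}$ (program steps change $\apc$ only when $\apcp=\bot$); so I extend $\sigma$ by nothing (take $\sigma$ unchanged), and all four invariants persist. Symmetrically, the $\enter$ action that starts an SMR call and the $\exit$ action that ends one do not move the $\adsraw$ program counter or memory, so again $\sigma$ is left unchanged — this is exactly where the computations of $\ads[\epsilon]$ "erase" the SMR-specific actions, as described in \Cref{sec:observers}. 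If $\controlsmrof{\tau}=\bot$ and $\acom$ is an ordinary $\adsraw$ command (assignment, memory access, assertion), compositionality conditions (ii) and (iii) guarantee that the values read — whether from pointer variables, data variables, or selectors $\sel{\anadr}{sel}$ of addresses reachable from $\adsraw$ pointers — agree in $\heapcomput{\tau}$ and $\heapcomputds{\tau}=\heapcomput{\sigma}$, so the very same command with the very same update is enabled after $\sigma$; I append it, and conditions (i)–(iii) for $\tau.\anact$ together with the induction hypothesis re-establish the invariants. The $\malloc$ case: $\adsraw$ allocates some $\anadr\in\freshof{\tau}\cup(\freedof{\tau}\cap X)$; since $\freshof{\tau}\subseteq\freshof{\sigma}$ and $\freedof{\tau}\subseteq\freedof{\sigma}$, the same address is available in $\sigma$ (in $\allsem$ all addresses may be reused), so I append the identical $\malloc$ action and the fresh/freed inclusions are preserved (both sides remove $\anadr$ from fresh and freed).

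The genuinely delicate case is $\acom\equiv\freeof{\apvar}$, performed by $\ansmr$ inside a call. Here $\ads$ does not execute library code; instead the $\mathbf{(Free)}$ rule fires a free by the environment, which is allowed only if $\historyof{\sigma}.\freeof{\anadr}\in\specof{\smrobs}$, where $\anadr=\heapcomputof{\tau}{\apvar}$. This is the step where I need $\satisfies{\ansmr}{\smrobs}$: the prefix of $\tau$ consisting of $\ansmr$'s actions is (an interleaving of) a computation of a most general client of $\ansmr$, hence $\historyof{\tau}\in\specof{\smrobs}$, and in particular the free of $\anadr$ that $\tau.\anact$ performs keeps the induced history in $\specof{\smrobs}$; since $\historyof{\sigma}$ and $\historyof{\tau}$ induce the same sequence of $\enter$/$\exit$/$\free$ events on the $\adsraw$-visible side — because $\sigma$ mirrored every SMR call and its arguments are $\adsraw$-visible pointer values which agree by compositionality (ii) — I can conclude $\historyof{\sigma}.\freeof{\anadr}\in\specof{\smrobs}$ and fire $\mathbf{(Free)}$. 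Since $\freeof{\apvar}$ in $\ansmr$ and $\freeof{\anadr}$ from the environment have the same memory update $[\psel{\anadr}\mapsto\bot,\dsel{\anadr}\mapsto\bot]$, and by compositionality this update restricted to $\heapcomputds$ is exactly the change to $\heapcomput{\sigma}$, the memory-equality invariant is preserved, as are the fresh/freed inclusions. I expect the main obstacle to be precisely the bookkeeping that the $\ansmr$-generated history, as seen through the $\enter$/$\exit$ events, really matches the most-general-client semantics and that its free events coincide with those the environment is permitted to perform after $\sigma$; making the correspondence between "the SMR-call events emitted along $\tau$" and "the events fed to $\smrobs$ in the $\mathbf{(Free)}$ rule along $\sigma$" fully precise — including that no $\adsraw$-action emits an SMR event and no $\ansmr$-action (other than $\free$) touches $\adsraw$ memory — is the crux, and it is exactly what compositionality of $\ads[\ansmr]$ and $\satisfies{\ansmr}{\smrobs}$ are designed to deliver.
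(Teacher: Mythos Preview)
Your overall strategy---induction on $\tau$, splitting by whether the executing thread is inside an SMR call, and using the environment \textbf{(Free)} rule to replay each $\freeof{\apvar}$---matches the paper exactly. But there is a concrete error in how you treat $\enter$ and $\exit$.

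You write that ``the $\enter$ action that starts an SMR call and the $\exit$ action that ends one do not move the $\adsraw$ program counter or memory, so again $\sigma$ is left unchanged.'' This is wrong on both counts and breaks the proof. First, $\enter$ is executed while $\controlsmrof[\athread]{\tau}=\bot$, and it \emph{does} advance the $\adsraw$ program counter (to the point after the call); likewise $\exit$ moves the control back into $\adsraw$. If you do not append them to $\sigma$, the invariant $\controldsof{\tau}=\controlof{\sigma}$ fails immediately. Second, and more damaging, $\enter$ and $\exit$ are precisely the actions that emit events into the induced history: $\historyof{\tau.(\athread,\enterof{\afuncof{\vecof{\apvar},\vecof{\advar}}},\emptyset)}=\historyof{\tau}.\afunc(\athread,\ldots)$. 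If you drop them from $\sigma$, then $\historyof{\sigma}$ contains no $\retire$ events at all, and the \textbf{(Free)} rule---which requires $\historyof{\sigma}.\freeof{\anadr}\in\specof{\smrobs}$---can never fire against $\baseobs$. You tacitly acknowledge this later when you argue that ``$\sigma$ mirrored every SMR call,'' but that directly contradicts your earlier choice to leave $\sigma$ unchanged on $\enter$/$\exit$. Note also that \Cref{sec:observers} says the computations of $\ads[\epsilon]$ remove the SMR-specific actions \emph{between} $\enter$ and $\exit$; the $\enter$ and $\exit$ themselves are retained.

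The fix is simple and is what the paper does: treat $\enter$ (and $\exit$) like any other $\adsraw$-side action and append them verbatim to $\sigma$. Once you do this, $\historyof{\tau}$ and $\historyof{\sigma}$ agree event-for-event along the construction (the only SMR-internal actions that contribute to the history are frees, and you already replay those as environment frees), and the justification $\historyof{\sigma}.\freeof{\anadr}\in\specof{\smrobs}$ follows directly from $\satisfies{\ansmr}{\smrobs}$ applied to $\tau.\anact$, without the extra bookkeeping you anticipate in your last paragraph.
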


\begin{proof}[Proof of \Cref{thm:compositional-computations}]
	We proceed by induction over the structure of $\tau$.
	In the base case, we have $\epsilon$ and the claim follows by definition.
	So consider some compositional $\tau.\anact$ such that we have already constructed for $\tau$ a $\sigma$ with the desired properties.
	Now, we need to construct a $\sigma'$ for $\tau.\anact$.
	Let $\anact=(\athread,\acom,\anup)$.
	We distinguish three cases.
	\begin{casedistinction}
		\item[${\controlsmrof[\athread]{\tau}}\neq\bot$ and $\acom\not\equiv\freeof{\apvar}$]
			In this case $\anact$ stems from $\athread$ executing SMR code of $\ansmr$.
			By compositionality of $\tau$ we have $\heapcomputds{\tau.\anact}=\heapcomputds{\tau}$.
			So $\heapcomputds{\tau.\anact}=\heapcomput{\sigma}$ by induction.
			Since $\acom$ is not a $\free$, we get $\freshof{\tau.\anact}\subseteq\freshof{\tau}\subseteq\freshof{\sigma}$ and $\freedof{\tau.\anact}\subseteq\freedof{\tau}\subseteq\freedof{\sigma}$.
			Now, recall that we assumed that actions of $\ansmr$ do not change the control of $\adsraw$.
			That is, we have $\controldsof{\tau.\anact}=\controldsof{\tau}=\controlof{\sigma}$.
			Altogether, this means that $\sigma'=\sigma$ is an adequate choice.

		\item[$\acom\equiv\freeof{\apvar}$]
			Recall that we have assumed that only $\ansmr$ performs frees.
			Hence, we get $\historyof{\tau}.\freeof{\anadr}\in\specof{\smrobs}$ from $\satisfies{\ansmr}{\smrobs}$.
			That is, for $\anactp=(\bot,\freeof{\anadr},\anup)$ we have $\sigma.\anactp\in\allsem$.
			By definition we get $\heapcomputds{\tau.\anact}=\heapcomputds{\tau}[\anup]=\heapcomput{\sigma}[\anup]=\heapcomput{\sigma.\anactp}$.
			Moreover, we have
			\begin{align*}
				\freshof{\tau.\anact}=\freshof{\tau}&\subseteq\freshof{\sigma}=\freshof{\sigma.\anactp}
				\\\text{and}\qquad
				\freedof{\tau.\anact}\freedof{\tau}\cup\set{\anadr}&\subseteq\freedof{\sigma}\cup\set{\anadr}=\freedof{\sigma.\anactp}
				\ .
			\end{align*}
			As in the previous case, we get $\controldsof{\tau.\anact}=\controldsof{\tau}=\controlof{\sigma}$ because $\anact$ stems from executing $\ansmr$.
			Altogether, $\sigma'=\sigma.\anactp$ is an adequate choice.

		\item[${\controlsmrof[\athread]{\tau}}=\bot$]
			We choose $\sigma'=\sigma.\anact$.
			We have to show that $\acom$ is enabled and produces the same update $\anup$ after $\sigma$.
			This follows from $\controldsof{\tau}=\controlof{\sigma}$ and compositionality of $\tau.\anact$.
			The former states that $\acom$ can be executed.
			The latter implies that the same update is performed.
			To see this, note that compositionality gives $\heapcomputof{\tau}{\apvar}=\heapcomputdsof{\tau}{\apvar}$.
			So by induction $\heapcomputof{\tau}{\apvar}=\heapcomputof{\sigma}{\apvar}$.
			And similarly for data variables and selectors appearing on the right-hand side of assignments.
			Since the freed and fresh addresses are not changed, $\sigma'=\sigma.\anact$ is an adequate choice.
	\end{casedistinction}
	The above case distinction is complete.
	This concludes the claim.
\end{proof}

We use the above \namecref{thm:compositional-computations} to establish \Cref{thm:compositionality}.
Here, we focus on establishing safety properties.
Hence, correctness boils down to control state reachability.
We assume that the correctness of $\ads[\ansmr]$ depends only on $\adsraw$.
Let $\bad$ be those control locations of $\adsraw$ that a correct program must not reach.

\begin{definition}[Correctness]
	A data structure $\ads[\ansmr]$ \emph{is correct} if there is no $\tau\in\allsem[{\ads[\ansmr]}]$ with $\controldsof{\tau}\in\bad$.
	Similarly, $\ads$ is correct if there is no $\sigma\in\allsem$ with $\controlof{\sigma}\in\bad$.
\end{definition}

With this notion of correctness avoiding \emph{bad} control states, we are able to prove \Cref{thm:compositionality}.

\begin{proof}[Proof of \Cref{thm:compositionality}]
	Let $\satisfies{\ansmr}{\smrobs}$.
	To show that the correctness of $\ads$ implies the correctness of $\ads[\ansmr]$ we show the contra positive.
	So let $\ads[\ansmr]$ be incorrect.
	By definition, there is $\tau\in\allsem[{\ads[\ansmr]}]$ with $\controldsof{\tau}\in\bad$.
	From \Cref{thm:compositional-computations} we get $\sigma\in\allsem$ such that $\controlof{\sigma}=\controldsof{\tau}$.
	That is, $\controlof{\sigma}\in\bad$.
	Thus, $\ads$ is incorrect.
	This concludes the claim.
\end{proof}

\paragraph{A Remark on Checking Compositionality}
Compositionality can be checked using the techniques from \Cref{sec:reduction}.
If $\adsraw$ breaks compositionality, then this manifests as a pointer race (unsafe access) in $\allsem$.
One can show that such pointer races are absent if $\onesem$ is pointer race free.
So our verification approach establishes that if $\ads[\ansmr]$ breaks compositionality, then it is due to $\ansmr$.

To check whether or not $\ansmr$ breaks compositionality, one could use the same pointer race freedom approach as for $\adsraw$.
However, since the precise shape of $\ansmr$ is a parameter to our development, such a proof is beyond the scope of this paper.
As an alternative approach, one can check whether or not $\ansmr$ modifies the memory of $\adsraw$.
This may allow for a much simpler (data flow) analysis.

%!TEX root = ../../../main.tex

\subsection{Reduction}
\label{appendix:definitions:reduction}

\begin{definition}[Valid Expressions]
	\label{Definition:Validity}
	The \emph{valid pointer expressions} in a computation $\tau\in\allsem$, denoted by $\validof{\tau}\subseteq\pexp$, are defined by:
	\begin{align*}
		% base case
		\validof{\epsilon}&:=\pvars
			\\
		% p = q
		\validof{\tau.(\athread, \apvar:=\apvarp, \anup)}
			&:= \validof{\tau}\cup\set{\apvar}
			&&\text{if }\apvarp\in \validof{\tau}
			\\
		\validof{\tau.(\athread, \apvar:=\apvarp, \anup)}
			&:= \validof{\tau}\setminus\set{\apvar}
			&&\text{if }\apvarp\notin \validof{\tau}
			\\
		% p.next = q
		\validof{\tau.(\athread, \psel{\apvar}:=\apvarp, \anup)}
			&:= \validof{\tau}\cup\set{\psel{\anadr}}
			&&\text{if }\heapcomputof{\tau}{\apvar}=\anadr\in\adr \wedge \apvarp\in \validof{\tau}
			\\
		\validof{\tau.(\athread, \psel{\apvar}:=\apvarp, \anup)}
			&:= \validof{\tau}\setminus\set{\psel{\anadr}}
			&&\text{if }\heapcomputof{\tau}{\apvar}=\anadr\in\adr \wedge \apvarp\notin \validof{\tau}
			\\
		% p = q.next
		\validof{\tau.(\athread, \apvar:=\psel{\apvarp}, \anup)}
			&:= \validof{\tau}\cup\set{\apvar}
			&&\text{if }\heapcomputof{\tau}{\apvarp}=\anadr\in\adr \wedge \psel{\anadr}\in\validof{\tau}
			% &&\text{if }\apvarp\in \validof{\tau}\wedge \psel{\heapcomputof{\tau}{\apvarp}}\in\validof{\tau}
			\\
		\validof{\tau.(\athread, \apvar:=\psel{\apvarp}, \anup)}
			&:= \validof{\tau}\setminus\set{\apvar}
			&&\text{if }\heapcomputof{\tau}{\apvarp}=\anadr\in\adr \wedge \psel{\anadr}\notin\validof{\tau}
			% &&\text{if }\apvarp\notin \validof{\tau}\vee \psel{\heapcomputof{\tau}{\apvarp}}\notin\validof{\tau}
			\\
		% free
		\validof{\tau.(\athread, \freeof{\anadr}, \anup)}
			&:=\validof{\tau}\setminus \invalidof{\anadr}
			\\
		% malloc
		\validof{\tau.(\athread, \apvar:=\malloc, \anup)}
			&:=\validof{\tau}\cup\set{\apvar,\psel{\anadr}}
			&&\text{if }[\apvar\mapsto\anadr]\in\anup
			\\
		% assert
		\validof{\tau.(\athread, \assert\ \apvar=\apvarp, \anup)}
			&:=\validof{\tau}\cup\set{\apvar,\apvarp}
			&&\text{if }\set{\apvar,\apvarp}\cap\validof{\tau}\neq\emptyset
			\\
		% rest
		\validof{\tau.(\athread, \anact, \anup)}
			&:=\validof{\tau}
			&&\text{otherwise.}
	\end{align*}
	We have $\invalidof{\anadr}:=\setcond{\apvar}{\heapcomputof{\tau}{\apvar}=\anadr} \cup \setcond{\psel{\anadrp}}{\heapcomputof{\tau}{\psel{\anadrp}}=\anadr} \cup \set{\psel{\anadr}}$.
\end{definition}

\begin{definition}[Replacements]
	\label{def:replacements}
	A \emph{replacement of address $\anadr$ to $\anadrp$ in a history $\ahist$}, denoted by $\renamingof{\ahist}{\anadr}{\anadrp}$, replaces in $\ahist$ every occurrence of $\anadr$ with $\anadrp$, and vice versa, as follows:
	\begin{align*}
		\renamingof{\epsilon}{\anadr}{\anadrp} = &\;\epsilon
		\\
		\renamingof{\big(\ahist.\afunc(\vecof{\anadrpp},\vecof{\advalue})\big)}{\anadr}{\anadrp} = &\;\big(\renamingof{\ahist}{\anadr}{\anadrp}\big).\big(\afunc(\vecof{\anadrpp'},\vecof{\advalue})\big)
		\\
		&\;\text{with }
		\anadrpp_i'=\anadr \text{ if } \anadrpp_i=\anadrp,~
		\anadrpp_i'=\anadrp \text{ if } \anadrpp_i=\anadr,~\text{and}~
		\anadrpp_i'=\anadrpp \text{ otherwise}
		\\
		\renamingof{\ahist.\anevent}{\anadr}{\anadrp} = &\;\renamingof{\ahist}{\anadr}{\anadrp} \qquad\text{otherwise.}
	\end{align*}
\end{definition}

%%%%%%%%%%%%%%%%%%%%%%%%%%%%%%%%%%%%%%%%%%%%%%%%%%%%%%%%%%%%%%%%%%%%%%%%%%%%%%%%%%%%%%%%%%%%%%%%%%%
%%%%%%%%%%%%%%%%%%%%%%%%%%%%%%%%%%%%%%%%%%%%%%%%%%%%%%%%%%%%%%%%%%%%%%%%%%%%%%%%%%%%%%%%%%%%%%%%%%%
%%%%%%%%%%%%%%%%%%%%%%%%%%%%%%%%%%%%%%%%%%%%%%%%%%%%%%%%%%%%%%%%%%%%%%%%%%%%%%%%%%%%%%%%%%%%%%%%%%%
%%%%%%%%%%%%%%%%%%%%%%%%%%%%%%%%%%%%%%%%%%%%%%%%%%%%%%%%%%%%%%%%%%%%%%%%%%%%%%%%%%%%%%%%%%%%%%%%%%%
\subsection{Generalized Reduction}
\label{appendix:theory-generalization}

In \Cref{sec:reduction} we required the absence of harmful ABAs and elision support as premise for the reduction result.
We present a strictly weaker premise that implies the same reduction result.

\begin{definition}[Reduction Compatibility]
	\label{def:reduction-compatible}
	The set $\onesem$ is \emph{reduction compatible} if:
	\begin{align*}
		\forall\,\tau.\anact&\in\allsem~
		\forall\,\sigma_\anadr.\anact\in\adrsem{\anadr}~
		\forall\,\sigma_\anadrp\in\adrsem{\anadrp}.~~
		%\\\qquad\qquad
		\\&
		\tau\computequiv\sigma_\anadr
		\:\wedge\:
		\tau\obsrel\sigma_\anadr
		\:\wedge\:
		\tau\heapequiv[\anadr]\sigma_\anadr
		\:\wedge\:
		\tau.\anact\computequiv\sigma_\anadr.\anact
		\:\wedge\:
		\tau.\anact\obsrel\sigma_\anadr.\anact
		\:\wedge\:
		\\&
		\tau.\anact\heapequiv[\anadr]\sigma_\anadr.\anact
		\:\wedge\:
		\tau\computequiv\sigma_\anadrp
		\:\wedge\:
		\tau\obsrel\sigma_\anadrp
		\:\wedge\:
		\tau\heapequiv[\anadrp]\sigma_\anadrp
		\:\wedge\:
		\sigma_\anadr\computequiv\sigma_\anadrp
		\:\wedge\:
		\anadr\neq\anadrp
		\:\wedge\:
		\\&
		\anact\in\big\{\,(\_,\assertof{\_},\_),\, (\_,\apvar:=\malloc,[\apvar\mapsto\anadr,\_])\,\big\}
		\\\:\implies\:&
		\tau.\anact\computequiv\sigma_\anadrp'
		\:\wedge\:
		\tau.\anact\obsrel\sigma_\anadrp'
		\:\wedge\:
		\tau.\anact\heapequiv[\anadrp]\sigma_\anadrp'
		\intertext{and}
		\forall\,\ahist,\anadr,\anadrp.~~&
		\anadr\neq\anadrp\implies
		\freeableof{\ahist.\freeof{\anadr}}{\anadrp}=\freeableof{\ahist}{\anadrp}
		\ .
	\end{align*}
\end{definition}

The analog of the results from \Cref{sec:reduction} hold for the above generalized definition.

\begin{proposition}
	\label{thm:RPRF-implies-GCplus}
	Let $\onesem$ be reduction compatible and pointer race free.
	Then, for all $\tau\in\allsem$ and all $\anadr\in\adr$ there is $\sigma\in\onesem$ with $\tau\computequiv\sigma$, $\tau\heapequiv[\anadr]\sigma$, and $\tau\obsrel[\anadr]\sigma$.
\end{proposition}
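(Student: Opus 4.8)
The plan is to prove \Cref{thm:RPRF-implies-GCplus} by induction on the structure of $\tau\in\allsem$, constructing simultaneously for \emph{every} address $\anadr\in\adr$ a mimicking computation $\sigma_\anadr\in\adrsem{\anadr}$ satisfying $\tau\computequiv\sigma_\anadr$, $\tau\heapequiv[\anadr]\sigma_\anadr$, and $\tau\obsrel\sigma_\anadr$. The reason to carry all the $\sigma_\anadr$ along at once, rather than fixing one $\anadr$, is exactly the issue visible in \Cref{def:reduction-compatible} and in the proof sketch of the earlier lemma: when $\anact$ performs an ABA assertion or an allocation $\apvar:=\malloc$ of an address $\anadrp\neq\anadr$ that is about to be reused, mimicking it in $\sigma_\anadr$ may require invoking reduction compatibility with a \emph{second} witness $\sigma_\anadrp$, which we can only supply if it has been constructed in parallel. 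In the base case $\tau=\epsilon$, we take $\sigma_\anadr=\epsilon$ for all $\anadr$, and all three relations hold trivially since no pointer is valid and the memories are the initial memory.

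For the inductive step, consider $\tau.\anact\in\allsem$ with the family $(\sigma_\anadr)_{\anadr\in\adr}$ already built for $\tau$. Fix a target address $\anadr$ and case-split on $\anact=(\athread,\acom,\anup)$, mirroring the proof sketch of the earlier lemma. \emph{Assignments and (non-racy) SMR calls:} pointer race freedom of $\onesem$ guarantees no unsafe access and no racy SMR call occurs, so we append the same command with its (possibly different) update $\anupp$ to $\sigma_\anadr$; similarity is preserved because the differing parts of the memory are invalid, and $\obsrel$ is preserved because a non-racy call by definition cannot enlarge the freeable sets on valid addresses in the wrong direction. \emph{Frees $\freeof{\anadrp}$:} if enabled after $\sigma_\anadr$ (in particular when $\anadrp$ is valid or $\anadrp=\anadr$) we just append it; otherwise $\anadrp$ is invalid, the free does not touch the valid memory or control locations, and we skip it, using the second clause of reduction compatibility ($\freeableof{\ahist.\freeof{\anadrp}}{\anadrpp}=\freeableof{\ahist}{\anadrpp}$ for $\anadrpp\neq\anadrp$) to conclude $\tau.\anact\obsrel\sigma_\anadr$ and $\tau.\anact\heapequiv[\anadr]\sigma_\anadr$. \emph{Allocations $\apvar:=\malloc$ of $\anadrp$:} if $\anadrp$ is fresh in $\sigma_\anadr$ or $\anadrp=\anadr$ the action is enabled and we append it; otherwise we apply the first clause of \Cref{def:reduction-compatible} to $\tau.\anact$, $\sigma_\anadr.\anact$ (with a fresh address substituted so the allocation goes through, as in the earlier sketch), and $\sigma_\anadrp$, obtaining the required $\sigma_\anadr'$. \emph{ABA assertions $\assertof{\apvar=\apvarp}$:} let $\apvar$ point to $\anadrp$ in $\tau$; by $\tau\heapequiv[\anadrp]\sigma_\anadrp$ the comparison has the same truth value in $\sigma_\anadrp$, so $\anact$ is enabled after $\sigma_\anadrp$, and the first clause of reduction compatibility applied to $\tau.\anact$, $\sigma_\anadrp.\anact$, and $\sigma_\anadr$ yields $\sigma_\anadr'$ with the three desired relations. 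Collecting these across all $\anadr$ completes the induction, and specializing to the single $\anadr$ requested in the statement gives the claim.

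\textbf{Main obstacle.} The genuinely delicate point is not any single case but the bookkeeping that makes the simultaneous induction legitimate: when in the ABA/allocation cases we feed $\sigma_\anadrp$ into reduction compatibility, we rely on $\sigma_\anadr\computequiv\sigma_\anadrp$ (both are in the family for $\tau$, hence both similar to $\tau$, hence similar to each other by transitivity of $\computequiv$) and on $\tau\heapequiv[\anadrp]\sigma_\anadrp$ — these are hypotheses of \Cref{def:reduction-compatible}, and I must check they are all discharged by the induction hypothesis for $\tau$ before the step, and re-established for $\tau.\anact$ afterwards for the \emph{new} address that the allocation turned valid (this last point is where elision-style reasoning would normally enter, but here it is packaged into the premise). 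The second subtlety is that reduction compatibility as stated quantifies over $\sigma_\anadr.\anact\in\adrsem{\anadr}$ and $\tau.\anact\in\allsem$ directly, so I must verify that the intermediate computation we construct (the one with a fresh address substituted to make a blocked $\malloc$ fire) is genuinely in $\adrsem{\anadr}$ and still stands in all three relations to $\tau$ — a renaming argument analogous to the one invoked via $\Cref{def:elision-support}\ref{def:elision-support:replace}$ in the earlier sketch, except that under the generalized premise that renaming-invariance must be re-derived from reduction compatibility rather than assumed as a separate property. I expect the renaming lemma and the transitivity/symmetry closure properties of $\computequiv$, $\obsrel$, and $\heapequiv[\anadr]$ to be the parts requiring the most care, though each is conceptually routine.
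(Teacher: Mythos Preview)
Your overall architecture is right and matches the paper: a simultaneous induction over $\tau$ that maintains a witness $\sigma_\anadr\in\adrsem{\anadr}$ for \emph{every} $\anadr$, with the case split on $\anact$ and the use of reduction compatibility to bridge the ABA and allocation cases. The ABA case is instantiated correctly.

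The allocation case, however, has the roles of the two witnesses swapped, and this is what creates the ``main obstacle'' you worry about. When $\tau$ allocates $\anadrp$ and $\anadrp$ is neither fresh in $\sigma_\anadr$ nor equal to $\anadr$, you propose to feed $\sigma_\anadr.\anact$ (after a fresh-address substitution) and $\sigma_\anadrp$ into \Cref{def:reduction-compatible}, hoping to extract $\sigma_\anadr'$. But the definition is shaped the other way: the first slot $\sigma_\anadr.\anact\in\adrsem{\anadr}$ is for the computation in which the action \emph{already works} (note $\anact=(\_,\apvar:=\malloc,[\apvar\mapsto\anadr,\_])$ allocates the \emph{first} index), and the conclusion produces an extension $\sigma_\anadrp'$ for the \emph{second} index. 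So the correct instantiation is exactly parallel to what you did for ABAs: take $\sigma_\anadrp.\anact\in\adrsem{\anadrp}$ (the allocation of $\anadrp$ is enabled there, since $\anadrp$ may be reused in $\adrsem{\anadrp}$), pair it with $\sigma_\anadr\in\adrsem{\anadr}$, and read off the desired $\sigma_\anadr'$. You must first check that appending $\anact$ to $\sigma_\anadrp$ preserves the three relations to $\tau.\anact$; that is the ``easy'' sub-case you already handled (allocated address equals the index).

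Once you instantiate this way, no fresh-address substitution and no renaming lemma is needed at all. The renaming argument from the proof sketch of the earlier lemma is precisely what elision support buys you there; under the generalized premise it is \emph{absorbed} into \Cref{def:reduction-compatible}, not something to be re-derived from it. Your concern that ``renaming-invariance must be re-derived from reduction compatibility'' is therefore misplaced: reduction compatibility hands you $\sigma_\anadr'$ directly, with the three relations in its conclusion, and the induction closes without touching $\renamingof{\cdot}{\anadrp}{\anadrpp}$.
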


\begin{theorem}
 	\label{thm:rprf-guarantee}
	We have $\allsem\computequiv\onesem$ provided $\onesem$ is reduction compatible and pointer race free.
	% If $\onesem$ is reduction compatible and pointer race free, then we have $\allsem\computequiv\onesem$.
\end{theorem}

The following \namecref{thm:ABA-awareness-and-elision-support-implies-reduction-compatibility} shows that the above is indeed a generalization of the results presented in the paper.

\begin{lemma}
	\label{thm:ABA-awareness-and-elision-support-implies-reduction-compatibility}
	Pointer race freedom (\Cref{definition:RPR}) of $\onesem$, the absence of harmful ABAs (\Cref{def:harmful-ABA}), and elision support (\Cref{def:elision-support}) implies reduction compatibility (\Cref{def:reduction-compatible}).
\end{lemma}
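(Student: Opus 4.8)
\textbf{Proof plan for \Cref{thm:ABA-awareness-and-elision-support-implies-reduction-compatibility}.}
The plan is to show that each of the three conjuncts in the premise of \Cref{def:reduction-compatible} follows from the hypotheses. The definition of reduction compatibility has, essentially, three obligations: (a) a case for ABA assertions $\anact=(\_,\assertof{\_},\_)$, (b) a case for allocations that reuse the ``fixed'' address $\anact=(\_,\apvar:=\malloc,[\apvar\mapsto\anadr,\_])$, and (c) the standalone fact that $\freeof{\anadr}$ does not affect $\freeableof{\cdot}{\anadrp}$ for $\anadr\neq\anadrp$. Obligation (c) is immediate: it is literally \Cref{def:elision-support}\ref{def:elision-support:free}, so no work is required there.

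For obligation (a), I would fix $\tau.\anact\in\allsem$ with $\anact$ an assertion, and computations $\sigma_\anadr.\anact\in\adrsem{\anadr}$ and $\sigma_\anadrp\in\adrsem{\anadrp}$ satisfying the hypothesized chain of relations with $\tau$ (similarity, behavior inclusion, memory equivalence wrt. $\anadr$ resp. $\anadrp$, and $\sigma_\anadr\computequiv\sigma_\anadrp$). The goal is to produce $\sigma_\anadrp'\in\adrsem{\anadrp}$ with $\tau.\anact\computequiv\sigma_\anadrp'$, $\tau.\anact\obsrel\sigma_\anadrp'$, and $\tau.\anact\heapequiv[\anadrp]\sigma_\anadrp'$. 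The key observation is that the hypotheses already hand us $\sigma_\anadr.\anact\computequiv\sigma_\anadrp$ (by transitivity through $\tau$ and $\tau.\anact\computequiv\sigma_\anadr.\anact$, using $\sigma_\anadr\computequiv\sigma_\anadrp$ — one has to check similarity composes appropriately, but it is an equivalence up to the valid memory so this is routine), $\sigma_\anadr.\anact$ is itself an assert action, and $\sigma_\anadrp\in\adrsem{\anadrp}$. This is exactly the trigger pattern of \Cref{def:harmful-ABA}: instantiating that definition with $\sigma_\anadr.\anact$ in the role of ``$\sigma_\anadr.\anact$'' and $\sigma_\anadrp$ in the role of ``$\sigma_\anadrp$'' yields some $\sigma_\anadrp'\in\adrsem{\anadrp}$ with $\sigma_\anadr.\anact\computequiv\sigma_\anadrp'$, $\sigma_\anadrp\heapequiv[\anadrp]\sigma_\anadrp'$, and $\sigma_\anadr.\anact\obsrel\sigma_\anadrp'$. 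It then remains to transport these back to $\tau.\anact$ by transitivity: $\tau.\anact\computequiv\sigma_\anadr.\anact\computequiv\sigma_\anadrp'$ closes similarity; $\tau.\anact\obsrel\sigma_\anadr.\anact\obsrel\sigma_\anadrp'$ closes behavior inclusion (needs $\obsrel$ transitive, which is clear since it is an inclusion of $\freeableof{\cdot}{\anadr}$-sets pointwise over valid addresses — care is needed that the set of valid addresses is the same, but $\tau.\anact\computequiv\sigma_\anadr.\anact$ gives $\vadrof{\tau.\anact}=\vadrof{\sigma_\anadr.\anact}$); and $\tau.\anact\heapequiv[\anadrp]\sigma_\anadrp'$ follows from $\tau.\anact\heapequiv[\anadrp]$ being inherited via $\tau\heapequiv[\anadrp]\sigma_\anadrp$, $\sigma_\anadrp\heapequiv[\anadrp]\sigma_\anadrp'$, plus the fact that $\anact$ (an assertion) changes neither the memory nor the fresh/freed sets, so $\tau.\anact\heapequiv[\anadrp]\tau$ — again routine but requires checking the four clauses of \Cref{def:mem-equiv} each survive appending an assertion and composing.

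For obligation (b), the allocation case, the target address reused is exactly the fixed $\anadr$ of $\sigma_\anadr$. If $\anact$ allocates $\anadr$ and we are building $\sigma_\anadrp'\in\adrsem{\anadrp}$ with $\anadr\neq\anadrp$, then $\anact$ cannot be mimicked directly — $\anadr$ is not reusable in $\adrsem{\anadrp}$. Here the argument mirrors the last case of the proof sketch of the unnamed \namecref{thm:reduction}: replace in $\sigma_\anadrp$ every occurrence of $\anadr$ by a fresh address $\anadrpp$, obtaining $\renamingof{\sigma_\anadrp}{\anadr}{\anadrpp}$, and argue via \Cref{def:elision-support}\ref{def:elision-support:replace} that $\sigma_\anadrp\prec\renamingof{\sigma_\anadrp}{\anadr}{\anadrpp}$ for $\prec\in\set{\computequiv,\obsrel\mkern+2mu,\heapequiv[\anadrp]}$ (the replacement touches only $\anadr$, which is not $\anadrp$ and not valid in $\sigma_\anadrp$ since $\sigma_\anadrp$ does not reuse it, hence not a valid address), then the allocation of $\anadr$ is now enabled (it is fresh in the renamed computation), and the post-allocation $\obsrel$ on the newly-valid $\anadr$ comes from \Cref{def:elision-support}\ref{def:elision-support:fresh}. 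Transitivity with the $\tau$-relations closes the goal. I expect the main obstacle to be the bookkeeping in these transitivity/composition steps — specifically verifying that appending a non-memory action (the assert) or a controlled allocation preserves each of $\computequiv$, $\obsrel$, $\heapequiv[\cdot]$, and that these relations genuinely compose (which requires that the valid-pointer and valid-address sets line up, so that ``pointwise inclusion over valid addresses'' is over the same index set on both sides). None of this is deep, but it is the kind of detail that must be gotten exactly right; the genuinely load-bearing inputs — \Cref{def:harmful-ABA} for the assert case and \Cref{def:elision-support} for the allocation and free cases — are invoked essentially as black boxes.
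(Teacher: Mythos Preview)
Your proposal is correct and takes essentially the same approach as the paper: obligation (c) is literally \Cref{def:elision-support}\ref{def:elision-support:free}, the assertion case (a) is discharged by instantiating \Cref{def:harmful-ABA} with the given $\sigma_\anadr.\anact$ and $\sigma_\anadrp$ and then transporting the resulting relations back to $\tau.\anact$ via transitivity, and the allocation case (b) is handled by the address-renaming construction $\renamingof{\sigma_\anadrp}{\anadr}{\anadrpp}$ using \Cref{def:elision-support}\ref{def:elision-support:replace} and \ref{def:elision-support:fresh}, exactly as in the proof sketch preceding \Cref{thm:reduction}. One small correction: in case (b) your justification that $\anadr$ is not a valid address in $\sigma_\anadrp$ should appeal to the fact that $\anadr\in\freshof{\tau}\cup\freedof{\tau}$ (since $\tau.\anact$ allocates it) and hence $\anadr\notin\vadrof{\tau}=\vadrof{\sigma_\anadrp}$, rather than to ``$\sigma_\anadrp$ does not reuse it''---non-reuse alone does not preclude $\anadr$ from being currently allocated and validly referenced.
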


Altogether, the results from \Cref{sec:reduction} follow from combining \Cref{thm:RPRF-implies-GCplus}, \Cref{thm:rprf-guarantee}, and \Cref{thm:ABA-awareness-and-elision-support-implies-reduction-compatibility}.

\clearpage\newpage%!TEX root = ../../../main.tex

\section{Proofs for the Reduction Results}
\label{appendix:proofs-reduction}

We prove the presented results.
To do so, we establish useful lemmas and develop the elision of addresses formally.
Hereafter, we abbreviate pointer race by PR and pointer race free by PRF.

%%%%%%%%%%%%%%%%%%%%%%%%%%%%%%%%%%%%%%%%%%%%%%%%%%%%%%%%%%%%%%%%%%%%%%%%%%%%%%%%%%%%%%%%%%%%%%%%%%%
%%%%%%%%%%%%%%%%%%%%%%%%%%%%%%%%%%%%%%%%%%%%%%%%%%%%%%%%%%%%%%%%%%%%%%%%%%%%%%%%%%%%%%%%%%%%%%%%%%%
%%%%%%%%%%%%%%%%%%%%%%%%%%%%%%%%%%%%%%%%%%%%%%%%%%%%%%%%%%%%%%%%%%%%%%%%%%%%%%%%%%%%%%%%%%%%%%%%%%%
%%%%%%%%%%%%%%%%%%%%%%%%%%%%%%%%%%%%%%%%%%%%%%%%%%%%%%%%%%%%%%%%%%%%%%%%%%%%%%%%%%%%%%%%%%%%%%%%%%%
\subsection{Useful Lemmas}
\label{appendix:useful_lemmas}

\begin{lemma}
	\label{thm:computequiv-is-symmetric}
	If $\tau_1\computequiv\tau_2$, then $\tau_2\computequiv\tau_1$.
\end{lemma}

\begin{lemma}
	\label{thm:computequiv-is-transitiv}
	If $\tau_1\computequiv\tau_2\computequiv\tau_3$, then $\tau_1\computequiv\tau_3$.
\end{lemma}

\begin{lemma}
	\label{thm:heapequiv-is-transitiv-provided-same-valid-heap-range}
	If $\heapcomputof{\tau_1}{\validof{\tau_1}}\subseteq\heapcomputof{\tau_2}{\validof{\tau_2}}$ and $\tau_1\heapequiv[A]\tau_2\heapequiv[B]\tau_3$, then $\tau_1\heapequiv[A\cap B]\tau_3$.
\end{lemma}

\begin{lemma}
	\label{thm:obsrel-is-transitiv-provided-same-valid-adr-NEW}
	If $\adrof{\restrict{\heapcomput{\tau_1}}{\validof{\tau_1}}}\subseteq\adrof{\restrict{\heapcomput{\tau_2}}{\validof{\tau_2}}}$ and $\tau_1\obsrel\tau_2\obsrel\tau_3$, then $\tau_1\obsrel\tau_3$.
\end{lemma}

\begin{lemma}
	\label{thm:adrof-valid-heap-restriction}
	We have $\adrof{\restrict{\heapcomput{\tau}}{\validof{\tau}}}=(\validof{\tau}\cap\adr)\cup\heapcomputof{\tau}{\validof{\tau}}$.
\end{lemma}

\begin{lemma}
	\label{thm:valid-subset-dom}
	If $\tau\in\allsem$, then $\validof{\tau}\subseteq\domof{\heapcomput{\tau}}$.
\end{lemma}

\begin{lemma}
	\label{thm:computequiv-implies-same-valid}
	If $\tau\computequiv\sigma$, then $\validof{\tau}=\validof{\sigma}$.
\end{lemma}

\begin{lemma}
	\label{thm:move-event-from-freeable}
	If $\anevent\prall{\in}\set{\afuncof{\athread,\vecof{\anadr},\vecof{\advalue}},\exitof{\athread},\freeof{\anadr}}$,
	then $\ahist_1\prall{\in}\freeableof{\ahist_2.\anevent}{\anadr}{\iff}\anevent.\ahist_1\prall{\in}\freeableof{\ahist_2}{\anadr}$.
\end{lemma}

\begin{lemma}
	\label{thm:fresh-not-referenced}
	If $\tau\in\allsem$ and $\anadr\in\freshof{\tau}$, then $\anadr\notin\rangeof{\heapcomput{\tau}}$.
\end{lemma}

\begin{lemma}
	\label{thm:fresh-not-valid}
	If $\tau\in\allsem$ and $\anadr\in\freshof{\tau}$, then $\anadr\notin\heapcomputof{\tau}{\validof{\tau}}$ and $\psel{\anadr}\notin\validof{\tau}$.
\end{lemma}

\begin{lemma}
	\label{thm:free-not-valid}
	If $\tau\in\allsem$ PRF and $\anadr\prall{\in}\freedof{\tau}$, then $\anadr\prall{\notin}\heapcomputof{\tau}{\validof{\tau}}$ and $\psel{\anadr}\prall{\notin}\validof{\tau}$.
\end{lemma}

\begin{lemma}
	\label{thm:seg-valid}
	If $\tau\in\allsem$ and $\heapcomputof{\tau}{\apexp}=\segval$, then $\apexp\in\validof{\tau}$.
\end{lemma}

\begin{lemma}
	\label{thm:rprf-vs-bot}
	If $\tau\in\allsem$ PRF and $\heapcomputof{\tau}{\apexp}=\bot$, then $\set{\apexp}\cap\adr\not\subseteq\heapcomputof{\tau}{\validof{\tau}}$.
\end{lemma}

\begin{lemma}
	\label{thm:rprf-vs-bot-pvar}
	If $\tau\in\allsem$ PRF, then $\pvars\subseteq\domof{\heapcomput{\tau}}$.
\end{lemma}

%%%%%%%%%%%%%%%%%%%%%%%%%%%%%%%%%%%%%%%%%%%%%%%%%%%%%%%%%%%%%%%%%%%%%%%%%%%%%%%%%%%%%%%%%%%%%%%%%%%
%%%%%%%%%%%%%%%%%%%%%%%%%%%%%%%%%%%%%%%%%%%%%%%%%%%%%%%%%%%%%%%%%%%%%%%%%%%%%%%%%%%%%%%%%%%%%%%%%%%
%%%%%%%%%%%%%%%%%%%%%%%%%%%%%%%%%%%%%%%%%%%%%%%%%%%%%%%%%%%%%%%%%%%%%%%%%%%%%%%%%%%%%%%%%%%%%%%%%%%
%%%%%%%%%%%%%%%%%%%%%%%%%%%%%%%%%%%%%%%%%%%%%%%%%%%%%%%%%%%%%%%%%%%%%%%%%%%%%%%%%%%%%%%%%%%%%%%%%%%
\subsection{Technical Preparations}
\label{appendix:technical_preparations}

The following \namecref{thm:invalidpointers-adr} states that pointer race freedom guarantees a separation of the valid and invalid memory.
It may only overlap for those addresses that are tracked precisely using the memory equivalence.

\begin{lemma}
	\label{thm:invalidpointers-adr}
	If $\tau\in\asem{A}$ PRF, then $\adrof{\restrict{\heapcomput{\tau}}{\validof{\tau}}}\cap\heapcomputof{\tau}{\pexp\setminus\validof{\tau}}\subseteq A$.
\end{lemma}

Towards a proof of the core result, \Cref{thm:RPRF-implies-GCplus}, we show that certain actions of a computation $\tau$ can be mimicked in another computation $\sigma$ by simply executing the same command in both executions.
As discussed in \Cref{sec:reduction}, the induced update may differ.
Basically, the following \namecref{thm:mimic-simple-NEW} proves the technical cases of the proof of \Cref{thm:RPRF-implies-GCplus}.

\begin{lemma}
	\label{thm:mimic-simple-NEW}
	Let $\tau.\anact\in\allsem$ and $\sigma\in\asem{A}$ such that
	\begin{align*}
		&
		\tau\computequiv\sigma \quad\wedge\quad \tau\heapequiv[A]\sigma \quad\wedge\quad \tau\obsrel[A]\sigma \quad\wedge\quad \anact=(\athread,\acom,\anup)
		\\\wedge\quad&\sigma.\anact\notin\asem{A}\implies\acom \text{ is an assignment}
		\\\wedge\quad&\acom\text{ contains }\psel{\apvar}\text{ or }\dsel{\apvar}\implies\apvar\in\validof{\tau}
		\\\wedge\quad&\acom\equiv\afunc(\vecof{\apvar},\vecof{\advar})\implies\heapcomputof{\tau}{\vecof{\apvar}}=\heapcomputof{\sigma}{\vecof{\apvar}}
		\\\wedge\quad&\acom\equiv\freeof{\anadr}\implies\big(\forall\anadrp\prall{\in}\adr\prall{\setminus}\set{\anadr}~\forall\gamma\prall{\in}\set{\tau,\sigma}.~\freeableof{\gamma.\anact}{\anadrp}=\freeableof{\gamma}{\anadrp}\big)
		\\\wedge\quad&\acom\equiv\apvar:=\malloc\wedge\heapcomputof{\tau.\anact}{\apvar}=\anadr\notin A\implies\freeableof{\tau}{\anadr}\subseteq\freeableof{\sigma}{\anadr}
		\ .
		\intertext{%
	Then there is some $\anactp=(\athread,\acom,\anupp)$ with $\sigma.\anactp\in\asem{A}$ and
		}
		&\tau.\anact\computequiv\sigma.\anactp \quad\wedge\quad \tau.\anact\heapequiv[A]\sigma.\anactp \quad\wedge\quad \tau.\anact\obsrel\sigma.\anactp
		\ .
	\end{align*}
\end{lemma}

%%%%%%%%%%%%%%%%%%%%%%%%%%%%%%%%%%%%%%%%%%%%%%%%%%%%%%%%%%%%%%%%%%%%%%%%%%%%%%%%%%%%%%%%%%%%%%%%%%%
%%%%%%%%%%%%%%%%%%%%%%%%%%%%%%%%%%%%%%%%%%%%%%%%%%%%%%%%%%%%%%%%%%%%%%%%%%%%%%%%%%%%%%%%%%%%%%%%%%%
%%%%%%%%%%%%%%%%%%%%%%%%%%%%%%%%%%%%%%%%%%%%%%%%%%%%%%%%%%%%%%%%%%%%%%%%%%%%%%%%%%%%%%%%%%%%%%%%%%%
%%%%%%%%%%%%%%%%%%%%%%%%%%%%%%%%%%%%%%%%%%%%%%%%%%%%%%%%%%%%%%%%%%%%%%%%%%%%%%%%%%%%%%%%%%%%%%%%%%%
\subsection{Elision Technique}
\label{appendix:elision_technique}

\begin{definition}
	An address mapping is a bijection $\swapadrraw:\adr\to\adr$.
	For convenience, we extend $\swapadrraw$ such that $\swapadr{\bot}=\bot$, $\swapadr{\segval}=\segval$, and $\swapadr{\advalue}=\advalue$ for any $\advalue\in\dom$.
	The address mapping induces an expression mapping $\swapexpraw$ with
	\begin{align*}
		\swapexp{\apvar}&=\apvar
		&
		\swapexp{\psel{\anadr}}&=\psel{\swapadr{\anadr}}
		\\
		\swapexp{\advar}&=\advar
		&
		\swapexp{\dsel{\anadr}}&=\dsel{\swapadr{\anadr}}
	\end{align*}
	and a history mapping $\swaphistraw$ with
	\begin{align*}
		\swaphist{\epsilon}&=\epsilon
		\\
		\swaphist{\ahist.\freeof{\anadr}}&=\swaphist{\ahist}.\freeof{\swapadr{\anadr}}
		\\
		\swaphist{\ahist.\afunc(\athread,\vecof{\anadr},\vecof{\advalue})}&=\swaphist{\ahist}.\afunc(\athread,\swapadr{\vecof{\anadr}},\vecof{\advalue})
		\\
		\swaphist{\ahist.\exitof{\athread}}&=\swaphist{\ahist}.\exitof{\athread}
	\end{align*}
	for any SMR API function $\afunc$.
	We use $\swapadr{\vecof{\anadr}}=\swapadr{\anadr_1},\dots,\swapadr{\anadr_k}$.
\end{definition}

Note that $\renamingof{\ahist}{\anadr}{\anadrp}=\swaphist{\ahist}$ for $\swaphistraw$ being the history mapping induced by the address mapping $\swapadrraw$ with $\swapadr{\anadr}=\anadrp$, $\swapadr{\anadrp}=\anadr$, and $\swapadr{\anadrpp}=\anadrpp$ in all other cases.

\begin{lemma}
	\label{thm:inverse-of-address-mapping-is-again-address-mapping}
	If $\swapadrraw$ is an address mapping, then $\swapadrinvraw$ is also an address mapping.
\end{lemma}

\begin{definition}
	If $\swapadrraw$ is an address mapping, then we write $\swapexpinvraw$ and $\swaphistinvraw$ for the expression and history mapping induced by the inverse address mapping $\swapadrinvraw$.
\end{definition}

\begin{lemma}
	\label{thm:swaphist-is-unique}
	If $\swaphist{\ahist_1}=\swaphist{\ahist_2}$, then $\ahist_1=\ahist_2$.
\end{lemma}

\begin{lemma}
	\label{thm:swaphist-versus-elementof}
	We have $\swaphist{\ahist}\in\swaphist{H}\iff\ahist\in H$.
\end{lemma}

\begin{lemma}
	\label{thm:swap-vs-set-opereations}
	For every $\otimes\in\set{\setminus,\cup,\cap}$ we have:
	\begin{align*}
		\swapadr{A_1}\otimes\swapadr{A_2}&=\swapadr{A_1\otimes A_2}
		\\
		\swapexp{B_1}\otimes\swapexp{B_2}&=\swapexp{B_1\otimes B_2}
		\\
		\swaphist{C_1}\otimes\swaphist{C_2}&=\swaphist{C_1\otimes C_2}
	\end{align*}
\end{lemma}

\begin{lemma}
	\label{thm:swaphist-of-swaphistinv-is-id}
	For all $\ahist$ we have $\swaphistinv{\swaphist{\ahist}}=\ahist$.
\end{lemma}

\begin{lemma}
	\label{thm:swaphist-versus-accpetance}
	For all $\ahist$ we have $\ahist\in\historyof{\smrobs}\iff\swaphist{\ahist}\in\historyof{\smrobs}$.
\end{lemma}

\begin{lemma}
	\label{thm:swaphist-freeable}
	If $\swaphist{\historyof{\tau}}=\historyof{\sigma}$, then $\swaphist{\freeableof{\tau}{\anadr}}=\freeableof{\sigma}{\swapadr{\anadr}}$ for all $\anadr$.
\end{lemma}

\begin{theorem}
	\label{thm:elision}
	For every computation $\tau\in\asem{A}$ and every address mapping $\swapadrraw$, there is some $\sigma\in\asem{\swapadr{A}}$ such that:
	\begin{align*}
		&\heapcomput{\sigma}\circ\swapexpraw=\swapadrraw\circ\heapcomput{\tau}
		&&\historyof{\sigma}=\swaphist{\historyof{\tau}}
		&&\freedof{\sigma}=\swapadr{\freedof{\tau}}
		\\
		&\validof{\sigma}=\swapexp{\validof{\tau}}
		&&\controlof{\sigma}=\controlof{\tau}
		&&\freshof{\sigma}=\swapadr{\freshof{\tau}}
	\end{align*}
\end{theorem}

\begin{lemma}
	\label{thm:supported-elision-of-invalid-address-more}
	Let $\smrobs$ support elision.
	If $\tau\in\asem{A}$ and $\anadr\notin\adrof{\restrict{\heapcomput{\tau}}{\validof{\tau}}}\cup A$,
	then there is $\sigma\in\asem{A}$ with $\tau\computequiv\sigma$, $\tau\heapequiv[A]\sigma$, $\tau\obsrel\sigma$, and $\anadr\in\freshof{\sigma}$.
	Moreover, $\heapcomputof{\tau}{\anexp}\neq\heapcomputof{\tau}{\anexpp}$ implies $\heapcomputof{\sigma}{\anexp}\neq\heapcomputof{\sigma}{\anexpp}$ for all $\anexp,\anexpp\in\pvars\cup\setcond{\psel{\anadrp}}{\anadrp\in\heapcomputof{\tau}{\validof{\tau}}}$.
\end{lemma}

\begin{lemma}
	\label{thm:supported-elision-of-invalid-address}
	Let $\smrobs$ support elision.
	If $\tau\in\asem{A}$ and $\anadr\notin\adrof{\restrict{\heapcomput{\tau}}{\validof{\tau}}}\cup A$,
	then there is $\sigma\in\asem{A}$ with $\tau\computequiv\sigma$, $\tau\heapequiv[A]\sigma$, $\tau\obsrel\sigma$, and $\anadr\in\freshof{\sigma}$.
\end{lemma}

%%%%%%%%%%%%%%%%%%%%%%%%%%%%%%%%%%%%%%%%%%%%%%%%%%%%%%%%%%%%%%%%%%%%%%%%%%%%%%%%%%%%%%%%%%%%%%%%%%%
%%%%%%%%%%%%%%%%%%%%%%%%%%%%%%%%%%%%%%%%%%%%%%%%%%%%%%%%%%%%%%%%%%%%%%%%%%%%%%%%%%%%%%%%%%%%%%%%%%%
%%%%%%%%%%%%%%%%%%%%%%%%%%%%%%%%%%%%%%%%%%%%%%%%%%%%%%%%%%%%%%%%%%%%%%%%%%%%%%%%%%%%%%%%%%%%%%%%%%%
%%%%%%%%%%%%%%%%%%%%%%%%%%%%%%%%%%%%%%%%%%%%%%%%%%%%%%%%%%%%%%%%%%%%%%%%%%%%%%%%%%%%%%%%%%%%%%%%%%%
%%%%%%%%%%%%%%%%%%%%%%%%%%%%%%%%%%%%%%%%%%%%%%%%%%%%%%%%%%%%%%%%%%%%%%%%%%%%%%%%%%%%%%%%%%%%%%%%%%%
%%%%%%%%%%%%%%%%%%%%%%%%%%%%%%%%%%%%%%%%%%%%%%%%%%%%%%%%%%%%%%%%%%%%%%%%%%%%%%%%%%%%%%%%%%%%%%%%%%%
%%%%%%%%%%%%%%%%%%%%%%%%%%%%%%%%%%%%%%%%%%%%%%%%%%%%%%%%%%%%%%%%%%%%%%%%%%%%%%%%%%%%%%%%%%%%%%%%%%%
%%%%%%%%%%%%%%%%%%%%%%%%%%%%%%%%%%%%%%%%%%%%%%%%%%%%%%%%%%%%%%%%%%%%%%%%%%%%%%%%%%%%%%%%%%%%%%%%%%%
%%%%%%%%%%%%%%%%%%%%%%%%%%%%%%%%%%%%%%%%%%%%%%%%%%%%%%%%%%%%%%%%%%%%%%%%%%%%%%%%%%%%%%%%%%%%%%%%%%%
%%%%%%%%%%%%%%%%%%%%%%%%%%%%%%%%%%%%%%%%%%%%%%%%%%%%%%%%%%%%%%%%%%%%%%%%%%%%%%%%%%%%%%%%%%%%%%%%%%%
%%%%%%%%%%%%%%%%%%%%%%%%%%%%%%%%%%%%%%%%%%%%%%%%%%%%%%%%%%%%%%%%%%%%%%%%%%%%%%%%%%%%%%%%%%%%%%%%%%%
%%%%%%%%%%%%%%%%%%%%%%%%%%%%%%%%%%%%%%%%%%%%%%%%%%%%%%%%%%%%%%%%%%%%%%%%%%%%%%%%%%%%%%%%%%%%%%%%%%%

%!TEX root = ../../../main.tex

\subsection{Proofs of Useful Lemmas (\CREF{appendix:useful_lemmas})}

\begin{proof}[Proof of \Cref{thm:computequiv-is-symmetric,thm:computequiv-is-transitiv,thm:heapequiv-is-transitiv-provided-same-valid-heap-range,thm:obsrel-is-transitiv-provided-same-valid-adr-NEW}]
	By definition.
\end{proof}

%{thm:adrof-valid-heap-restriction}
%
\begin{proof}[Proof of \Cref{thm:adrof-valid-heap-restriction}]
	By definition we have:
	\begin{align*}
		\domof{\restrict{\heapcomput{\tau}}{\validof{\tau}}}\cap\adr
		=~&
		(\validof{\tau}\cup\dvars\cup\setcond{\psel{\anadr}}{\anadr\in\heapcomputof{\tau}{\validof{\tau}}})\cap\adr
		\\=~&
		(\validof{\tau}\cap\adr)\cup(\setcond{\psel{\anadr}}{\anadr\in\heapcomputof{\tau}{\validof{\tau}}}\cap\adr)
		\\=~&
		(\validof{\tau}\cap\adr)\cup(\setcond{\anadr}{\anadr\in\heapcomputof{\tau}{\validof{\tau}}})
		\\=~&
		(\validof{\tau}\cap\adr)\cup\heapcomputof{\tau}{\validof{\tau}}
		\intertext{%
	%where the third equality holds because $\validof{\tau}\subseteq\domof{\heapcomput{\tau}}$ by \Cref{thm:valid-subset-dom}.
	Moreover, we have%
		}
		\rangeof{\restrict{\heapcomput{\tau}}{\validof{\tau}}}\cap\adr
		=~&
		\heapcomputof{\tau}{\domof{\restrict{\heapcomput{\tau}}{\validof{\tau}}}}\cap\adr
		\\=~&
		\heapcomputof{\tau}{\domof{\restrict{\heapcomput{\tau}}{\validof{\tau}}}\cap\pexp}
		=
		\heapcomputof{\tau}{\validof{\tau}}
		\intertext{%
	With this, we conclude as follows:
		}
		\adrof{\restrict{\heapcomput{\tau}}{\validof{\tau}}}\cap\adr
		=~&
		(\domof{\restrict{\heapcomput{\tau}}{\validof{\tau}}}\cap\rangeof{\restrict{\heapcomput{\tau}}{\validof{\tau}}})\cap\adr
		\\=~&
		(\validof{\tau}\cap\adr)\cup\heapcomputof{\tau}{\validof{\tau}}\cup\heapcomputof{\tau}{\validof{\tau}}
		\\=~&
		(\validof{\tau}\cap\adr)\cup\heapcomputof{\tau}{\validof{\tau}}
	\end{align*}
\end{proof}

%{thm:valid-subset-dom}
%
\begin{proof}[Proof of \Cref{thm:valid-subset-dom}]
	The claim holds for $\epsilon$ since $\validof{\tau}=\pvars$ and $\pvars\subseteq\domof{\heapcomput{\epsilon}}$ by definition.
	Towards a contradiction, assume the claim does not hold.
	Then, there is a shortest computation $\tau.\anact\in\allsem$ with $\validof{\tau.\anact}\not\subseteq\domof{\heapcomput{\tau.\anact}}$.
	That is, there is $\apexp\in\validof{\tau.\anact}$ with $\heapcomputof{\tau.\anact}{\apexp}=\bot$.
	First, consider the case where we have $\apexp\notin\validof{\tau}$.
	To validate $\apexp$, $\anact$ must be
	\begin{inparaenum}[(i)]
		\item\label{proof:thm:valid-subset-dom:malloc} an allocation $\apvar:=\malloc$ with $\heapcomputof{\tau.\anact}{\apvar}=\anadr$ and $\apexp\in\set{\apvar,\psel{\anadr}}$,
		\item\label{proof:thm:valid-subset-dom:assert} an assertion of the form $\assertof{\apexp=\apvarp}$ with $\apvarp\in\validof{\tau}$, or
		\item\label{proof:thm:valid-subset-dom:assign} an assignment $\apexp:=\apexpp$ with $\apexpp\in\validof{\tau}$.
	\end{inparaenum}
	Case~\eqref{proof:thm:valid-subset-dom:malloc} cannot apply as it results in $\apexp\in\domof{\heapcomput{\tau.\anact}}$ by definition.
	Case~\eqref{proof:thm:valid-subset-dom:assert} cannot apply because $\heapcomputof{\tau}{\apexp}=\heapcomputof{\tau}{\apvarp}$ together with $\apvarp\in\domof{\heapcomput{\tau}}=\domof{\heapcomput{\tau.\anact}}$ by minimality gives $\apexp\in\domof{\heapcomput{\tau.\anact}}$.
	So case~\eqref{proof:thm:valid-subset-dom:assign} must apply.
	$\apexpp\in\validof{\tau}$ yields $\heapcomputof{\tau}{\apexpp}\neq\bot$ by minimality.
	We get $\heapcomputof{\tau.\anact}{\apexp}\neq\bot$ what contradicts the choice of $\apexp$.
	Overall, $\apexp\in\validof{\tau}$ must hold.

	Consider now the $\apexp\in\validof{\tau}$.
	By minimality, we have $\heapcomputof{\tau}{\apexp}\neq\bot$.
	So $\anact$ must update $\apexp$ to $\bot$.
	To do so, $\anact$ must be
	\begin{inparaenum}[(i)]
		\item $\freeof{\anadr}$ and $\apexp\equiv\psel{\anadr}$, or
		\item an assignment $\apexp:=\apexpp$ with $\heapcomputof{\tau}{\apexpp}=\bot$.
	\end{inparaenum}
	The former case cannot apply as it results in $\apexp\notin\validof{\tau.\anact}$ by definition.
	So the latter case applies.
	By minimality, we have $\apexpp\notin\validof{\tau}$.
	Hence, $\apexp\notin\validof{\tau.\anact}$.
	This contradicts the choice of $\apexp$.
\end{proof}

%{thm:computequiv-implies-same-valid}
%
\begin{proof}[Proof of \Cref{thm:computequiv-implies-same-valid}]
	We conclude as follows using the definition of $\tau\computequiv\sigma$, set theory, and the definition of restrictions:
	\begin{align*}
		\tau\computequiv\sigma
		\implies~&
		\restrict{\heapcomput{\tau}}{\validof{\tau}}=\restrict{\heapcomput{\sigma}}{\validof{\sigma}}
		\\\implies~&
		\domof{\restrict{\heapcomput{\tau}}{\validof{\tau}}}=\domof{\restrict{\heapcomput{\sigma}}{\validof{\sigma}}}
		\\\implies~&
		\domof{\restrict{\heapcomput{\tau}}{\validof{\tau}}}\cap\pexp=\domof{\restrict{\heapcomput{\sigma}}{\validof{\sigma}}}\cap\pexp
		\\\implies~&
		\validof{\tau}=\validof{\sigma}
	\end{align*}
	where the last implication is due to \Cref{thm:valid-subset-dom}.
\end{proof}

%{thm:same-event-preserves-obsrel}
%
\begin{proof}[Proof of \Cref{thm:move-event-from-freeable}]
	The direction from right to left follows by definition.
	So consider the direction from left to right.
	Let $\ahist_1\in\freeableof{\ahist_2.\anevent}{\anadr}$.
	Towards a contradiction, assume $\anevent.\ahist_1\notin\freeableof{\ahist_2}{\anadr}$.
	Since $\freesof{\anevent}\subseteq\set{\anadr}$, we must have $\ahist_2.\anevent.\ahist_1\notin\specof{\smrobs}$ by definition.
	Then, $\ahist_1\notin\freeableof{\ahist_2.\anevent}{\anadr}$ by definition.
	This contradicts the assumption.
\end{proof}

%{thm:fresh-not-referenced}
%
\begin{proof}[Proof of \Cref{thm:fresh-not-referenced}]
	Towards a contradiction, let $\tau.\anact\in\allsem$ be the shortest computation such that there is some $\anadr\in\freshof{\tau.\anact}$ with $\anadr\in\rangeof{\heapcomput{\tau.\anact}}$.
	That is, there is some $\apexp\in\pexp$ with $\heapcomputof{\tau.\anact}{\apexp}=\anadr$.
	By monotonicity, we have $\anadr\in\freshof{\tau}$.
	By minimality then, we must have $\heapcomputof{\tau}{\apexp}\neq\anadr$.
	Hence, $\anact$ updates $\apexp$ to $\anadr$.
	This means it must be an allocation or a pointer assignment.
	In the former case, we must have $\anact=(\athread,\apexp:=\malloc,[\apvar\mapsto\anadr,\dots])$.
	Then, $\anadr\notin\freshof{\tau.\anact}$ holds by definition.
	So this case cannot apply as it contradicts the assumption.
	That is, $\anact$ is of the form $\anact=(\athread,\apexp=\apexpp,\anup)$ with $\heapcomputof{\tau}{\apexpp}=\anadr$.
	This means $\anadr\in\rangeof{\heapcomput{\tau}}$.
	This, contradicts the minimality of $\tau.\anact$ and thus concludes the claim.
\end{proof}

%{thm:fresh-not-valid}
%
\begin{proof}[Proof of \Cref{thm:fresh-not-valid}]
	Towards a contradiction, assume there is a shortest $\tau.\anact\in\allsem$ such that there is some $\anadr\in\freshof{\tau.\anact}$ with $\anadr\in\heapcomputof{\tau.\anact}{\validof{\tau.\anact}}\vee\psel{\anadr}\in\validof{\tau.\anact}$.
	Note that $\tau.\anact$ is indeed the shortest such computation since the claim holds for $\epsilon$.
	By monotonicity, we have $\anadr\in\freshof{\tau}$.
	By minimality of $\tau.\anact$ we have $\anadr\notin\heapcomputof{\tau}{\validof{\tau}}$ and $\psel{\anadr}\notin\validof{\tau}$.
	If $\psel{\anadr}\in\validof{\tau.\anact}$ holds, then $\anact$ must be an assignment of the form $\psel{\apvar}:=\apvarp$ with $\heapcomputof{\tau}{\apvar}=\anadr$.
	This means $\anadr\in\rangeof{\heapcomput{\tau}}$.
	So \Cref{thm:fresh-not-referenced} gives $\anadr\notin\freshof{\tau}$.
	This contradicts $\anadr\in\freshof{\tau}$ from above.
	Hence, this cannot apply and we must have $\psel{\anadr}\notin\validof{\tau.\anact}$.
	So by assumption we have $\anadr\in\heapcomputof{\tau.\anact}{\validof{\tau.\anact}}$.
	By definition, there is some $\apexp\in\validof{\tau.\anact}$ with $\heapcomputof{\tau.\anact}{\apexp}=\anadr$.
	Again by minimality, we have $\apexp\notin\validof{\tau}$ or $\heapcomputof{\tau}{\apexp}\neq\anadr$.
	Consider $\apexp\notin\validof{\tau}$.
	That is, $\anact$ validates $\apexp$.
	To do so, $\anact$ must be an assignment, an allocation, or an assertion:
	\begin{itemize}
		\item
			If $\anact$ is of the form $\anact=(\athread,\apexp:=\apexpp,\anup)$, then $\apexpp\in\validof{\tau}$ and $\heapcomputof{\tau}{\apexpp}=\anadr$ must hold.
			The latter, leads to $\apexpp\notin\validof{\tau}$ by minimality of $\tau.\anact$.
			Hence, $\anact$ cannot be an assignment.
		\item
			If $\anact$ is of the form $\anact=(\athread,\apvar:=\malloc,[\apvar\mapsto\anadr,\dots])$, then $\apexp\equiv\apvar$ must hold.
			(Note that for $\apexp\equiv\psel{\anadr}$ we get $\heapcomputof{\tau.\anact}{\apexp}=\segval\neq\anadr$.)
			This, results in $\anadr\notin\freshof{\tau.\anact}$ which contradicts the assumption.
			Hence, $\anact$ cannot be an allocation.
		\item
			If $\anact$ is of the form $\anact=(\athread,\assertof{\apvar=\apvarp},\anup)$, then wlog. $\apexp\equiv\apvar$ and $\apvarp\in\validof{\tau}$ and $\anadr=\heapcomputof{\tau.\anact}{\apexp}=\heapcomputof{\tau}{\apexp}=\heapcomputof{\tau}{\apvarp}$ must hold.
			Again by minimality, we get $\apvarp\notin\validof{\tau}$ which contradicts the assumption.
			Hence, $\anact$ cannot be an assertion.
	\end{itemize}
	So $\apexp\in\validof{\tau}$ must hold and thus $\heapcomputof{\tau}{\apexp}\neq\anadr$.
	That is, $\anact$ updates $\apexp$ to $\anadr$.
	To do so, $\anact$ must be an assignment or an allocation.
	We then conclude a contradiction as before.
\end{proof}

%{thm:free-not-valid}
%
\begin{proof}[Proof of \Cref{thm:free-not-valid}]
	To the contrary, assume there is a shortest $\tau.\anact\in\allsem$ PRF with $\anadr\in\freedof{\tau.\anact}$ and $\anadr\in\heapcomputof{\tau.\anact}{\validof{\tau.\anact}}\vee\psel{\anadr}\in\validof{\tau.\anact}$.
	Note that $\tau.\anact$ is indeed the shortest such computation since the claim is vacuously true for $\epsilon$.
	First, consider the case where we have $\anadr\in\freedof{\tau}$.
	We get $\anadr\notin\heapcomputof{\tau}{\validof{\tau}}$ and $\psel{\anadr}\notin\validof{\tau}$ by minimality of $\tau.\anact$.
	If $\psel{\anadr}\in\validof{\tau.\anact}$ holds, then $\anact$ must be an assignment of the form $\psel{\apvar}:=\apvarp$ with $\heapcomputof{\tau}{\apvar}=\anadr$.
	Moreover, $\apvar\notin\validof{\tau}$ must hold because $\anadr\notin\heapcomputof{\tau}{\validof{\tau}}$.
	Hence, $\anact$ raises a pointer race.
	This contradicts the assumption.
	So we must have $\anadr\in\heapcomputof{\tau.\anact}{\validof{\tau.\anact}}$.
	This allows us to derive a contradiction along the lines of the proof of \Cref{thm:fresh-not-valid}.
	So consider the case where we have $\anadr\notin\freedof{\tau}$ now.
	Then, $\anact$ must execute the command $\freeof{\anadr}$.
	As a consequence, we get $\psel{\anadr}\notin\validof{\tau.\anact}$ and $\apexp\notin\validof{\tau.\anact}$ for all $\apexp$ with $\heapcomputof{\tau}{\apexp}=\anadr$.
	That is, we have $\anadr\notin\heapcomputof{\tau.\anact}{\validof{\tau.\anact}}$ and $\psel{\anadr}\notin\validof{\tau.\anact}$.
	This contradicts the assumption and concludes the claim.
\end{proof}

%{thm:seg-not-valid}
%
\begin{proof}[Proof of \Cref{thm:seg-valid}]
	To the contrary, assume there is a shortest $\tau.\anact\in\allsem$ with some $\apexp\notin\validof{\tau.\anact}$ and $\heapcomputof{\tau.\anact}{\apexp}=\segval$.
	By minimality, $\heapcomputof{\tau}{\apexp}\neq\segval$ or $\apexp\in\validof{\tau.\anact}$.
	As a first case, consider $\heapcomputof{\tau}{\apexp}\neq\segval$.
	Then, $\anact$ updates $\apexp$ to $\segval$.
	That is, $\anact$ is an allocation or an assignment.
	If $\anact=(\athread,\apvar:=\malloc,[\apvar\mapsto\anadr,\apexp\mapsto\segval,\dots])$, then $\apexp\equiv\psel{\anadr}$ must hold.
	This gives $\apexp\in\validof{\tau.\anact}$ by definition.
	So this case cannot apply by assumption.
	If $\anact=(\athread,\apexp:=\apexpp,[\apexp\mapsto\anup])$, then we must have $\heapcomputof{\tau}{\apexpp}=\segval$.
	By minimality, this gives $\apexpp\in\validof{\tau}$.
	And by definition, $\apexp\in\validof{\tau.\anact}$ then.
	This contradicts the assumption.
	So we must have $\apexp\in\validof{\tau.\anact}$.
	That is, $\anact$ makes $\apexp$ invalid.
	To do so, $\anact$ must be an assignment or a free.
	If $\anact=(\athread,\apexp:=\apexpp,\anup)$, then we must have $\apexpp\notin\validof{\tau}$.
	By minimality, this gives $\heapcomputof{\tau}{\apexpp}\neq\segval$.
	So we get $\heapcomputof{\tau.\anact}{\apexp}\neq\segval$ what contradicts the assumption.
	If $\anact=(\athread,\freeof{\anadr},[\psel{\anadr}\mapsto\bot,\dots])$, then we must have $\apexp\equiv\psel{\anadr}$ or $\heapcomputof{\tau}{\apexp}=\anadr$.
	In the former case, we get $\heapcomputof{\tau.\anact}{\apexp}=\bot\neq\segval$.
	In the latter case, we get $\heapcomputof{\tau.\anact}{\apexp}=\anadr\neq\segval$.
	Hence, we get $\heapcomputof{\tau.\anact}{\apexp}\neq\segval$ what contradicts the assumption.
	This concludes the claim.
\end{proof}

%{thm:rprf-vs-bot}
%
\begin{proof}[Proof of \Cref{thm:rprf-vs-bot}]
	To the contrary, assume a shortest PRF $\tau.\anact\in\allsem$ such that there is some $\apexp$ with $\heapcomputof{\tau.\anact}{\apexp}=\bot$ and $\set{\apexp}\cap\adr\subseteq\heapcomputof{\tau.\anact}{\validof{\tau.\anact}}$.
	By minimality, we have $\heapcomputof{\tau}{\apexp}\neq\bot$ or $\set{\apexp}\cap\adr\not\subseteq\heapcomputof{\tau}{\validof{\tau}}$.
	First, consider $\heapcomputof{\tau}{\apexp}\neq\bot$.
	In order for $\anact$ to update $\apexp$ to $\bot$, it must be a free or an assignment.
	If $\anact=(\athread,\freeof{\anadr},[\psel{\anadr}\mapsto\bot,\dots])$, then we must have $\apexp\equiv\psel{\anadr}$.
	By definition, we get $\anadr\in\freedof{\tau.\anact}$.
	Then, \Cref{thm:free-not-valid} gives $\anadr\notin\heapcomputof{\tau.\anact}{\validof{\tau.\anact}}$.
	This contradicts the choice of $\apexp$ to satisfy $\set{\apexp}\cap\adr\not\subseteq\heapcomputof{\tau}{\validof{\tau}}$.
	If $\anact=(\athread,\apexp:=\apexpp,[\apexp\mapsto\bot])$, then $\heapcomputof{\tau}{\apexpp}=\bot$ must hold.
	If $\apexpp\in\pvars$, then we have $\set{\apexpp}\cap\adr=\emptyset\subseteq\heapcomputof{\tau}{\validof{\tau}}$.
	Since this contradicts minimality, $\apexpp$ must be of the form $\apexpp\equiv\psel{\anadrp}$.
	By minimality, we have $\anadrp\notin\heapcomputof{\tau}{\validof{\tau}}$.
	This means $\anact$ raises a pointer race.
	This contradicts the assumption of $\tau.\anact$ being PRF.
	Altogether, this means $\heapcomputof{\tau}{\apexp}=\bot$ holds.

	So consider $\set{\apexp}\cap\adr\not\subseteq\heapcomputof{\tau}{\validof{\tau}}$.
	By definition, $\apexp$ must be of the form $\psel{\anadr}$.
	We get $\anadr\notin\heapcomputof{\tau}{\validof{\tau}}$.
	To arrive at $\set{\apexp}\cap\adr\subseteq\heapcomputof{\tau.\anact}{\validof{\tau.\anact}}$ we must get $\anadr\in\heapcomputof{\tau.\anact}{\validof{\tau.\anact}}$.
	To get this, $\anact$ must be an allocation: $\anact=(\athread,\apvar:=\malloc,[\apvar\mapsto\anadr,\psel{\anadr}\mapsto\segval,\dots])$.
	This, results in $\heapcomputof{\tau.\anact}{\apexp}=\segval\neq\bot$.
	This contradicts the assumption and concludes the claim.
\end{proof}

%{thm:rprf-vs-bot-pvar}
%
\begin{proof}[Proof of \Cref{thm:rprf-vs-bot-pvar}]
	Follows from \Cref{thm:rprf-vs-bot} and $\set{\apvar}\cap\adr=\emptyset$ for every $\apvar\in\pvars$ together with the fact that $\emptyset\subseteq\heapcomputof{\tau}{\validof{\tau}}$ is always true.
\end{proof}

%!TEX root = ../../../main.tex

\subsection{Proofs of Technical Preparations (\CREF{appendix:technical_preparations})}

\begin{proof}[Proof of \Cref{thm:invalidpointers-adr}]
	We proceed by induction over the structure of computations.
	\begin{description}[labelwidth=6mm,leftmargin=8mm,itemindent=0mm]
		\item[IB:]
			For $\tau=\epsilon$ choose $\sigma=\tau$.
			This satisfies the claim.

		\item[IH:]
			For every $\tau\in\asem{A}$ RPRF we have $\adrof{\restrict{\heapcomput{\tau}}{\validof{\tau}}}\cap\heapcomputof{\tau}{\pexp\setminus\validof{\tau}}\subseteq A$.

		\item[IS:]
			Consider now $\tau.\anact\in\asem{A}$ RPRF.
			Let $\anact=(\athread,\acom,\anup)$.
			By induction we have $\adrof{\restrict{\heapcomput{\tau}}{\validof{\tau}}}\cap\heapcomputof{\tau}{\pexp\setminus\validof{\tau}}\subseteq A$.
			By \Cref{thm:adrof-valid-heap-restriction} this means:
			\begin{align*}
			 	\heapcomputof{\tau}{\validof{\tau}}\cap\heapcomputof{\tau}{\pexp\setminus\validof{\tau}}\subseteq&\, A
			 	\\\text{and}\quad
			 	(\validof{\tau}\cap\adr)\cap\heapcomputof{\tau}{\pexp\setminus\validof{\tau}}\subseteq&\, A
			 	\ .
			\end{align*}
			The claim follows immediately if $\anact$ does not update pointer expressions and does not modify the validity of pointer expressions.
			Otherwise, $\acom$ must be one of the following: a pointer assignment, an allocation, a free, or an assertion for pointer equality.
			We do a case distinction over $\acom$.
			Note that it suffices to show:
			\begin{align*}
			 	\heapcomputof{\tau.\anact}{\validof{\tau.\anact}}\cap\heapcomputof{\tau.\anact}{\pexp\setminus\validof{\tau.\anact}}\subseteq&\, A
			 	\\\text{and}\quad
			 	(\validof{\tau.\anact}\cap\adr)\cap\heapcomputof{\tau.\anact}{\pexp\setminus\validof{\tau.\anact}}\subseteq&\, A
			\end{align*}
			because this implies the desired $\adrof{\restrict{\heapcomput{\tau.\anact}}{\validof{\tau.\anact}}}\cap\heapcomputof{\tau.\anact}{\pexp\setminus\validof{\tau.\anact}}\subseteq A$ by \Cref{thm:adrof-valid-heap-restriction}.

			\begin{casedistinction}
				%%%%%%%%%%%%%%%%%%%%%%%%%%%%%%%%%%%%%%%%%%%%%%%%%%%%%%%%%%%
				%%%%%%%%%%%%%%%%%%%%%%%%%%%%%%%%%%%%%%%%%%%%%%%%%%%%%%%%%%%
				\item[$\acom\equiv\apvar=\apvarp$]
				Let $\anadr=\heapcomputof{\tau}{\apvarp}$.
				Then, the update is $\anup=[\apvar\mapsto\anadr]$.
				So we have $\heapcomputof{\tau.\anact}{\apvar}=\anadr=\heapcomputof{\tau}{\apvarp}$.
				By definition, we have $\validof{\tau.\anact}\cap\adr=\validof{\tau}\cap\adr$.
				\begin{itemize}
					\item
						If $\apvarp\in\validof{\tau}$ we get:
						\begin{align*}
							\qquad&\heapcomputof{\tau.\anact}{\validof{\tau.\anact}}
							=
							\heapcomput{\tau}[\apvar\mapsto\anadr](\validof{\tau}\cup\set{\apvar})
							=
							\heapcomput{\tau}[\apvar\mapsto\anadr]((\validof{\tau}\setminus\set{\apvar})\cup\set{\apvar})
							\\=~&
							\heapcomput{\tau}[\apvar\mapsto\anadr](\validof{\tau}\setminus\set{\apvar})\cup\heapcomput{\tau}[\apvar\mapsto\anadr](\set{\apvar})
							\subseteq
							\heapcomput{\tau}(\validof{\tau})\cup\set{\anadr}
							=
							\heapcomput{\tau}(\validof{\tau})
						\intertext{%
						where $\anadr\in\heapcomputof{\tau}{\validof{\tau}}$ because $\apvarp\in\validof{\tau}$ and $\heapcomputof{\tau}{\apvarp}=\anadr$.
						Moreover, we have:
						}
							\qquad\quad&\heapcomputof{\tau.\anact}{\pexp\setminus\validof{\tau.\anact}}
							=
							\heapcomput{\tau}[\apvar\mapsto\anadr](\pexp\setminus(\validof{\tau}\cup\set{\apvar}))
							\\=~&
							\heapcomput{\tau}[\apvar\mapsto\anadr]((\pexp\setminus\validof{\tau})\setminus\set{\apvar})
							=
							\heapcomput{\tau}((\pexp\setminus\validof{\tau})\setminus\set{\apvar})
							\subseteq
							\heapcomput{\tau}(\pexp\setminus\validof{\tau})
							\ .
						\intertext{%
							Altogether we get:
						}
							&\heapcomputof{\tau.\anact}{\validof{\tau.\anact}}\cap\heapcomputof{\tau.\anact}{\pexp\setminus\validof{\tau.\anact}}
							\subseteq
							\heapcomput{\tau}(\validof{\tau})\cap\heapcomput{\tau}(\pexp\setminus\validof{\tau})
							~~~\text{and}\\
							\qquad&(\validof{\tau.\anact}\cap\adr)\cap\heapcomputof{\tau.\anact}{\pexp\setminus\validof{\tau.\anact}}
							\subseteq
							(\validof{\tau}\cap\adr)\cap\heapcomput{\tau}(\pexp\setminus\validof{\tau})
							\ .
						\end{align*}
						% This concludes the property for the case $\apvarp\in\validof{\tau}$.

					\item
						If $\apvarp\notin\validof{\tau}$ we proceed analogously and get:
						\begin{align*}
							\qquad
							&\heapcomputof{\tau.\anact}{\validof{\tau.\anact}}
							=
							\heapcomput{\tau}[\apvar\mapsto\anadr](\validof{\tau}\setminus\set{\apvar})
							=
							\heapcomput{\tau}(\validof{\tau}\setminus\set{\apvar})
							\subseteq
							\heapcomput{\tau}(\validof{\tau})
						\intertext{and}
							&
							\heapcomputof{\tau.\anact}{\pexp\setminus\validof{\tau.\anact}}
							=
							\heapcomput{\tau}[\apvar\mapsto\anadr](\pexp\setminus(\validof{\tau}\setminus\set{\apvar}))
							\\\subseteq~&
							\heapcomput{\tau}[\apvar\mapsto\anadr]((\pexp\setminus(\validof{\tau}\cup\set{\apvar}))\cup\set{\apvar})
							\\=~&
							\heapcomput{\tau}[\apvar\mapsto\anadr](\pexp\setminus(\validof{\tau}\cup\set{\apvar}))\cup\set{\heapcomput{\tau}[\apvar\mapsto\anadr](\apvar)}
							\\=~&
							\heapcomput{\tau}(\pexp\setminus(\validof{\tau}\cup\set{\apvar}))\cup\set{\anadr}
							\subseteq
							\heapcomput{\tau}(\pexp\setminus\validof{\tau})\cup\set{\anadr}
							=
							\heapcomput{\tau}(\pexp\setminus\validof{\tau})
						\intertext{%
						where the last equality holds because $\apvarp\in\pexp\setminus\validof{\tau}$ and $\heapcomputof{\tau}{\apvarp}=\anadr$.
						Altogether we get
						}
							&\heapcomputof{\tau.\anact}{\validof{\tau.\anact}}\cap\heapcomputof{\tau.\anact}{\pexp\setminus\validof{\tau.\anact}}
							\subseteq
							\heapcomput{\tau}(\validof{\tau})\cap\heapcomput{\tau}(\pexp\setminus\validof{\tau})
							~~~\text{and}\\
							&(\validof{\tau.\anact}\cap\adr)\cap\heapcomputof{\tau.\anact}{\pexp\setminus\validof{\tau.\anact}}
							\subseteq
							(\validof{\tau}\cap\adr)\cap\heapcomput{\tau}(\pexp\setminus\validof{\tau})
							\ .
						\end{align*}
				\end{itemize}
				This concludes the property.

				%%%%%%%%%%%%%%%%%%%%%%%%%%%%%%%%%%%%%%%%%%%%%%%%%%%%%%%%%%%
				%%%%%%%%%%%%%%%%%%%%%%%%%%%%%%%%%%%%%%%%%%%%%%%%%%%%%%%%%%%
				\item[$\acom\equiv\apvar=\psel{\apvarp}$ with $\heapcomputof{\tau}{\apvarp}\in\adr$]
				Let $\heapcomputof{\tau}{\apvarp}=\anadr$ and $\heapcomputof{\tau}{\psel{\anadr}}=\anadrp$.
				Then the update is $\anup=[\apvar\mapsto\anadrp]$.
				By definition, we have $\validof{\tau.\anact}\cap\adr=\validof{\tau}\cap\adr$.
				\begin{itemize}
					\item
						If $\psel{\anadr}\in\validof{\tau}$ we get:
						\begin{align*}
							\qquad\quad&
							\heapcomputof{\tau.\anact}{\validof{\tau.\anact}}
							=
							\heapcomput{\tau}[\apvar\mapsto\anadrp](\validof{\tau}\cup\set{\apvar})
							=
							\heapcomput{\tau}[\apvar\mapsto\anadrp]((\validof{\tau}\setminus\set{\apvar})\cup\set{\apvar})
							\\=~&
							\heapcomput{\tau}[\apvar\mapsto\anadrp](\validof{\tau}\setminus\set{\apvar})\cup\heapcomput{\tau}[\apvar\mapsto\anadrp](\set{\apvar})
							\subseteq
							\heapcomput{\tau}(\validof{\tau})\cup\set{\anadrp}
							=
							\heapcomput{\tau}(\validof{\tau})
						\intertext{%
						where $\anadrp\in\heapcomputof{\tau}{\validof{\tau}}$ holds because $\psel{\anadr}\in\validof{\tau}$ and $\heapcomputof{\tau}{\psel{\anadr}}=\anadrp$.
						Moreover, the following holds:
						}
							&
							\heapcomputof{\tau.\anact}{\pexp\setminus\validof{\tau.\anact}}
							=
							\heapcomput{\tau}[\apvar\mapsto\anadrp](\pexp\setminus(\validof{\tau}\cup\set{\apvar}))
							\\=~&
							\heapcomput{\tau}[\apvar\mapsto\anadrp]((\pexp\setminus\validof{\tau})\setminus\set{\apvar})
							=
							\heapcomput{\tau}((\pexp\setminus\validof{\tau})\setminus\set{\apvar})
							\subseteq
							\heapcomput{\tau}(\pexp\setminus\validof{\tau})
						\intertext{%
						Altogether we get conclude this case by:
						}
							&\heapcomputof{\tau.\anact}{\validof{\tau.\anact}}\cap\heapcomputof{\tau.\anact}{\pexp\setminus\validof{\tau.\anact}}
							\subseteq
							\heapcomput{\tau}(\validof{\tau})\cap\heapcomput{\tau}(\pexp\setminus\validof{\tau})
							~~~\text{and}\\
							&(\validof{\tau.\anact}\cap\adr)\cap\heapcomputof{\tau.\anact}{\pexp\setminus\validof{\tau.\anact}}
							\subseteq
							(\validof{\tau}\cap\adr)\cap\heapcomput{\tau}(\pexp\setminus\validof{\tau})
							\ .
						\end{align*}

					\item
						If $\psel{\anadr}\notin\validof{\tau}$ we proceed analogously and get:
						\begin{align*}
							\quad\qquad&
							\heapcomputof{\tau.\anact}{\validof{\tau.\anact}}
							=
							\heapcomput{\tau}[\apvar\mapsto\anadrp](\validof{\tau}\setminus\set{\apvar})
							=
							\heapcomput{\tau}(\validof{\tau}\setminus\set{\apvar})
							\subseteq
							\heapcomput{\tau}(\validof{\tau})
						\intertext{and}
							&
							\heapcomputof{\tau.\anact}{\pexp\setminus\validof{\tau.\anact}}
							=
							\heapcomput{\tau}[\apvar\mapsto\anadrp](\pexp\setminus(\validof{\tau}\setminus\set{\apvar}))
							\\\subseteq~&
							\heapcomput{\tau}[\apvar\mapsto\anadrp]((\pexp\setminus(\validof{\tau}\cup\set{\apvar}))\cup\set{\apvar})
							\\=~&
							\heapcomput{\tau}[\apvar\mapsto\anadrp](\pexp\setminus(\validof{\tau}\cup\set{\apvar}))\cup\set{\heapcomput{\tau}[\apvar\mapsto\anadrp](\apvar)}
							\\=~&
							\heapcomput{\tau}(\pexp\setminus(\validof{\tau}\cup\set{\apvar}))\cup\set{\anadrp}
							\subseteq
							\heapcomput{\tau}(\pexp\setminus\validof{\tau})\cup\set{\anadrp}
							=
							\heapcomput{\tau}(\pexp\setminus\validof{\tau})
						\intertext{%
						where the last equality holds by $\psel{\anadr}\in\pexp\setminus\validof{\tau}$ and $\heapcomputof{\tau}{\psel{\anadr}}=\anadrp$.
						Altogether we conclude by:
						}
							&\heapcomputof{\tau.\anact}{\validof{\tau.\anact}}\cap\heapcomputof{\tau.\anact}{\pexp\setminus\validof{\tau.\anact}}
							\subseteq
							\heapcomput{\tau}(\validof{\tau})\cap\heapcomput{\tau}(\pexp\setminus\validof{\tau})
							~~~\text{and}\\
							&(\validof{\tau.\anact}\cap\adr)\cap\heapcomputof{\tau.\anact}{\pexp\setminus\validof{\tau.\anact}}
							\subseteq
							(\validof{\tau}\cap\adr)\cap\heapcomput{\tau}(\pexp\setminus\validof{\tau})
							\ .
						\end{align*}
				\end{itemize}
				This concludes the property.

				%%%%%%%%%%%%%%%%%%%%%%%%%%%%%%%%%%%%%%%%%%%%%%%%%%%%%%%%%%%
				%%%%%%%%%%%%%%%%%%%%%%%%%%%%%%%%%%%%%%%%%%%%%%%%%%%%%%%%%%%
				\item[$\acom\equiv\psel{\apvar}=\apvarp$ with $\heapcomputof{\tau}{\apvar}\in\adr$]
				Similar to the previous case.

				%%%%%%%%%%%%%%%%%%%%%%%%%%%%%%%%%%%%%%%%%%%%%%%%%%%%%%%%%%%
				%%%%%%%%%%%%%%%%%%%%%%%%%%%%%%%%%%%%%%%%%%%%%%%%%%%%%%%%%%%
				\item[$\acom\equiv\apvar:=\malloc$]
				Let $\anup=[\apvar\mapsto\anadr,\psel{\anadr}\mapsto\segval,\dsel{\anadr}\mapsto\advalue]$.
				Then, $\anadr\in\freshof{\tau}\cup(\freedof{\tau}\cap A)$ holds due to the semantics.

				We have $\validof{\tau.\anact}\cap\adr=(\validof{\tau}\cap\adr)\cup\set{\anadr}$ and:
				\begin{align*}
					\heapcomputof{\tau.\anact}{\validof{\tau.\anact}}
					=~&
					\heapcomputof{\tau.\anact}{(\validof{\tau}\setminus\set{\apvar,\psel{\anadr}})\cup\set{\apvar,\psel{\anadr}}}
					\\=~&
					\heapcomputof{\tau.\anact}{\validof{\tau}\setminus\set{\apvar,\psel{\anadr}}}\cup\set{\anadr}
					\\=~&
					\heapcomputof{\tau}{\validof{\tau}\setminus\set{\apvar,\psel{\anadr}}}\cup\set{\anadr}
					\subseteq
					\heapcomputof{\tau}{\validof{\tau}}\cup\set{\anadr}
					\ .
				\intertext{%
				Moreover, we have:
				}
					\heapcomputof{\tau.\anact}{\pexp\setminus\validof{\tau.\anact}}
					=~&
					\heapcomputof{\tau.\anact}{\pexp\setminus(\validof{\tau}\cup\set{\apvar,\psel{\anadr}})}
					\\=~&
					\heapcomputof{\tau}{\pexp\setminus(\validof{\tau}\cup\set{\apvar,\psel{\anadr}})}
					\subseteq
					\heapcomputof{\tau}{\pexp\setminus\validof{\tau}}
					\ .
				\end{align*}
				Altogether, we get:
				\begin{align*}
					&
					\Big(\heapcomputof{\tau.\anact}{\validof{\tau.\anact}}
					\cup
					(\validof{\tau.\anact}\cap\adr)
					\Big)\cap
					\heapcomputof{\tau.\anact}{\pexp\setminus\validof{\tau.\anact}}
					\\\subseteq~&
					\Big(\heapcomputof{\tau}{\validof{\tau}}
					\cup
					(\validof{\tau}\cap\adr)\cup\set{\anadr}
					\Big)\cap
					\heapcomputof{\tau}{\pexp\setminus\validof{\tau}}
					\subseteq
					A
					\ .
				\end{align*}
				For the last inclusion, consider two cases.
				If $\anadr\in\freedof{\tau}$, then we must have $\anadr\in A$.
				Otherwise, $\anadr\in\freshof{\tau}$.
				Then, \Cref{thm:fresh-not-referenced} provides $\anadr\notin\heapcomputof{\tau}{\pexp\setminus\validof{\tau}}$.

				%%%%%%%%%%%%%%%%%%%%%%%%%%%%%%%%%%%%%%%%%%%%%%%%%%%%%%%%%%%
				%%%%%%%%%%%%%%%%%%%%%%%%%%%%%%%%%%%%%%%%%%%%%%%%%%%%%%%%%%%
				\item[$\acom\equiv\freeof{\anadr}$]
				The update is $\anup=[\psel{\anadr}=\bot,\dsel{\anadr}=\bot]$.

				We get $\validof{\tau.\anact}\cap\adr\subseteq(\validof{\tau}\cap\adr)\setminus\set{\anadr}$ because we have $\validof{\tau.\anact}\subseteq\validof{\tau}$ and $\psel{\anadr}\notin\validof{\tau.\anact}$ by definition.

				Next we have \[\heapcomputof{\tau.\anact}{\validof{\tau.\anact}}\subseteq\heapcomputof{\tau}{\validof{\tau}}\setminus\set{\anadr} \ .\]
				To see this, note \[\heapcomputof{\tau.\anact}{\validof{\tau.\anact}}\subseteq\heapcomputof{\tau}{\validof{\tau.\anact}}\subseteq\heapcomputof{\tau}{\validof{\tau}}\] where the first inclusion holds because $\psel{\anadr},\dsel{\anadr}\notin\validof{\tau.\anact}$ and the second inclusion holds by $\validof{\tau.\anact}\subseteq\validof{\tau}$.
				To the contrary, assume $\anadr\in\heapcomputof{\tau.\anact}{\validof{\tau.\anact}}$ was true.
				Then there is a pointer expression $\apexp\in\validof{\tau.\anact}$ with $\heapcomputof{\tau.\anact}{\apexp}=\anadr$.
				Since $\psel{\anadr}\notin\validof{\tau.\anact}$ we must have $\apexp\not\equiv\psel{\anadr}$.
				So $\heapcomputof{\tau}{\apexp}=\heapcomputof{\tau.\anact}{\apexp}=\anadr$.
				Hence, by definition we have $\apexp\notin\validof{\tau.\anact}$.
				This contradicts the assumption.
				Thus, $\anadr\notin\heapcomputof{\tau.\anact}{\validof{\tau.\anact}}$ must indeed hold.

				Finally we have \[\heapcomputof{\tau.\anact}{\pexp\setminus\validof{\tau.\anact}}\subseteq\heapcomputof{\tau}{\pexp\setminus\validof{\tau}}\cup\set{\anadr} \ .\]
				To see this, consider some $\anadrp\in\heapcomputof{\tau.\anact}{\pexp\setminus\validof{\tau.\anact}}$ with $\anadrp\notin\heapcomputof{\tau}{\pexp\setminus\validof{\tau}}$.
				There is some $\apexp\in\pexp$ with $\apexp\notin\validof{\tau.\anact}$ and $\heapcomputof{\tau.\anact}{\apexp}=\anadrp\neq\bot$.
				Due to $\anadrp\neq\bot$ we know that $\apexp\not\equiv\psel{\anadr}$.
				Hence, $\heapcomputof{\tau}{\apexp}=\heapcomputof{\tau.\anact}{\apexp}=\anadrp$.
				So we must have $\apexp\in\validof{\tau}$ as for otherwise we would get $\apexp\in\pexp\setminus\validof{\tau}$ and thus $\anadrp\in\heapcomputof{\tau}{\pexp\setminus\validof{\tau}}$ which contradicts the choice of $\anadrp$.
				To sum this up: we now know that $\apexp$ is not $\psel{\anadr}$ and is invalidated by $\anact$.
				Hence, $\anadrp=\anadr$ must hold as for otherwise $\apexp$ would not be invalidated by $\anact$.
				That is, $\anadrp\in\set{\anadr}$.
				This concludes the claim.

				Altogether, we can now conclude as follows:
				\begin{align*}
					&
					(\validof{\tau.\anact}\cap\adr)\cap\heapcomputof{\tau.\anact}{\pexp\setminus\validof{\tau.\anact}}
					\\\subseteq~&
					\Big((\validof{\tau}\cap\adr)\setminus\set{\anadr}\Big)\cap
					\Big(\heapcomputof{\tau}{\pexp\setminus\validof{\tau}}\cup\set{\anadr}\Big)
					\\\subseteq~&
					(\validof{\tau}\cap\adr)\cap
					\heapcomputof{\tau}{\pexp\setminus\validof{\tau}}
				\intertext{and}
					&
					\heapcomputof{\tau.\anact}{\validof{\tau.\anact}}
					\cap
					\heapcomputof{\tau.\anact}{\pexp\setminus\validof{\tau.\anact}}
					\\\subseteq~&
					\big(\heapcomputof{\tau}{\validof{\tau}}\setminus\set{\anadr}\big)
					\cap
					\big(\heapcomputof{\tau}{\pexp\setminus\validof{\tau}}\cup\set{\anadr}\big)
					\\=~&
					\heapcomputof{\tau}{\validof{\tau}}
					\cap
					\heapcomputof{\tau}{\pexp\setminus\validof{\tau}}
					\ .
				\end{align*}

				%%%%%%%%%%%%%%%%%%%%%%%%%%%%%%%%%%%%%%%%%%%%%%%%%%%%%%%%%%%
				%%%%%%%%%%%%%%%%%%%%%%%%%%%%%%%%%%%%%%%%%%%%%%%%%%%%%%%%%%%
				\item[$\acom\equiv\assert\ \apvar=\apvarp$]
				We have $\heapcomput{\tau}=\heapcomput{\tau.\anact}$ by the semantics.
				If $\validof{\tau}=\validof{\tau.\anact}$, then the claim follows immediately.
				Otherwise, we have $\validof{\tau.\anact}=\validof{\tau}\cup\set{\apvar,\apvarp}$ and $\validof{\tau}\cup\set{\apvar,\apvarp}\neq\emptyset$ by definition.
				And by the semantics we have $\heapcomputof{\tau}{\apvar}=\heapcomputof{\tau}{\apvarp}$.
				So we get:
				\begin{align*}
					\heapcomputof{\tau.\anact}{\validof{\tau.\anact}}
					=
					\heapcomputof{\tau}{\validof{\tau}\cup\set{\apvar,\apvarp}}
					=
					\heapcomputof{\tau}{\validof{\tau}}\cup\set{\heapcomputof{\tau}{\apvar}}
					=
					\heapcomputof{\tau}{\validof{\tau}}
				\end{align*}
				where the last inclusion holds because $\validof{\tau}\cup\set{\apvar,\apvarp}\neq\emptyset$ yields $\heapcomputof{\tau}{\apvar}\in\heapcomputof{\tau}{\validof{\tau}}$.
				Moreover, we have $\validof{\tau.\anact}\cap\adr=\validof{\tau}\cap\adr$ and: \[\heapcomputof{\tau.\anact}{\pexp\setminus\validof{\tau.\anact}}\subseteq\heapcomputof{\tau}{\pexp\setminus\validof{\tau}}\]
				Then, we conclude by induction:
				\begin{align*}
					\heapcomputof{\tau.\anact}{\validof{\tau.\anact}}\cap\heapcomputof{\tau.\anact}{\pexp\setminus\validof{\tau.\anact}}
					\subseteq~&
					\heapcomput{\tau}(\validof{\tau})\cap\heapcomput{\tau}(\pexp\setminus\validof{\tau})
					\intertext{and}
					(\validof{\tau.\anact}\cap\adr)\cap\heapcomputof{\tau.\anact}{\pexp\setminus\validof{\tau.\anact}}
					\subseteq~&
					(\validof{\tau}\cap\adr)\cap\heapcomput{\tau}(\pexp\setminus\validof{\tau})
					\ .
				\end{align*}
			\end{casedistinction}
	\end{description}
\end{proof}

\begin{proof}[Proof of \Cref{thm:mimic-simple-NEW}]
	Unrolling the \textbf{p}remise gives:
	\begin{enumerate}[label=({P}\arabic*),leftmargin=1.3cm] % partopsep=1ex,parsep=1ex
		\item \label[property]{proof:gcplus:premis:control} $\controlof{\tau}=\controlof{\sigma}$ % control
		\item \label[property]{proof:gcplus:premis:valid-heap} $\restrict{\heapcomput{\tau}}{\validof{\tau}}=\restrict{\heapcomput{\sigma}}{\validof{\sigma}}$ % valid heap
		\item \label[property]{proof:gcplus:premis:address-heap-pvars} $\forall\, \apvar\in\pvars.~ \heapcomputof{\tau}{\apvar}=\anadr \iff \heapcomputof{\sigma}{\apvar}=\anadr$ % address a pvars
		\item \label[property]{proof:gcplus:premis:address-heap-pexp} $\forall\, \anadrp\in\heapcomputof{\tau}{\validof{\tau}}.~ \heapcomputof{\tau}{\psel{\anadrp}}=\anadr \iff \heapcomputof{\sigma}{\psel{\anadrp}}=\anadr$ % address a pexp
		\item \label[property]{proof:gcplus:premis:allocated} $\forall\anadr\in A.~\anadr\in\allocatedof{\tau}\iff\anadr\in\allocatedof{\sigma}$
		\item \label[property]{proof:gcplus:premis:obsrel-NEW} $\forall\, \anadrp{\,\in\,}\adrof{\restrict{\heapcomput{\tau}}{\validof{\tau}}}\cup A.~\freeableof{\tau}{\anadrp}\subseteq\freeableof{\sigma}{\anadrp}$ % obsrel
		\item \label[property]{proof:gcplus:premis:anact} $\anact=(\athread,\acom,\anup)$
		\item \label[property]{proof:gcplus:premis:not-an-assignment-implies-enabled} $\acom\text{ is not an assignment}\implies\sigma.\anact\in\asem{A}$
		\item \label[property]{proof:gcplus:premis:tau-deref-NEW} $\acom\text{ contains }\psel{\apvar}\text{ or }\dsel{\apvar}\implies\apvar\in\validof{\tau}$
		\item \label[property]{proof:gcplus:premis:call-event-restriction-NEW} $\acom\equiv\afunc(\vecof{\apvar},\vecof{\advar}) \implies \heapcomputof{\tau}{\vecof{\apvar}}=\heapcomputof{\sigma}{\vecof{\apvar}}$
		\item \label[property]{proof:gcplus:premis:freeable-tau} $\acom\equiv\freeof{\anadr}\implies\forall\anadrp\in\adr\setminus\set{\anadr}.~\freeableof{\tau.\anact}{\anadrp}=\freeableof{\tau}{\anadrp}$
		\item \label[property]{proof:gcplus:premis:freeable-sigma} $\acom\equiv\freeof{\anadr}\implies\forall\anadrp\in\adr\setminus\set{\anadr}.~\freeableof{\sigma.\anact}{\anadrp}=\freeableof{\sigma}{\anadrp}$
		\item \label[property]{proof:gcplus:premis:malloc-sigma-fresh-implies-same-freeable-NEW} $\acom\equiv\apvar:=\malloc\wedge\heapcomputof{\tau.\anact}{\apvar}\notin A\implies\freeableof{\tau}{\heapcomputof{\tau.\anact}{\apvar}}\subseteq\freeableof{\sigma}{\heapcomputof{\tau.\anact}{\apvar}}$
	\end{enumerate}

	We have to show that there is some $\anactp=(\athread,\acom,\anupp)$ which satisfies the claim.
	That is, our \textbf{g}oal is to show for all threads $\athread$ and all addresses $\anadr\in A$:
	\begin{enumerate}[label=({G}\arabic*),leftmargin=1.3cm] % partopsep=1ex,parsep=1ex
		\item \label[property]{proof:gcplus:goal:control} $\controlof{\tau.\anact}=\controlof{\sigma.\anactp}$ % control
		\item \label[property]{proof:gcplus:goal:valid-heap} $\restrict{\heapcomput{\tau.\anact}}{\validof{\tau.\anact}}=\restrict{\heapcomput{\sigma.\anactp}}{\validof{\sigma.\anactp}}$ % valid heap
		\item \label[property]{proof:gcplus:goal:address-heap-pvars} $\forall\, \apvar\in\pvars.~ \heapcomputof{\tau.\anact}{\apvar}=\anadr \iff \heapcomputof{\sigma.\anactp}{\apvar}=\anadr$ % address a pvars
		\item \label[property]{proof:gcplus:goal:address-heap-pexp} $\forall\, \anadrp\in\heapcomputof{\tau.\anact}{\validof{\tau.\anact}}.~ \heapcomputof{\tau.\anact}{\psel{\anadrp}}=\anadr \iff \heapcomputof{\sigma.\anactp}{\psel{\anadrp}}=\anadr$ % address a pexp
		\item \label[property]{proof:gcplus:goal:allocated} $\anadr\in\allocatedof{\tau.\anact}\iff\anadr\in\allocatedof{\sigma.\anactp}$
		\item \label[property]{proof:gcplus:goal:obsrel-NEW} $\forall\, \anadrp{\,\in\,}\adrof{\restrict{\heapcomput{\tau.\anact}}{\validof{\tau.\anact}}}\cup A.\; \freeableof{\tau.\anact}{\anadrp}\subseteq\freeableof{\sigma.\anactp}{\anadrp}$ % obsrel
	\end{enumerate}
	If $A=\emptyset$, then \Cref{proof:gcplus:goal:address-heap-pvars,proof:gcplus:goal:address-heap-pexp,proof:gcplus:goal:allocated} are vacuously true.
	For the remainder of the proof, we assume $A\neq\emptyset$ and fix some arbitrary addresses $\anadr\in A$.
	Because $\anactp$ executes the same command $\acom$ as $\anact$, we get \Cref{proof:gcplus:goal:control} from \Cref{proof:gcplus:premis:control}.
	Hence, we do not comment on \Cref{proof:gcplus:goal:control}  in the following.
	Also note that the following holds due to \Cref{proof:gcplus:premis:valid-heap} together with \Cref{thm:computequiv-implies-same-valid}:
	\begin{auxiliary}
		\label[aux]{proof:gcplus:aux:valid}
		\validof{\tau}=\validof{\sigma}
	\end{auxiliary}

	\begin{casedistinction}

%=================================================================================================%
%=================================================================================================%
%=================================================================================================%
%=================================================================================================%
		\item[$\acom\equiv\apvar=\apvarp$]
			The update is $\anup=[\apvar\mapsto\anadrp]$ with $\heapcomputof{\tau}{\apvarp}=\anadrp$.
			We choose $\anupp=[\apvar\mapsto\anadrp']$ such that $\heapcomputof{\sigma}{\apvarp}=\anadrp'$.
			Note that we have $\anadrp=\anadrp'$ if $\apvarp\in\validof{\tau}$.
			Otherwise, this equality may not hold.

			\Cref{proof:gcplus:goal:allocated} holds by definition together with \Cref{proof:gcplus:premis:allocated}.
			It remains to show \Cref{proof:gcplus:goal:valid-heap,proof:gcplus:goal:address-heap-pvars,proof:gcplus:goal:address-heap-pexp,proof:gcplus:goal:obsrel-NEW}.

			%%%%%%%%%%%%%%%%%%%%%%%%%%%%%%%%%%%%%%%%%%%%%%%%%%%%%%%%%%%
			%%%%%%%%%%%%%%%%%%%%%%%%%%%%%%%%%%%%%%%%%%%%%%%%%%%%%%%%%%%
			\ad{\Cref{proof:gcplus:goal:valid-heap}}
			First consider the case where we have $\apvarp\in\validof{\tau}$.
			By \Cref{proof:gcplus:aux:valid} we also have $\apvarp\in\validof{\sigma}$.
			We get \[
				\restrict{\heapcomput{\tau.\anact}}{\validof{\tau.\anact}}
				=
				\restrict{\heapcomput{\tau}[\apvar\mapsto\anadrp]}{\validof{\tau}\cup\set{\apvar}}
				=
				(\restrict{\heapcomput{\tau}}{\validof{\tau}})[\apvar\mapsto\anadrp]
				\ .
			\]
			The second equality holds because $\heapcomputof{\tau}{\apvarp}=\anadrp$.
			That is, we preserve $\dsel{\anadrp}$ in $\restrict{\heapcomput{\tau}}{\validof{\tau}}$ and only need to update the mapping of $\apvar$.
			Similarly, we get \[
				\restrict{\heapcomput{\sigma.\anactp}}{\validof{\sigma.\anactp}}
				=
				\restrict{\heapcomput{\sigma}[\apvar\mapsto\anadrp']}{\validof{\sigma}\cup\set{\apvar}}
				=
				(\restrict{\heapcomput{\sigma}}{\validof{\sigma}})[\apvar\mapsto\anadrp']
				\ .
			\]
			By \Cref{proof:gcplus:premis:valid-heap} we have $\restrict{\heapcomput{\sigma}}{\validof{\sigma}}=\restrict{\heapcomput{\tau}}{\validof{\tau}}$.
			Moreover, it provides $\heapcomputof{\tau}{\apvarp}=\heapcomputof{\sigma}{\apvarp}$ because $\apvarp$ is valid.
			This means $\anadrp=\anadrp'$.
			So we conclude by \[
				\restrict{\heapcomput{\sigma.\anactp}}{\validof{\sigma.\anactp}}
				=
				(\restrict{\heapcomput{\sigma}}{\validof{\sigma}})[\apvar\mapsto\anadrp']
				% =
				% (\restrict{\heapcomput{\tau}}{\validof{\tau}})[\apvar\mapsto\anadrp']
				=
				(\restrict{\heapcomput{\tau}}{\validof{\tau}})[\apvar\mapsto\anadrp]
				=
				\restrict{\heapcomput{\tau.\anact}}{\validof{\tau.\anact}}
				\ .
			\]

			Now consider the case $\apvarp\notin\validof{\tau}$.
			By \Cref{proof:gcplus:aux:valid} we also have $\apvarp\notin\validof{\sigma}$.
			We get \[
				\restrict{\heapcomput{\tau.\anact}}{\validof{\tau.\anact}}
				=
				\restrict{\heapcomput{\tau}[\apvar\mapsto\anadrp]}{\validof{\tau}\setminus\set{\apvar}}
				=
				\restrict{\heapcomput{\tau}}{\validof{\tau}\setminus\set{\apvar}}
				\ .
			\]
			The last equality holds because the update does not survive the restriction to a set which is guaranteed not to contain $\apvar$.
			Similarly, we get \[
				\restrict{\heapcomput{\sigma.\anactp}}{\validof{\sigma.\anactp}}
				=
				\restrict{\heapcomput{\sigma}[\apvar\mapsto\anadrp']}{\validof{\sigma}\setminus\set{\apvar}}
				=
				\restrict{\heapcomput{\sigma}}{\validof{\sigma}\setminus\set{\apvar}}
				\ .
			\]
			We now get:
			\begin{align*}
				&
				\restrict{\heapcomput{\tau}}{\validof{\tau}\setminus\set{\apvar}}
				=
				\restrict{(\restrict{\heapcomput{\tau}}{\validof{\tau}})}{\validof{\tau}\setminus\set{\apvar}}
				=
				\restrict{(\restrict{\heapcomput{\sigma}}{\validof{\sigma}})}{\validof{\tau}\setminus\set{\apvar}}
				=
				\restrict{(\restrict{\heapcomput{\sigma}}{\validof{\sigma}})}{\validof{\sigma}\setminus\set{\apvar}}
				=
				\restrict{\heapcomput{\sigma}}{\validof{\sigma}\setminus\set{\apvar}}
			\end{align*}
			The first equality is by definition of restrictions, the second equality is due to \Cref{proof:gcplus:premis:valid-heap}, the third one is by \Cref{proof:gcplus:aux:valid}, and the last is again by definition.
			This concludes the property.

			%%%%%%%%%%%%%%%%%%%%%%%%%%%%%%%%%%%%%%%%%%%%%%%%%%%%%%%%%%%
			%%%%%%%%%%%%%%%%%%%%%%%%%%%%%%%%%%%%%%%%%%%%%%%%%%%%%%%%%%%
			\ad{\Cref{proof:gcplus:goal:address-heap-pvars}}
			Only the valuation of $\apvar$ is changed by both $\anact$ and $\anactp$.
			So by \Cref{proof:gcplus:premis:address-heap-pvars} it suffices to show $\heapcomputof{\tau.\anact}{\apvar}{=}\anadr{\iff}\heapcomputof{\sigma.\anactp}{\apvar}{=}\anadr$.
			This follows easily: \[\heapcomputof{\tau.\anact}{\apvar}=\anadr\iff\heapcomputof{\tau}{\apvarp}=\anadr\iff\heapcomputof{\sigma}{\apvarp}=\anadr\iff\heapcomputof{\sigma.\anactp}{\apvar}=\anadr\] where the first and last equivalence hold due to the updates $\anup$ and $\anupp$ and the second equality holds by \Cref{proof:gcplus:premis:address-heap-pvars}.

			%%%%%%%%%%%%%%%%%%%%%%%%%%%%%%%%%%%%%%%%%%%%%%%%%%%%%%%%%%%
			%%%%%%%%%%%%%%%%%%%%%%%%%%%%%%%%%%%%%%%%%%%%%%%%%%%%%%%%%%%
			\ad{\Cref{proof:gcplus:goal:address-heap-pexp}}
			Let $\anadrpp\in\heapcomputof{\tau.\anact}{\validof{\tau.\anact}}$.
			We have $\anadrpp\in\heapcomputof{\tau}{\validof{\tau}}$.
			To see this, consider some $\apexp\in\validof{\tau.\anact}$ such that $\heapcomputof{\tau.\anact}{\apexp}=\anadrpp$.
			If $\apexp\not\equiv\apvar$, then $\heapcomputof{\tau.\anact}{\apexp}=\heapcomputof{\tau}{\apexp}$ and $\apexp\in\validof{\tau}$ by definition.
			So $\anadrpp\in\heapcomputof{\tau}{\validof{\tau}}$ follows immediately.
			Otherwise, in the case of $\apexp\equiv\apvar$, we have $\heapcomputof{\tau.\anact}{\apvar}=\heapcomputof{\tau}{\apvarp}$.
			Moreover, $\apvar\in\validof{\tau.\anact}$ implies $\apvarp\in\validof{\tau}$.
			Hence, $\anadrpp\in\heapcomputof{\tau}{\validof{\tau}}$.
			We then conclude as follows:
			\begin{align*}
				\heapcomputof{\tau.\anact}{\psel{\anadrpp}}=\anadr
				\iff
				\heapcomputof{\tau}{\psel{\anadrpp}}=\anadr
				\iff
				\heapcomputof{\sigma}{\psel{\anadrpp}}=\anadr
				\iff
				\heapcomputof{\sigma.\anactp}{\psel{\anadrpp}}=\anadr
			\end{align*} where the first/last equivalence hold because $\anact$/$\anactp$ does not update any pointer selector and the second equivalence holds by $\anadrpp\in\heapcomputof{\tau}{\validof{\tau}}$ from above together with \Cref{proof:gcplus:premis:address-heap-pexp}.

			%%%%%%%%%%%%%%%%%%%%%%%%%%%%%%%%%%%%%%%%%%%%%%%%%%%%%%%%%%%
			%%%%%%%%%%%%%%%%%%%%%%%%%%%%%%%%%%%%%%%%%%%%%%%%%%%%%%%%%%%
			\ad{\Cref{proof:gcplus:goal:obsrel-NEW}}
			It suffices to show $\adrof{\restrict{\heapcomput{\tau.\anact}}{\validof{\tau.\anact}}}\subseteq\adrof{\restrict{\heapcomput{\tau}}{\validof{\tau}}}$ because of \Cref{proof:gcplus:premis:obsrel-NEW} together with the fact that $\anact$/$\anactp$ do not emit an event.
			We have:
			\begin{align*}
				&
				\adrof{\restrict{\heapcomput{\tau.\anact}}{\validof{\tau.\anact}}}
				\\=~&
				(\validof{\tau.\anact}\cap\adr)\cup\heapcomputof{\tau.\anact}{\validof{\tau.\anact}}
				\\\subseteq~&
				((\validof{\tau}\cup\set{\apvar})\cap\adr)\cup\heapcomputof{\tau.\anact}{\validof{\tau.\anact}}
				\\=~&
				(\validof{\tau}\cap\adr)\cup\heapcomputof{\tau.\anact}{\validof{\tau.\anact}}
			\end{align*}
			where the first equality is due to \Cref{thm:adrof-valid-heap-restriction}.
			It remains to show $\heapcomputof{\tau.\anact}{\validof{\tau.\anact}}\subseteq\heapcomputof{\tau}{\validof{\tau}}$.
			\begin{itemize}
				\item 
					If $\apvarp\notin\validof{\tau}$ we get:
					\begin{align*}
						\heapcomputof{\tau.\anact}{\validof{\tau.\anact}}
						=
						\heapcomputof{\tau.\anact}{\validof{\tau}\setminus\set{\apvar}}
						=
						\heapcomputof{\tau}{\validof{\tau}\setminus\set{\apvar}}
						\subseteq
						\heapcomputof{\tau}{\validof{\tau}}
						\ .
					\end{align*}

				\item
					If $\apvarp\in\validof{\tau}$ we get:
					\begin{align*}
						\heapcomputof{\tau.\anact}{\validof{\tau.\anact}}
						=~&
						\heapcomputof{\tau.\anact}{(\validof{\tau}\setminus\set{\apvar})\cup\set{\apvar}}
						=
						\heapcomputof{\tau}{\validof{\tau}\setminus\set{\apvar}}\cup\set{\anadrp}
						\\\subseteq~&
						\heapcomputof{\tau}{\validof{\tau}}\cup\set{\anadrp}
						=
						\heapcomputof{\tau}{\validof{\tau}}
					\end{align*}
					where the last equality holds because of $\heapcomputof{\tau}{\apvarp}=\anadrp$ together with $\apvarp\in\validof{\tau}$ by assumption.
			\end{itemize}
			This concludes the property.

%=================================================================================================%
%=================================================================================================%
%=================================================================================================%
%=================================================================================================%
		\item[$\acom\equiv\apvar=\psel{\apvarp}$]
			By \Cref{proof:gcplus:premis:tau-deref-NEW} we have $\apvarp\in\validof{\tau}$.
			And by the semantics we have $\heapcomputof{\tau}{\apvarp}\in\adr$.
			Then the update is $\anup=[\apvar\mapsto\anadrpp]$ with $\heapcomputof{\tau}{\apvarp}=\anadrp$ and $\heapcomputof{\tau}{\psel{\anadrp}}=\anadrpp$.
			\Cref{proof:gcplus:premis:valid-heap} gives $\heapcomputof{\sigma}{\apvarp}=\heapcomputof{\tau}{\apvarp}=\anadrp$.
			So we choose $\anupp=[\apvar\mapsto\anadrpp']$ with $\heapcomputof{\tau}{\psel{\anadrp}}=\anadrpp'$.

			\Cref{proof:gcplus:goal:allocated} holds by definition together with \Cref{proof:gcplus:premis:allocated}.
			It remains to show \Cref{proof:gcplus:goal:valid-heap,proof:gcplus:goal:address-heap-pvars,proof:gcplus:goal:address-heap-pexp,proof:gcplus:goal:obsrel-NEW}.

			%%%%%%%%%%%%%%%%%%%%%%%%%%%%%%%%%%%%%%%%%%%%%%%%%%%%%%%%%%%
			%%%%%%%%%%%%%%%%%%%%%%%%%%%%%%%%%%%%%%%%%%%%%%%%%%%%%%%%%%%
			\ad{\Cref{proof:gcplus:goal:valid-heap}}
			First consider the case where we have $\psel{\anadrp}\in\validof{\tau}$.
			By \Cref{proof:gcplus:aux:valid} this implies $\psel{\anadrp}\in\validof{\sigma}$.
			That is, $\apvar$ is validated by the assignment.
			% Note that this gives $\apvarp\in\validof{\sigma}$ by \Cref{proof:gcplus:premis:valid}.
			We get \[
				\restrict{\heapcomput{\tau.\anact}}{\validof{\tau.\anact}}
				=
				\restrict{\heapcomput{\tau}[\apvar\mapsto\anadrpp]}{\validof{\tau}\cup\set{\apvar}}
				=
				(\restrict{\heapcomput{\tau}}{\validof{\tau}})[\apvar\mapsto\anadrpp]
				\ .
			\]
			The second equality holds because $\heapcomputof{\tau}{\psel{\anadrp}}=\anadrpp$.
			That is, we preserve $\dsel{\anadrpp}$ in $\restrict{\heapcomput{\tau}}{\validof{\tau}}$ and only need to update the mapping of $\apvar$.
			Similarly, we get \[
				\restrict{\heapcomput{\sigma.\anactp}}{\validof{\sigma.\anactp}}
				=
				\restrict{\heapcomput{\sigma}[\apvar\mapsto\anadrpp']}{\validof{\sigma}\cup\set{\apvar}}
				=
				(\restrict{\heapcomput{\sigma}}{\validof{\sigma}})[\apvar\mapsto\anadrpp']
				\ .
			\]
			By \Cref{proof:gcplus:premis:valid-heap} we have $\restrict{\heapcomput{\sigma}}{\validof{\sigma}}=\restrict{\heapcomput{\tau}}{\validof{\tau}}$.
			Moreover, it provides $\heapcomputof{\tau}{\psel{\anadrp}}=\heapcomputof{\sigma}{\psel{\anadrp}}$ because $\psel{\anadrp}$ is valid.
			This means $\anadrpp=\anadrpp'$.
			So we conclude by \[
				\restrict{\heapcomput{\sigma.\anactp}}{\validof{\sigma.\anactp}}
				=
				(\restrict{\heapcomput{\sigma}}{\validof{\sigma}})[\apvar\mapsto\anadrpp']
				=
				(\restrict{\heapcomput{\tau}}{\validof{\tau}})[\apvar\mapsto\anadrpp]
				=
				\restrict{\heapcomput{\tau.\anact}}{\validof{\tau.\anact}}
				\ .
			\]

			Now consider the case $\psel{\anadrp}\notin\validof{\tau}$.
			By \Cref{proof:gcplus:aux:valid} we also have $\psel{\anadrp}\notin\validof{\sigma}$.
			We get \[
				\restrict{\heapcomput{\tau.\anact}}{\validof{\tau.\anact}}
				=
				\restrict{\heapcomput{\tau}[\apvar\mapsto\anadrpp]}{\validof{\tau}\setminus\set{\apvar}}
				=
				\restrict{\heapcomput{\tau}}{\validof{\tau}\setminus\set{\apvar}}
				\ .
			\]
			The last equality holds because the update does not survive the restriction to a set which is guaranteed not to contain $\apvar$.
			Similarly, we have \[
				\restrict{\heapcomput{\sigma.\anactp}}{\validof{\sigma.\anactp}}
				=
				\restrict{\heapcomput{\sigma}[\apvar\mapsto\anadrpp']}{\validof{\sigma}\setminus\set{\apvar}}
				=
				\restrict{\heapcomput{\sigma}}{\validof{\sigma}\setminus\set{\apvar}}
				\ .
			\]
			We now get:
			\begin{align*}
				&
				\restrict{\heapcomput{\tau}}{\validof{\tau}\setminus\set{\apvar}}
				=
				\restrict{(\restrict{\heapcomput{\tau}}{\validof{\tau}})}{\validof{\tau}\setminus\set{\apvar}}
				=
				\restrict{(\restrict{\heapcomput{\sigma}}{\validof{\sigma}})}{\validof{\tau}\setminus\set{\apvar}}
				=
				\restrict{(\restrict{\heapcomput{\sigma}}{\validof{\sigma}})}{\validof{\sigma}\setminus\set{\apvar}}
				=
				\restrict{\heapcomput{\sigma}}{\validof{\sigma}\setminus\set{\apvar}}
			\end{align*}
			The first equality is by definition of restrictions, the second equality is due to \Cref{proof:gcplus:premis:valid-heap}, the third one is by \Cref{proof:gcplus:aux:valid}, and the last is again by definition.
			This concludes the property.

			%%%%%%%%%%%%%%%%%%%%%%%%%%%%%%%%%%%%%%%%%%%%%%%%%%%%%%%%%%%
			%%%%%%%%%%%%%%%%%%%%%%%%%%%%%%%%%%%%%%%%%%%%%%%%%%%%%%%%%%%
			\ad{\Cref{proof:gcplus:goal:address-heap-pvars}}
			Only the valuation of $\apvar$ is changed by both $\anact$ and $\anactp$.
			So by \Cref{proof:gcplus:premis:address-heap-pvars} it suffices to show $\heapcomputof{\tau.\anact}{\apvar}=\anadr{\iff}\heapcomputof{\sigma.\anactp}{\apvar}=\anadr$.
			This follows easily: \[\heapcomputof{\tau.\anact}{\apvar}=\anadr\iff\heapcomputof{\tau}{\psel{\anadrp}}=\anadr\iff\heapcomputof{\sigma}{\psel{\anadrp}}=\anadr\iff\heapcomputof{\sigma.\anactp}{\apvar}=\anadr\] where the first/last equivalence holds due to the update $\anup$/$\anupp$.
			The second equality holds by \Cref{proof:gcplus:premis:address-heap-pexp} because $\apvarp\in\validof{\tau}$ from above yields $\anadrp=\heapcomputof{\tau}{\apvarp}\in\heapcomputof{\tau}{\validof{\tau}}$.

			%%%%%%%%%%%%%%%%%%%%%%%%%%%%%%%%%%%%%%%%%%%%%%%%%%%%%%%%%%%
			%%%%%%%%%%%%%%%%%%%%%%%%%%%%%%%%%%%%%%%%%%%%%%%%%%%%%%%%%%%
			\ad{\Cref{proof:gcplus:goal:address-heap-pexp}}
			Let $\tilde{\anadr}\in\heapcomputof{\tau.\anact}{\validof{\tau.\anact}}$.
			We have $\tilde{\anadr}\in\heapcomputof{\tau}{\validof{\tau}}$.
			To see this, consider some $\apexp\in\validof{\tau.\anact}$ such that $\heapcomputof{\tau.\anact}{\apexp}=\tilde{\anadr}$.
			If $\apexp\not\equiv\apvar$, then $\heapcomputof{\tau.\anact}{\apexp}=\heapcomputof{\tau}{\apexp}$ and $\apexp\in\validof{\tau}$ by definition.
			So $\tilde{\anadr}\in\heapcomputof{\tau}{\validof{\tau}}$ follows immediately.
			Otherwise, in the case of $\apexp\equiv\apvar$, we have $\heapcomputof{\tau.\anact}{\apvar}=\heapcomputof{\tau}{\apvarp}$.
			Moreover, $\apvar\in\validof{\tau.\anact}$ implies $\apvarp\in\validof{\tau}$.
			Hence, $\tilde{\anadr}\in\heapcomputof{\tau}{\validof{\tau}}$.
			We then conclude as follows:
			\begin{align*}
				\heapcomputof{\tau.\anact}{\psel{\tilde{\anadr}}}=\anadr
				\iff
				\heapcomputof{\tau}{\psel{\tilde{\anadr}}}=\anadr
				\iff
				\heapcomputof{\sigma}{\psel{\tilde{\anadr}}}=\anadr
				\iff
				\heapcomputof{\sigma.\anactp}{\psel{\tilde{\anadr}}}=\anadr
			\end{align*} where the first/last equivalence hold because $\anact$/$\anactp$ does not update any pointer selector and the second equivalence holds by $\tilde{\anadr}\in\heapcomputof{\tau}{\validof{\tau}}$ from above together with \Cref{proof:gcplus:premis:address-heap-pexp}.

			%%%%%%%%%%%%%%%%%%%%%%%%%%%%%%%%%%%%%%%%%%%%%%%%%%%%%%%%%%%
			%%%%%%%%%%%%%%%%%%%%%%%%%%%%%%%%%%%%%%%%%%%%%%%%%%%%%%%%%%%
			\ad{\Cref{proof:gcplus:goal:obsrel-NEW}}
			It suffices to show $\adrof{\restrict{\heapcomput{\tau.\anact}}{\validof{\tau.\anact}}}\subseteq\adrof{\restrict{\heapcomput{\tau}}{\validof{\tau}}}$ because of \Cref{proof:gcplus:premis:obsrel-NEW} together with the fact that $\anact$/$\anactp$ do not emit an event.
			We have:
			\begin{align*}
				&
				\adrof{\restrict{\heapcomput{\tau.\anact}}{\validof{\tau.\anact}}}
				=
				(\validof{\tau.\anact}\cap\adr)\cup\heapcomputof{\tau.\anact}{\validof{\tau.\anact}}
				\\\subseteq~&
				((\validof{\tau}\cup\set{\apvar})\cap\adr)\cup\heapcomputof{\tau.\anact}{\validof{\tau.\anact}}
				=
				(\validof{\tau}\cap\adr)\cup\heapcomputof{\tau.\anact}{\validof{\tau.\anact}}
			\end{align*}
			where the first equality is due to \Cref{thm:adrof-valid-heap-restriction}.
			It remains to show $\heapcomputof{\tau.\anact}{\validof{\tau.\anact}}\subseteq\heapcomputof{\tau}{\validof{\tau}}$.
			\begin{itemize}
				\item
					If $\psel{\anadrp}\notin\validof{\tau}$ we get:
					\begin{align*}
						\heapcomputof{\tau.\anact}{\validof{\tau.\anact}}
						=
						\heapcomputof{\tau.\anact}{\validof{\tau}\setminus\set{\apvar}}
						=
						\heapcomputof{\tau}{\validof{\tau}\setminus\set{\apvar}}
						\subseteq
						\heapcomputof{\tau}{\validof{\tau}}
					\end{align*}

				\item
					If $\psel{\anadrp}\in\validof{\tau}$ we get:
					\begin{align*}
						\heapcomputof{\tau.\anact}{\validof{\tau.\anact}}
						=~&
						\heapcomputof{\tau.\anact}{(\validof{\tau}\setminus\set{\apvar})\cup\set{\apvar}}
						=
						\heapcomputof{\tau}{\validof{\tau}\setminus\set{\apvar}}\cup\set{\anadrpp}
						\\\subseteq~&
						\heapcomputof{\tau}{\validof{\tau}}\cup\set{\anadrpp}
						=
						\heapcomputof{\tau}{\validof{\tau}}
					\end{align*}
					where the last equality holds because of $\heapcomputof{\tau}{\psel{\anadrp}}=\anadrpp$ together with $\psel{\anadrp}\in\validof{\tau}$ by assumption.
			\end{itemize}
			This concludes the property.

%=================================================================================================%
%=================================================================================================%
%=================================================================================================%
%=================================================================================================%
		\item[$\acom\equiv\psel{\apvar}=\apvarp$]
			This case is analogous to the previous one.

%=================================================================================================%
%=================================================================================================%
%=================================================================================================%
%=================================================================================================%
		\item[$\acom\equiv\advar=\opof{\advar_1,\ldots, \advar_n}$]
			The update is $\anup=[\advar\mapsto\advalue]$ with $\advalue=\opof{\heapcomputof{\tau}{\advar_1},\dots,\heapcomputof{\tau}{\advar_n}}$.
			We have $\advar_1,\dots,\advar_n\in\domof{\restrict{\heapcomput{\tau}}{\validof{\tau}}}$ by definition.
			So \Cref{proof:gcplus:premis:valid-heap} gives $\advalue=\opof{\heapcomputof{\sigma}{\advar_1},\dots,\heapcomputof{\sigma}{\advar_n}}$.
			Hence, we choose $\anupp=[\advar\mapsto\advalue]$.
			That is, the same update is performed, $\anup=\anupp$.

			We immediately get \Cref{proof:gcplus:goal:valid-heap,proof:gcplus:goal:address-heap-pvars,proof:gcplus:goal:address-heap-pexp,proof:gcplus:goal:allocated,proof:gcplus:goal:obsrel-NEW} from \Cref{proof:gcplus:premis:valid-heap,proof:gcplus:premis:address-heap-pvars,proof:gcplus:premis:address-heap-pexp,proof:gcplus:premis:allocated,proof:gcplus:premis:obsrel-NEW}.

%=================================================================================================%
%=================================================================================================%
%=================================================================================================%
%=================================================================================================%
		\item[$\acom\equiv\advar=\dsel{\apvarp}$]
			By \Cref{proof:gcplus:premis:tau-deref-NEW} we have $\apvarp\in\validof{\tau}$.
			And by the semantics we have $\heapcomputof{\tau}{\apvarp}\in\adr$.
			Then, the update is $\anup=[\advar\mapsto\advalue]$ with $\heapcomputof{\tau}{\apvarp}=\anadrp$ and $\heapcomputof{\tau}{\dsel{\anadrp}}=\advalue$.
			We get $\dsel{\anadrp}\in\domof{\restrict{\heapcomput{\tau}}{\validof{\tau}}}$ because $\apvarp\in\validof{\tau}$.
			From \Cref{proof:gcplus:aux:valid,proof:gcplus:premis:valid-heap} we get $\apvarp\in\validof{\sigma}$, $\heapcomputof{\sigma}{\apvarp}=\heapcomputof{\tau}{\apvarp}=\anadrp$, and $\heapcomputof{\sigma}{\dsel{\anadrp}}=\heapcomputof{\tau}{\dsel{\anadrp}}=\advalue$.
			So we choose $\anupp=\anup$.

			Then, \Cref{proof:gcplus:goal:valid-heap,proof:gcplus:goal:address-heap-pvars,proof:gcplus:goal:address-heap-pexp,proof:gcplus:goal:allocated,proof:gcplus:goal:obsrel-NEW} follow immediately from \Cref{proof:gcplus:premis:valid-heap,proof:gcplus:premis:address-heap-pvars,proof:gcplus:premis:address-heap-pexp,proof:gcplus:premis:allocated,proof:gcplus:premis:obsrel-NEW}.

%=================================================================================================%
%=================================================================================================%
%=================================================================================================%
%=================================================================================================%
		\item[$\acom\equiv\dsel{\apvar}=\advarp$]
			By \Cref{proof:gcplus:premis:tau-deref-NEW} we have $\apvar\in\validof{\tau}$.
			And by the semantics we have $\heapcomputof{\tau}{\apvar}\in\adr$.
			Then, the update is $\anup=[\dsel{\anadrp}\mapsto\advalue]$ with $\heapcomputof{\tau}{\apvar}=\anadrp$ and $\heapcomputof{\tau}{\advarp}=\advalue$.
			From \Cref{proof:gcplus:premis:valid-heap} we get $\heapcomputof{\sigma}{\apvar}=\heapcomputof{\tau}{\apvar}=\anadrp$ and $\heapcomputof{\sigma}{\advarp}=\heapcomputof{\tau}{\advarp}=\advalue$.
			Hence we get $\anupp=\anup$.

			Then, \Cref{proof:gcplus:goal:valid-heap,proof:gcplus:goal:address-heap-pvars,proof:gcplus:goal:address-heap-pexp,proof:gcplus:goal:allocated,proof:gcplus:goal:obsrel-NEW} follow immediately from \Cref{proof:gcplus:premis:valid-heap,proof:gcplus:premis:address-heap-pvars,proof:gcplus:premis:address-heap-pexp,proof:gcplus:premis:allocated,proof:gcplus:premis:obsrel-NEW}.

%=================================================================================================%
%=================================================================================================%
%=================================================================================================%
%=================================================================================================%
		\item[$\acom\equiv\apvar:=\malloc$]
			The update is $\anup=[\apvar\mapsto\anadrp,\psel{\anadrp}\mapsto\segval,\dsel{\anadrp}\mapsto\advalue]$ with $\anadrp\in\freshof{\tau}\cup\freedof{\tau}$ and arbitrary $\advalue$.
			% By \Cref{proof:gcplus:premis:not-an-assignment-implies-enabled} we can choose $\anupp=\anup$ and get $\anadrp\in\freshof{\sigma}\cup\freedof{\sigma}$.
			By \Cref{proof:gcplus:premis:not-an-assignment-implies-enabled} we know $\anadrp\in\freshof{\sigma}\cup\freedof{\sigma}$.
			We choose \[\anupp=[\apvar\mapsto\anadrp,\psel{\anadrp}\mapsto\segval,\dsel{\anadrp}\mapsto\advalue]\ .\]

			We first show two auxiliary statements:
			\begin{auxiliary}
				\heapcomputof{\tau.\anact}{\validof{\tau.\anact}}&=\heapcomputof{\tau}{\validof{\tau}\setminus\set{\apvar,\psel{\anadrp}}}\cup\set{\anadrp}
				\label[aux]{proof:gcplus:case:malloc:valid-heap-tau-new}
				\\
				\heapcomputof{\sigma.\anact}{\validof{\sigma.\anact}}&=\heapcomputof{\sigma}{\validof{\sigma}\setminus\set{\apvar,\psel{\anadrp}}}\cup\set{\anadrp}
				\label[aux]{proof:gcplus:case:malloc:valid-heap-sigma-new}
			\end{auxiliary}

			%%%%%%%%%%%%%%%%%%%%%%%%%%%%%%%%%%%%%%%%%%%%%%%%%%%%%%%%%%%
			%%%%%%%%%%%%%%%%%%%%%%%%%%%%%%%%%%%%%%%%%%%%%%%%%%%%%%%%%%%
			\ad{\Cref{proof:gcplus:case:malloc:valid-heap-tau-new}}
			Note that $\anact$ changes only the valuation of $\apvar$ and $\psel{\anadrp}$.
			Moreover, by definition, we have $\validof{\tau.\anact}=\validof{\tau}\cup\set{\apvar,\psel{\anadrp}}$.
			So we conclude as follows:
			\begin{align*}
				\heapcomputof{\tau.\anact}{\validof{\tau.\anact}}
				=~&
				\heapcomputof{\tau.\anact}{\validof{\tau}\cup\set{\apvar,\psel{\anadrp}}}
				\\=~&
				\heapcomputof{\tau.\anact}{\validof{\tau}\setminus\set{\apvar,\psel{\anadrp}}}\cup\heapcomputof{\tau.\anact}{\set{\apvar,\psel{\anadrp}}}
				\\=~&
				\heapcomputof{\tau}{\validof{\tau}\setminus\set{\apvar,\psel{\anadrp}}}\cup\set{\anadrp}
				\ .
			\end{align*}

			%%%%%%%%%%%%%%%%%%%%%%%%%%%%%%%%%%%%%%%%%%%%%%%%%%%%%%%%%%%
			%%%%%%%%%%%%%%%%%%%%%%%%%%%%%%%%%%%%%%%%%%%%%%%%%%%%%%%%%%%
			\ad{\Cref{proof:gcplus:case:malloc:valid-heap-sigma-new}}
			Similar to \Cref{proof:gcplus:case:malloc:valid-heap-tau-new}.

			%%%%%%%%%%%%%%%%%%%%%%%%%%%%%%%%%%%%%%%%%%%%%%%%%%%%%%%%%%%
			%%%%%%%%%%%%%%%%%%%%%%%%%%%%%%%%%%%%%%%%%%%%%%%%%%%%%%%%%%%
			\ad{\Cref{proof:gcplus:goal:valid-heap}}
			\Cref{proof:gcplus:case:malloc:valid-heap-tau-new} with $\validof{\tau.\anact}=\validof{\tau}\cup\set{\apvar,\psel{\anadrp}}$ gives the following:
			\begin{align*}
				&
				\domof{\restrict{\heapcomput{\tau.\anact}}{\validof{\tau.\anact}}}
				=
				\validof{\tau.\anact}\cup\dvars\cup\setcond{\dsel{\anadrpp}}{\anadrpp\in\heapcomputof{\tau.\anact}{\validof{\tau.\anact}}}
				\\=~&
				(\validof{\tau}\setminus\set{\apvar,\psel{\anadrp}})\cup\set{\apvar,\psel{\anadrp}}\cup\dvars\\&\cup\setcond{\dsel{\anadrpp}}{\anadrpp\in\heapcomputof{\tau}{\validof{\tau}\setminus\set{\apvar,\psel{\anadrp}}}}\cup\set{\dsel{\anadrp}}
				\\=~&
				\domof{\restrict{\heapcomput{\tau}}{\validof{\tau}\setminus\set{\apvar,\psel{\anadrp}}}}\cup\set{\apvar,\psel{\anadrp},\dsel{\anadrp}}
				\ .
			\end{align*}
			By definition then, we have:
			\begin{align*}
			 	\restrict{\heapcomput{\tau.\anact}}{\validof{\tau.\anact}}
			 	=~&
			 	(\restrict{\heapcomput{\tau}}{\validof{\tau}\setminus\set{\apvar,\psel{\anadrp}}})[\apvar\mapsto\anadrp,\psel{\anadrp}\mapsto\segval,\dsel{\anadrp}\mapsto\advalue]
			 	\\=~&
			 	(\restrict{(\restrict{\heapcomput{\tau}}{\validof{\tau}})}{\validof{\tau}\setminus\set{\apvar,\psel{\anadrp}}})[\apvar\mapsto\anadrp,\psel{\anadrp}\mapsto\segval,\dsel{\anadrp}\mapsto\advalue]
			 	\ .
			 	\intertext{
			Along the same lines, using \Cref{proof:gcplus:case:malloc:valid-heap-sigma-new}, we get: 
			 	}
			 	\restrict{\heapcomput{\sigma.\anactp}}{\validof{\sigma.\anactp}}=~&(\restrict{(\restrict{\heapcomput{\sigma}}{\validof{\sigma}})}{\validof{\sigma}\setminus\set{\apvar,\psel{\anadrp}}})[\apvar\mapsto\anadrp,\psel{\anadrp}\mapsto\segval,\dsel{\anadrp}\mapsto\advalue]
			\end{align*}
			The above equalities now allow us to conclude the claim using \Cref{proof:gcplus:premis:valid-heap,proof:gcplus:aux:valid}:
			\[\restrict{(\restrict{\heapcomput{\tau}}{\validof{\tau}})}{\validof{\tau}\setminus\set{\apvar,\psel{\anadrp}}}=\restrict{(\restrict{\heapcomput{\sigma}}{\validof{\sigma}})}{\validof{\sigma}\setminus\set{\apvar,\psel{\anadrp}}}\]

			%%%%%%%%%%%%%%%%%%%%%%%%%%%%%%%%%%%%%%%%%%%%%%%%%%%%%%%%%%%
			%%%%%%%%%%%%%%%%%%%%%%%%%%%%%%%%%%%%%%%%%%%%%%%%%%%%%%%%%%%
			\ad{\Cref{proof:gcplus:goal:address-heap-pvars}}
			For $\apvar$ we have $\heapcomputof{\tau.\anact}{\apvar}=\heapcomputof{\sigma.\anact}{\apvar}$.
			For $\apvarp\neq\apvar$ we have $\heapcomputof{\tau.\anact}{\apvarp}=\heapcomputof{\tau}{\apvarp}$ and $\heapcomputof{\sigma.\anact}{\apvarp}=\heapcomputof{\sigma}{\apvarp}$.
			Hence, the claim follows from \Cref{proof:gcplus:premis:address-heap-pvars}.

			%%%%%%%%%%%%%%%%%%%%%%%%%%%%%%%%%%%%%%%%%%%%%%%%%%%%%%%%%%%
			%%%%%%%%%%%%%%%%%%%%%%%%%%%%%%%%%%%%%%%%%%%%%%%%%%%%%%%%%%%
			\ad{\Cref{proof:gcplus:goal:address-heap-pexp}}
			For $\psel{\anadrp}$ we have $\heapcomputof{\tau.\anact}{\psel{\anadrp}}=\heapcomputof{\sigma.\anact}{\psel{\anadrp}}$.
			For $\psel{\anadrpp}\neq\psel{\anadrp}$ we have $\heapcomputof{\tau.\anact}{\psel{\anadrpp}}=\heapcomputof{\tau}{\psel{\anadrpp}}$ and $\heapcomputof{\sigma.\anactp}{\psel{\anadrpp}}=\heapcomputof{\sigma}{\psel{\anadrpp}}$.
			Hence, the claim follows from \Cref{proof:gcplus:premis:address-heap-pexp}.

			%%%%%%%%%%%%%%%%%%%%%%%%%%%%%%%%%%%%%%%%%%%%%%%%%%%%%%%%%%%
			%%%%%%%%%%%%%%%%%%%%%%%%%%%%%%%%%%%%%%%%%%%%%%%%%%%%%%%%%%%
			\ad{\Cref{proof:gcplus:goal:allocated}}
			Follows from \Cref{proof:gcplus:premis:allocated} because $\allocatedof{\tau.\anact}=\allocatedof{\tau}\cup\set{\anadrp}$ and $\allocatedof{\sigma.\anactp}=\freedof{\sigma}\setminus\set{\anadrp}$.

			%%%%%%%%%%%%%%%%%%%%%%%%%%%%%%%%%%%%%%%%%%%%%%%%%%%%%%%%%%%
			%%%%%%%%%%%%%%%%%%%%%%%%%%%%%%%%%%%%%%%%%%%%%%%%%%%%%%%%%%%
			\ad{\Cref{proof:gcplus:goal:obsrel-NEW}}
			From \Cref{proof:gcplus:case:malloc:valid-heap-tau-new} and \Cref{thm:adrof-valid-heap-restriction} we get:
			\begin{align*}
				\adrof{\restrict{\heapcomput{\tau.\anact}}{\validof{\tau.\anact}}}
				=~&
				(\validof{\tau.\anact}\cap\adr)\cup\heapcomputof{\tau.\anact}{\validof{\tau.\anact}}
				\\=~&
				((\validof{\tau}\cup\set{\apvar,\psel{\anadrp}})\cap\adr)\cup
				\heapcomputof{\tau}{\validof{\tau}\setminus\set{\apvar,\psel{\anadrp}}}\cup\set{\anadrp}
				\\=~&
				(\validof{\tau}\cap\adr)\cup
				\heapcomputof{\tau}{\validof{\tau}\setminus\set{\apvar,\psel{\anadrp}}}\cup\set{\anadrp}
				\\\subseteq~&
				(\validof{\tau}\cap\adr)\cup\heapcomputof{\tau}{\validof{\tau}}\cup\set{\anadrp}
				=
				\adrof{\restrict{\heapcomput{\tau}}{\validof{\tau}}}\cup\set{\anadrp}
			\end{align*}
			We have $\freeableof{\tau.\anact}{\anadrpp}=\freeableof{\tau}{\anadrpp}$ and $\freeableof{\sigma.\anactp}{\anadrpp}=\freeableof{\sigma}{\anadrpp}$ for all $\anadrpp\in\adr$ because $\anact$/$\anactp$ does not emit an event.
			Hence, due to \Cref{proof:gcplus:premis:obsrel-NEW}, it remains to show that $\freeableof{\tau}{\anadrp}\subseteq\freeableof{\sigma}{\anadrp}$.
			If $\anadrp\in A$, then we get the desired inclusion from \Cref{proof:gcplus:premis:obsrel-NEW}.
			Otherwise, we get it from \Cref{proof:gcplus:premis:malloc-sigma-fresh-implies-same-freeable-NEW}.

%=================================================================================================%
%=================================================================================================%
%=================================================================================================%
%=================================================================================================%
		\item[$\acom\equiv\freeof{\anadrp}$]
			The update is $\anup=[\psel{\anadrp}\mapsto\bot,\dsel{\anadrp}\mapsto\bot]$.
			By \Cref{proof:gcplus:premis:not-an-assignment-implies-enabled} we can choose $\anactp=\anact$.

			%%%%%%%%%%%%%%%%%%%%%%%%%%%%%%%%%%%%%%%%%%%%%%%%%%%%%%%%%%%
			%%%%%%%%%%%%%%%%%%%%%%%%%%%%%%%%%%%%%%%%%%%%%%%%%%%%%%%%%%%
			\ad{\Cref{proof:gcplus:goal:valid-heap}}
			First, note that $\validof{\tau.\anact}=\validof{\sigma.\anactp}$.
			To see this, consider $\apexp\in\validof{\tau.\anact}$.
			By definition, this means $\apexp\in\validof{\tau}$, $\apexp\not\equiv\psel{\anadrp}$, and $\heapcomputof{\tau}{\apexp}\neq\anadrp$.
			From \Cref{proof:gcplus:aux:valid} we get $\apexp\in\validof{\sigma}$.
			From \Cref{proof:gcplus:premis:valid-heap} we get $\heapcomputof{\sigma}{\apexp}=\heapcomputof{\tau}{\apexp}\neq\anadrp$.
			Hence, $\apexp\in\validof{\sigma.\anactp}$ must hold by definition.
			This establishes $\validof{\tau.\anact}\subseteq\validof{\sigma.\anactp}$.
			The reverse inclusion follows analogously.
			So we have $\validof{\tau.\anact}=\validof{\sigma.\anactp}$ indeed.

			Now, we get
			\begin{align*}
				\domof{\restrict{\heapcomput{\tau.\anact}}{\validof{\tau.\anact}}}
				&=
				\validof{\tau.\anact} \cup \dvars \cup \setcond{\dsel{\anadrpp}}{\anadrpp\in\heapcomputof{\tau.\anact}{\validof{\tau.\anact}}}
				\\&=
				\validof{\tau.\anact} \cup \dvars \cup \setcond{\dsel{\anadrpp}}{\anadrpp\in\heapcomputof{\tau}{\validof{\tau.\anact}}}
				=
				\domof{\restrict{\heapcomput{\tau}}{\validof{\tau.\anact}}}
				\intertext{%
			where the first and last equality hold by the definition of restrictions and the second equality holds because $\psel{\anadrpp},\dsel{\anadrpp}\notin\validof{\tau.\anact}$.
			Then we get
				}
				\restrict{\heapcomput{\tau.\anact}}{\validof{\tau.\anact}}
				&=
				\restrict{\heapcomput{\tau}}{\validof{\tau.\anact}}
				\intertext{%
			because $\psel{\anadrpp},\dsel{\anadrpp}\notin\domof{\heapcomput{\tau.\anact}}$ holds due to the update $\anup$.
			Similarly, we conclude
				}
				\restrict{\heapcomput{\sigma.\anactp}}{\validof{\sigma.\anactp}}
				&=
				\restrict{\heapcomput{\sigma}}{\validof{\sigma.\anactp}}
				\ .
			\end{align*}
			Note that we have
			\begin{align*}
				\restrict{\heapcomput{\tau}}{\validof{\tau.\anact}}
				=
				\restrict{(\restrict{\heapcomput{\tau}}{\validof{\tau}})}{\validof{\tau.\anact}}
				=
				\restrict{(\restrict{\heapcomput{\sigma}}{\validof{\sigma}})}{\validof{\tau.\anact}}
				=
				\restrict{(\restrict{\heapcomput{\sigma}}{\validof{\sigma}})}{\validof{\sigma.\anactp}}
				=
				\restrict{\heapcomput{\sigma}}{\validof{\sigma.\anactp}}
			\end{align*}
			where the first holds by $\validof{\tau.\anact}\subseteq\validof{\tau}$, the second equality holds by \Cref{proof:gcplus:premis:valid-heap}, the third equality is shown above, and the last equality holds by $\validof{\sigma.\anact}\subseteq\validof{\sigma}$.
			Hence, we conclude as follows: \[ \restrict{\heapcomput{\tau.\anact}}{\validof{\tau.\anact}}=\restrict{\heapcomput{\tau}}{\validof{\tau.\anact}}=\restrict{\heapcomput{\sigma}}{\validof{\sigma.\anactp}}=\restrict{\heapcomput{\sigma.\anactp}}{\validof{\sigma.\anactp}} \ .\]

			%%%%%%%%%%%%%%%%%%%%%%%%%%%%%%%%%%%%%%%%%%%%%%%%%%%%%%%%%%%
			%%%%%%%%%%%%%%%%%%%%%%%%%%%%%%%%%%%%%%%%%%%%%%%%%%%%%%%%%%%
			\ad{\Cref{proof:gcplus:goal:address-heap-pvars}}
			The updates $\anup$/$\anupp$ do not change the valuation of any pointer variable.
			Hence, the claim follows immediately from \Cref{proof:gcplus:premis:address-heap-pvars}.

			%%%%%%%%%%%%%%%%%%%%%%%%%%%%%%%%%%%%%%%%%%%%%%%%%%%%%%%%%%%
			%%%%%%%%%%%%%%%%%%%%%%%%%%%%%%%%%%%%%%%%%%%%%%%%%%%%%%%%%%%
			\ad{\Cref{proof:gcplus:goal:address-heap-pexp}}
			Consider $\psel{\anadrp}$.
			We have $\heapcomputof{\tau.\anact}{\psel{\anadrp}}=\bot=\heapcomputof{\sigma.\anactp}{\psel{\anadrp}}$ due to the updates.
			This means $\heapcomputof{\tau.\anact}{\psel{\anadrp}}=\anadr\iff\heapcomputof{\sigma.\anactp}{\psel{\anadrp}}=\anadr$.

			Consider now $\anadrpp\in\heapcomputof{\tau.\anact}{\validof{\tau.\anact}}\setminus\set{\anadrp}$.
			There is some $\apexp\in\validof{\tau.\anact}$ with $\heapcomputof{\tau.\anact}{\apexp}=\anadrpp$.
			By definition of validity, we have $\apexp\in\validof{\tau}$.
			We also get $\apexp\not\equiv\psel{\anadrp}$ from $\apexp$ being valid.
			Hence, $\heapcomputof{\tau.\anact}{\apexp}=\heapcomputof{\tau}{\apexp}=\anadrpp$.
			So we have $\anadrpp\in\heapcomputof{\tau}{\validof{\tau}}$.
			Now we conclude as follows:
			\begin{align*}
				\heapcomputof{\tau.\anact}{\psel{\anadrpp}}
				\iff
				\heapcomputof{\tau}{\psel{\anadrpp}}
				\iff
				\heapcomputof{\sigma}{\psel{\anadrpp}}
				\iff
				\heapcomputof{\sigma.\anactp}{\psel{\anadrpp}}
			\end{align*}
			where the first/last equivalence hold because $\anup$/$\anupp$ does not modify $\psel{\anadrpp}$ (since $\anadrpp\neq\anadrp$ by choice) and the second equivalence holds by $\anadrpp\in\heapcomputof{\tau}{\validof{\tau}}$ together with \Cref{proof:gcplus:premis:address-heap-pexp}.

			%%%%%%%%%%%%%%%%%%%%%%%%%%%%%%%%%%%%%%%%%%%%%%%%%%%%%%%%%%%
			%%%%%%%%%%%%%%%%%%%%%%%%%%%%%%%%%%%%%%%%%%%%%%%%%%%%%%%%%%%
			\ad{\Cref{proof:gcplus:goal:allocated}}
			Follows from \Cref{proof:gcplus:premis:allocated} together with $\allocatedof{\tau.\anact}=\allocatedof{\tau}\setminus\set{\anadrp}$ and $\allocatedof{\sigma.\anactp}=\allocatedof{\sigma}\setminus\set{\anadrp}$.

			%%%%%%%%%%%%%%%%%%%%%%%%%%%%%%%%%%%%%%%%%%%%%%%%%%%%%%%%%%%
			%%%%%%%%%%%%%%%%%%%%%%%%%%%%%%%%%%%%%%%%%%%%%%%%%%%%%%%%%%%
			\ad{\Cref{proof:gcplus:goal:obsrel-NEW}}
			Let $\anadrpp\in\adrof{\restrict{\heapcomput{\tau.\anact}}{\validof{\tau.\anact}}}\cup A$.
			Let $\ahist\in\freeableof{\tau.\anact}{\anadrpp}$.
			We show $\ahist\in\freeableof{\sigma.\anactp}{\anadrpp}$.

			If $\anadrpp=\anadrp$, we get $\freeof{\anadrpp}.\ahist\in\freeableof{\tau}{\anadrpp}$ from \Cref{{thm:move-event-from-freeable}}.
			\Cref{proof:gcplus:premis:obsrel-NEW} gives $\freeof{\anadrpp}.\ahist\in\freeableof{\sigma}{\anadrpp}$.
			Again by \Cref{{thm:move-event-from-freeable}}, we have $\ahist\in\freeableof{\sigma.\anact}{\anadrpp}$.
			This concludes the property due to the choice of $\anact=\anactp$.

			If $\anadrpp\neq\anadrp$, we have $\ahist\in\freeableof{\tau}{\anadrpp}$ by \Cref{proof:gcplus:premis:freeable-tau}.
			\Cref{proof:gcplus:premis:obsrel-NEW} gives $\freeof{\anadrpp}.\ahist\in\freeableof{\sigma}{\anadrpp}$.
			And \Cref{proof:gcplus:premis:freeable-sigma} yields $\ahist\in\freeableof{\sigma.\anact}{\anadrpp}$.
			This concludes the property.

%=================================================================================================%
%=================================================================================================%
%=================================================================================================%
%=================================================================================================%
		\item[$\acom\equiv\assert\ \apvar=\apvarp$]
			The updates are $\anup=\anupp=\emptyset$.
			We have $\heapcomput{\tau.\anact}=\heapcomput{\tau}$ and $\heapcomput{\sigma.\anactp}=\heapcomput{\sigma}$.
			Moreover, we have $\validof{\tau.\anact}\subseteq\validof{\tau}\cup\set{\apvar,\apvarp}$ and $\validof{\sigma.\anactp}\subseteq\validof{\sigma}\cup\set{\apvar,\apvarp}$.
			And due to the semantics we have $\heapcomputof{\tau}{\apvar}=\heapcomputof{\tau}{\apvarp}$.
			By \Cref{proof:gcplus:premis:not-an-assignment-implies-enabled} we must also have $\heapcomputof{\sigma}{\apvar}=\heapcomputof{\sigma}{\apvarp}$.

			\Cref{proof:gcplus:goal:address-heap-pvars,proof:gcplus:goal:allocated} follow immediately from \Cref{proof:gcplus:premis:address-heap-pvars,proof:gcplus:premis:allocated}.
			For the remaining properties we show auxiliary statements first:
			% First, we show auxiliary statements:
			\begin{auxiliary}
				\heapcomputof{\tau.\anact}{\validof{\tau.\anact}}&=\heapcomputof{\tau}{\validof{\tau}}
				\label[aux]{proof:gcplus:case:asserteq:valid-heap-tau}
				\\
				\heapcomputof{\sigma.\anactp}{\validof{\sigma.\anactp}}&=\heapcomputof{\sigma}{\validof{\sigma}}
				\label[aux]{proof:gcplus:case:asserteq:valid-heap-sigma}
				\\
				\validof{\tau.\anact}&=\validof{\sigma.\anactp}
				\label[aux]{proof:gcplus:case:asserteq:valid-act}
			\end{auxiliary}

			%%%%%%%%%%%%%%%%%%%%%%%%%%%%%%%%%%%%%%%%%%%%%%%%%%%%%%%%%%%
			%%%%%%%%%%%%%%%%%%%%%%%%%%%%%%%%%%%%%%%%%%%%%%%%%%%%%%%%%%%
			\ad{\Cref{proof:gcplus:case:asserteq:valid-heap-tau}}
			Since $\heapcomput{\tau.\anact}=\heapcomput{\tau}$, it suffices to show that $\heapcomputof{\tau}{\validof{\tau.\anact}}=\heapcomputof{\tau}{\validof{\tau}}$ holds.
			If $\validof{\tau.\anact}=\validof{\tau}$, then claim follows immediately.
			So assume $\validof{\tau.\anact}\neq\validof{\tau}$.
			By definition, we get $\validof{\tau.\anact}=\validof{\tau}\cup\set{\apvar,\apvarp}$ and $\set{\apvar,\apvarp}\cap\validof{\tau}\neq\emptyset$.
			Wlog. let $\apvar\in\validof{\tau}$.
			So $\heapcomputof{\tau}{\apvar}\in\heapcomputof{\tau}{\validof{\tau}}$.
			And because $\heapcomputof{\tau}{\apvar}=\heapcomputof{\tau}{\apvarp}$ we also get $\heapcomputof{\tau}{\apvarp}\in\heapcomputof{\tau}{\validof{\tau}}$.
			Hence, we conclude by: \[\heapcomputof{\tau}{\validof{\tau.\anact}}=\heapcomputof{\tau}{\validof{\tau}\cup\set{\apvar,\apvarp}}=\heapcomputof{\tau}{\validof{\tau}} \ .\]

			%%%%%%%%%%%%%%%%%%%%%%%%%%%%%%%%%%%%%%%%%%%%%%%%%%%%%%%%%%%
			%%%%%%%%%%%%%%%%%%%%%%%%%%%%%%%%%%%%%%%%%%%%%%%%%%%%%%%%%%%
			\ad{\Cref{proof:gcplus:case:asserteq:valid-heap-sigma}}
			Analogous to \Cref{proof:gcplus:case:asserteq:valid-heap-tau}.

			%%%%%%%%%%%%%%%%%%%%%%%%%%%%%%%%%%%%%%%%%%%%%%%%%%%%%%%%%%%
			%%%%%%%%%%%%%%%%%%%%%%%%%%%%%%%%%%%%%%%%%%%%%%%%%%%%%%%%%%%
			\ad{\Cref{proof:gcplus:case:asserteq:valid-act}}
			There are two cases.
			First, assume $\set{\apvar,\apvarp}\cap\validof{\tau}=\emptyset$.
			Then, \Cref{proof:gcplus:aux:valid} gives $\set{\apvar,\apvarp}\cap\validof{\sigma}=\emptyset$.
			By definition, this gives $\validof{\tau.\anact}=\validof{\tau}$ and $\validof{\sigma.\anactp}=\validof{\sigma}$.
			Hence, the claim follows from \Cref{proof:gcplus:aux:valid}.

			Second, $\set{\apvar,\apvarp}\cap\validof{\tau}\neq\emptyset$.
			Wlog. $\apvar\in\validof{\tau}$.
			By \Cref{proof:gcplus:aux:valid} we have $\apvar\in\validof{\sigma}$.
			And by definition, we get $\validof{\tau.\anact}=\validof{\tau}\cup\set{\apvar,\apvarp}$ and $\validof{\sigma.\anactp}=\validof{\sigma}\cup\set{\apvar,\apvarp}$.
			Again, we conclude by \Cref{proof:gcplus:aux:valid}.

			%%%%%%%%%%%%%%%%%%%%%%%%%%%%%%%%%%%%%%%%%%%%%%%%%%%%%%%%%%%
			%%%%%%%%%%%%%%%%%%%%%%%%%%%%%%%%%%%%%%%%%%%%%%%%%%%%%%%%%%%
			\ad{\Cref{proof:gcplus:goal:valid-heap}}
			We consider two cases.
			First, assume $\set{\apvar,\apvarp}\cap\validof{\tau}=\emptyset$.
			Then, we have $\validof{\tau.\anact}=\validof{\tau}$.
			So \Cref{proof:gcplus:case:asserteq:valid-heap-sigma} together with \Cref{proof:gcplus:aux:valid} gives $\validof{\sigma.\anactp}=\validof{\sigma}$.
			Hence, we can conclude using $\heapcomput{\tau.\anact}=\heapcomput{\tau}$ and $\heapcomput{\sigma.\anactp}=\heapcomput{\sigma}$: \[\restrict{\heapcomput{\tau.\anact}}{\validof{\tau.\anact}}=\restrict{\heapcomput{\tau}}{\validof{\tau}}=\restrict{\heapcomput{\sigma}}{\validof{\sigma}}=\restrict{\heapcomput{\sigma.\anactp}}{\validof{\sigma.\anactp}}\] where the second equality holds by \Cref{proof:gcplus:premis:valid-heap}.

			Second, assume $\set{\apvar,\apvarp}\cap\validof{\tau}\neq\emptyset$.
			Then, $\validof{\tau.\anact}=\validof{\tau}\cup{\apvar,\apvarp}$.
			So \Cref{proof:gcplus:case:asserteq:valid-heap-sigma} together with \Cref{proof:gcplus:aux:valid} gives $\validof{\sigma.\anactp}=\validof{\sigma}\cup\set{\apvar,\apvarp}$.
			Using \Cref{proof:gcplus:case:asserteq:valid-heap-tau,proof:gcplus:case:asserteq:valid-heap-sigma} and \Cref{proof:gcplus:premis:valid-heap} we now get:
			\begin{align*}
				&\domof{\restrict{\heapcomput{\tau.\anact}}{\validof{\tau.\anact}}}
				\\=~&
				\validof{\tau.\anact}\cup\dvars\cup\setcond{\dsel{\anadrpp}}{\anadrpp\in\heapcomputof{\tau.\anact}{\validof{\tau.\anact}}}
				\\=~&
				\validof{\tau}\cup\set{\apvar,\apvarp}\cup\dvars\cup\setcond{\dsel{\anadrpp}}{\anadrpp\in\heapcomputof{\tau}{\validof{\tau}}}
				\\=~&
				\domof{\restrict{\heapcomput{\tau}}{\validof{\tau}}}\cup\set{\apvar,\apvarp}
				\\=~&
				\domof{\restrict{\heapcomput{\sigma}}{\validof{\sigma}}}\cup\set{\apvar,\apvarp}
				\\=~&
				\validof{\sigma}\cup\set{\apvar,\apvarp}\cup\dvars\cup\setcond{\dsel{\anadrpp}}{\anadrpp\in\heapcomputof{\sigma}{\validof{\sigma}}}
				\\=~&
				\validof{\sigma.\anactp}\cup\dvars\cup\setcond{\dsel{\anadrpp}}{\anadrpp\in\heapcomputof{\sigma.\anactp}{\validof{\sigma.\anactp}}}
				\\=~&
				\domof{\restrict{\heapcomput{\sigma.\anactp}}{\validof{\sigma.\anactp}}}
			\end{align*}
			By \Cref{proof:gcplus:premis:valid-heap} together with $\heapcomput{\tau.\anact}=\heapcomput{\tau}$ and $\heapcomput{\sigma.\anactp}=\heapcomput{\sigma}$ it suffices to show \[\heapcomputof{\tau.\anact}{\apvar}=\heapcomputof{\sigma.\anactp}{\apvar} \qquad\text{and}\qquad \heapcomputof{\tau.\anact}{\apvarp}=\heapcomputof{\sigma.\anactp}{\apvarp}\] to conclude the claim.
			Wlog. $\apvar\in\validof{\tau}$.
			Then, $\heapcomputof{\tau}{\apvar}=\heapcomputof{\sigma}{\apvar}$.
			Hence, $\heapcomputof{\tau.\anact}{\apvar}=\heapcomputof{\sigma.\anactp}{\apvar}$.
			And the assertion in $\acom$ requires $\apvar$ and $\apvarp$ to have the same valuation, ${\heapcomputof{\tau.\anact}{\apvarp}=\heapcomputof{\sigma.\anactp}{\apvarp}}$.
			This concludes the claim.

			%%%%%%%%%%%%%%%%%%%%%%%%%%%%%%%%%%%%%%%%%%%%%%%%%%%%%%%%%%%
			%%%%%%%%%%%%%%%%%%%%%%%%%%%%%%%%%%%%%%%%%%%%%%%%%%%%%%%%%%%
			\ad{\Cref{proof:gcplus:goal:address-heap-pexp}}
			Follows from $\heapcomput{\tau.\anact}=\heapcomput{\tau}$ and $\heapcomput{\sigma.\anactp}=\heapcomput{\sigma}$ together with \Cref{proof:gcplus:premis:address-heap-pexp} and \Cref{proof:gcplus:case:asserteq:valid-heap-tau}.

			%%%%%%%%%%%%%%%%%%%%%%%%%%%%%%%%%%%%%%%%%%%%%%%%%%%%%%%%%%%
			%%%%%%%%%%%%%%%%%%%%%%%%%%%%%%%%%%%%%%%%%%%%%%%%%%%%%%%%%%%
			\ad{\Cref{proof:gcplus:goal:obsrel-NEW}}
			We have $\freeableof{\tau.\anact}{\anadrpp}=\freeableof{\tau}{\anadrpp}$ and $\freeableof{\sigma.\anactp}{\anadrpp}=\freeableof{\sigma}{\anadrpp}$ for all $\anadrpp\in\adr$ since $\anact$/$\anactp$ does not emit an event.
			Now, using \Cref{thm:adrof-valid-heap-restriction} and \Cref{proof:gcplus:case:asserteq:valid-heap-tau} we get:
			\begin{align*}
				&
				\adrof{\restrict{\heapcomput{\tau.\anact}}{\validof{\tau.\anact}}}
				=
				(\validof{\tau.\anact}\cap\adr)\cup\heapcomputof{\tau.\anact}{\validof{\tau.\anact}}
				\\\subseteq~&
				((\validof{\tau}\cup{\apvar,\apvarp})\cap\adr)\cup\heapcomputof{\tau}{\validof{\tau}}
				\\=~&
				(\validof{\tau}\cap\adr)\cup\heapcomputof{\tau}{\validof{\tau}}
				=
				\adrof{\restrict{\heapcomput{\tau}}{\validof{\tau}}}
				\ .
			\end{align*}
			This concludes the claim by \Cref{proof:gcplus:premis:obsrel-NEW}.

%=================================================================================================%
%=================================================================================================%
%=================================================================================================%
%=================================================================================================%
		\item[$\acom\equiv\assert\ \apvar\neq\apvarp$]
			We have $\anup=\anupp=\emptyset$.
			By \Cref{proof:gcplus:premis:not-an-assignment-implies-enabled} $\anact$ is enabled after $\sigma$.
			\Cref{proof:gcplus:goal:valid-heap,proof:gcplus:goal:address-heap-pvars,proof:gcplus:goal:address-heap-pexp,proof:gcplus:goal:allocated,proof:gcplus:goal:obsrel-NEW} follow immediately from \Cref{proof:gcplus:premis:valid-heap,proof:gcplus:premis:address-heap-pvars,proof:gcplus:premis:address-heap-pexp,proof:gcplus:premis:allocated,proof:gcplus:premis:obsrel-NEW}.

%=================================================================================================%
%=================================================================================================%
%=================================================================================================%
%=================================================================================================%
		\item[$\acom\equiv\assert\ \advar\prec\advarp$ with $\prec\in\set{=,\neq,<}$]
			We have $\anup=\anupp=\emptyset$ by definition.
			By \Cref{proof:gcplus:premis:not-an-assignment-implies-enabled} $\anact$ is enabled after $\sigma$.
			\Cref{proof:gcplus:goal:valid-heap,proof:gcplus:goal:address-heap-pvars,proof:gcplus:goal:address-heap-pexp,proof:gcplus:goal:allocated,proof:gcplus:goal:obsrel-NEW} follow immediately from \Cref{proof:gcplus:premis:valid-heap,proof:gcplus:premis:address-heap-pvars,proof:gcplus:premis:address-heap-pexp,proof:gcplus:premis:allocated,proof:gcplus:premis:obsrel-NEW}.

%=================================================================================================%
%=================================================================================================%
%=================================================================================================%
%=================================================================================================%
		\item[$\acom\equiv\enterof{\afuncof{\vecof{\apvar},\vecof{\advar}}}$]
			We have $\anup=\anupp=\emptyset$.
			As in the previous case, $\anact$/$\anactp$ does not have an effect on the memory, validity, freeness, and freshness.
			Hence, \Cref{proof:gcplus:goal:valid-heap,proof:gcplus:goal:address-heap-pvars,proof:gcplus:goal:address-heap-pexp,proof:gcplus:goal:allocated} follow from \Cref{proof:gcplus:premis:valid-heap,proof:gcplus:premis:address-heap-pvars,proof:gcplus:premis:address-heap-pexp,proof:gcplus:premis:allocated}.

			For the remaining property note that \Cref{proof:gcplus:premis:call-event-restriction-NEW} provides $\heapcomputof{\tau}{\vecof{\apvar}}=\heapcomputof{\sigma}{\vecof{\apvar}}$.
			And \Cref{proof:gcplus:premis:valid-heap} gives $\heapcomputof{\tau}{\vecof{\advar}}=\heapcomputof{\sigma}{\vecof{\advar}}$.
			Hence, $\anact$ and $\anactp$ emit the same event.
			That is, we have $\historyof{\tau.\anact}=\ahist_1.\anevent$ and $\historyof{\sigma.\anactp}=\ahist_2.\anevent$.
			Thus, \Cref{proof:gcplus:goal:obsrel-NEW} follows from \Cref{proof:gcplus:premis:obsrel-NEW} together with \Cref{thm:move-event-from-freeable}.

%=================================================================================================%
%=================================================================================================%
%=================================================================================================%
%=================================================================================================%
		\item[$\acom\equiv\exit$]
			Analogous to the previous case since both $\tau$ and $\sigma$ emit the same event for $\exit$ commands by definition.

	\end{casedistinction}
\end{proof}

%!TEX root = ../../../main.tex

\subsection{Proofs of Elision Technique (\CREF{appendix:elision_technique})}

%%%%%%%%%%%%%%%%%%%%%%%%%%%%%%%%%%%%%%%%%%%%%%%%%%%%%%%%%%%%%%%%%%%%%%%%%%%%%%%%%%%%%%%%%%%%%%%%%%%
%%%%%%%%%%%%%%%%%%%%%%%%%%%%%%%%%%%%%%%%%%%%%%%%%%%%%%%%%%%%%%%%%%%%%%%%%%%%%%%%%%%%%%%%%%%%%%%%%%%
%%%%%%%%%%%%%%%%%%%%%%%%%%%%%%%%%%%%%%%%%%%%%%%%%%%%%%%%%%%%%%%%%%%%%%%%%%%%%%%%%%%%%%%%%%%%%%%%%%%
%%%%%%%%%%%%%%%%%%%%%%%%%%%%%%%%%%%%%%%%%%%%%%%%%%%%%%%%%%%%%%%%%%%%%%%%%%%%%%%%%%%%%%%%%%%%%%%%%%%
%{thm:inverse-of-address-mapping-is-again-address-mapping}
%
\begin{proof}[Proof of \Cref{thm:inverse-of-address-mapping-is-again-address-mapping}]
	Follows immediately from the fact that $\swapadrraw$ is a bijection.
\end{proof}

%%%%%%%%%%%%%%%%%%%%%%%%%%%%%%%%%%%%%%%%%%%%%%%%%%%%%%%%%%%%%%%%%%%%%%%%%%%%%%%%%%%%%%%%%%%%%%%%%%%
%%%%%%%%%%%%%%%%%%%%%%%%%%%%%%%%%%%%%%%%%%%%%%%%%%%%%%%%%%%%%%%%%%%%%%%%%%%%%%%%%%%%%%%%%%%%%%%%%%%
%%%%%%%%%%%%%%%%%%%%%%%%%%%%%%%%%%%%%%%%%%%%%%%%%%%%%%%%%%%%%%%%%%%%%%%%%%%%%%%%%%%%%%%%%%%%%%%%%%%
%%%%%%%%%%%%%%%%%%%%%%%%%%%%%%%%%%%%%%%%%%%%%%%%%%%%%%%%%%%%%%%%%%%%%%%%%%%%%%%%%%%%%%%%%%%%%%%%%%%
%{thm:swaphist-is-unique}
%
\begin{proof}[Proof of \Cref{thm:swaphist-is-unique}]
	Towards a contradiction, assume there is a shortest $\ahist.\anevent$ such that there are histories $\ahist_1.\anevent_1\neq\ahist_2.\anevent_2$ with $\swapadr{\ahist_1.\anevent_1}=\ahist.\anevent=\swapadr{\ahist_2.\anevent_2}$.
	Note that $\ahist_1.\anevent_1$, $\ahist_2.\anevent_2$, and $\ahist'.\anevent'$ all must have the same length.
	Also note that $\ahist.\anevent$ is indeed a shortest such computation because the claim holds for the empty history $\epsilon$. % $\swaphist{\tilde\ahist_1}=\epsilon=\swaphist{\tilde\ahist_2}$ gives $\tilde\ahist_1=\epsilon=\tilde\ahist_2$.
	Let $\anevent$ be of the form $\anevent\equiv\afunc(\athread,\vecof{\anadr},\vecof{\advalue})$.
	Let $\anevent_i\equiv\afunc^i(\athread^i,\vecof{\anadr^i},\vecof{\advalue^i})$.
	Then, $\swaphist{\anevent_i}=\afunc^i(\athread^i,\swapadr{\vecof{\anadr^i}},\vecof{\advalue^i})$.
	By choice of $\anevent_i$ we have $\swapadr{\anevent_i}=\afunc(\athread,\vecof{\anadr},\vecof{\advalue})$.
	To arrive there, we must have $\afunc^i=\afunc$, $\swapadr{\vecof{\anadr^i}}=\vecof{\anadr}$, and $\vecof{\advalue^i}=\vecof{\advalue}$.
	Hence, $\vecof{\anadr^i}=\swapadrinv{\vecof{\anadr}}$.
	Consequently, we have $\vecof{\anadr^1}=\vecof{\anadr^2}$ because $\swapadrraw$ is a bijection.
	So we have $\anevent_1=\anevent_2$.
	And by minimality of $\ahist'.\anevent'$ we get $\ahist_1=\ahist_2$.
	Altogether, we have $\ahist_1.\anevent_1=\ahist_2.\anevent_2$.
	This contradicts the assumption.
	The remaining cases for $\exit$ and $\free$ events follow analogously.
	This concludes the claim.
\end{proof}

%%%%%%%%%%%%%%%%%%%%%%%%%%%%%%%%%%%%%%%%%%%%%%%%%%%%%%%%%%%%%%%%%%%%%%%%%%%%%%%%%%%%%%%%%%%%%%%%%%%
%%%%%%%%%%%%%%%%%%%%%%%%%%%%%%%%%%%%%%%%%%%%%%%%%%%%%%%%%%%%%%%%%%%%%%%%%%%%%%%%%%%%%%%%%%%%%%%%%%%
%%%%%%%%%%%%%%%%%%%%%%%%%%%%%%%%%%%%%%%%%%%%%%%%%%%%%%%%%%%%%%%%%%%%%%%%%%%%%%%%%%%%%%%%%%%%%%%%%%%
%%%%%%%%%%%%%%%%%%%%%%%%%%%%%%%%%%%%%%%%%%%%%%%%%%%%%%%%%%%%%%%%%%%%%%%%%%%%%%%%%%%%%%%%%%%%%%%%%%%
%{thm:swaphist-versus-elementof}
%
\begin{proof}[Proof of \Cref{thm:swaphist-versus-elementof}]
	If $\ahist\in H$ holds, then we get $\ahist'\in\swaphist{H}$ immediately due to $\swaphist{\ahist}=\ahist'$.
	For the reverse direction, we know that there is some $\tilde\ahist\in H$ with $\swaphist{\tilde\ahist}=\ahist'$.
	\Cref{thm:swaphist-is-unique} yields $\tilde\ahist=\ahist$.
	So $\ahist\in H$.
	This concludes the claim.
\end{proof}

%%%%%%%%%%%%%%%%%%%%%%%%%%%%%%%%%%%%%%%%%%%%%%%%%%%%%%%%%%%%%%%%%%%%%%%%%%%%%%%%%%%%%%%%%%%%%%%%%%%
%%%%%%%%%%%%%%%%%%%%%%%%%%%%%%%%%%%%%%%%%%%%%%%%%%%%%%%%%%%%%%%%%%%%%%%%%%%%%%%%%%%%%%%%%%%%%%%%%%%
%%%%%%%%%%%%%%%%%%%%%%%%%%%%%%%%%%%%%%%%%%%%%%%%%%%%%%%%%%%%%%%%%%%%%%%%%%%%%%%%%%%%%%%%%%%%%%%%%%%
%%%%%%%%%%%%%%%%%%%%%%%%%%%%%%%%%%%%%%%%%%%%%%%%%%%%%%%%%%%%%%%%%%%%%%%%%%%%%%%%%%%%%%%%%%%%%%%%%%%
%{thm:swap-vs-set-opereations}
%
\begin{proof}[Proof of \Cref{thm:swap-vs-set-opereations}]
	Let $\anadr'\in\adr$.
	Since $\swapadrraw$ is a bijection, there is $\anadr\in\adr$ such that $\anadr'=\swapadr{\anadr}$ and $\anadr'\notin\swapadr{\adr\setminus\set{\anadr}}$.
	Hence we have $\anadr\notin A_1\implies\anadr'\notin\swapadr{A_1}$.
	And $\anadr\in A_1\implies\anadr'\in\swapadr{A_1}$ by choice of $\anadr$.
	So altogether we have: $\anadr'\in\swapadr{A_1}\iff\anadr\in A_1$.
	Similarly, $\anadr'\prall{\in}\swapadr{A_2}\iff\anadr\prall{\in} A_2$.
	Thus: $\anadr'\prall{\in}\swapadr{A_1}\otimes\swapadr{A_2}\iff\anadr\prall{\in} A_1\otimes A_2$.
	And with the same arguments we derive $\anadr\in A_1\otimes A_2\iff\anadr'\in\swapadr{A_1\otimes A_2}$.
	This concludes the first equality.

	Similarly, we have \[\psel{\anadr'}\in\swapexp{B_1}\iff\psel{\anadr}\in B_1 \quad\text{and}\quad \psel{\anadr'}\in\swapexp{B_2}\iff\psel{\anadr}\in B_2 \ .\]
	As before, we derive $\psel{\anadr'}\in\swapexp{B_1}\otimes\swapexp{B_2}\iff\psel{\anadr}\in B_1\otimes B_2$.
	And with the same arguments we get $\psel{\anadr}\in B_1\otimes B_2\iff\psel{\anadr'}\in\swapexp{B_1\otimes B_2}$.
	This concludes the second equality.

	Consider now some history $\ahist'$.
	Since $\swapadrraw$ is a bijection, there is $\ahist$ with $\swaphist{\ahist}=\ahist'$.
	Then, \Cref{thm:swaphist-versus-elementof} yields $\ahist'\in\swaphist{C_1}\iff\ahist\in C_1$.
	Similarly, $\ahist'\in\swaphist{C_2}\iff\ahist\in C_2$ follows.
	So $\ahist'\in\swaphist{C_1}\otimes\swaphist{C_2}\iff\ahist\in C_1\otimes C_2$.
	Again by \Cref{thm:swaphist-versus-elementof} we get $\ahist\in C_1\otimes C_2\iff\ahist'\in\swaphist{C_1\otimes C_2}$.
	This concludes the third equality.
\end{proof}

%%%%%%%%%%%%%%%%%%%%%%%%%%%%%%%%%%%%%%%%%%%%%%%%%%%%%%%%%%%%%%%%%%%%%%%%%%%%%%%%%%%%%%%%%%%%%%%%%%%
%%%%%%%%%%%%%%%%%%%%%%%%%%%%%%%%%%%%%%%%%%%%%%%%%%%%%%%%%%%%%%%%%%%%%%%%%%%%%%%%%%%%%%%%%%%%%%%%%%%
%%%%%%%%%%%%%%%%%%%%%%%%%%%%%%%%%%%%%%%%%%%%%%%%%%%%%%%%%%%%%%%%%%%%%%%%%%%%%%%%%%%%%%%%%%%%%%%%%%%
%%%%%%%%%%%%%%%%%%%%%%%%%%%%%%%%%%%%%%%%%%%%%%%%%%%%%%%%%%%%%%%%%%%%%%%%%%%%%%%%%%%%%%%%%%%%%%%%%%%
%{thm:swaphist-of-swaphistinv-is-id}
%
\begin{proof}[Proof of \Cref{thm:swaphist-of-swaphistinv-is-id}]
	Consider some event $\anevent\equiv\afunc(\athread,\vecof{\anadr},\vecof{\advalue})$.
	Then we have:
	\begin{align*}
		&
		\swaphistinv{\swaphist{\anevent}}
		=
		\swaphistinv{\swaphist{\afunc(\athread,\vecof{\anadr},\vecof{\advalue})}}
		=
		\swaphistinv{\afunc(\athread,\swapadr{\vecof{\anadr}},\vecof{\advalue})}
		\\=~&
		\afunc(\athread,\swapadrinv{\swapadr{\vecof{\anadr}}},\vecof{\advalue})
		=
		\afunc(\athread,\vecof{\anadr},\vecof{\advalue})
	\end{align*}
	Analogously, $\swaphistinv{\swaphist{\freeof{\anadr}}}=\freeof{\anadr}$ and $\swaphistinv{\swaphist{\exitof{\athread}}}=\exitof{\athread}$.

	The overall claim follows then from inductively applying the above to $\ahist$.
	In the base case one has $\swaphistinv{\swaphist{\epsilon}}=\epsilon$ by definition.
	For $\ahist.\anevent$ one has
	\begin{align*}
		\swaphistinv{\swaphist{\ahist.\anevent}}
		=~&
		\swaphistinv{\swaphist{\ahist}.\swaphist{\anevent}}
		\\=~&
		\swaphistinv{\swaphist{\ahist}}.\swaphistinv{\swaphist{\anevent}}
		=
		\ahist.\anevent
	\end{align*}
	where the last equality is due to induction (for $\ahist$) and due to the above reasoning (for $\anevent$).
\end{proof}

%%%%%%%%%%%%%%%%%%%%%%%%%%%%%%%%%%%%%%%%%%%%%%%%%%%%%%%%%%%%%%%%%%%%%%%%%%%%%%%%%%%%%%%%%%%%%%%%%%%
%%%%%%%%%%%%%%%%%%%%%%%%%%%%%%%%%%%%%%%%%%%%%%%%%%%%%%%%%%%%%%%%%%%%%%%%%%%%%%%%%%%%%%%%%%%%%%%%%%%
%%%%%%%%%%%%%%%%%%%%%%%%%%%%%%%%%%%%%%%%%%%%%%%%%%%%%%%%%%%%%%%%%%%%%%%%%%%%%%%%%%%%%%%%%%%%%%%%%%%
%%%%%%%%%%%%%%%%%%%%%%%%%%%%%%%%%%%%%%%%%%%%%%%%%%%%%%%%%%%%%%%%%%%%%%%%%%%%%%%%%%%%%%%%%%%%%%%%%%%
% {thm:swaphist-versus-accpetance}
%
\begin{proof}[Proof of \Cref{thm:swaphist-versus-accpetance}]
	We first show the following auxiliary: \[ (\alocation,\varphi)\trans{\ahist}(\alocation',\varphi)\iff(\alocation,\swapadrraw\circ\varphi)\trans{\swaphist{\ahist}}(\alocation',\swapadrraw\circ\varphi) \ . \]
	To that end, let $(\alocation,\varphi)\trans{\anevent}(\alocation',\varphi)$ be some observer step and let $\vecof{\anovar}=\domof{\varphi}$ be the observer variables.
	We show that also \[ (\alocation,\varphi')\trans{\swaphist{\anevent}}(\alocation',\varphi') \quad\text{with}\quad \varphi'=\swapadrraw\circ\varphi \] is an observer step.
	The event $\anevent$ is of the form $\anevent\equiv\evt{\afunc}{\vecof{\avalue}}$.
	By the definition, there is a transition
	\begin{align*}
	 	\alocation\trans{\translab{\evt{\afunc}{\vecof{r}}}{\aguard}}\alocationp
	 	\quad\text{such that}\quad
	 	\aguard[\vecof{r}\mapsto\vecof{\avalue},\vecof{\anovar}\mapsto\varphi(\vecof{\anovar})]
	 	\logicequiv\mathit{true}\ .
	\end{align*}
	Now, turn to $\swaphist{\anevent}$.
	It is of the form $\swaphist{\anevent}\equiv\evt{\afunc}{\vecof{\avaluep}}$ with $\vecof{\avaluep}=\swapadr{\vecof{\avalue}}$.
	Hence, the above transition matches.
	It remains to show that it is enabled.
	To that end, we need show that $\aguard[\vecof{r}\mapsto\vecof{\avaluep},\vecof{\anovar}\mapsto\varphi'(\vecof{\anovar})]\logicequiv\mathit{true}$.
	Intuitively, this holds because guards are composed of (in)equalities which are \emph{stable} under the bijection $\swapadrraw$.
	Formally, we know that $\aguard$ is equivalent to \[\aguard\logicequiv\bigvee_i\bigwedge_j\avar_{i,j,1}\triangleq\avar_{i,j,2} \quad\text{with}\quad \avar_{i,j,k}\in\set{p_1,p_2,\anovar,\anovarp} ~\text{and}~ \triangleq\in\set{=,\neq} \ .\]
	Hence, we have to show
	\begin{align*}
	 	&(\avar\triangleq\avarp)[\vecof{r}\mapsto\vecof{\avalue},\vecof{\anovar}\mapsto\varphi(\vecof{\anovar})]\logicequiv\mathit{true}
	 	\\\iff~&
	 	(\avar\triangleq\avarp)[\vecof{r}\mapsto\vecof{\avaluep},\vecof{\anovar}\mapsto\varphi'(\vecof{\anovar})]\logicequiv\mathit{true}
	\end{align*}
	for every $i,j$ and $\avar=\avar_{i,j,1}$ and $\avarp=\avar_{i,j,2}$.
	We conclude this as follows:
	\begin{align*}
		&
		(\avar\triangleq\avarp)[\vecof{r}\mapsto\vecof{\avalue},\vecof{\anovar}\mapsto\varphi(\vecof{\anovar})]\logicequiv\mathit{true}
		\\\iff~&
		\avar[\vecof{r}\mapsto\vecof{\avalue},\vecof{\anovar}\mapsto\varphi(\vecof{\anovar})]
		\triangleq
		\avarp[\vecof{r}\mapsto\vecof{\avalue},\vecof{\anovar}\mapsto\varphi(\vecof{\anovar})]
		\\\iff~&
		\swapadr{\avar[\vecof{r}\mapsto\vecof{\avalue},\vecof{\anovar}\mapsto\varphi(\vecof{\anovar})]}
		\triangleq
		\swapadr{\avarp[\vecof{r}\mapsto\vecof{\avalue},\vecof{\anovar}\mapsto\varphi(\vecof{\anovar})]}
		\\\iff~&
		\avar[\vecof{r}\mapsto\swapadr{\vecof{\avalue}},\vecof{\anovar}\mapsto\swapadr{\varphi(\vecof{\anovar})}]
		\\&\triangleq
		\avarp[\vecof{r}\mapsto\swapadr{\vecof{\avalue}},\vecof{\anovar}\mapsto\swapadr{\varphi(\vecof{\anovar})}]
		\\\iff~&
		\avar[\vecof{r}\mapsto\vecof{\avaluep},\vecof{\anovar}\mapsto\varphi'(\vecof{\anovar})]
		\triangleq
		\avarp[\vecof{r}\mapsto\vecof{\avaluep},\vecof{\anovar}\mapsto\varphi'(\vecof{\anovar})]
		\\\iff~&
		(\avar\triangleq\avarp)[\vecof{r}\mapsto\vecof{\avaluep},\vecof{\anovar}\mapsto\varphi'(\vecof{\anovar})]\logicequiv\mathit{true}
	\end{align*}
	where the second equivalence holds because $\swapadrraw$ is a bijections, and the third equivalence holds because $\avar$/$\avarp$ are either contained in $\vecof{r}$ or $\vecof{\anovar}$ by the definition of observers.
	This concludes the auxiliary claim.

	Altogether, the overall implication "${\implies}$" \[(\alocation,\varphi)\trans{\ahist}(\alocation',\varphi)\implies(\alocation,\swapadrraw\circ\varphi)\trans{\swaphist{\ahist}}(\alocation',\swapadrraw\circ\varphi)\] follows by applying the above argument inductively to every event in the history $\ahist$.

	For the reverse direction "${\impliedby}$" we use \Cref{thm:inverse-of-address-mapping-is-again-address-mapping}.
	That is, we apply the above result to $(\alocation,\swapadrraw\circ\varphi)\trans{\swaphist{\ahist}}(\alocation',\swapadrraw\circ\varphi)$ using the address mapping $\swapadrinvraw$.
	This yields:
	\begin{align*}
		&
		(\alocation,\swapadrraw\circ\varphi)
		\trans{\swaphist{\ahist}}
		(\alocation',\swapadrraw\circ\varphi)
		\\\implies\;&
		(\alocation,\swapadrinvraw\circ\swapadrraw\circ\varphi)
		\trans{\swaphistinv{\swaphist{\ahist}}}
		(\alocation',\swapadrinvraw\circ\swapadrraw\circ\varphi)
		\ .
	\end{align*}
	Then, \Cref{thm:swaphist-of-swaphistinv-is-id} together with $\swapadrinvraw\circ\swapadrraw=\mathit{id}$ gives:
	\begin{align*}
		&
		(\alocation,\swapadrinvraw\circ\swapadrraw\circ\varphi)
		\trans{\swaphistinv{\swaphist{\ahist}}}
		(\alocation',\swapadrinvraw\circ\swapadrraw\circ\varphi)
		\\\implies\;&
		(\alocation,\swapadrraw\circ\varphi)
		\trans{\swaphist{\ahist}}
		(\alocation',\swapadrraw\circ\varphi)
	\end{align*}
	This concludes the proof.
\end{proof}

%%%%%%%%%%%%%%%%%%%%%%%%%%%%%%%%%%%%%%%%%%%%%%%%%%%%%%%%%%%%%%%%%%%%%%%%%%%%%%%%%%%%%%%%%%%%%%%%%%%
%%%%%%%%%%%%%%%%%%%%%%%%%%%%%%%%%%%%%%%%%%%%%%%%%%%%%%%%%%%%%%%%%%%%%%%%%%%%%%%%%%%%%%%%%%%%%%%%%%%
%%%%%%%%%%%%%%%%%%%%%%%%%%%%%%%%%%%%%%%%%%%%%%%%%%%%%%%%%%%%%%%%%%%%%%%%%%%%%%%%%%%%%%%%%%%%%%%%%%%
%%%%%%%%%%%%%%%%%%%%%%%%%%%%%%%%%%%%%%%%%%%%%%%%%%%%%%%%%%%%%%%%%%%%%%%%%%%%%%%%%%%%%%%%%%%%%%%%%%%
%{thm:swaphist-freeable}
%
\begin{proof}[Proof of \Cref{thm:swaphist-freeable}]
	Let $\anadr\in\adr$.
	We conclude as follows:
	\begin{align*}
		&
		\ahist\in\freeableof{\tau}{\anadr}
		\\\iff&~
		\historyof{\tau}.\ahist\in\historyof{\smrobs}\wedge\mathit{frees}(\ahist)\subseteq\set{\anadr}
		\\\iff&~
		\swaphist{\historyof{\tau}}.\swaphist{\ahist}\in\historyof{\smrobs}\wedge\mathit{frees}(\swaphist{\ahist})\subseteq\set{\swapadr{\anadr}}
		\\\iff&~
		\swaphist{\ahist}\in\freeableof{\sigma}{\swapadr{\anadr}}
	\end{align*}
	where the second equivalence holds because of \Cref{thm:swaphist-versus-accpetance}.
	This concludes the claim.
\end{proof}

%%%%%%%%%%%%%%%%%%%%%%%%%%%%%%%%%%%%%%%%%%%%%%%%%%%%%%%%%%%%%%%%%%%%%%%%%%%%%%%%%%%%%%%%%%%%%%%%%%%
%%%%%%%%%%%%%%%%%%%%%%%%%%%%%%%%%%%%%%%%%%%%%%%%%%%%%%%%%%%%%%%%%%%%%%%%%%%%%%%%%%%%%%%%%%%%%%%%%%%
%%%%%%%%%%%%%%%%%%%%%%%%%%%%%%%%%%%%%%%%%%%%%%%%%%%%%%%%%%%%%%%%%%%%%%%%%%%%%%%%%%%%%%%%%%%%%%%%%%%
%%%%%%%%%%%%%%%%%%%%%%%%%%%%%%%%%%%%%%%%%%%%%%%%%%%%%%%%%%%%%%%%%%%%%%%%%%%%%%%%%%%%%%%%%%%%%%%%%%%
%{thm:elision}
%
\begin{proof}[Proof of \Cref{thm:elision}]
	We proceed by induction over the structure of computations.
	\begin{description}[labelwidth=6mm,leftmargin=8mm,itemindent=0mm]
		\item[IB:]
			For $\tau=\epsilon$ or $\tau=\anact$ choose $\sigma=\tau$.
			This immediately satisfies the claim.

		\item[IH:]
			For every $\tau\in\asem{A}$ there is some $\sigma$ that satisfies the properties of the \namecref{thm:elision}.

		\item[IS:]
			Consider now $\tau.\anact\in\asem{A}$.
			By induction, there is some $\sigma$ with the following properties:
			\begin{enumerate}[label=({P}\arabic*),leftmargin=1.3cm]
				\item \label[property]{proof:renaming:premis:sigmaact-enabled} $\sigma\in\asem{\swapadr{A}}$
				\item \label[property]{proof:renaming:premis:heap-pexp} $\forall\apexp\in\pexp.~\heapcomputof{\sigma}{\swapexp{\apexp}}=\swapadr{\heapcomputof{\tau}{\apexp}}$
				\item \label[property]{proof:renaming:premis:heap-dexp} $\forall\adexp\in\dexp.~\heapcomputof{\sigma}{\swapexp{\adexp}}=\heapcomputof{\tau}{\adexp}$
				\item \label[property]{proof:renaming:premis:valid} $\validof{\sigma}=\swapexp{\validof{\tau}}$
				\item \label[property]{proof:renaming:premis:free} $\freedof{\sigma}=\swapadr{\freedof{\tau}}$
				\item \label[property]{proof:renaming:premis:fresh} $\freshof{\sigma}=\swapadr{\freshof{\tau}}$
				\item \label[property]{proof:renaming:premis:history} $\historyof{\sigma}=\swaphist{\historyof{\tau}}$
				\item \label[property]{proof:renaming:premis:control} $\controlof{\sigma}=\controlof{\tau}$
			\end{enumerate}
			Let $\anact=(\athread,\acom,\anup)$.
			We show that there is some $\anactp=(\athread,\acom',\anupp)$ such that $\sigma.\anactp$ satisfies the claim.
			That is, our \textbf{g}oal is to show the following:
			\begin{enumerate}[label=({G}\arabic*),leftmargin=1.3cm]
				\item \label[property]{proof:renaming:goal:sigmaact-enabled} $\sigma.\anactp\in\asem{\swapadr{A}}$
				\item \label[property]{proof:renaming:goal:heap-pexp} $\forall\apexp\in\pexp.~\heapcomputof{\sigma.\anactp}{\swapexp{\apexp}}=\swapadr{\heapcomputof{\tau.\anact}{\apexp}}$
				\item \label[property]{proof:renaming:goal:heap-dexp} $\forall\adexp\in\dexp.~\heapcomputof{\sigma.\anactp}{\swapexp{\adexp}}=\heapcomputof{\tau.\anact}{\adexp}$
				\item \label[property]{proof:renaming:goal:valid} $\validof{\sigma.\anactp}=\swapexp{\validof{\tau.\anact}}$
				\item \label[property]{proof:renaming:goal:free} $\freedof{\sigma.\anactp}=\swapadr{\freedof{\tau.\anact}}$
				\item \label[property]{proof:renaming:goal:fresh} $\freshof{\sigma.\anactp}=\swapadr{\freshof{\tau.\anact}}$
				\item \label[property]{proof:renaming:goal:history} $\historyof{\sigma.\anactp}=\swaphist{\historyof{\tau.\anact}}$
				\item \label[property]{proof:renaming:goal:control} $\controlof{\sigma.\anactp}=\controlof{\tau.\anact}$
			\end{enumerate}
			We choose $\acom'=\acom$ for all cases except $\acom=\freeof{\anadr}$.
			For deletions, we replace the address: $\acom=\freeof{\swapadr{\anadr}}$.
			Hence, \Cref{proof:renaming:goal:control} will follow from \Cref{proof:renaming:premis:control} together with \Cref{assumption:frees-do-not-affect-control}; we will not comment on this property hereafter.
			For \Cref{proof:renaming:goal:sigmaact-enabled} we will only argue that $\anactp$ is enabled after $\sigma$.
			This, together with \Cref{proof:renaming:premis:sigmaact-enabled} yields the desired property.

			We do a case distinction on the executed command $\acom$.
			\begin{casedistinction}
				%%%%%%%%%%%%%%%%%%%%%%%%%%%%%%%%%%%%%%%%%%%%%%%%%%%%%%%%%%%
				%%%%%%%%%%%%%%%%%%%%%%%%%%%%%%%%%%%%%%%%%%%%%%%%%%%%%%%%%%%
				\item[$\acom\equiv\apvar:=\apvarp$]
				We have $\anup=[\apvar\mapsto\anadr]$ with $\anadr=\heapcomputof{\tau}{\apvarp}$.
				Choose $\anupp=[\apvar\mapsto\swapadr{\anadr}]$.

				\Cref{proof:renaming:goal:free,proof:renaming:goal:fresh,proof:renaming:goal:history} follow immediately from \Cref{proof:renaming:premis:free,proof:renaming:premis:fresh,proof:renaming:premis:history} because the fresh and freed addresses are not changed and no event is emitted, formally: $\freshof{\tau.\anact}=\freshof{\tau}$, $\freshof{\sigma.\anactp}=\freshof{\sigma}$, $\freedof{\tau.\anact}=\freedof{\tau}$, $\freedof{\sigma.\anactp}=\freedof{\sigma}$, $\historyof{\tau.\anact}=\historyof{\tau}$, and $\historyof{\sigma.\anactp}=\historyof{\sigma}$.
				So it remains to show \Cref{proof:renaming:goal:sigmaact-enabled,proof:renaming:goal:heap-pexp,proof:renaming:goal:heap-dexp,proof:renaming:goal:valid}.

				\ad{\Cref{proof:renaming:goal:sigmaact-enabled}}
				We have to show that $\swapadr{\anadr}=\heapcomputof{\sigma}{\apvarp}$ holds.
				By choice of $\anadr$, we have to show $\heapcomputof{\sigma}{\apvarp}=\swapadr{\heapcomputof{\tau}{\apvarp}}$.
				This follows from \Cref{proof:renaming:premis:heap-pexp} with $\swapexp{\apvarp}=\apvarp$.

				\ad{\Cref{proof:renaming:goal:heap-pexp}}
				For $\apvar\in\pexp$ the claim follows by choice of $\anupp$.
				So consider $\apexp\in\pexp$ with $\apexp\not\equiv\apvar$.
				Then, we conclude as follows:
				\begin{align*}
					\heapcomputof{\sigma.\anactp}{\swapexp{\apexp}}
					=~&
					\heapcomputof{\sigma}{\swapexp{\apexp}}
					=
					\swapadr{\heapcomputof{\tau}{\apexp}}
					\\=~&
					\swapadr{\heapcomputof{\tau.\anact}{\apexp}}
				\end{align*}
				where the first equality holds because only $\apvar$ is updated by $\anupp$ and $\swapexp{\apexp}\neq\apvar$, the second equality holds by \Cref{proof:renaming:premis:heap-pexp}, and the third equality holds because only $\apvar$ is updated by $\anup$.

				\ad{\Cref{proof:renaming:goal:heap-dexp}}
				Neither $\anup$ nor $\anupp$ update data expressions.
				So $\heapcomputof{\tau.\anact}{\adexp}=\heapcomputof{\tau}{\adexp}$ and $\heapcomputof{\sigma.\anactp}{\adexp}=\heapcomputof{\sigma}{\adexp}$ for every $\adexp\in\dexp$.
				Hence, the claim follows from \Cref{proof:renaming:premis:heap-dexp}.

				\ad{\Cref{proof:renaming:goal:valid}}
				By $\swapexp{\apvarp}=\apvarp$ we have $\apvarp\in\validof{\tau}\iff\apvarp\in\validof{\sigma}$.
				Hence, we conclude using \Cref{proof:renaming:premis:valid}, \Cref{thm:swap-vs-set-opereations}, and the definition of validity and $\swapexpraw$:
				\begin{align*}
					\validof{\sigma.\anactp}
					=
					\validof{\sigma}\otimes\set{\apvar}
					=~&
					\swapexp{\validof{\tau}}\otimes\swapexp{\set{\apvar}}
					\\=~&
					\swapexp{\validof{\tau}\otimes\set{\apvar}}
					=
					\swapexp{\validof{\tau.\anact}}
				\end{align*}
				with $\otimes:=\cup$ if $\apvarp\in\validof{\tau}$ and $\otimes:=\setminus$ otherwise.

				%%%%%%%%%%%%%%%%%%%%%%%%%%%%%%%%%%%%%%%%%%%%%%%%%%%%%%%%%%%
				%%%%%%%%%%%%%%%%%%%%%%%%%%%%%%%%%%%%%%%%%%%%%%%%%%%%%%%%%%%
				\item[$\acom\equiv\apvar=\psel{\apvarp}$ with $\heapcomputof{\tau}{\apvarp}\in\adr$]
				We have $\anup=[\apvar\mapsto\anadrp]$ with $\anadr=\heapcomputof{\tau}{\apvarp}$ and $\anadrp=\heapcomputof{\tau}{\psel{\anadr}}$.
				Choose $\anupp=[\apvar\mapsto\swapadr{\anadrp}]$.

				\Cref{proof:renaming:goal:free,proof:renaming:goal:fresh,proof:renaming:goal:history} follow immediately from \Cref{proof:renaming:premis:free,proof:renaming:premis:fresh,proof:renaming:premis:history} because the fresh and freed addresses are not changed and no event is emitted.

				\ad{\Cref{proof:renaming:goal:sigmaact-enabled}}
				Let $\anadr'=\heapcomputof{\sigma}{\apvarp}$ and $\anadrp'=\heapcomputof{\sigma}{\psel{\anadr'}}$.
				For the claim to hold we have to show: $\anadrp'=\swapadr{\anadrp}$.
				First, note that we have: \[\anadr'=\heapcomputof{\sigma}{\apvarp}=\heapcomputof{\sigma}{\swapexp{\apvarp}}=\swapadr{\heapcomputof{\tau}{\apvarp}}=\swapadr{\anadr}\] by \Cref{proof:renaming:premis:heap-pexp}.
				Then, we conclude as follows:
				\begin{align*}
					\anadrp'
					=~&
					\heapcomputof{\sigma}{\psel{\anadr'}}
					=
					\heapcomputof{\sigma}{\psel{\swapadr{\anadr}}}
					=
					\heapcomputof{\sigma}{\swapexp{\psel{\anadr}}}
					\\=~&
					\swapadr{\heapcomputof{\tau}{\psel{\anadr}}}
					=
					\swapadr{\anadrp}
				\end{align*}

				\ad{\Cref{proof:renaming:goal:heap-pexp}}
				For $\apvar\in\pexp$ the claim follows by choice of $\anupp$.
				For $\apexp\in\pexp$ with $\apexp\not\equiv\apvar$ the claim follows from \Cref{proof:renaming:premis:heap-pexp} together with the fact that $\anup$ and $\anupp$ do not modify $\apexp$.

				\ad{\Cref{proof:renaming:goal:heap-dexp}}
				Neither $\anup$ nor $\anupp$ update data expressions.
				Hence, the claim follows from \Cref{proof:renaming:premis:heap-dexp}.

				\ad{\Cref{proof:renaming:goal:valid}}
				By $\swapexp{\apvarp}=\apvarp$ we have $\apvarp\in\validof{\tau}\iff\apvarp\in\validof{\sigma}$.
				Sicne \Cref{proof:renaming:goal:sigmaact-enabled} from above gives $\anadr'=\swapadr{\anadr}$, we have $\swapexp{\psel{\anadr}}=\psel{\anadr'}$.
				Hence, $\psel{\anadr}\in\validof{\tau}\iff\psel{\anadr'}\in\validof{\sigma}$ holds by \Cref{proof:renaming:premis:valid}.
				This means we have $\apvar\in\validof{\tau.\anact}\iff\apvar\in\validof{\sigma.\anactp}$ because $\heapcomputof{\tau}{\apvarp}\in\adr$.
				Together with \Cref{proof:renaming:premis:valid} this concludes the claim because only the validity of $\apvar$ is altered by $\anact$/$\anactp$.

				%%%%%%%%%%%%%%%%%%%%%%%%%%%%%%%%%%%%%%%%%%%%%%%%%%%%%%%%%%%
				%%%%%%%%%%%%%%%%%%%%%%%%%%%%%%%%%%%%%%%%%%%%%%%%%%%%%%%%%%%
				\item[$\acom\equiv\psel{\apvar}=\apvarp$ with $\heapcomputof{\tau}{\apvar}\in\adr$]
				The update is of the form $\anup=[\psel{\anadr}\mapsto\anadrp]$ with $\anadr=\heapcomputof{\tau}{\apvar}$ and $\anadrp=\heapcomputof{\tau}{\apvarp}$.
				For $\anupp$ we choose $\anupp=[\psel{\swapadr{\anadr}}\mapsto\swapadr{\anadrp}]$.

				\Cref{proof:renaming:goal:free,proof:renaming:goal:fresh,proof:renaming:goal:history} follow immediately from \Cref{proof:renaming:premis:free,proof:renaming:premis:fresh,proof:renaming:premis:history} because the fresh and freed addresses are not changed and no event is emitted.

				\ad{\Cref{proof:renaming:goal:sigmaact-enabled}}
				Let $\anadr'=\heapcomputof{\sigma}{\apvar}$ and $\anadrp'=\heapcomputof{\sigma}{\apvarp}$.
				For the claim to hold we have to show: $\anadr'=\swapadr{\anadr}$ and $\anadrp'=\swapadr{\anadrp}$.
				This follows immediately from \Cref{proof:renaming:premis:heap-pexp}.

				\ad{\Cref{proof:renaming:goal:heap-pexp}}
				For $\psel{\anadr}\in\pexp$ we have:
				\begin{align*}
					\heapcomputof{\sigma.\anactp}{\swapexp{\psel{\anadr}}}
					=~&
					\heapcomputof{\sigma.\anactp}{\psel{\swapadr{\anadr}}}
					=
					\swapadr{\anadrp}
					\\=~&
					\swapadr{\heapcomputof{\tau.\anact}{\psel{\anadr}}}
				\end{align*}
				where the first equality is by definition, the second equality is due to $\anupp$, and the last equality is due to $\anup$.
				For $\apexp\in\pexp$ with $\apexp\not\equiv\psel{\anadr}$ the claim follows from \Cref{proof:renaming:premis:heap-pexp} together with the fact that $\anup$ and $\anupp$ do not modify $\apexp$ and $\swapexp{\apexp}$, respectively.

				\ad{\Cref{proof:renaming:goal:heap-dexp}}
				Neither $\anup$ nor $\anupp$ update data expressions.
				Hence, the claim follows from \Cref{proof:renaming:premis:heap-dexp}.

				\ad{\Cref{proof:renaming:goal:valid}}
				We have $\apvarp\in\validof{\tau}\iff\apvarp\in\validof{\sigma}$ due to $\swapexp{\apvarp}=\apvarp$ and \Cref{proof:renaming:premis:valid}.
				% By $\swapexp{\apvarp}=\apvarp$ we have $\apvarp\in\validof{\tau}\iff\apvarp\in\validof{\sigma}$ by \Cref{proof:renaming:premis:valid}.
				Hence, $\psel{\anadr}\in\validof{\tau.\anact}\iff\psel{\swapadr{\anadr}}\in\validof{\sigma.\anactp}$ because $\heapcomputof{\tau}{\apvar}\in\adr$.
				Together with \Cref{proof:renaming:premis:valid} this concludes the property because only the validity of $\psel{\anadr}$/$\psel{\swapadr{\anadr}}$ is altered by $\anact$/$\anactp$.

				%%%%%%%%%%%%%%%%%%%%%%%%%%%%%%%%%%%%%%%%%%%%%%%%%%%%%%%%%%%
				%%%%%%%%%%%%%%%%%%%%%%%%%%%%%%%%%%%%%%%%%%%%%%%%%%%%%%%%%%%
				\item[$\acom\equiv\apvar:=\psel{\apvarp}$ or $\acom\equiv\psel{\apvarp}:=\apvar$ with $\heapcomputof{\tau}{\apvarp}\notin\adr$]
				The updates are $\anup=\emptyset=\anupp$ because the command segfaults.
				By definition of validity we get $\validof{\tau.\anact}=\validof{\tau}$.
				We have $\heapcomputof{\sigma}{\apvarp}=\swapadr{\heapcomputof{\tau}{\apvarp}}=\heapcomputof{\tau}{\apvarp}\notin\adr$ by \Cref{proof:renaming:premis:heap-pexp}.
				Hence, $\anact$ also segfaults in $\sigma$ and we conclude \Cref{proof:renaming:goal:sigmaact-enabled}.
				The remaining properties follow because the heap, the valid expressions, the freed addresses, the fresh addresses, and the history remain unchanged.

				%%%%%%%%%%%%%%%%%%%%%%%%%%%%%%%%%%%%%%%%%%%%%%%%%%%%%%%%%%%
				%%%%%%%%%%%%%%%%%%%%%%%%%%%%%%%%%%%%%%%%%%%%%%%%%%%%%%%%%%%
				\item[$\acom\equiv\advar=\opof{\advar_1,\ldots, \advar_n}$]
				The update is $\anup=[\advar\mapsto\advalue]$ with $\advalue=\opof{\heapcomputof{\tau}{\advar_1},\dots,\heapcomputof{\tau}{\advar_1}}$.
				Choose $\anupp=\anup$.

				Since the pointer expression valuations, the validity, the freed address, and the fresh address are not altered and no event is emitted by $\anact$/$\anactp$, \Cref{proof:renaming:goal:heap-pexp,proof:renaming:goal:valid,proof:renaming:goal:free,proof:renaming:goal:fresh,proof:renaming:goal:history} follow immediately from \Cref{proof:renaming:premis:heap-pexp,proof:renaming:premis:valid,proof:renaming:premis:free,proof:renaming:premis:fresh,proof:renaming:premis:history}

				\ad{\Cref{proof:renaming:goal:sigmaact-enabled}}
				By \Cref{proof:renaming:premis:heap-dexp} we have $\heapcomputof{\sigma}{\advar_i}=\heapcomputof{\tau}{\advar_i}$.
				Hence, $\anactp$ is enabled after $\sigma$.

				\ad{\Cref{proof:renaming:goal:heap-dexp}}
				We have $\heapcomputof{\sigma.\anactp}{\swapexp{\advar}}=\heapcomputof{\sigma.\anactp}{\advar}=\advalue=\heapcomputof{\tau.\anact}{\advar}$.
				And because no other data expressions are updated by $\anup$/$\anupp$, the claim follows from \Cref{proof:renaming:premis:heap-dexp}.

				%%%%%%%%%%%%%%%%%%%%%%%%%%%%%%%%%%%%%%%%%%%%%%%%%%%%%%%%%%%
				%%%%%%%%%%%%%%%%%%%%%%%%%%%%%%%%%%%%%%%%%%%%%%%%%%%%%%%%%%%
				\item[$\acom\equiv\advar=\dsel{\apvarp}$ with $\heapcomputof{\tau}{\apvarp}\in\adr$]
				The update is $\anup=[\advar\mapsto\advalue]$ with $\anadr=\heapcomputof{\tau}{\apvarp}$ and $\advalue=\heapcomputof{\tau}{\dsel{\anadr}}$.
				Choose $\anupp=\anup$.

				Since the pointer expression valuations, the validity, the freed address, and the fresh address are not altered and no event is emitted by $\anact$/$\anactp$, \Cref{proof:renaming:goal:heap-pexp,proof:renaming:goal:valid,proof:renaming:goal:free,proof:renaming:goal:fresh,proof:renaming:goal:history} follow immediately from \Cref{proof:renaming:premis:heap-pexp,proof:renaming:premis:valid,proof:renaming:premis:free,proof:renaming:premis:fresh,proof:renaming:premis:history}

				\ad{\Cref{proof:renaming:goal:sigmaact-enabled}}
				Let $\anadr'=\heapcomputof{\sigma}{\apvarp}$ and $\advalue'=\heapcomputof{\sigma}{\dsel{\anadr'}}$.
				For enabledness of $\anactp$, we have to show that $\advalue'=\advalue$.
				To see this, first note that \[\anadr'=\heapcomputof{\sigma}{\apvarp}=\heapcomputof{\sigma}{\swapexp{\apvarp}}=\swapadr{\heapcomputof{\tau}{\apvarp}}=\swapadr{\anadr}\] \Cref{proof:renaming:premis:heap-pexp}.
				Together \Cref{proof:renaming:premis:heap-dexp} we conclude as follows:
				\begin{align*}
					\advalue'
					=\heapcomputof{\sigma}{\dsel{\anadr'}}
					=\heapcomputof{\sigma}{\swapexp{\dsel{\anadr}}}
					=\heapcomputof{\tau}{\dsel{\anadr}}
					=\advalue
					\ .
				\end{align*}

				\ad{\Cref{proof:renaming:goal:heap-dexp}}
				We have $\heapcomputof{\sigma.\anactp}{\advar}=\advalue=\heapcomputof{\tau.\anact}{\advar}=\heapcomputof{\tau.\anact}{\swapexp{\advar}}$.
				And because no other data expressions are updated by $\anup$/$\anupp$, the claim follows from \Cref{proof:renaming:premis:heap-dexp}.

				%%%%%%%%%%%%%%%%%%%%%%%%%%%%%%%%%%%%%%%%%%%%%%%%%%%%%%%%%%%
				%%%%%%%%%%%%%%%%%%%%%%%%%%%%%%%%%%%%%%%%%%%%%%%%%%%%%%%%%%%
				\item[$\acom\equiv\dsel{\apvar}=\advarp$ with $\heapcomputof{\tau}{\apvar}\in\adr$]
				By the semantics the update is $\anup=[\dsel{\anadr}\mapsto\advalue]$ with $\anadr=\heapcomputof{\tau}{\apvar}$ and $\advalue=\heapcomputof{\tau}{\apvar}$.
				We choose $\anupp=[\dsel{\swapadr{\anadr}}\mapsto\advalue]$.

				Since the pointer expression valuations, the validity, the freed address, and the fresh address are not altered and no event is emitted by $\anact$/$\anactp$, \Cref{proof:renaming:goal:heap-pexp,proof:renaming:goal:valid,proof:renaming:goal:free,proof:renaming:goal:fresh,proof:renaming:goal:history} follow immediately from \Cref{proof:renaming:premis:heap-pexp,proof:renaming:premis:valid,proof:renaming:premis:free,proof:renaming:premis:fresh,proof:renaming:premis:history}

				\ad{\Cref{proof:renaming:goal:sigmaact-enabled}}
				For enabledness of $\anactp$ we have to show that $\heapcomputof{\sigma}{\apvar}=\swapadr{\anadr}$ and $\heapcomputof{\sigma}{\advar}=\advalue$ hold.
				This follows immediately from \Cref{proof:renaming:premis:heap-pexp,proof:renaming:premis:heap-dexp}.

				\ad{\Cref{proof:renaming:goal:heap-dexp}}
				We have: \[\heapcomputof{\sigma.\anactp}{\swapexp{\dsel{\anadr}}}=\heapcomputof{\sigma.\anactp}{\dsel{\swapadr{\anadr}}}=\advalue=\heapcomputof{\tau.\anact}{\dsel{\anadr}}\ .\]
				And because no other data expressions are updated by $\anup$/$\anupp$, the claim follows from \Cref{proof:renaming:premis:heap-dexp}.

				%%%%%%%%%%%%%%%%%%%%%%%%%%%%%%%%%%%%%%%%%%%%%%%%%%%%%%%%%%%
				%%%%%%%%%%%%%%%%%%%%%%%%%%%%%%%%%%%%%%%%%%%%%%%%%%%%%%%%%%%
				\item[$\acom\equiv\dsel{\apvar}=\advarp$ or $\acom\equiv\advar=\dsel{\apvar}$ with $\heapcomputof{\tau}{\apvar}\notin\adr$]
				The updates are $\anup=\emptyset=\anupp$.
				We have $\heapcomputof{\sigma}{\apvar}=\swapadr{\heapcomputof{\tau}{\apvar}}=\heapcomputof{\tau}{\apvar}\notin\adr$ by \Cref{proof:renaming:premis:heap-pexp}.
				So we conclude \Cref{proof:renaming:goal:sigmaact-enabled}.
				Then, the claim follows because the heap, the valid expressions, the freed addresses, the fresh addresses, and the history remain unchanged.

				%%%%%%%%%%%%%%%%%%%%%%%%%%%%%%%%%%%%%%%%%%%%%%%%%%%%%%%%%%%
				%%%%%%%%%%%%%%%%%%%%%%%%%%%%%%%%%%%%%%%%%%%%%%%%%%%%%%%%%%%
				\item[$\acom\equiv\apvar:=\malloc$]
				The update is
				\begin{align*}
					\anup&=[\apvar\mapsto\anadr,\psel{\anadr}\mapsto\segval,\dsel{\anadr}\mapsto\advalue]
					\intertext{
				with $\anadr\in\freshof{\tau}\cup(\freedof{\tau}\cap A)$ and some $\advalue\in\dom$.
				Choose
					}
					\anupp&=[\apvar\mapsto\swapadr{\anadr},\psel{\swapadr{\anadr}}\mapsto\segval,\dsel{\swapadr{\anadr}}\mapsto\advalue]\ .
				\end{align*}
				Then, \Cref{proof:renaming:goal:history} follows immediately from \Cref{proof:renaming:premis:history} because no event is emitted.

				\ad{\Cref{proof:renaming:goal:sigmaact-enabled}}
				We have to show that $\swapadr{\anadr}\in\freshof{\sigma}\cup(\freedof{\sigma}\cap\swapadr{A})$.
				By \Cref{proof:renaming:premis:fresh,proof:renaming:premis:free} and \Cref{thm:swap-vs-set-opereations} we have:
				\begin{align*}
					&
					\freshof{\sigma}\cup(\freedof{\sigma}\cap\swapadr{A})
					\\=~&
				 	\swapadr{\freshof{\tau}}\cup(\swapadr{\freedof{\tau}}\cap\swapadr{A})
				 	\\=~&
				 	\swapadr{\freshof{\tau}\cup(\freedof{\tau}\cap A)}
				\end{align*}
				So since $\anadr\in\freshof{\tau}\cup(\freedof{\tau}\cap A)$ holds by enabledness of $\anact$, the above yields the desired $\swapadr{\anadr}\in\freshof{\sigma}\cup(\freedof{\sigma}\cap\swapadr{A})$.
				Hence, $\anactp$ is enabled after $\sigma$.

				\ad{\Cref{proof:renaming:goal:heap-pexp}}
				We have
				\begin{align*}
					\heapcomputof{\sigma.\anactp}{\swapexp{\apvar}}
					=\heapcomputof{\sigma.\anactp}{\apvar}
					=\swapadr{\anadr}
					=\swapadr{\heapcomputof{\tau.\anact}{\apvar}}
				\end{align*}
				and
				\begin{align*}
					\heapcomputof{\sigma.\anactp}{\swapexp{\psel{\anadr}}}
					=~&
					\heapcomputof{\sigma.\anactp}{\psel{\swapadr{\anadr}}}
					=
					\segval
					\\=~&
					\swapadr{\segval}
					=
					\swapadr{\heapcomputof{\tau.\anact}{\psel{\anadr}}}
				\end{align*}
				Since no other pointer expressions are updated, the claim follows from \Cref{proof:renaming:premis:heap-pexp}.

				\ad{\Cref{proof:renaming:goal:heap-dexp}}
				We have:
				\begin{align*}
					\heapcomputof{\sigma.\anactp}{\swapexp{\dsel{\anadr}}}
					=
					\heapcomputof{\sigma.\anactp}{\dsel{\swapadr{\anadr}}}
					=
					\advalue
					=
					\heapcomputof{\tau.\anact}{\dsel{\anadr}}
				\end{align*}
				Since no other data expression is updated, the follows from \Cref{proof:renaming:premis:heap-dexp}.

				\ad{\Cref{proof:renaming:goal:valid}}
				We conclude using \Cref{proof:renaming:premis:valid} and \Cref{thm:swap-vs-set-opereations} as follows:
				\begin{align*}
					\validof{\sigma.\anactp}
					=~&
					\validof{\sigma}\cup\set{\apvar,\psel{\swapadr{\anadr}}}
					=
					\swapexp{\validof{\tau}}\cup\swapexp{\set{\apvar,\psel{\anadr}}}
					\\=~&
					\swapexp{\validof{\tau}\cup\set{\apvar,\psel{\anadr}}}
					=
					\swapexp{\validof{\tau.\anact}}
				\end{align*}

				\ad{\Cref{proof:renaming:goal:free}}
				We conclude using \Cref{proof:renaming:premis:free} and \Cref{thm:swap-vs-set-opereations} as follows:
				\begin{align*}
					\freedof{\sigma.\anactp}
					=~&
					\freedof{\sigma}\setminus{\swapadr{\anadr}}
					=
					\swapadr{\freedof{\tau}}\setminus{\swapadr{\anadr}}
					\\=~&
					\swapadr{\freedof{\tau}\setminus\anadr}
					=
					\swapadr{\freedof{\tau.\anact}}
				\end{align*}

				\ad{\Cref{proof:renaming:goal:fresh}}
				We conclude using \Cref{proof:renaming:premis:fresh} and \Cref{thm:swap-vs-set-opereations} as follows:
				\begin{align*}
					\freshof{\sigma.\anactp}
					=~&
					\freshof{\sigma}\setminus{\swapadr{\anadr}}
					=
					\swapadr{\freshof{\tau}}\setminus{\swapadr{\anadr}}
					\\=~&
					\swapadr{\freshof{\tau}\setminus\anadr}
					=
					\swapadr{\freshof{\tau.\anact}}
				\end{align*}

				%%%%%%%%%%%%%%%%%%%%%%%%%%%%%%%%%%%%%%%%%%%%%%%%%%%%%%%%%%%
				%%%%%%%%%%%%%%%%%%%%%%%%%%%%%%%%%%%%%%%%%%%%%%%%%%%%%%%%%%%
				\item[$\acom\equiv\freeof{\anadr}$]
				The update is $\anup=[\psel{\anadr}\mapsto\bot,\dsel{\anadr}\mapsto\bot]$.
				Choose $\acom'=\freeof{\swapadr{\anadr}}$ and $\anupp=[\psel{\swapadr{\anadr}}\mapsto\bot,\dsel{\swapadr{\anadr}}\mapsto\bot]$.

				\ad{\Cref{proof:renaming:goal:sigmaact-enabled}}
				By the semantics we have $\freeof{\anadr}\in\freeableof{\tau}{\anadr}$.
				By \Cref{proof:renaming:premis:history} together with \Cref{thm:swaphist-freeable} we get $\freeof{\swapadr{\anadr}}\in\freeableof{\sigma}{\swapadr{\anadr}}$.
				Hence, $\anactp$ is enabled after $\sigma$.

				\ad{\Cref{proof:renaming:goal:heap-pexp}}
				We have \[\heapcomputof{\sigma.\anactp}{\swapexp{\apvar}}=\heapcomputof{\sigma.\anactp}{\apvar}=\swapadr{\anadr}=\swapadr{\heapcomputof{\tau.\anact}{\apvar}}\] and \[\heapcomputof{\sigma.\anactp}{\swapexp{\psel{\anadr}}}=\heapcomputof{\sigma.\anactp}{\psel{\swapadr{\anadr}}}=\bot=\heapcomputof{\tau.\anact}{\psel{\anadr}}\ .\]
				Since no other pointer expressions are affected by $\anup$/$\anupp$ the claim follows from \Cref{proof:renaming:goal:heap-pexp}.

				\ad{\Cref{proof:renaming:goal:heap-dexp}}
				We have \[\heapcomputof{\sigma.\anactp}{\swapexp{\dsel{\anadr}}}=\heapcomputof{\sigma.\anactp}{\dsel{\swapadr{\anadr}}}=\bot=\heapcomputof{\tau.\anact}{\dsel{\anadr}}\ .\]
				Since no other data expressions are modified by $\anup$/$\anupp$ the claim follows from \Cref{proof:renaming:goal:heap-dexp}.

				\ad{\Cref{proof:renaming:goal:valid}}
				Consider some $\apexp'\in\pexp$.
				Since $\swapadrraw$ is a bijection, there is some $\apexp\in\pexp$ such that $\swapexp{\apexp}=\apexp'$.
				First, we get:
				\begin{align*}
					\apexp'\in\validof{\sigma}
					\iff~&
					\swapexp{\apexp}\in\validof{\sigma}
					\\\iff~&
					\swapexp{\apexp}\in\swapexp{\validof{\tau}}
					\iff
					\apexp\in\validof{\tau}
				\end{align*}
				where the second equivalence is due to \Cref{proof:renaming:premis:valid} and the third equivalence holds since $\swapadrraw$ is a bijection.
				Second, we have:
				\begin{align*}
					\heapcomputof{\sigma}{\apexp'}\neq\swapadr{\anadr}
					\iff~&
					\heapcomputof{\sigma}{\swapexp{\apexp}}\neq\swapadr{\anadr}
					\\\iff~&
					\swapadr{\heapcomputof{\tau}{\apexp}}\neq\swapadr{\anadr}
					\iff
					\heapcomputof{\tau}{\apexp}\neq\anadr
				\end{align*}
				where the second equivalence is due to \Cref{proof:renaming:premis:heap-pexp} and the third equivalence holds because $\swapadrraw$ is a bijection.
				Last, we have:
				\begin{align*}
					\apexp'\cap\adr\neq\set{\swapadr{\anadr}}
					\iff~&
					\swapexp{\apexp}\cap\adr\neq\set{\swapadr{\anadr}}
					\\\iff~&
					\apexp\cap\adr\neq\set{\anadr}
				\end{align*}
				Altogether, this gives:
				\begin{align*}
					&
					\apexp'\in\validof{\sigma.\anactp}
					\\\iff~&
					\apexp'\in\validof{\sigma}
					\wedge
					\heapcomputof{\sigma}{\apexp'}\neq\swapadr{\anadr}
					\wedge
					\apexp'\cap\adr\neq\set{\swapadr{\anadr}}
					\\\iff~&
					\apexp\in\validof{\tau}
					\wedge
					\heapcomputof{\tau}{\apexp}\neq\anadr
					\wedge
					\apexp\cap\adr\neq\set{\anadr}
					\\\iff~&
					\apexp\in\validof{\tau.\anact}
					\\\iff~&
					\swapexp{\apexp}\in\swapexp{\validof{\tau.\anact}}
					\\\iff~&
					\apexp'\in\swapexp{\validof{\tau.\anact}}
				\end{align*}
				where the last but first equivalence holds because $\swapadrraw$ is a bijection.

				\ad{\Cref{proof:renaming:goal:free}}
				We conclude using \Cref{proof:renaming:premis:free} and \Cref{thm:swap-vs-set-opereations}:
				\begin{align*}
					\freedof{\sigma.\anactp}
					=~&
					\freedof{\sigma}\cup\set{\swapadr{\anadr}}
					=
					\swapadr{\freedof{\tau}}\cup\set{\swapadr{\anadr}}
					\\=~&
					\swapadr{\freedof{\tau}\cup\set{\anadr}}
					=
					\swapadr{\freedof{\tau.\anact}}
				\end{align*}

				\ad{\Cref{proof:renaming:goal:fresh}}
				We conclude using \Cref{proof:renaming:premis:fresh} and \Cref{thm:swap-vs-set-opereations}:
				\begin{align*}
					\freshof{\sigma.\anactp}
					=~&
					\freshof{\sigma}\setminus\set{\swapadr{\anadr}}
					=
					\swapadr{\freshof{\tau}}\setminus\set{\swapadr{\anadr}}
					\\=~&
					\swapadr{\freshof{\tau}\setminus\set{\anadr}}
					=
					\swapadr{\freshof{\tau.\anact}}
				\end{align*}

				\ad{\Cref{proof:renaming:goal:history}}
				We conclude using \Cref{proof:renaming:premis:history}
				\begin{align*}
					\historyof{\sigma.\anactp}
					=~&
					\historyof{\sigma}.\freeof{\swapadr{\anadr}}
					=
					\swaphist{\historyof{\tau}}.\swaphist{\freeof{\anadr}}
					\\=~&
					\swaphist{\historyof{\tau}.\freeof{\anadr}}
					=
					\swaphist{\historyof{\tau.\anact}}
				\end{align*}

				%%%%%%%%%%%%%%%%%%%%%%%%%%%%%%%%%%%%%%%%%%%%%%%%%%%%%%%%%%%
				%%%%%%%%%%%%%%%%%%%%%%%%%%%%%%%%%%%%%%%%%%%%%%%%%%%%%%%%%%%
				\item[$\acom\equiv\assert\ \mathit{cond}$]
				The update is $\anup=\emptyset$.
				Choose $\anupp=\emptyset$.

				Since the memory, the freed address, and the fresh address are not altered and no event is emitted by $\anact$/$\anactp$, \Cref{proof:renaming:goal:heap-pexp,proof:renaming:goal:heap-dexp,proof:renaming:goal:free,proof:renaming:goal:fresh,proof:renaming:goal:history} follow immediately from \Cref{proof:renaming:premis:heap-pexp,proof:renaming:premis:heap-dexp,proof:renaming:premis:free,proof:renaming:premis:fresh,proof:renaming:premis:history}.

				\ad{\Cref{proof:renaming:goal:sigmaact-enabled}}
				First, consider the case where $\mathit{cond}$ is a condition over data variables $\advar,\advarp$.
				By \Cref{proof:renaming:premis:heap-dexp} we have $\heapcomputof{\sigma}{\advar}=\heapcomputof{\tau}{\advar}$ and $\heapcomputof{\sigma}{\advarp}=\heapcomputof{\tau}{\advarp}$.
				Hence, $\anactp$ is enabled after $\sigma$.
				Now, consider the case where $\mathit{cond}$ is a condition over pointer variables $\apvar,\apvarp$.
				We have $\heapcomputof{\sigma}{\apvar}=\swapadr{\heapcomputof{\tau}{\apvar}}$ and $\heapcomputof{\sigma}{\apvarp}=\swapadr{\heapcomputof{\tau}{\apvarp}}$.
				So we get $\heapcomputof{\sigma}{\apvar}=\heapcomputof{\sigma}{\apvarp}$ iff $\heapcomputof{\tau}{\apvar}=\heapcomputof{\tau}{\apvarp}$.
				Hence, $\anactp$ is enabled after $\sigma$.
				This concludes the claim.

				\ad{\Cref{proof:renaming:goal:valid}}
				Note that we have:
				\begin{align*}
					&
					\validof{\sigma.\anactp}\neq\validof{\sigma}
					\\\iff~&
					\mathit{cond}\equiv\apvar=\apvarp \wedge \set{\apvar,\apvarp}\cap\validof{\sigma}\neq\emptyset \wedge \set{\apvar,\apvarp}\not\subseteq\validof{\sigma}
					\\\iff~&
					\mathit{cond}\equiv\apvar=\apvarp \wedge \set{\apvar,\apvarp}\cap\swapexp{\validof{\tau}}\neq\emptyset \wedge \set{\apvar,\apvarp}\not\subseteq\swapexp{\validof{\tau}}
					\\\iff~&
					\mathit{cond}\equiv\apvar=\apvarp \wedge \set{\apvar,\apvarp}\cap\validof{\tau}\neq\emptyset \wedge \set{\apvar,\apvarp}\not\subseteq\validof{\tau}
					\\\iff~&
					\validof{\tau.\anact}\neq\validof{\tau}
				\end{align*}
				where the second equivalence is by \Cref{proof:renaming:premis:valid}.
				Consider the case $\validof{\sigma.\anactp}\neq\validof{\sigma}$.
				We conclude by \Cref{proof:renaming:premis:valid} and \Cref{thm:swap-vs-set-opereations}:
				\begin{align*}
					\validof{\sigma.\anactp}
					=~&\validof{\sigma}\cup\set{\apvar,\apvarp}
					=\swapexp{\validof{\tau}}\cup\swapexp{\set{\apvar,\apvarp}}
					\\=~&\swapexp{\validof{\tau}\cup\set{\apvar,\apvarp}}
					=\swapexp{\validof{\tau.\anact}}
				\end{align*}
				In all other cases, we have $\validof{\tau.\anact}=\validof{\tau}$ and $\validof{\sigma.\anactp}=\validof{\sigma}$.
				Then, the claim follows immediately from \Cref{proof:renaming:premis:valid}.

				%%%%%%%%%%%%%%%%%%%%%%%%%%%%%%%%%%%%%%%%%%%%%%%%%%%%%%%%%%%
				%%%%%%%%%%%%%%%%%%%%%%%%%%%%%%%%%%%%%%%%%%%%%%%%%%%%%%%%%%%
				\item[$\acom\equiv\enterof{\afunc(\vecof{\apvar},\vecof{\advar})}$]
				The update is $\anup=\emptyset$ and we choose $\anupp=\emptyset$.

				Since the memory, the validity, the freed address, and the fresh address are not altered, \Cref{proof:renaming:goal:heap-pexp,proof:renaming:goal:heap-dexp,proof:renaming:goal:valid,proof:renaming:goal:free,proof:renaming:goal:fresh} follow immediately from \Cref{proof:renaming:premis:heap-pexp,proof:renaming:premis:heap-dexp,proof:renaming:premis:valid,proof:renaming:premis:free,proof:renaming:premis:fresh}.

				\ad{\Cref{proof:renaming:goal:sigmaact-enabled}}
				Consider some $i$.
				By \Cref{proof:renaming:premis:sigmaact-enabled} we have $\heapcomputof{\tau}{\apvar_i}\in\adr$.
				Hence, we have $\swapadr{\heapcomputof{\tau}{\apvar_i}}\in\adr$ by definition.
				So \Cref{proof:renaming:premis:heap-pexp} yields $\heapcomputof{\sigma}{\apvar_i}\in\adr$.
				Hence, $\anact$ is enabled.

				\ad{\Cref{proof:renaming:goal:history}}
				If $\heapcomputof{\tau}{\apvar_i}\notin\adr$, then \Cref{thm:rprf-vs-bot-pvar,thm:seg-valid} give $\heapcomputof{\tau}{\apvar_i}=\segval$ and $\apvar_i\in\validof{\tau}$.
				So we have $\heapcomputof{\sigma}{\apvar_i}=\swapadr{\heapcomputof{\tau}{\apvar_i}}=\heapcomputof{\tau}{\apvar_i}\notin\adr$.
				Hence, neither $\anact$/$\anactp$ emit an event and we conclude $\historyof{\sigma.\anactp}=\historyof{\sigma}=\swaphist{\historyof{\tau}}=\swaphist{\historyof{\tau.\anact}}$ by \Cref{proof:renaming:premis:history}.
				Otherwise, $\anact$ emits the event $\evt{\afunc}{\athread,\vecof{\anadr},\vecof{\advalue}}$ with $\vecof{\anadr}=\heapcomputof{\tau}{\vecof{\apvar}}$ and $\vecof{\advalue}=\heapcomputof{\tau}{\vecof{\advar}}$.
				% By \Cref{proof:renaming:premis:heap-pexp} we have $\heapcomputof{\sigma}{\apvar}=\swapadr{\heapcomputof{\tau}{\apvar}}=\swapadr{\anadr}$.
				By \Cref{proof:renaming:premis:heap-pexp} we have $\heapcomputof{\sigma}{\vecof{\apvar}}=\swapadr{\vecof{\anadr}}$.
				Hence, $\anactp$ emits the event $\evt{\afunc}{\athread,\swapadr{\anadr}}$.
				So we conclude using \Cref{proof:renaming:premis:history}:
				\begin{align*}
					\historyof{\sigma.\anactp}
					=~&\historyof{\sigma}.\evt{\afunc}{\athread,\swapadr{\vecof{\anadr}},\vecof{\advalue}}
					=\swaphist{\historyof{\tau}}.\swaphist{\evt{\afunc}{\athread,\vecof{\anadr},\vecof{\advalue}}}
					\\=~&\swaphist{\historyof{\tau}.\evt{\afunc}{\athread,\vecof{\anadr},\vecof{\advalue}}}
					=\swaphist{\historyof{\tau.\anact}}
				\end{align*}

				%%%%%%%%%%%%%%%%%%%%%%%%%%%%%%%%%%%%%%%%%%%%%%%%%%%%%%%%%%%
				%%%%%%%%%%%%%%%%%%%%%%%%%%%%%%%%%%%%%%%%%%%%%%%%%%%%%%%%%%%
				\item[$\acom\equiv\exit$]
				Follows analogously to the previous case.

			\end{casedistinction}
	\end{description}
	The case distinction is complete and thus concludes the claim.
\end{proof}

%%%%%%%%%%%%%%%%%%%%%%%%%%%%%%%%%%%%%%%%%%%%%%%%%%%%%%%%%%%%%%%%%%%%%%%%%%%%%%%%%%%%%%%%%%%%%%%%%%%
%%%%%%%%%%%%%%%%%%%%%%%%%%%%%%%%%%%%%%%%%%%%%%%%%%%%%%%%%%%%%%%%%%%%%%%%%%%%%%%%%%%%%%%%%%%%%%%%%%%
%%%%%%%%%%%%%%%%%%%%%%%%%%%%%%%%%%%%%%%%%%%%%%%%%%%%%%%%%%%%%%%%%%%%%%%%%%%%%%%%%%%%%%%%%%%%%%%%%%%
%%%%%%%%%%%%%%%%%%%%%%%%%%%%%%%%%%%%%%%%%%%%%%%%%%%%%%%%%%%%%%%%%%%%%%%%%%%%%%%%%%%%%%%%%%%%%%%%%%%
%{thm:supported-elision-of-invalid-address-more}
%
\begin{proof}[Proof of \Cref{thm:supported-elision-of-invalid-address-more}]
	Let $\smrobs$ support elision (\Cref{def:elision-support}).
	Let $\anadrp\in\freshof{\tau}\setminus A$.
	Let $\swapadrraw:\adr\to\adr$ be the address mapping defined by $\swapadr{\anadr}=\anadrp$, $\swapadr{\anadrp}=\anadr$, and $\swapadr{\anadrpp}=\anadrpp$ for $\anadr\neq\anadrpp\neq\anadrp$.
	\Cref{thm:elision} yields $\sigma\in\asem{\swapadr{A}}$ with
	\begin{compactitem}
		\item $\heapcomput{\sigma}\circ\swapexpraw=\swapadrraw\circ\heapcomput{\tau}$,
		\item $\historyof{\sigma}=\swaphist{\historyof{\tau}}$,
		\item $\validof{\sigma}=\swapexp{\validof{\tau}}$,
		\item $\freshof{\sigma}=\swapadr{\freshof{\tau}}$,
		\item $\freedof{\sigma}=\swapadr{\freedof{\tau}}$, and
		\item $\controlof{\sigma}=\controlof{\tau}$.
	\end{compactitem}
	We show that $\sigma$ satisfies the claim.
	First, note that $\swapadr{A}=A$ by $\anadr,\anadrp\notin A$.
	So we have $\sigma\in\asem{A}$.

	We first show the following auxiliaries:
	\begin{auxiliary}
		\psel{\anadr},\psel{\anadrp}&\notin\validof{\tau}
		\label[aux]{proof:supported-elision-of-invalid-address:anext-bnext}
		\\
		f=f^{-1} ~~&\text{for }f\in\set{\swapadrraw,\swapexpraw,\swaphistraw}
		\label[aux]{proof:supported-elision-of-invalid-address:inverse}
		\\
		\forall\apexp\in\validof{\tau}.~&\apexp\equiv\swapexp{\apexp}
		\label[aux]{proof:supported-elision-of-invalid-address:valid-pexp}
		\\
		\forall\apexp\in\validof{\tau}.~&\heapcomputof{\tau}{\apexp}=\swapadr{\heapcomputof{\tau}{\apexp}}
		\label[aux]{proof:supported-elision-of-invalid-address:valid-pointers}
		\\
		\forall\anadrpp\in\heapcomputof{\tau}{\validof{\tau}}.~&\dsel{\anadrpp}\equiv\swapexp{\dsel{\anadrpp}}
		\label[aux]{proof:supported-elision-of-invalid-address:valid-dexp}
		\\
		\forall\anadrpp\in\adr\setminus\set{\anadr,\anadrp}&.~\freeableof{\tau}{\anadrpp}\subseteq\freeableof{\sigma}{\anadrpp}
		\label[aux]{proof:supported-elision-of-invalid-address:freeable}
	\end{auxiliary}

	%%%%%%%%%%%%%%%%%%%%%%%%%%%%%%%%%%%%%%%%%%%%%%%%%%%%%%%%%%%
	\ad{\Cref{proof:supported-elision-of-invalid-address:anext-bnext}}
	Holds by $\anadr\notin\vadrof{\tau}$, and by $\anadrp\in\freshof{\tau}$ together with \Cref{thm:free-not-valid}.

	%%%%%%%%%%%%%%%%%%%%%%%%%%%%%%%%%%%%%%%%%%%%%%%%%%%%%%%%%%%
	\ad{\Cref{proof:supported-elision-of-invalid-address:inverse}} 
	Holds by choice of $\swapadrraw$.

	%%%%%%%%%%%%%%%%%%%%%%%%%%%%%%%%%%%%%%%%%%%%%%%%%%%%%%%%%%%
	\ad{\Cref{proof:supported-elision-of-invalid-address:valid-pexp}}
	Holds by \Cref{proof:supported-elision-of-invalid-address:anext-bnext} and the definition of $\swapadrraw$ and its induced $\swapexpraw$.

	%%%%%%%%%%%%%%%%%%%%%%%%%%%%%%%%%%%%%%%%%%%%%%%%%%%%%%%%%%%
	\ad{\Cref{proof:supported-elision-of-invalid-address:valid-pointers}}
	Holds due to $\anadr\notin\vadrof{\tau}$ giving $\anadr\notin\heapcomputof{\tau}{\validof{\tau}}$ by \Cref{thm:adrof-valid-heap-restriction}, and $\anadrp\in\freshof{\tau}$ giving $\anadrp\notin\rangeof{\heapcomput{\tau}}$ by \Cref{thm:fresh-not-referenced} and thus $\anadrp\notin\heapcomputof{\tau}{\validof{\tau}}$.

	%%%%%%%%%%%%%%%%%%%%%%%%%%%%%%%%%%%%%%%%%%%%%%%%%%%%%%%%%%%
	\ad{\Cref{proof:supported-elision-of-invalid-address:valid-dexp}}
	From $\anadrpp\in\heapcomputof{\tau}{\validof{\tau}}$ we get some $\apexp\in\validof{\tau}$ with $\heapcomputof{\tau}{\apexp}=\anadrpp$.
	Thus, $\anadrpp=\heapcomputof{\tau}{\apexp}=\swapadr{\heapcomputof{\tau}{\apexp}}=\swapadr{\anadrpp}$ by \Cref{proof:supported-elision-of-invalid-address:valid-pointers}.

	%%%%%%%%%%%%%%%%%%%%%%%%%%%%%%%%%%%%%%%%%%%%%%%%%%%%%%%%%%%
	\ad{\Cref{proof:supported-elision-of-invalid-address:freeable}}
	Let $\anadrpp\in\adr\setminus\set{\anadr,\anadrpp}$.
	Then, \[\freeableof{\tau}{\anadrpp}=\freeableof{\historyof{\tau}}{\anadrpp}=\freeableof{\swaphist{\historyof{\tau}}}{\anadrpp}=\freeableof{\historyof{\sigma}}{\anadrpp}=\freeableof{\sigma}{\anadrpp}\] where the second equality holds because $\smrobs$ supports elision (\Cref{def:elision-support}\ref{def:elision-support:replace}) by assumption and the third equality holds by the properties given by \Cref{thm:elision} listed above.
	So we have the desired $\freeableof{\tau}{\anadrpp}\subseteq\freeableof{\sigma}{\anadrpp}$.

	%%%%%%%%%%%%%%%%%%%%%%%%%%%%%%%%%%%%%%%%%%%%%%%%%%%%%%%%%%%
	\ad{$\tau\computequiv\sigma$}
	We already have $\controlof{\sigma}=\controlof{\tau}$.
	We get:
	\begin{align*}
		\domof{\restrict{\heapcomput{\sigma}}{\validof{\sigma}}}
		=~&
		\validof{\sigma}\cup\dvars\cup\setcond{\dsel{\anadrpp}}{\anadrpp\in\heapcomputof{\sigma}{\validof{\sigma}}}
		\\=~&
		\swapexp{\validof{\tau}}\cup\dvars\cup\setcond{\dsel{\anadrpp}}{\anadrpp\in\swapadr{\heapcomputof{\tau}{\swapexpinv{\validof{\tau}}}}}
		\\=~&
		\swapexp{\validof{\tau}}\cup\dvars\cup\setcond{\dsel{\anadrpp}}{\anadrpp\in\swapadr{\heapcomputof{\tau}{\swapexp{\validof{\tau}}}}}
		\\=~&
		\validof{\tau}\cup\dvars\cup\setcond{\dsel{\anadrpp}}{\anadrpp\in\swapadr{\heapcomputof{\tau}{\validof{\tau}}}}
		\\=~&
		\validof{\tau}\cup\dvars\cup\setcond{\dsel{\anadrpp}}{\anadrpp\in\heapcomputof{\tau}{\validof{\tau}}}
		=
		\domof{\restrict{\heapcomput{\tau}}{\validof{\tau}}}
	\end{align*}
	where the first equality is by definition, the second by the properties of $\sigma$, the third by \Cref{proof:supported-elision-of-invalid-address:inverse}, the fourth by \Cref{proof:supported-elision-of-invalid-address:valid-pexp}, the fifth by \Cref{proof:supported-elision-of-invalid-address:valid-pointers}, and the last again by definition.
	Then, we get for $\apexp\in\domof{\restrict{\heapcomput{\sigma}}{\validof{\sigma}}}\cap\pexp$:
	\begin{align*}
		\heapcomputof{\sigma}{\apexp}
		=
		\swapadr{\heapcomputof{\tau}{\swapexpinv{\apexp}}}
		=
		\swapadr{\heapcomputof{\tau}{\apexp}}
		=
		\heapcomputof{\tau}{\apexp}
	\end{align*}
	using the properties of $\sigma$, the fact that $\apexp\in\validof{\tau}$ must holds, and \Cref{proof:supported-elision-of-invalid-address:inverse,proof:supported-elision-of-invalid-address:valid-pexp,proof:supported-elision-of-invalid-address:valid-pointers}.
	Moreover, we get for $\adexp\in\domof{\restrict{\heapcomput{\sigma}}{\validof{\sigma}}}\cap\dexp$:
	\begin{align*}
		\heapcomputof{\sigma}{\adexp}
		=
		\heapcomputof{\tau}{\swapexpinv{\adexp}}
		=
		\heapcomputof{\tau}{\adexp}
		\ .
	\end{align*}
	by \Cref{proof:supported-elision-of-invalid-address:valid-dexp} together with the fact that $\dsel{\anadrpp}\in\domof{\restrict{\heapcomput{\sigma}}{\validof{\sigma}}}=\domof{\restrict{\heapcomput{\tau}}{\validof{\tau}}}$ implies $\anadrpp\in\heapcomputof{\tau}{\validof{\tau}}$.
	Altogether, this yields $\restrict{\heapcomput{\tau}}{\validof{\tau}}=\restrict{\heapcomput{\sigma}}{\validof{\sigma}}$.
	So we have $\tau\computequiv\sigma$.

	%%%%%%%%%%%%%%%%%%%%%%%%%%%%%%%%%%%%%%%%%%%%%%%%%%%%%%%%%%%
	\ad{$\tau\heapequiv[A]\sigma$}
	Let $\anadrpp\in A$.
	We show $\tau\heapequiv[\anadrpp]\sigma$.
	We have $\anadr\neq\anadrpp\neq\anadrp$ and thus $\swapadr{\anadrpp}=\anadrpp$.
	So using \Cref{proof:supported-elision-of-invalid-address:inverse} we get:
	\begin{align*}
		\anadrpp\in\allocatedof{\sigma}
		\iff~&
		\anadrpp\notin\freshof{\sigma}\cup\freedof{\sigma}
		\iff
		\anadrpp\notin\swapadr{\freshof{\tau}}\cup\swapadr{\freedof{\tau}}
		\\\iff~&
		\anadrpp\notin\swapadr{\freshof{\tau}\cup\freedof{\tau}}
		\iff
		\anadrpp\in\swapadr{\allocatedof{\tau}}
		\\\iff~&
		\swapadr{\anadrpp}\in\swapadr{\swapadr{\allocatedof{\tau}}}
		\iff
		\anadrpp\in\allocatedof{\tau}
	\end{align*}
	From \Cref{proof:supported-elision-of-invalid-address:freeable} we get $\freeableof{\tau}{\anadrpp}\subseteq\freeableof{\sigma}{\anadrpp}$.
	For $\apvar\in\pvars$ we have:
	\begin{align*}
		\anadrpp=\heapcomputof{\sigma}{\apvar}
		\iff~&
		\anadrpp=\swapadr{\heapcomputof{\tau}{\swapexp{\apvar}}}
		\iff
		\swapadr{\anadrpp}=\swapadr{\swapadr{\heapcomputof{\tau}{\apvar}}}
		\\\iff~&
		\anadrpp=\heapcomputof{\tau}{\apvar}
	\end{align*}
	Now, consider $\anadrpp'\in\heapcomputof{\tau}{\validof{\tau}}$.
	We have:
	\begin{align*}
		\anadrpp=\heapcomputof{\sigma}{\psel{\anadrpp'}}
		\iff~&
		\anadrpp=\swapadr{\heapcomputof{\tau}{\swapexp{\psel{\anadrpp'}}}}
		\\\iff~&
		\swapadr{\anadrpp}=\swapadr{\swapadr{\heapcomputof{\tau}{\swapexp{\psel{\anadrpp'}}}}}
		\\\iff~&
		\anadrpp=\heapcomputof{\tau}{\swapexp{\psel{\anadrpp'}}}
	\end{align*}
	In order to conclude, it remains to show that $\swapexp{\psel{\anadrpp'}}=\psel{\anadrpp}$.
	To that end, it suffices to show that $\anadrpp'\neq\anadr$ and $\anadrpp'\neq\anadrp$.
	The former follows from $\anadr\notin\vadrof{\tau}$ together with \Cref{thm:adrof-valid-heap-restriction}.
	The latter follows form $\anadrp\in\freshof{\tau}$ together with \Cref{thm:fresh-not-valid}.

	%%%%%%%%%%%%%%%%%%%%%%%%%%%%%%%%%%%%%%%%%%%%%%%%%%%%%%%%%%%
	\ad{$\tau\obsrel\sigma$}
	Follows from \Cref{proof:supported-elision-of-invalid-address:freeable} with $\anadr,\anadrp\notin\vadrof{\tau}$ by \Cref{thm:adrof-valid-heap-restriction,thm:fresh-not-valid}.

	%%%%%%%%%%%%%%%%%%%%%%%%%%%%%%%%%%%%%%%%%%%%%%%%%%%%%%%%%%%
	\ad{$\anadr\in\freshof{\sigma}$}
	We have $\anadrp\in\freshof{\tau}$.
	So $\anadr=\swapadr{\anadrp}\in\swapadr{\freshof{\tau}}=\freshof{\sigma}$.

	%%%%%%%%%%%%%%%%%%%%%%%%%%%%%%%%%%%%%%%%%%%%%%%%%%%%%%%%%%%
	\ad{Implication}
	Consider some $\anexp,\anexpp\in\pvars\cup\setcond{\psel{\anadr}}{\anadr\in\heapcomputof{\tau}{\validof{\tau}}}$ with $\heapcomputof{\tau}{\anexp}\neq\heapcomputof{\tau}{\anexpp}$.
	If $\anexp\in\pvars$, then $\swapexp{\anexp}=\anexp$.
	Otherwise, $\anexp\equiv\psel{\anadrpp}$ with $\anadrpp\in\heapcomputof{\tau}{\validof{\tau}}$.
	By assumption together with \Cref{thm:adrof-valid-heap-restriction}, $\anadrpp\neq\anadr$.
	And by \Cref{thm:fresh-not-valid}, $\anadrpp\neq\anadrp$.
	So $\swapexp{\anexp}=\anexp$ must hold.
	Similarly, we get $\swapexp{\anexpp}=\anexpp$.
	Then, we have:
	\begin{align*}
		&\heapcomputof{\sigma}{\anexp}=\swapadr{\heapcomputof{\tau}{\swapexp{\anexp}}}=\swapadr{\heapcomputof{\tau}{\anexp}}
		\\\text{and}\quad
		&\heapcomputof{\sigma}{\anexpp}=\swapadr{\heapcomputof{\tau}{\swapexp{\anexpp}}}=\swapadr{\heapcomputof{\tau}{\anexpp}}.
	\end{align*}
	Since $\swapadrraw$ is a bijection, $\heapcomputof{\tau}{\anexp}\neq\heapcomputof{\tau}{\anexpp}$ implies the desired $\heapcomputof{\sigma}{\anexp}\neq\heapcomputof{\sigma}{\anexpp}$.
\end{proof}

\begin{proof}[Proof of \Cref{thm:supported-elision-of-invalid-address}]
	Follows from \Cref{thm:supported-elision-of-invalid-address-more}.
\end{proof}

%!TEX root = ../../../main.tex

\subsection{Proofs of the Generalized Reduction (\CREF{appendix:theory-generalization})}

\begin{proof}[Proof of \Cref{thm:RPRF-implies-GCplus}]
	Let $\gcplussem[k]{\ads}$ be reduction compatible according to \Cref{def:reduction-compatible} and poitner race free (PRF) according to \Cref{definition:RPR}.
	We proceed by induction on the structure of computations.

	\begin{description}[labelwidth=6mm,leftmargin=8mm,itemindent=0mm]
		\item[IB:]
			Consider $\tau=\epsilon$.
			Then, $\sigma=\epsilon$ satisfies the claim.

		\item[IH:]
			For every $\tau\in\allsem$ and every $\anadr\in\adr$ there exists some $\sigma\in\adrsem{\anadr}$ with $\tau\computequiv\sigma$, $\tau\heapequiv[\anadr]\sigma$, and $\tau\obsrel\sigma$.

		\item[IS:]
			Consider now $\tau.\anact\in\allsem$.
			Let $\anact$ be of the form $\anact=(\athread,\acom,\anup)$.
			Let $\anadr\in\adr$ be some addresses.
			Invoke the induction hypothesis for $\tau$ and $\anadr$ to get some $\sigma\in\adrsem{\anadr}$ with $\tau\computequiv\sigma$, $\tau\heapequiv[\anadr]\sigma$, and $\tau\obsrel\sigma$.
			We show that there is some $\sigmagoal\in\adrsem{\anadr}$ with $\tau.\anact\computequiv\sigmagoal$, $\tau.\anact\heapequiv[\anadr]\sigmagoal$, and $\tau.\anact\obsrel\sigmagoal$.
			To do so, we do a case distinction on $\acom$.

			\begin{casedistinction}
				%%%%%%%%%%%%%%%%%%%%%%%%%%%%%%%%%%%%%%%%%%%%%%%%%%%%%%%%%%%
				%%%%%%%%%%%%%%%%%%%%%%%%%%%%%%%%%%%%%%%%%%%%%%%%%%%%%%%%%%%
				\item[$\acom$ is an assignment]
				We want to use \Cref{thm:mimic-simple-NEW} to find some $\anactp$ which mimics $\anact$.
				We have to show that \Cref{thm:mimic-simple-NEW} is enabled.
				To that end, consider $\acom$ contains $\psel{\apvar}$ or $\dsel{\apvar}$.
				By the semantics, we know $\heapcomputof{\tau}{\apvar}\in\adr$.
				By \Cref{thm:rprf-vs-bot-pvar} we have $\heapcomputof{\sigma}{\apvar}\neq\bot$.
				Towards a contradiction, assume $\heapcomputof{\sigma}{\apvar}=\segval$.
				By \Cref{thm:seg-valid} this means $\apvar\in\validof{\sigma}$.
				So we have $\heapcomputof{\tau}{\apvar}=\segval$ due to $\tau\computequiv\sigma$.
				This contradicts enabledness of $\anact$ after $\tau$.
				Altogether, this means we must have $\heapcomputof{\sigma}{\apvar}\in\adr$.
				So $\acom$ is enabled after $\sigma$.
				That is, there is some update $\anupp$ such that $\sigma.(\athread,\acom,\anupp)\in\adrsem{\anadr}$.
				Due to the premise, we know that $\sigma.(\athread,\acom,\anupp)$ is PRF.
				That is, we have $\apvar\in\validof{\sigma}$.
				Thus, $\apvar\in\validof{\tau}$ by $\tau\computequiv\sigma$.
				Hence, \Cref{thm:mimic-simple-NEW} is enabled.

				Now, \Cref{thm:mimic-simple-NEW} yields some $\anactp$ such that $\sigma.\anactp\in\asem{\anadr}$, $\tau.\anact\computequiv\sigma.\anactp$, $\tau.\anact\heapequiv[A]\sigma.\anactp$, and $\tau.\anact\obsrel\sigma.\anactp$.
				That is, $\sigmagoal:=\sigma.\anactp$ satisfies the claim.

				%%%%%%%%%%%%%%%%%%%%%%%%%%%%%%%%%%%%%%%%%%%%%%%%%%%%%%%%%%%
				%%%%%%%%%%%%%%%%%%%%%%%%%%%%%%%%%%%%%%%%%%%%%%%%%%%%%%%%%%%
				\item[$\sigma.\anact\in\adrsem{\anadr}$ and $\acom\not\equiv\apvar:=\malloc$ and $\acom\not\equiv\afunc(\vecof{\apvar},\vecof{\advar})$]
				As for assignments, we use \Cref{thm:mimic-simple-NEW} to get some $\anactp$ and choose $\sigmagoal:=\sigma.\anactp$.
				(Note that assertions over data variables are always enabled. Similarly, this case covers all $\exit$ actions.)

				%%%%%%%%%%%%%%%%%%%%%%%%%%%%%%%%%%%%%%%%%%%%%%%%%%%%%%%%%%%
				%%%%%%%%%%%%%%%%%%%%%%%%%%%%%%%%%%%%%%%%%%%%%%%%%%%%%%%%%%%
				\item[$\sigma.\anact\in\adrsem{\anadr}$ and $\acom\equiv\afunc(\vecof{\apvar},\vecof{\advar})$ and $\heapcomputof{\tau}{\vecof{\apvar}}=\heapcomputof{\sigma}{\vecof{\apvar}}$]
				As for assignments, we use \Cref{thm:mimic-simple-NEW} to get some $\anactp$ and choose $\sigmagoal:=\sigma.\anactp$.

%=================================================================================================%
%=================================================================================================%
%=================================================================================================%
%=================================================================================================%
				\item[$\sigma.\anact\in\adrsem{\anadr}$ and $\acom\equiv\afunc(\vecof{\apvar},\vecof{\advar})$ and $\heapcomputof{\tau}{\vecof{\apvar}}\neq\heapcomputof{\sigma}{\vecof{\apvar}}$]

				We show that $\sigmagoal:=\sigma.\anact$ is an adequate choice.
				Let $\heapcomputof{\tau}{\vecof{\apvar}}=\vecof{\anadrp}_1$, $\heapcomputof{\sigma}{\vecof{\apvar}}=\vecof{\anadrp}_2$, and $\heapcomputof{\tau}{\vecof{\advar}}=\vecof{\advalue}$.
				By $\tau\computequiv\sigma$ we have $\heapcomputof{\sigma}{\vecof{\advar}}=\vecof{\advalue}$.
				By $\heapcomputof{\tau}{\vecof{\apvar}}\neq\heapcomputof{\sigma}{\vecof{\apvar}}$ we have $\vecof{\anadrp}_1\neq\vecof{\anadrp}_2$.
				The task is to show: \[\forall\,\anadrpp\in\vadrof{\tau.\anact}\cup\set{\anadr}.~~\freeableof{\tau.\anact}{\anadrpp}\subseteq\freeableof{\sigma.\anact}{\anadrpp}\ .\]
				To that end, consider some arbitrary $\anadrpp\in\vadrof{\tau.\anact}\cup\set{\anadr}$.
				Because $\anact$ does not alter the heap or validity of expressions, we have $\anadrpp\in\vadrof{\tau}\cup\set{\anadr}$ by definition.
				From $\tau\heapequiv[\anadr]\sigma$ and $\tau\obsrel\sigma$ we get $\freeableof{\tau}{\anadrpp}\subseteq\freeableof{\sigma}{\anadrpp}$.
				This implies
				\begin{align}
					\label{thm:rprf-fresh-and-free-not-freeable:freeable-same-event}
				 	\freeableof{\historyof{\tau}.\afunc(\vecof{\anadrp}_1,\vecof{\advalue})}{\anadrpp}\subseteq\freeableof{\historyof{\sigma}.\afunc(\vecof{\anadrp}_1,\vecof{\advalue})}{\anadrpp}\ .
				\end{align}
				To see this, let $\ahist\in\freeableof{\historyof{\tau}.\afunc(\vecof{\anadrp}_1,\vecof{\advalue})}{\anadrpp}$.
				So we have $\afunc(\vecof{\anadrp}_1,\vecof{\advalue}).\ahist\in\freeableof{\historyof{\tau}}{\anadrpp}$ by \Cref{thm:move-event-from-freeable}.
				Using $\tau\heapequiv[\anadr]\sigma$ and $\tau\obsrel\sigma$ then yields $\afunc(\vecof{\anadrp}_1,\vecof{\advalue}).\ahist\in\freeableof{\historyof{\sigma}}{\anadrpp}$.
				Again by \Cref{thm:move-event-from-freeable}, we conclude $\ahist\in\freeableof{\historyof{\sigma}.\afunc(\vecof{\anadrp}_1,\vecof{\advalue})}{\anadrpp}$.

				Now, recall that $\sigma.\anact\in\adrsem{\anadr}$ is PRF by assumption.
				That is, the SMR call executed by $\anact$ is not racy.
				From this we get:
				\begin{align}
					\label{thm:rprf-fresh-and-free-not-freeable:freeable-same-history}
				 	\freeableof{\historyof{\sigma}.\afunc(\vecof{\anadrp}_1,\vecof{\advalue})}{\anadrpp}\subseteq\freeableof{\historyof{\sigma}.\afunc(\vecof{\anadrp}_2,\vecof{\advalue})}{\anadrpp}\ .
				\end{align}
				To see this, we have to show that $\anadrp_{2,i}=\anadrpp\vee\apvar_i\in\validof{\sigma}\implies\anadrp_{1,i}=\anadrp_{2,i}$ holds for every $i$.
				If $\apvar_i\in\validof{\sigma}$, then we have $\anadrp_{2,i}=\heapcomputof{\sigma}{\apvar_i}=\heapcomputof{\tau}{\apvar_i}=\anadrp_{1,i}$ due to $\tau\computequiv\sigma$.
				So consider now the case where we have $\anadrp_{2,i}=\anadrpp$ and $\apvar_i\notin\validof{\sigma}$.
				Towards a contradiction, assume $\anadrpp\neq\anadr$.
				So $\anadrpp\in\vadrof{\sigma}$ must hold by choice.
				Moreover, we get $\anadrpp\in\heapcomputof{\sigma}{\set{\apvar_i}}\subseteq\heapcomputof{\sigma}{\pexp\setminus\validof{\sigma}}$.
				Then, \Cref{thm:invalidpointers-adr} yields $\anadrpp\in{\anadr}$ which contradicts the assumption.
				This means we must have $\anadrpp=\anadr$.
				Hence, $\heapcomputof{\sigma}{\apvar_i}=\anadrpp$ implies $\heapcomputof{\tau}{\apvar_i}=\anadrpp$ due to $\tau\heapequiv[\anadr]\sigma$.
				That is, $\anadrp_{1,i}=\anadrp_{2,i}$ as desired.
				Altogether, this proves the desired implication required to apply race freedom of the invocation.

				Combining \eqref{thm:rprf-fresh-and-free-not-freeable:freeable-same-event} and \eqref{thm:rprf-fresh-and-free-not-freeable:freeable-same-history} we conclude as follows:
				\begin{align*}
					\freeableof{\tau.\anact}{\anadrpp}\:
					=~&
					\freeableof{\historyof{\tau}.\afunc(\vecof{\anadr},\vecof{\advalue})}{\anadrpp}
					\subseteq
					\freeableof{\historyof{\sigma}.\afunc(\vecof{\anadr},\vecof{\advalue})}{\anadrpp}
					\\\subseteq~&
					\freeableof{\historyof{\sigma}.\afunc(\vecof{\anadrp},\vecof{\advalue})}{\anadrpp}
					=
					\freeableof{\sigma.\anact}{\anadrpp}
				\end{align*}
				The first equality holds by the semantics, i.e., $\heapcomputof{\tau}{\apvar_i}\in\adr$.
				To see the last equality, assume to the contrary that we had $\heapcomputof{\sigma}{\apvar_i}\notin\adr$.
				By \Cref{thm:rprf-vs-bot-pvar} this means $\heapcomputof{\sigma}{\apvar_i}=\segval$.
				By \Cref{thm:seg-valid}, we must have $\apvar_i\in\validof{\sigma}$ then.
				This, however, yields $\heapcomputof{\tau}{\apvar_i}=\segval$ what contradicts $\heapcomputof{\tau}{\apvar_i}\in\adr$.

				Since $\anact$ does not modify the memory nor the validity of expressions nor the freed and fresh addresses, the above implies the desired $\tau.anact\computequiv\sigma.\anact$, $\tau.\anact\heapequiv[\anadr]\sigma.\anact$, and $\tau.\anact\obsrel\sigma.\anact$.

%=================================================================================================%
%=================================================================================================%
%=================================================================================================%
%=================================================================================================%
				\item[$\acom\equiv\apvar:=\malloc$]
				Let the update be $\anup=[\apvar\mapsto\anadrp,\psel{\anadrp}\mapsto\segval,\dsel{\anadrp}\mapsto\advalue]$.
				We have $\anadrp\in\freshof{\tau}\cup\freedof{\tau}$.

				If $\anadrp=\anadr$, then we use, similar to the case of assignments, \Cref{thm:mimic-simple-NEW} to get some $\anactp$ and choose $\sigmagoal:=\sigma.\anactp$ with the desired $\tau.\anact\computequiv\sigmagoal$, $\tau.\anact\heapequiv[\anadr]\sigmagoal$, and $\tau.\anact\obsrel\sigmagoal$.
				
				So consider $\anadrp\neq\anadr$ hereafter.
				By the induction hypothesis there is $\gamma\in\adrsem{\anadrp}$ with $\tau\computequiv\gamma$, $\tau\obsrel\gamma$, and $\tau\heapequiv[\anadrp]\gamma$.
				\Cref{thm:computequiv-is-transitiv} gives $\gamma\computequiv\sigma$.
				Moreover, $\anadrp\in\freshof{\gamma}\cup\freedof{\gamma}$ follows from $\tau\heapequiv[\anadrp]\gamma$.
				So $\anact$ is enabled after $\gamma$, that is, $\gamma.\anact\in\adrsem{\anadrp}$.
				Then, \Cref{thm:mimic-simple-NEW} gives $\anactp=(\athread,\acom,\anupp)$ such that $\gamma.\anactp\in\adrsem{\anadrp}$, $\tau.\anact\computequiv\gamma.\anactp$, $\tau.\anact\obsrel\sigma.\anactp$, $\tau.\anact\heapequiv[\anadrp]\sigma.\anactp$.
				Because $\apvar,\psel{\anadrp},\dsel{\anadrp}\in\domof{\restrict{\heapcomput{\tau.\anact}}{\validof{\tau.\anact}}}$ holds by definition we must have $\anup=\anupp$.
				To sum this up, we have:
				\begin{itemize}[nosep,leftmargin=.75cm]
					\item $\tau.\anact\in\allsem$
					\item $\gamma.\anact\in\adrsem{\anadrp}$
					\item $\sigma\in\adrsem{\anadr}$
					\item $\tau\computequiv\gamma$, $\tau\obsrel\gamma$, $\tau\heapequiv[\anadrp]\gamma$
					\item $\tau.\anact\computequiv\gamma.\anact$, $\tau.\anact\obsrel\gamma.\anact$, $\tau.\anact\heapequiv[\anadrp]\gamma.\anact$
					\item $\tau\computequiv\sigma$, $\tau\obsrel\sigma$, $\tau\heapequiv[\anadr]\sigma$
					\item $\anadr\neq\anadrp$
					\item $\gamma\computequiv\sigma$
					\item $\anact=(\_,\apvar:=\malloc,[\apvar\mapsto\anadrp,\_])$
				\end{itemize}
				Now we can use reduction compatibility (\Cref{def:reduction-compatible}).
				This yields some $\sigmagoal\in\adrsem{\anadr}$ with the desired $\tau.\anact\computequiv\sigmagoal$, $\tau.\anact\obsrel\sigmagoal$, and $\tau.\anact\heapequiv[A]\sigmagoal$.
				This concludes the claim.

%=================================================================================================%
%=================================================================================================%
%=================================================================================================%
%=================================================================================================%
				\item[$\sigma.\anact\notin\adrsem{\anadr}$ and $\acom\equiv\freeof{\anadrp}$]
				% By \Cref{thm:no-mimic-free} $\sigmagoal:=\sigma$ is an adequate choice.
				The update is $\anup=[\psel{\anadrp}\mapsto\bot,\dsel{\anadrp}\mapsto\bot]$.
				By definition, we have $\freeof{\anadrp}\in\freeableof{\tau}{\anadrp}$.
				And the assumption that $\freeof{\anadrp}$ is not enabled after $\sigma$ implies $\freeof{\anadrp}\notin\freeableof{\sigma}{\anadrp}$.
				Note that $\anadrp\neq\anadr$ must hold as for otherwise $\tau\heapequiv[\anadr]\sigma$ implies $\freeableof{\tau}{\anadrp}=\freeableof{\sigma}{\anadrp}$ what would contradict $\freeof{\anadrp}\notin\freeableof{\sigma}{\anadrp}$.

				Because of $\tau\obsrel\sigma$ we conclude that $\anadrp\notin\adrof{\restrict{\heapcomput{\tau}}{\validof{\tau}}}$ must hold as for otherwise we had $\freeof{\anadrp}\in\freeableof{\sigma}{\anadrp}$ what contradicts the assumption of $\anact$ not being enabled after $\sigma$.
				By $\tau\computequiv\sigma$ this means $\anadrp\notin\heapcomputof{\sigma}{\validof{\sigma}}$.
				We now show that $\sigmagoal=\sigma$ is an adequate choice, that is, that we do not need to mimic $\anact$ at all.

				%%%%%%%%%%%%%%%%%%%%%%%%%%%%%%%%%%%%%%%%%%%%%%%%%%%%%%%%%%%
				%%%%%%%%%%%%%%%%%%%%%%%%%%%%%%%%%%%%%%%%%%%%%%%%%%%%%%%%%%%
				\ad{$\tau.\anact\computequiv\sigma$}
				We have $\controlof{\tau}=\controlof{\sigma}$ due to $\tau\computequiv\sigma$.
				Moreover, \Cref{assumption:frees-do-not-affect-control} gives $\controlof{\tau}=\controlof{\tau.\anact}$.
				That is, we have $\controlof{\tau.\anact}=\controlof{\sigma}$.

				To show $\restrict{\heapcomput{\tau.\anact}}{\validof{\tau.\anact}}=\restrict{\heapcomput{\sigma}}{\validof{\sigma}}$ it suffices to show that $\restrict{\heapcomput{\tau}}{\validof{\tau}}=\restrict{\heapcomput{\tau.\anact}}{\validof{\tau.\anact}}$ holds because of $\tau\computequiv\sigma$.
				To arrive at this equality, we first show $\validof{\tau}=\validof{\tau.\anact}$.
				The inclusion $\validof{\tau.\anact}\subseteq\validof{\tau}$ holds by definition.
				To see $\validof{\tau}\subseteq\validof{\tau.\anact}$, consider $\apexp\in\validof{\tau}$.
				Then, $\apexp\not\equiv\psel{\anadrp}$ must hold as for otherwise we had $\anadrp\in\adrof{\restrict{\heapcomput{\tau}}{\validof{\tau}}}$ which does not hold as shown before.
				Moreover, $\heapcomputof{\tau}{\apexp}\neq\anadrp$ must hold as for otherwise we would again get $\anadrp\in\adrof{\restrict{\heapcomput{\tau}}{\validof{\tau}}}$.
				Hence, by the definition of validity, we have $\apexp\in\validof{\tau.\anact}$.

				With this we get:
				\begin{align*}
					\domof{\restrict{\heapcomput{\tau.\anact}}{\validof{\tau.\anact}}}
					=~&
					\validof{\tau.\anact}\cup\dvars\cup\setcond{\dsel{\anadrpp}}{\anadrpp\in\heapcomputof{\tau.\anact}{\validof{\tau.\anact}}}
					\\=~&
					\validof{\tau.\anact}\cup\dvars\cup\setcond{\dsel{\anadrpp}}{\anadrpp\in\heapcomputof{\tau}{\validof{\tau.\anact}}}
					\\=~&
					\validof{\tau}\cup\dvars\cup\setcond{\dsel{\anadrpp}}{\anadrpp\in\heapcomputof{\tau}{\validof{\tau}}}
					=
					\domof{\restrict{\heapcomput{\tau}}{\validof{\tau}}}
				\end{align*}
				where the first equality is by definition, the second equality holds because $\psel{\anadrp}\notin\validof{\tau.\anact}$ together with the update $\anup$, the third equality holds by $\validof{\tau}=\validof{\tau.\anact}$ from above, and the last equality holds by definition again.
				The same reasoning as above also allows us to conclude that $\psel{\anadrp},\dsel{\anadrp}\notin\domof{\restrict{\heapcomput{\tau.\anact}}{\validof{\tau.\anact}}}$ must hold because $\anadrp\notin\heapcomputof{\tau}{\validof{\tau}}$ due to $\anadrp\notin\adrof{\restrict{\heapcomput{\tau}}{\validof{\tau}}}$.
				Hence, $\restrict{\heapcomput{\tau}}{\validof{\tau}}=\restrict{\heapcomput{\tau.\anact}}{\validof{\tau.\anact}}$ follows immediately.
				Altogether, this gives the desired $\tau.\anact\computequiv\sigma$.

				%%%%%%%%%%%%%%%%%%%%%%%%%%%%%%%%%%%%%%%%%%%%%%%%%%%%%%%%%%%
				%%%%%%%%%%%%%%%%%%%%%%%%%%%%%%%%%%%%%%%%%%%%%%%%%%%%%%%%%%%
				\ad{$\tau.\anact\obsrel\sigma$}
				Let $\anadrpp\in\adrof{\restrict{\heapcomput{\tau.\anact}}{\validof{\tau.\anact}}}$.
				We get $\anadrpp\in\adrof{\restrict{\heapcomput{\tau}}{\validof{\tau}}}$ by definition.
				This means $\anadrp\neq\anadrpp$ due to the above.
				Moreover, by $\tau\obsrel\sigma$ we have $\freeableof{\tau}{\anadrpp}\subseteq\freeableof{\sigma}{\anadrpp}$.
				So it suffices to show that $\freeableof{\tau}{\anadrpp}=\freeableof{\tau.\anact}{\anadrpp}$ holds.
				This holds by reduction compatibility (\Cref{def:reduction-compatible}).

				%%%%%%%%%%%%%%%%%%%%%%%%%%%%%%%%%%%%%%%%%%%%%%%%%%%%%%%%%%%
				%%%%%%%%%%%%%%%%%%%%%%%%%%%%%%%%%%%%%%%%%%%%%%%%%%%%%%%%%%%
				\ad{$\tau.\anact\heapequiv[\anadr]\sigma$}
				First, recall $\anadrp\neq\anadr$.
				So $\tau\heapequiv[\anadr]\sigma$ gives \[\anadr\in\allocatedof{\tau.\anact}\iff\anadr\in\allocatedof{\tau}\iff\anadr\in\allocatedof{\sigma}\ .\]
				And similarly to $\tau.\anact\obsrel\sigma$, we use reduction compatibility (\Cref{def:reduction-compatible}) together with $\anadrp\neq\anadr$ and $\tau\heapequiv[\anadr]\sigma$ to get $\freeableof{\tau.\anact}{\anadr}=\freeableof{\tau}{\anadr}\subseteq\freeableof{\sigma}{\anadr}$.
				Since $\anact$ does not modify the valuation of pointer variables, we get $\heapcomputof{\tau.\anact}{\apvar}=\heapcomputof{\tau}{\apvar}$ for all $\apvar\in\pvars$.
				Hence, we have:
				\begin{align*}
					\forall\apvar\in\pvars.~\heapcomputof{\tau.\anact}{\apvar}=\anadrp\iff\heapcomputof{\tau}{\apvar}=\anadrp\iff\heapcomputof{\sigma}{\apvar}=\anadrp
					\ .
				\end{align*}
				It remains to show
				\[\forall\anadrpp\in\heapcomputof{\tau.\anact}{\validof{\tau.\anact}}.~\heapcomputof{\tau.\anact}{\psel{\anadrpp}}=\anadr\iff\heapcomputof{\sigma}{\psel{\anadrpp}}=\anadr\ .\]
				So consider $\anadrpp\in\heapcomputof{\tau.\anact}{\validof{\tau.\anact}}$.
				Then there is some $\apexp\in\validof{\tau.\anact}$ with $\heapcomputof{\tau.\anact}{\apexp}=\anadrpp$.
				By definition, $\apexp\in\validof{\tau}$.
				And because $\psel{\anadrp}\notin\validof{\tau.\anact}$ by definition, we must have $\apexp\not\equiv\psel{\anadrp}$.
				So we get $\heapcomputof{\tau}{\apexp}=\heapcomputof{\tau.\anact}{\apexp}=\anadrpp$.
				Hence, $\anadrpp\in\heapcomputof{\tau}{\validof{\tau}}$.
				We already showed $\anadrp\notin\adrof{\restrict{\heapcomput{\tau}}{\validof{\tau}}}$.
				That is, $\anadrpp\neq\anadrp$ must hold.
				So we conclude as follows:
				\[\heapcomputof{\tau.\anact}{\psel{\anadrpp}}=\anadr\iff\heapcomputof{\tau}{\psel{\anadrpp}}=\anadr\iff\heapcomputof{\sigma}{\psel{\anadrpp}}=\anadr\]
				where the first equivalence holds because $\anup$ does not update $\psel{\anadrpp}$ (due to $\anadrpp\neq\anadrp$) and the second equivalence holds by $\tau\heapequiv[\anadr]\sigma$.
				This concludes the claim.

%=================================================================================================%
%=================================================================================================%
%=================================================================================================%
%=================================================================================================%
				\item[$\sigma.\anact\notin\adrsem{\anadr}$ and $\acom\equiv\assertof{\apvar\triangleq\apvarp}$ with $\triangleq\,\in\set{=,\neq}$]
				Let $\heapcomputof{\tau}{\apvar}=\anadrp$, $\heapcomputof{\tau}{\apvarp}=\anadrpp$, $\heapcomputof{\sigma}{\apvar}=\anadrp'$, and $\heapcomputof{\sigma}{\apvarp}=\anadrpp'$.
				By the semantics we have $\anadrp\triangleq\anadrpp$.
				Since $\anact$ is not enable after $\sigma$, we have $\neg(\anadrp'\triangleq\anadrpp')$.
				That is, we have $\anadrp=\anadrpp\iff\anadrp'\neq\anadrpp'$.
				Note that $\set{\anadrp,\anadrpp,\anadrp',\anadrpp'}\cap\set{\anadr}=\emptyset$ must hold as for otherwise $\tau\heapequiv[\anadr]\sigma$ would yield $\anadrp'\triangleq\anadrpp'$.
				In order to find an appropriate $\sigmagoal$ we use reduction compatibility (\Cref{def:reduction-compatible}).

				By the induction hypothesis there is $\gamma\in\asem{\anadrp}$ with $\tau\computequiv\gamma$, $\tau\obsrel\gamma$, and $\tau\heapequiv[\anadrp]\gamma$.
				By \Cref{thm:computequiv-is-transitiv} we have $\gamma\computequiv\sigma$.
				The latter yields $\heapcomputof{\gamma}{\apvar}=\anadrp$ and $\heapcomputof{\gamma}{\apvarp}=\anadrp\iff\heapcomputof{\tau}{\apvarp}=\anadrp$.
				This means $\heapcomputof{\gamma}{\apvar}\triangleq\heapcomputof{\gamma}{\apvarp}$.
				So $\anact$ is enabled after $\gamma$: $\gamma.\anact\in\adrsem{\anadrp}$.
				Then, \Cref{thm:mimic-simple-NEW} gives $\anactp=(\athread,\acom,\anupp)$ such that $\gamma.\anactp\in\adrsem{\anadrp}$, $\tau.\anact\computequiv\gamma.\anactp$, $\tau.\anact\obsrel\sigma.\anactp$, $\tau.\anact\heapequiv[\anadrp]\sigma.\anactp$.
				Because $\anup=\emptyset=\anupp$ must hold by the semantics we have $\anact=\anactp$.
				To sum this up, we have:
				\begin{itemize}[nosep,leftmargin=.75cm]
					\item $\tau.\anact\in\allsem$
					\item $\gamma.\anact\in\adrsem{\anadrp}$
					\item $\sigma\in\adrsem{\anadr}$
					\item $\tau\computequiv\gamma$, $\tau\obsrel\gamma$, $\tau\heapequiv[\anadrp]\gamma$
					\item $\tau.\anact\computequiv\gamma.\anact$, $\tau.\anact\obsrel\gamma.\anact$, $\tau.\anact\heapequiv[\anadrp]\gamma.\anact$
					\item $\tau\computequiv\sigma$, $\tau\obsrel\sigma$, $\tau\heapequiv[\anadr]\sigma$
					\item $\anadr\neq\anadrp$
					\item $\gamma\computequiv\sigma$
					\item $\anact=(\_,\assertof{\_},\_)$
				\end{itemize}
				Now we can use reduction compatibility (\Cref{def:reduction-compatible}).
				This yields some $\sigmagoal\in\adrsem{\anadr}$ with the desired $\tau.\anact\computequiv\sigmagoal$, $\tau.\anact\obsrel\sigmagoal$, and $\tau.\anact\heapequiv[A]\sigmagoal$.
				This concludes the claim.

			\end{casedistinction}
	\end{description}
	The above case distinction is complete.
	This concludes the induction.
\end{proof}

%=================================================================================================%
%=================================================================================================%
%=================================================================================================%
%=================================================================================================%

\begin{proof}[Proof of \Cref{thm:rprf-guarantee}]
	Follows immediately from \Cref{thm:RPRF-implies-GCplus}.
\end{proof}

\begin{proof}[Proof of \Cref{thm:ABA-awareness-and-elision-support-implies-reduction-compatibility}]
	Consider some $\tau.\anact$, $\sigma_\anadr.\anact$, and $\sigma_\anadrp$ satisfying the premise of the implication from \Cref{def:reduction-compatible}.
	We show that there is some $\gamma$ with $\tau.\anact\computequiv\gamma$, $\tau.\anact\obsrel\gamma$, and $\tau.\anact\heapequiv[\anadrp]\gamma$ indeed.
	We distinguish two cases.

	\begin{casedistinction}
		%%%%%%%%%%%%%%%%%%%%%%%%%%%%%%%%%%%%%%%%%%%%%%%%%%%%%%%%%%%
		%%%%%%%%%%%%%%%%%%%%%%%%%%%%%%%%%%%%%%%%%%%%%%%%%%%%%%%%%%%
		\item[$\anact=(\textnormal{\_},\assertof{\apvar\triangleq\apvarp},\textnormal{\_})$]
			\Cref{def:reduction-compatible} gives $\tau\computequiv\sigma_\anadr$, $\tau.\anact\computequiv\sigma_\anadr.\anact$, $\tau\computequiv\sigma_\anadrp$, $\tau\obsrel\sigma_\anadrp$, and $\tau\heapequiv[\anadrp]\sigma_\anadrp$.
			From the absence of harmful ABAs (\Cref{def:harmful-ABA}) we get $\gamma$ with $\sigma_\anadr.\anact\computequiv\gamma$, $\sigma_\anadrp\heapequiv[\anadrp]\gamma$, and $\sigma_\anadrp\obsrel\gamma$.
			Then, we get $\tau.\anact\computequiv\gamma$ by \Cref{thm:computequiv-is-transitiv}.
			We get $\tau\obsrel\gamma$ by \Cref{thm:obsrel-is-transitiv-provided-same-valid-adr-NEW}.
			And we get $\tau\heapequiv[\anadrp]\gamma$ by \Cref{thm:heapequiv-is-transitiv-provided-same-valid-heap-range}.
			We show that $\sigma_\anadrp':=\gamma$ is an adequate choice.
			To do so, we show two auxiliary properties first.
			\begin{auxiliary}
				\heapcomputof{\tau.\anact}{\validof{\tau.\anact}} &= \heapcomputof{\tau}{\validof{\tau}}
				\label[aux]{proof:abaawareness:range-valid-heap}
				\\
				\adrof{\restrict{\heapcomput{\tau.\anact}}{\validof{\tau.\anact}}} &= \adrof{\restrict{\heapcomput{\tau}}{\validof{\tau}}}
				\label[aux]{proof:abaawareness:adr-valid-heap}
			\end{auxiliary}

			\ad{\Cref{proof:abaawareness:range-valid-heap}}
			We have $\heapcomput{\tau.\anact}=\heapcomput{\tau}$ because $\anact$ is an assertion.
			So it remains to show that $\heapcomputof{\tau}{\validof{\tau.\anact}}=\heapcomputof{\tau}{\validof{\tau}}$ holds.
			If $\validof{\tau.\anact}=\validof{\tau}$, then the claim follows immediately.

			So consider the case where we have $\validof{\tau.\anact}\neq\validof{\tau}$.
			Wlog. we have $\apvar\in\validof{\tau}$.
			By definition, $\validof{\tau.\anact}=\validof{\tau}\cup\set{\apvar,\apvarp}=\validof{\tau}\cup\set{\apvarp}$ and $\apvar\triangleq\apvarp\equiv\apvar=\apvarp$ follows.
			Hence, $\heapcomputof{\tau}{\apvar}=\heapcomputof{\tau}{\apvarp}$.
			So \[\heapcomputof{\tau}{\validof{\tau.\anact}}=\heapcomputof{\tau}{\validof{\tau}}\cup\set{\heapcomputof{\tau}{\apvar}}=\heapcomputof{\tau}{\validof{\tau}}\] where the last equality holds because $\apvar\in\validof{\tau}$ and thus $\heapcomputof{\tau}{\apvar}\in\heapcomputof{\tau}{\validof{\tau}}$.

			\ad{\Cref{proof:abaawareness:adr-valid-heap}}
			Note that we have:
			\begin{align*}
				\validof{\tau.\anact}\cap\adr&\subseteq(\validof{\tau}\cup\set{\apvar,\apvarp})\cap\adr=\validof{\tau}\cap\adr
				\\\text{and}\qquad
				\validof{\tau}\cap\adr&\subseteq\validof{\tau.\anact}\cap\adr
			\end{align*}
			So, $\validof{\tau.\anact}\cap\adr=\validof{\tau}\cap\adr$ holds.
			We conclude as follows:
			\begin{align*}
				\adrof{\restrict{\heapcomput{\tau.\anact}}{\validof{\tau.\anact}}}
				\:=~&
				(\validof{\tau.\anact}\cap\adr)\cup\heapcomputof{\tau.\anact}{\validof{\tau.\anact}}
				=
				(\validof{\tau}\cap\adr)\cup\heapcomputof{\tau}{\validof{\tau}}
				\\=~&
				\adrof{\restrict{\heapcomput{\tau}}{\validof{\tau}}}
			\end{align*}
			where the first and last equality are due to \Cref{thm:adrof-valid-heap-restriction} and the second equality is due to the note above and \Cref{proof:abaawareness:range-valid-heap}.

			\ad{$\tau.\anact\obsrel\gamma$}
			We already have $\tau\obsrel\gamma$.
			Assume for the moment that we have $\tau.\anact\obsrel\tau$.
			Then, \Cref{thm:obsrel-is-transitiv-provided-same-valid-adr-NEW} together with \Cref{proof:abaawareness:adr-valid-heap} yields the desired $\tau.\anact\obsrel\gamma$.
			So it remains to show that $\tau.\anact\obsrel\tau$ holds indeed.
			This follows immediately from the fact that $\anact$ does not emit an event and \Cref{proof:abaawareness:adr-valid-heap} so that we have $\freeableof{\tau.\anact}{\anadrpp}=\freeableof{\tau}{\anadrpp}$ for every $\anadrpp\in\adr$.
			This concludes the property.

			\ad{$\tau.\anact\heapequiv[\anadrp]\gamma$}
			We already have $\tau\heapequiv[\anadrp]\gamma$.
			Assume for the moment we have $\tau.\anact\heapequiv[\anadrp]\tau$.
			Then, \Cref{proof:abaawareness:adr-valid-heap} together with \Cref{thm:heapequiv-is-transitiv-provided-same-valid-heap-range} yields the desired $\tau.\anact\heapequiv[\anadrp]\gamma$.
			So it remains to show that $\tau.\anact\heapequiv[\anadrp]\tau$ holds indeed.
			This follows from the definition due to that we have: $\heapcomput{\tau}=\heapcomput{\tau.\anact}$, $\freedof{\tau.\anact}=\freedof{\tau}$, $\freshof{\tau.\anact}=\freshof{\tau}$, $\historyof{\tau.\anact}=\historyof{\tau}$, and \Cref{proof:abaawareness:range-valid-heap}.

		%%%%%%%%%%%%%%%%%%%%%%%%%%%%%%%%%%%%%%%%%%%%%%%%%%%%%%%%%%%
		%%%%%%%%%%%%%%%%%%%%%%%%%%%%%%%%%%%%%%%%%%%%%%%%%%%%%%%%%%%
		\item[$\anact=(\_,\apvar:=\malloc,{[}\apvar\mapsto\anadrp,\_{]})$]
			\Cref{def:reduction-compatible} gives $\sigma_\anadr.\anact\in\adrsem{\anadr}$ and $\sigma_\anadrp\in\adrsem{\anadrp}$ with $\sigma_\anadr\computequiv\sigma_\anadrp$, $\sigma_\anadr\computequiv\sigma_\anadrp$, $\tau\computequiv\sigma_\anadrp$, $\tau\heapequiv[\anadrp]\sigma_\anadrp$, $\tau\obsrel\sigma_\anadrp$, and $\anadr\neq\anadrp$.
			Since $\anact$ is enabled after $\sigma_\anadr$ we have $\anadr\in\freshof{\sigma_\anadr}\cup\freedof{\sigma_\anadr}$.
			From $\onesem$ being PRF together with \Cref{thm:adrof-valid-heap-restriction,thm:fresh-not-valid,thm:free-not-valid} we get $\anadr\notin\vadrof{\sigma_\anadr}$.
			So $\sigma_\anadr\computequiv\sigma_\anadrp$ gives $\anadr\notin\vadrof{\sigma_\anadrp}$.
			Then, \Cref{thm:supported-elision-of-invalid-address} gives $\gamma\in\adrsem{\anadrp}$ with $\sigma_\anadrp\computequiv\gamma$, $\sigma_\anadrp\heapequiv[\anadrp]\gamma$, $\sigma_\anadrp\obsrel\gamma$, and $\anadr\in\freshof{\gamma}$.
			Combining this with $\tau\computequiv\sigma_\anadrp$, $\tau\heapequiv[\anadrp]\sigma_\anadrp$, and $\tau\obsrel\sigma_\anadrp$ using \Cref{thm:computequiv-is-transitiv,thm:obsrel-is-transitiv-provided-same-valid-adr-NEW,thm:heapequiv-is-transitiv-provided-same-valid-heap-range} gives $\tau\computequiv\gamma$, $\tau\obsrel\gamma$, and $\tau\heapequiv[\anadrp]\gamma$.
			The latter provides $\freeableof{\tau}{\anadrp}\subseteq\freeableof{\gamma}{\anadrp}$.
			This together with $\anadr\in\freshof{\gamma}$ yields $\freeableof{\tau}{\anadr}\subseteq\freeableof{\gamma}{\anadr}$ by elision support according to \Cref{def:elision-support}\ref{def:elision-support:fresh}.
			Moreover, since $\anadr\in\freshof{\gamma}$ we have $\gamma.\anact\in\asem{\anadrp}$.
			Then, \Cref{thm:mimic-simple-NEW} gives $\anactp=(\athread,\acom,\anupp)$ with $\tau.\anact\computequiv\gamma.\anactp$, $\tau.\anact\heapequiv[\anadrp]\gamma.\anactp$, and $\tau.\anact\obsrel\sigma.\anactp$.
			Then $\sigma_\anadrp':=\sigma.\anactp$ is an adequate choice.

	\end{casedistinction}
\end{proof}

\end{document}